\documentclass[11pt]{article}

\def\focs{0}

\usepackage{fullpage,graphicx}
\usepackage{amsmath,amsfonts,amsthm,amssymb,multirow,mathtools}
\usepackage{algorithm}
\usepackage{algpseudocode}
\usepackage{comment,url,enumerate}
\usepackage{tikz}
\usepackage{framed}
\usepackage{amsthm}
\usepackage{tcolorbox}
\usepackage{enumitem}
\usepackage{caption}
\usepackage{microtype}
\usetikzlibrary{decorations.pathreplacing, shapes}
\usepackage[backref, colorlinks,citecolor=blue,linkcolor=magenta,bookmarks=true]{hyperref}  
\usepackage[nameinlink]{cleveref}
\definecolor{shadecolor}{gray}{0.95}

\usepackage{dialogue}

\usepackage{color-edits}
\addauthor{ah}{red}

\usepackage{hyperref}
\usepackage{liyang}

\crefname{figure}{algorithm}{algorithms}

\renewcommand{\phi}{\varphi}

\newcommand{\cD}{{\mathcal{D}}}
\newcommand{\cU}{{\mathcal{U}}}

\newcommand{\Ber}{\mathrm{Ber}}

\newcommand{\error}{\mathrm{error}}

\newcommand{\balpha}{\boldsymbol{\alpha}}

\newcommand{\sm}{\mathrm{small}}
\newcommand{\lar}{\mathrm{large}}

\newcommand{\MAJ}{{\textsc{Maj}}}
\newcommand{\Maj}{\MAJ}
\newcommand{\lift}{\mathrm{Lift}}

\DeclarePairedDelimiter\abs{|}{|}
\DeclarePairedDelimiter\ltwo{\|}{\|_2}
\DeclarePairedDelimiter\lone{\|}{\|_1}

\DeclarePairedDelimiter\paren{(}{)}
\DeclarePairedDelimiter\bracket{[}{]}
\DeclarePairedDelimiter\set{\{}{\}}

\def\colorful{0}

\ifnum\colorful=1

\fi
\ifnum\colorful=0

\fi



\newcommand{\pparagraph}[1]{\bigskip \noindent {\bf {#1}}}

\begin{document}

\ifnum\focs=0

\title{The sample complexity of smooth boosting and \\  the tightness of the hardcore theorem \vspace{10pt}}

\author{ 
Guy Blanc \vspace{6pt} \\ 
\hspace{-10pt} {{\sl Stanford}} \and 
Alexandre Hayderi \vspace{6pt} \\
 {{\sl Stanford}} \and 
Caleb Koch \vspace{6pt} \\ 
\hspace{-8pt} {{\sl Stanford}} \vspace{10pt} \and 
Li-Yang Tan \vspace{6pt}  \\
\hspace{-15pt} {{\sl Stanford}}
}

\date{\small{\today}}

\maketitle

\begin{abstract}
{\sl Smooth} boosters generate distributions that do not place too much weight on any given example. Originally introduced for their noise-tolerant properties, such boosters have also found applications in differential privacy, reproducibility, and quantum learning theory. We study and settle the sample complexity of smooth boosting: we exhibit a class that can be weak learned to $\gamma$-advantage over smooth distributions with $m$ samples, for which strong learning over the uniform distribution requires $\tilde{\Omega}(1/\gamma^2)\cdot m$ samples. This matches the overhead of existing smooth boosters and provides the first separation from the setting of distribution-independent boosting, for which the corresponding overhead is $O(1/\gamma)$. 

Our work also sheds new light on Impagliazzo’s hardcore theorem from complexity theory, all known proofs of which can be cast in the framework of smooth boosting. For a function~$f$ that is mildly hard against size-$s$ circuits, the hardcore theorem provides a set of inputs on which $f$ is extremely hard against size-$s'$ circuits. A downside of this important
result is the loss in circuit size, i.e.~that $s' \ll s$. Answering a question of Trevisan, we show that this size loss is necessary and in fact, the parameters achieved by known proofs are the best possible. 

\end{abstract}

\else
\fi

\ifnum\focs=1
{}
\else
 \thispagestyle{empty}
 \newpage 

 \thispagestyle{empty}
 \setcounter{tocdepth}{2}
 \tableofcontents
 \thispagestyle{empty}
 \newpage 
 \setcounter{page}{1}

\fi

\section{Introduction}

Boosting is a technique for generically improving the accuracy of learning algorithms. A boosting algorithm makes multiple calls to a {\sl weak} learner---one with accuracy that is slightly better than trivial---and aggregates the predictions of the weak hypotheses into a single high-accuracy prediction. Since its conception in the 1990s~\cite{KV89,Sha90,Fre92,Fre95}, boosting has become a central topic of study within learning theory, with entire textbooks devoted to it~\cite{FS12}, and it has also had a substantial impact on practice. 

While the story of boosting is one of success, the framework comes with two important downsides. The first is the need for a {\sl distribution-independent} weak learner. Even if the goal is to learn with respect to a fixed and known distribution $\mathcal{D}$, the weak learner has to succeed with respect to all distributions. This is because boosting works by calling the weak learner on a sequence of distributions $\mathcal{D}_1,\ldots,\mathcal{D}_T$, and there are a priori no guarantees as to how similar these $\mathcal{D}_i$’s are to~$\mathcal{D}$. The second issue is that of {\sl noise tolerance}, a challenge already highlighted in Shapire’s original paper~\cite{Sha90}. One would naturally like to convert weak learners into strong ones even in the presence of noise, but popular boosting algorithms such as AdaBoost~\cite{FS97} have long been known to perform poorly in this regard~\cite{Sch99,Die00}. 

\paragraph{Smooth boosting.} Both issues are addressed by {\sl smooth boosting}. First explored in~\cite{Fre95,Jac97,DW00,KS03} and then formalized by Servedio~\cite{Ser03}, a smooth booster is one that only generates {\sl smooth distributions}, distributions that do not place too much weight on any example. {Smooth boosters therefore only require weak learners for smooth distributions rather than fully distribution-independent ones. Additionally,} smoothness is a natural desideratum from the perspective of noise tolerance. Indeed, the poor noise tolerance of AdaBoost has been attributed to its {\sl non}-smoothness~\cite{Die00}: AdaBoost can generate skewed distributions that place a lot of weight on a few examples, which intuitively, would hurt its performance if these examples were noisy. Following~\cite{Ser03}, smooth boosters have been designed in a variety of noise models~\cite{Gav03,KK09,Fel10,BCS20,DIKLST21}.

Beyond noise tolerance, smoothness is also the key property enabling the design of boosters that are differentially private~\cite{DGV10,BCS20}, reproducible~\cite{ILPS22}, and amenable to quantum speedups~\cite{IdW23}. Furthermore, there are classes for which weak learners are only known for smooth distributions but not all distributions. A notable example is the class of DNF formulas~\cite{BFJKMR94}, and indeed smooth boosting was crucially leveraged in Jackson’s celebrated polynomial-time algorithm for strong learning DNFs~\cite{Jac97}.

\section{This work}

Given the importance of smooth boosting, it is of interest to understand fundamental properties of the framework. Prior work has focused on two such properties, round complexity and the tradeoff between smoothness and error~\cite{Ser03, KS03, Hol05, BHK09}. In this work we consider yet another basic property, sample complexity.

\subsection{First result: The sample complexity of smooth boosting}

Our goal is to understand the sample complexity {\sl overhead} incurred by smooth boosting. Existing smooth boosters convert an $m$-sample $\gamma$-advantage weak learner into a strong learner with sample complexity $O(1/\gamma^2)\cdot m$.
Can this overhead of $O(1/\gamma^2)$ be improved? What if we allow for less time-efficient or completely time-inefficient algorithms? A simple argument, which we give in~\Cref{sec:lower-bound-boosting-overhead} (\Cref{claim:simple-gap-smooth}), shows a lower bound of $\Omega(1/\gamma)$, leaving a quadratic gap.

Our first result closes this gap up to logarithmic factors:\medskip

\begin{tcolorbox}[colback = white,arc=1mm, boxrule=0.25mm]
\begin{theorem}
\label{thm:smooth intro} For any sample size $m$ and parameter $\gamma$, there exists a concept class $\mcC$ such that:
\begin{enumerate}
    \item \textbf{Weak learning $\mcC$ requires few samples}: There exists a weak learner that, given random examples generated by any smooth distribution $\mcD$, uses $m$ samples and w.h.p.~outputs a hypothesis with accuracy $\frac1{2} + \gamma$.
    \item \textbf{Strong learning $\mcC$ requires many samples}: Any algorithm that, given random examples generated according to uniform distribution and w.h.p.~outputs a hypothesis with accuracy at least $0.99$, requires at least $\Tilde{\Omega}(m/\gamma^2)$ samples.
\end{enumerate}
\end{theorem}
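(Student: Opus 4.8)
The plan is to exhibit a concept class $\mcC$ whose sample-complexity-versus-accuracy profile has a sharp \emph{plateau}: about $m$ samples already buy advantage $\gamma$ against every smooth distribution, yet climbing from advantage $\gamma$ to accuracy $0.99$ over the uniform distribution forces one to solve $\widetilde{\Omega}(m/\gamma^{2})$ essentially independent hard subproblems. Concretely, I would put $\mcC$ on a domain $X = X_{\mathrm{id}}\sqcup X_{\mathrm{core}}$ split into two equal halves. On $X_{\mathrm{id}}\cong\pms^{\Theta(m)}$ the concept is a parity $\chi_{a}$: learning a parity to any nonzero advantage requires a number of samples up to its dimension, already over the uniform distribution on $X_{\mathrm{id}}$ (which is itself a valid smooth distribution), while at its dimension $a$ is recovered exactly by Gaussian elimination; this piece is what pins the weak-learning complexity of $\mcC$ at $\Theta(m)$, making the claimed $1/\gamma^{2}$ overhead genuinely tight. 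On $X_{\mathrm{core}}$ the concept is indexed by a string $z\in\pms^{D}$ with $D=\Theta(m/\gamma^{2})$: partition $X_{\mathrm{core}}$ into $D$ equal blocks and, on block $j$, let the concept be a fixed ``all-$0$'' pattern if $z_{j}$ is off and a fixed pattern supported on a small constant fraction of the block if $z_{j}$ is on. The design point is that this ``core'' piece is globally biased toward $0$ on every not-too-concentrated set, so the trivial hypothesis $\mathbf{0}$ already has advantage $\Omega(\gamma)$ there against every smooth distribution, whereas reaching accuracy $0.99$ requires learning $z_{j}$ for a constant fraction of the blocks.

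First I would carry out Part 1: construct $\mcC$ and give the weak learner. It draws $m$ samples from the unknown smooth $\mcD$; if a constant fraction of them land in $X_{\mathrm{id}}$ it solves for the parity $a$ and outputs $(\chi_{a},\mathbf{0})$, and otherwise $\mcD$ places little mass on $X_{\mathrm{id}}$ and it outputs $\mathbf{0}$ everywhere. A short case analysis on $\mcD(X_{\mathrm{id}})$ — using that the restriction of a smooth distribution to $X_{\mathrm{core}}$ is still smooth enough for the bias of the core piece to take effect — shows the output has advantage $\ge\gamma$ in either case, which is Part 1.

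Next I would prove Part 2. Freezing the parity component (it then contributes only perfectly-learnable mass), it suffices to show that learning the core component to accuracy $0.98$ over the uniform distribution on $X_{\mathrm{core}}$ needs $\widetilde{\Omega}(D)$ samples. Take $z$ uniformly random: with $T$ samples at most $T$ of the $D$ blocks are touched, and on an untouched block $j$ the posterior on $z_{j}$ is uniform, so a direct computation bounds the best achievable accuracy on that block away from $1$ by a constant; averaging over blocks gives overall accuracy at most $1-\Omega(1-T/D)$, which is below $0.98$ unless $T=\Omega(D)$. (Equivalently, by a Fano-type argument, an accuracy-$0.98$ learner must recover $\Omega(D)$ coordinates of $z$, each revealed only by samples landing in its own block, so $\Omega(D)$ samples are needed up to logarithmic factors.)

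The step I expect to be the main obstacle is making the Part-1 weak learner robust to \emph{every} smooth distribution at the smoothness level that is actually relevant to boosting all the way to accuracy $0.99$ (rather than only to some fixed larger error). With a single globally-biased core gadget as above, a booster can reweight onto the $1$-labeled points of a worst-case (say all-on) concept $z$; that reweighted distribution still has density $\Theta(1)$, hence is smooth once the smoothness parameter is moderately large, and on it $\mathbf{0}$ has no advantage. Defeating this seems to require the core gadget to stay weakly learnable with $m$ samples even when the distribution is allowed to concentrate on its hard inputs — for instance by giving it a recursive or composed block structure, so that any smooth reweighting still exposes a biased sub-region that $m$ samples suffice to exploit — while keeping the $\widetilde{\Omega}(m/\gamma^{2})$ strong-learning lower bound over the uniform distribution intact. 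Arranging both at once with matching constants is the technical crux; the coupon-collector/Fano counting of Part 2 and the concentration bounds behind the case analysis of Part 1 are then comparatively routine.
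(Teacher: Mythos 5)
You have put your finger on the right obstruction and then, correctly, stopped: as written, your Part~1 does not hold, and the fix you gesture at (``a recursive or composed block structure'') is precisely the hard content that is missing. Your core gadget relies on a fixed global bias (the concept is $1$ on only a small constant fraction of each block), so the constant-$\mathbf{0}$ hypothesis gets advantage on the uniform distribution. But ``weak learning over all smooth distributions'' is an adversarial quantifier: as you observe, a distribution supported on the $1$-labeled fraction (and on the corresponding $0$-labeled fraction for off blocks) has density $\Theta(1)$ and is therefore $\kappa$-smooth for a moderate constant $\kappa$, and against it the constant hypothesis has advantage zero. Worse, for that distribution learning the label of block~$j$ requires a sample in block~$j$, and since $D = \Theta(m/\gamma^{2}) \gg m$ your $m$ samples touch only an $o(1)$ fraction of blocks. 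For a balanced $z$ (half the $z_j$ on, half off) the untouched blocks give no advantage at all, so no $m$-sample weak learner exists for your class. This is not a constant-factor issue that a more careful analysis rescues; the construction itself has to change.

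The paper's route around this is genuinely different, and it is worth seeing why. Its class is $\mcC = \lift_n(\Maj_k)$ with $n = \log m$ and $k = \tilde{\Theta}(1/\gamma^{2})$: each concept is $\Maj_k(f_1,\dots,f_k)$ for arbitrary \emph{balanced} $f_i : \zo^n \to \pms$. There is no built-in bias for an adversarial smooth distribution to kill. Instead the weak learner exploits the fact that a \emph{single} sample $(X, y)$ with $y = \Maj_k(f_1(X^{(1)}),\dots,f_k(X^{(k)}))$ gives weak information about \emph{all} $k$ coordinate functions at once (the majority label is $\Omega(1/\sqrt{k})$-correlated with each $f_i(X^{(i)})$). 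It builds partial estimates $g_1,\dots,g_k$ from the $m = 2^n$ samples, forms the integer-valued vote $G = \sum_i g_i$, and, crucially, outputs a \emph{random-threshold} hypothesis $h_{\btau}(X) = \sign(G(X) \ge \btau)$ with $\btau$ validated on a held-out set. The random threshold is what defeats the adversarial distribution: even when the block estimates are heavily correlated (so that the plain majority $\sign(G)$ could have negative advantage, as in the example in the paper's proof overview where two-thirds of inputs have exactly $k/2 - 1$ correct blocks), some threshold in the window $[-O(\sqrt{k\log k\kappa}),\,O(\sqrt{k\log k\kappa})]$ converts the constant-correlation guarantee $\Ex[F\cdot G] \ge \Omega_{\kappa}(1)$ into advantage $\tilde{\Omega}(1/\sqrt{k})$. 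Establishing that $\Ex[F\cdot G] \ge \Omega_{\kappa}(1)$ and that $G$ concentrates in the right window under an arbitrary $\kappa$-smooth distribution (Lemmas on $G_S$ and the smoothness-to-concentration transfer) is exactly the machinery that replaces the ``recursive block structure'' you would need to invent. Your coupon-collector lower bound is fine as a sketch and is morally similar in spirit to the paper's covering-number argument, but it is attached to a gadget whose upper bound cannot be completed, so the proposal does not yet yield the theorem.

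One smaller point: the parity half $X_{\mathrm{id}}$ that you add in order to pin the weak-learning sample complexity at $\Theta(m)$ is not required by the statement of the theorem, which only asserts the existence of an $m$-sample weak learner, not that $m$ is optimal; the paper does not need an analogue of it because $\lift_n(\Maj_k)$ already inherently requires $\Theta(2^n) = \Theta(m)$ samples. You could drop $X_{\mathrm{id}}$ without weakening what you are trying to prove, which would also simplify the Part~1 case analysis you sketch.
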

\end{tcolorbox}\medskip 

We remark that the upper bound is realized by a time-efficient algorithm whereas the lower bound applies to all learners, even time-inefficient ones. 

\paragraph{Separating the sample complexities of smooth and distribution-independent boosting.} \Cref{thm:smooth intro} highlights a fundamental difference between smooth and distribution-independent boosting. For distribution-independent boosting, one also has an $\Omega(1/\gamma)$ lower bound on the sample complexity overhead (also by~\Cref{claim:simple-gap-smooth}), but this is matched by a $O(1/\gamma)$ upper bound:

\begin{fact}
    \label{fact:dist-free upper bound}
    Let $\mcC$ be a concept class and let $\gamma>0$. 
    If the sample complexity of weak learning $\mathcal{C}$ to accuracy $\frac1{2} + \gamma$ in the distribution-independent setting is $m$, then the sample complexity of learning $\mathcal{C}$ to accuracy $0.99$ in the distribution-independent setting is $O(m/\gamma)$.
\end{fact}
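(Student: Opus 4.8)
The plan is to sidestep boosting entirely and argue through VC dimension; the statement is then essentially a consequence of the fundamental theorem of statistical learning. Write $d := \mathrm{VC}(\mathcal{C})$ (if $d = \infty$ then no finite $m$ even weak‑learns $\mathcal{C}$, so the claim is vacuous and we may assume $d$ is finite). The realizable‑case PAC sample‑complexity upper bound says that strong learning $\mathcal{C}$ to accuracy $0.99$ with constant confidence has sample complexity $O(d)$ in the distribution‑independent setting --- the (possibly inefficient) learner that outputs any hypothesis of $\mathcal{C}$ consistent with its sample already achieves this. So it suffices to prove the complementary estimate $m = \Omega(\gamma d)$, i.e.\ that weak learning to accuracy $\tfrac12+\gamma$ requires $\Omega(\gamma\cdot\mathrm{VC}(\mathcal{C}))$ examples: combining the two bounds gives $d = O(m/\gamma)$ and hence strong learning from $O(m/\gamma)$ examples.

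For the weak‑learning lower bound I would use the textbook hard instance. Fix a set $T = \{x_1,\dots,x_d\}$ shattered by $\mathcal{C}$, take the marginal distribution to be $\Unif(T)$, and draw the target $c$ as a uniformly random labeling of $T$, which is realizable precisely because $T$ is shattered. A learner that draws $m$ examples observes the labels only of some $S\subseteq T$ with $|S|\le m$, and, conditioned on $S$ and on $c|_S$, its hypothesis $h$ is independent of the (uniform) labels $c$ assigns on $T\setminus S$. Hence the number of points of $T\setminus S$ on which $h$ agrees with $c$ is distributed as $\Bin(d-|S|,\tfrac12)$, so the accuracy of $h$ is at most $\tfrac1d\big(|S| + B\big)$ with $B\sim\Bin(d-|S|,\tfrac12)$. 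Since $\mathbb{E}[B] = \tfrac{d-|S|}{2}$, achieving accuracy $\ge\tfrac12+\gamma$ forces $B$ to exceed its mean by at least $\gamma d - |S|/2 \ge \gamma d - m/2$, which is at least $1$ once $m$ is a small enough constant fraction of $\gamma d$. In that regime the elementary anti‑concentration bound $\Pr[\Bin(n,\tfrac12)\ge \tfrac n2 + 1] < \tfrac12$ shows $h$ is accurate with probability strictly below $\tfrac12$; averaging over the random target $c$ then contradicts the weak learner's guarantee of succeeding with probability, say, $\ge\tfrac23$. Therefore $m = \Omega(\gamma d)$ (the remaining range $\gamma d = O(1)$ being immediate from $m\ge 1$), which completes the argument.

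I do not anticipate a genuine obstacle here; the only delicate point is the constant bookkeeping in the binomial anti‑concentration step, which is what makes the overhead come out as $1/\gamma$ rather than $1/\gamma^2$ or $1/\sqrt\gamma$, together with fixing a single confidence convention for weak and strong learning (any constant confidence works, as does carrying a $\log(1/\delta)$ term on both sides). It is worth noting that the same shattered‑set instance certifies that the $1/\gamma$ factor in \Cref{fact:dist-free upper bound} is tight up to logarithmic factors --- which is exactly the point of contrast with the larger $1/\gamma^2$ overhead established in \Cref{thm:smooth intro}.
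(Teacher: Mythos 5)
Your proof is correct and takes essentially the same route as the paper's (Claim~\ref{claim:ub-sample-cxty-strong}): both prove $m \geq \Omega(\gamma d)$ for $d = \mathrm{VC}(\mathcal{C})$ by placing the uniform distribution on a shattered set, and then conclude via the $O(d)$ realizable-PAC upper bound. The only difference is the final step: the paper fixes the sample $S$ and directly constructs a worst-case $c$ agreeing with the labels on $S$ and anti-correlated with the hypothesis off $S$, obtaining $\gamma \leq m/d$ by an expected-advantage calculation; you instead average over a uniformly random labeling $c$ and invoke binomial anti-concentration to show the success probability drops below any fixed confidence threshold. Your averaging formulation is arguably slightly more careful about the quantifier order in the PAC guarantee (adversary commits to $c$ before the sample is drawn), but the two arguments are morally identical.
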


The proof of~\Cref{fact:dist-free upper bound} is simple and follows from basic VC theory; see \Cref{claim:ub-sample-cxty-strong}. 

\begin{remark}[A computational-statistical gap for distribution-independent boosting?] 
The upper bound of~\Cref{fact:dist-free upper bound} is realized by a time-inefficient algorithm.  Existing time-efficient distribution-independent boosters do {\sl not}  match it---they incur an overhead of $O(1/\gamma^2)$. This raises the question of whether there exist time-efficient distribution-independent boosters achieving the optimal sample complexity overhead of $O(1/\gamma)$. A negative answer would be especially interesting as it would show that distribution-independent boosting exhibits a {\sl computational-statistical gap}. 

\Cref{thm:smooth intro} together with existing time-efficient smooth boosters, on the other hand, shows that there is {\sl no} computational-statistical gap in the smooth setting. 
\end{remark}

\subsection{Second result: Tightness of Impagliazzo's hardcore theorem}

Our second result concerns Impagliazzo’s hardcore theorem~\cite{Imp95} from complexity theory. Suppose $f : \{0,1\}^n \to \{0,1\}$ is mildly hard for size-$s$ circuits in the sense that every such circuit disagrees with $f$ on at least $1\%$ of inputs. Of course, different size-$s$ circuits may err on different sets of density $1\%$. The hardcore theorem shows that there is nevertheless a {\sl fixed} set of inputs on which $f$’s hardness is concentrated: there is a set $H \subseteq \{0,1\}^n$ of constant density such that $f$ is extremely hard against size-$s'$ circuits on inputs drawn from $H$. In more detail, for every~$\gamma$, there is a set $H$ of constant density such that every circuit of size $s' \le O(\gamma^2)\cdot s$ agrees with $f$ on at most a $\frac1{2} + \gamma$ fraction of inputs within~$H$.

The hardcore theorem was originally introduced to give a new proof of Yao’s XOR lemma~\cite{Yao82,GNW11} and has since found applications in cryptography~\cite{Hol05} and pseudorandomness~\cite{VZ12}. It has also been shown to be closely related to the dense model theorem in arithmetic combinatorics~\cite{RTTV08,TTV09} and the notion of multicalibration in algorithmic fairness~\cite{HJKRR18,CDV24}. Trevisan calls the hardcore theorem “one of the bits of magic of complexity theory”~\cite{Tre07}.   

\paragraph{Size loss and smooth boosting.} A downside of this result is the loss in circuit size, i.e.~the fact that $f$'s hardness on $H$ only holds against circuits of size $s'$ where $s' \ll s$. To see why this size loss occurs in all existing proofs of the hardcore theorem~\cite{Imp95,KS03,Hol05,BHK09}, we note that they all proceed via the contrapositive. One assumes that for every $H$ of constant density, there is a circuit of size $s'$ that agrees with $f$ on at least a $\frac1{2}+\gamma$ fraction of the inputs in $H$, and one constructs a circuit of size $s$ that agrees with $f$ on $99\%$ of all inputs. This size-$s$ circuit is obtained by combining several size-$s'$ circuits that one gets by instantiating the assumption with different~$H$’s. 

Klivans and Servedio~\cite{KS03} observed this formulation of the hardcore theorem in its contrapositive syncs up perfectly with the setup of smooth boosting: the uniform distribution over sets $H$ of constant density correspond to smooth distributions; the size-$s'$ circuits that achieve $\gamma$-advantage on the $H$'s can be viewed as weak hypotheses; the final size-$s$ circuit combines several size-$s'$ weak hypotheses into a strong hypothesis that achieves accuracy $99\%$, exactly like in boosting.  

It is clear that such a proof strategy inevitably results in a statement where $s \gg s'$, i.e.~inevitably incurs a size loss. Indeed, Lu, Tsai, and Wu~\cite{LTW11} formalized the notion of a “strongly black box proof” and showed that such proofs must incur a size loss of $s' \le O(\gamma^2) \cdot s$, matching the parameters achieved by known proofs. We refer the reader to their paper for the precise definition of a strongly black box proof, mentioning here that it is a special case of proofs that “proceed via the contrapositive”. 

\cite{LTW11}’s result still leaves open the question, first raised by Trevisan~\cite{Tre10}, of whether such a size loss is {\sl inherent} to the statement of the hardcore theorem, regardless of proof strategy. Our second result shows that this is indeed the case:\medskip

\begin{tcolorbox}[colback = white,arc=1mm, boxrule=0.25mm]
\begin{theorem}
    \label{thm:impagliazzo intro}
    For any $\gamma>0$ and sufficiently large $s$, there is an $f:\bits^n\to\bits$ such that 
    \begin{enumerate}
        \item \textbf{$f$ is mildly hard for size-$s$ circuits}: Every circuit of size $s$ agrees with $f$ on at most $99\%$ of inputs in $\bits^n$.
        \item \textbf{For every hardcore set, $f$ is mildly correlated with a small circuit}: For all constant density sets $H\sse \bits^n$, there is a circuit of size $O(\gamma^2 s)$ which computes $f$ on $\tfrac{1}{2}+\gamma$ fraction of inputs from $H$.
    \end{enumerate}
\end{theorem}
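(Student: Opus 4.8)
The plan is to transfer the separation of \Cref{thm:smooth intro} from the language of sample complexity to that of circuit size, under the dictionary: uniform distribution on $\bits^n$ $\leftrightarrow$ data distribution; density-$\delta$ set $H$ $\leftrightarrow$ $\tfrac1\delta$-smooth distribution; circuit of size $\sigma$ $\leftrightarrow$ hypothesis/learner output of description length $\widetilde O(\sigma)$. Let $\mcC=\{c_\theta:\theta\in\Theta\}$ be the concept class from the proof of \Cref{thm:smooth intro}, instantiated with advantage $\Theta(\gamma)$ and a sample size $m$ to be fixed later, over a domain $X$. Take $f$ to be (a suitably padded version of) the \emph{disjoint union} of the family: on domain $\Theta\times X$, set $f(\theta,x)=c_\theta(x)$, with $n$ chosen so $2^n\asymp|\Theta|\cdot|X|$. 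The point of the union is that, although each $c_\theta$ is individually computed by a small circuit, agreeing with $f$ on $99\%$ of $\bits^n$ forces a circuit to (approximately) reconstruct a constant fraction of the entire family at once, which is impossible for size-$s$ circuits once $|\Theta|$ is large --- this will give item~1.

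For item~2, fix a density-$\delta$ set $H$. With a threshold $\beta=\Theta(\gamma\delta)$, call $\theta$ \emph{good} if $H_\theta:=\{x:(\theta,x)\in H\}$ has density $\ge\beta$ in $X$; then the good columns together carry all but a $\Theta(\gamma)$-fraction of $H$'s mass, and for each good $\theta$ the distribution $\Unif_{H_\theta}$ is $O(1/(\gamma\delta))$-smooth. Apply item~1 of \Cref{thm:smooth intro}: since the weak learner succeeds with high probability, for each good $\theta$ there is a good choice of its $m$ random examples on which it outputs a hypothesis $h_\theta$ with accuracy $\tfrac12+\Theta(\gamma)$ relative to $c_\theta$ on $\Unif_{H_\theta}$, and $h_\theta$ has circuit size at most the running time of the (efficient) weak learner, i.e.\ $\mathrm{poly}(m,n)$. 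Hard-wiring all the $h_\theta$'s into one lookup-table circuit --- output $h_\theta(x)$ on a good $\theta$, a fixed bit otherwise --- produces a circuit of size $|\Theta|\cdot\mathrm{poly}(m,n)$ that agrees with $f$ on a $\tfrac12+\gamma$ fraction of $H$, the losses on bad columns and at the boundary being absorbed by the $\Theta(\gamma)\to\gamma$ slack. It then remains only to choose parameters so that $|\Theta|\cdot\mathrm{poly}(m,n)=O(\gamma^2 s)$.

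For item~1, one argues by a probabilistic construction: the concepts are generated with independent internal randomness, so a \emph{fixed} size-$s$ circuit agrees with a random $f$ on $\ge 99\%$ of $\bits^n$ only with probability so small that a union bound over all $2^{O(s\log s)}$ size-$s$ circuits still leaves a valid $f$; randomizing even a single concept already makes it perfectly balanced with strong anticoncentration against any fixed function, which drives the estimate. Keeping books: the $99\%$-approximate circuit complexity of $f$ should be $\widetilde\Theta(|\Theta|\cdot T)$, with $T$ the same quantity for one concept, which in $\mcC$ tracks its \emph{strong} sample complexity $\widetilde\Theta(m/\gamma^2)$; one then sets $s\asymp|\Theta|\cdot m/\gamma^2$, i.e.\ $|\Theta|\asymp\gamma^2 s/m$, takes $m$ as small as $\mcC$ allows (a fixed polynomial in $n$, which is what ``$s$ sufficiently large'' buys), and checks that $2^n\asymp|\Theta||X|$ comfortably exceeds $s\log s$. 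The weak-circuit bound is then $O(\gamma^2 s)$, and the gap $\gamma^{-2}$ between the two circuit sizes is exactly the gap between the strong and weak sample complexities of $\mcC$, which in turn traces back to the $\Omega(1/\gamma^2)$ query lower bound (a Gap-Majority-type obstruction) underlying \Cref{thm:smooth intro}.

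I expect the real obstacle to be making the disjoint union \emph{genuinely circuit-hard} rather than merely hard to learn from few samples --- a random member of $\mcC$ has a tiny circuit, so the hardness of $f$ rests entirely on the family being ``incompressible'' against size-$s$ circuits --- and, more delicately, pushing the counting argument through tightly enough to land at $s'=O(\gamma^2 s)$ rather than a $\mathrm{polylog}(s)$-weaker bound. Concretely, the naive anticoncentration estimate loses a $\log s$ factor against the $2^{O(s\log s)}$ count of size-$s$ circuits, so the construction of $\mcC$ (the granularity of the gadgets, the domain size $|X|$, and how per-concept complexity is spread across the truth table) must be chosen with this in mind; this interplay between per-concept complexity, the number of concepts, and the description length of size-$s$ circuits is where the bulk of the work lies. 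A secondary subtlety is the universal quantifier over hardcore sets $H$ in item~2: $H$ may be adversarially concentrated on a few columns or adversarially aligned with the concepts' internal randomness, so one must use that the weak learner of \Cref{thm:smooth intro} succeeds \emph{uniformly over all smooth distributions} (restricted to the good columns), not any weaker per-distribution guarantee.
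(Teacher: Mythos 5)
Your disjoint-union construction $f(\theta,x)=c_\theta(x)$ is genuinely different from the paper's, which takes $f$ to be a worst-case member of the lifted class $\lift_n(\Maj_k)$. But your item~2 argument has a quantitative gap that loses the crucial $\gamma^2$ factor. You hardwire the weak learner's hypothesis $h_\theta$ for each good column and estimate its circuit size as $\mathrm{poly}(m,n)$. For the learner of \Cref{thm:smooth intro}, the hypothesis is $\sign\bigl(\sum_{i\le k}g_i(X^{(i)})\ge\tau\bigr)$ where each $g_i:\bits^n\to\{-1,0,1\}$ is a table on the $2^n$-point block domain; a circuit computing it has size $\Theta(k\cdot 2^n/n)$. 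That is the \emph{same} order as the $99\%$-approximate circuit complexity $T$ of a single random concept of $\lift_n(\Maj_k)$ (also $\Theta(k\cdot 2^n/n)$), which is exactly what powers your item-1 lower bound through the disjoint union. So your item-2 circuit has size $\Theta(|\Theta|\cdot T)\approx s$, not $O(\gamma^2 s)$: to make the bookkeeping close you need $h_\theta$ to be a $\gamma^2$ factor smaller than $T$, and it is not.

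The gap reflects a real distinction between the \emph{learner} and the \emph{circuit} that item~2 asks for. The weak learner must succeed against an adversarial smooth distribution without knowing which block is informative, so it stores information about all $k$ inner functions; the circuit for item~2 is nonuniform, can be tailored to $f$ and to $H$, and for $\lift_n(\Maj_k)$ need only compute a single $f_i$ --- the lifted dictator of \Cref{claim:circuit-upper-bound}, of size $O(2^n/n)$, which achieves advantage $\Omega(1/\sqrt k)$ on any dense $H$ by \Cref{claim:maj-easy}. That asymmetry is where $1/k\approx\gamma^2$ really comes from. Once one sees that each column requires such a tailored small circuit rather than the learner's hypothesis, the disjoint-union reduction becomes circular --- you would be re-asking item~2 of the theorem for each $c_\theta$ separately --- and the natural fix is the paper's: embed the junta-versus-circuit gap \emph{inside} each $f=g(f_1,\dots,f_k)$ via $\Maj_k$'s junta complexity (\Cref{claim:junta-size-loss-proof-overview}), rather than spread hardness across an indexed family. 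A secondary worry in your bookkeeping --- that ``the $99\%$-approximate circuit complexity tracks the strong sample complexity'' --- is also not automatic for an arbitrary concept class; it holds here only because both lower bounds come from the same counting argument over $\lift_n(\Maj_k)$.
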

\end{tcolorbox}\medskip

\cite{LTW11} remarked in their paper that proving an unconditional result such as~\Cref{thm:impagliazzo intro}, one with no restriction on proof strategy, “appears to require proving circuit lower bounds, which seems to be far beyond our reach.” This is only a barrier if one requires $f$ to be explicit---our circuit lower bound in~\Cref{thm:impagliazzo intro} is proved using a (fairly involved) counting argument. 

\paragraph{Relationship between~\Cref{thm:smooth intro,thm:impagliazzo intro}.} They are incomparable, but our proof of~\Cref{thm:impagliazzo intro} is simpler and it will be more natural for us to present it first. By known connections between the hardcore theorem and smooth boosting~\cite{KS03},~\Cref{thm:smooth intro} implies an ${\Omega}(1/\gamma^2)$ lower bound on the {\sl round complexity} of smooth boosting.  (This is not a new result as an $\Omega(1/\gamma^2)$ lower bound on the round complexity even of distribution-independent boosting has long been known~\cite{Fre95}.) Proving an $\tilde{\Omega}(1/\gamma^2)$ lower bound on the sample complexity overhead of smooth boosting is significantly more difficult.

\subsection{Other related work}

Larsen and Ritzert~\cite{LR22} studied the sample complexity of distribution-independent boosting in terms of the VC dimension $d$ of the weak learner's hypothesis class, giving matching upper and lower bounds of $\Theta(d/\gamma^2)$. Our focus is on understanding the sample complexity {\sl overhead} of boosting, which is why our bounds are instead parameterized in terms of the sample complexity of weak learning the concept class (which can be different from $d$). Indeed, as already discussed, our work shows a separation between the sample complexity overheads of smooth and distribution-independent boosting, whereas~\cite{LR22}'s lower bound applies equally to both. (Our techniques are entirely different from~\cite{LR22}'s.) 

The size loss in the hardcore theorem translates into a corresponding size loss in Yao's XOR lemma. While our results do not have any direct implications for Yao's XOR lemma, we mention that there is also a line of work devoted to understanding the limitations of ``black box" (and other restricted types of) proofs of it~\cite{Sha04,AS14,AASY16,GR08,SV10,GSV19,Sha23}. Obtaining an analogue of~\Cref{thm:impagliazzo intro} for Yao's XOR lemma is a natural avenue for future work and is already a well-known challenge within complexity theory. Quoting~\cite{Imp95},  ``Why in all Yao-style arguments is there a trade-off between resources and probability, rather than a real increase in the hardness in the problem? If $f$ is hard
for resources $R$, the parity of many copies of $f$ should still be hard for resources $R$, not just some slightly smaller bound."
\section{Proof overview for~\Cref{thm:impagliazzo intro}: Tightness of the hardcore theorem}

\subsection{Tightness of the hardcore theorem for junta complexity}

Rather than directly prove that the size loss in the hardcore theorem is necessary for the circuit model of computation, we first prove it necessary for a substantially simpler model of computation, {\sl juntas}. A function $f: \bits^n \to \bits$ is a {\sl $k$-junta} if there is an $h:\bits^k \to \bits$ and subset of $k$ coordinates $S \subseteq [n]$ such that     
        $f(x) = h(x_S)$ for all $x \in \bn$.

\begin{definition}[Junta complexity]
    \label{def:junta-proof-overview}
    For any $g: \bits^n \to \bits$, the \emph{$\delta$-approximate junta complexity of $g$}, denoted $J(g, \delta)$, is the smallest $k$ for which there is a $k$-junta that agrees with $g$ on $(1 - \delta)$-fraction of all inputs. For a set $H\sse \bn$, we write $J_H(g,\delta)$ to denote the analogous quantity where agreement is measured with respect to the fraction of inputs from $H$.
\end{definition}
The hardcore theorem also applies to junta complexity: For any $g$ that is mildly hard for size-$k$ juntas and parameter $\gamma$, there is a hardcore set $H$ of constant density such that all juntas that achieve accuracy $\frac1{2} + \gamma$ on $H$ must have size $\Omega(\gamma^2 k)$. We show that this size loss is necessary and tight for juntas.
\begin{claim}[Tightness of the hardcore theorem for juntas]
    \label{claim:junta-size-loss-proof-overview}
    Fix any constant $c > 0$ and any sufficiently large and even $k$, the majority function on $k$ bits satisfies,
    \begin{enumerate}
        \item Every $k/2$ junta agrees with $\Maj_k$ on less than $0.8$ fraction of inputs. That is, $J(\Maj_k,0.2)>k/2$.
        \item For every set $H$ of density $c$, there is a $1$-junta that agrees with $\Maj_k$ on $\tfrac{1}{2}+\Omega_c(1/\sqrt{k})$ fraction of the points in $H$. That is, $J_H(\Maj_k, \frac1{2} - \Omega_c(1/\sqrt{k})) \le 1$.
    \end{enumerate}
\end{claim}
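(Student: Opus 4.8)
The plan is to prove the two parts separately; both are short, self-contained calculations about the majority function, and no machinery beyond elementary anticoncentration is needed. Throughout I would pass to $\{\pm1\}^k$ and write $\Maj_k(y)=\sgn(\sum_i y_i)$ (fixing the tie $\sum_i y_i=0$ arbitrarily).

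\textbf{Part 1: no $(k/2)$-junta $0.8$-approximates $\Maj_k$.} A $(k/2)$-junta depends on a coordinate set $S$ with $|S|\le k/2$, and since $\Maj_k$ is symmetric I may take $|S|=k/2$ (enlarging $S$ never hurts the best junta on it). For fixed $S$ the optimal junta is the Bayes predictor $y_S\mapsto \mathbf 1\{\Pr_{y_{\bar S}}[\Maj_k(y)=1\mid y_S]>\tfrac12\}$, with agreement $\mathbb E_{y_S}\!\big[\max(p_{y_S},1-p_{y_S})\big]$ where $p_{y_S}=\Pr_{y_{\bar S}}[\Maj_k(y)=1\mid y_S]$. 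Now set $L=\sum_{i\in S}y_i$ and $L'=\sum_{i\notin S}y_i$; these are i.i.d.\ sums of $k/2$ independent signs, and $\Maj_k(y)=\sgn(L+L')$. A one-line computation from the symmetry of $L'$ shows that, conditioned on $L=\ell\neq 0$, the Bayes prediction is correct with probability exactly $\tfrac12\big(1+\Pr_{L'}[\,|L'|<|\ell|\,]\big)$, while the event $L=0$ has probability $\binom{k/2}{k/4}2^{-k/2}=O(1/\sqrt k)$. Hence the agreement of the best $(k/2)$-junta is at most $\tfrac12+\tfrac12\Pr_{L,L'}[\,|L'|<|L|\,]+O(1/\sqrt k)$, and by symmetry of the i.i.d.\ pair $(L,L')$ we have $\Pr[\,|L'|<|L|\,]=\tfrac12\big(1-\Pr[\,|L|=|L'|\,]\big)\le\tfrac12$. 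So every $(k/2)$-junta agrees with $\Maj_k$ on at most $\tfrac34+O(1/\sqrt k)<0.8$ of all inputs once $k$ is large, i.e.\ $J(\Maj_k,0.2)>k/2$. The only care needed is the parity/tie bookkeeping in the conditional-probability identity and the crude Stirling estimate for $\Pr[L=0]$.

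\textbf{Part 2: a dictator works on every dense $H$.} The candidate $1$-juntas are the dictators $y\mapsto y_i$, whose agreement with $\Maj_k$ on $H$ is $\tfrac12\big(1+\mathbb E_{y\sim H}[y_i\Maj_k(y)]\big)$; so it suffices to find $i$ with $\mathbb E_{y\sim H}[y_i\Maj_k(y)]=\Omega_c(1/\sqrt k)$. Summing over $i$ and using the identity $\Maj_k(y)\cdot\sum_i y_i=\big|\sum_i y_i\big|$ gives $\sum_{i=1}^k \mathbb E_{y\sim H}[y_i\Maj_k(y)]=\mathbb E_{y\sim H}\big[\,\big|\sum_i y_i\big|\,\big]$, so by averaging it is enough to show $\mathbb E_{y\sim H}\big[\,\big|\sum_i y_i\big|\,\big]=\Omega_c(\sqrt k)$. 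This is where density enters: since every level set $\{y:\sum_i y_i=t\}$ has uniform mass $O(1/\sqrt k)$, the band $\{y:|\sum_i y_i|\le s\sqrt k\}$ has uniform mass $O(s)$, so picking the constant $s=s(c)$ small enough forces at least half of $H$ to lie on points with $|\sum_i y_i|\ge s\sqrt k$, whence $\mathbb E_{y\sim H}\big[\,\big|\sum_i y_i\big|\,\big]\ge\tfrac{s}{2}\sqrt k$. Averaging then produces a coordinate $i$ with $\mathbb E_{y\sim H}[y_i\Maj_k(y)]\ge\tfrac{s}{2\sqrt k}$, so the dictator $y_i$ agrees with $\Maj_k$ on a $\tfrac12+\Omega_c(1/\sqrt k)$ fraction of $H$, i.e.\ $J_H(\Maj_k,\tfrac12-\Omega_c(1/\sqrt k))\le 1$.

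\textbf{Main obstacle.} There is no serious obstacle here --- this claim is meant as the warm-up to the circuit version. The most delicate points are getting the exact constant in Part 1 (making sure the best $(k/2)$-junta is bounded away from $0.8$, for which the symmetrization $\Pr[\,|L'|<|L|\,]\le\tfrac12$ is the crux rather than a crude CLT estimate) and, in Part 2, tracking how the anticoncentration constant $s$ depends on the density $c$; both are elementary.
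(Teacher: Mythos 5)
Your proof is correct and follows essentially the same route as the paper's. For Part~1 both split $\sum_i x_i$ into the i.i.d.\ halves $L,L'$ over $S$ and its complement and use symmetry together with the anticoncentration fact $\Pr[|L|=|L'|]=O(1/\sqrt{k})$ to pin the best $(k/2)$-junta's agreement at $3/4+O(1/\sqrt{k})$; for Part~2 both rewrite $\sum_i \mathbb{E}_{x\sim H}[x_i\,\mathrm{Maj}_k(x)]=\mathbb{E}_{x\sim H}\big[\big|\sum_i x_i\big|\big]$, lower-bound the latter by $\Omega(c\sqrt{k})$ by observing that a band of width $\Theta(c\sqrt{k})$ around zero captures only a constant fraction of any density-$c$ distribution, and average over $i$ to extract an $\Omega_c(1/\sqrt{k})$-correlated dictator.
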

Both parts of \Cref{claim:junta-size-loss-proof-overview} follow from straightforward calculations. Taking $\gamma \coloneqq 1/\sqrt{k}$, it implies that $\Maj_k$ is mildly hard for $\Omega(k)$-juntas, and yet, for every hardcore set of constant density, it is possible to achieve advantage $\Omega(\gamma)$ using only an $O(\gamma^2  k)$ junta. 

\subsection{Lifting junta complexity to circuit complexity}
The brunt of the work in proving \Cref{thm:impagliazzo intro} is a {\sl lifting theorem}: If there is a function $g$ showing that size loss is necessary in the hardcore theorem for juntas, there is a corresponding function $F$ showing size loss is necessary for circuits. Proving such a lifting theorem requires a circuit lower bound; therefore, the choice of $F$ will need to be non-explicit. In particular, we will show that there is at least one such $F$ within the {\sl lifted class} of $g$. In the below definition, a ``balanced" function refers to one that outputs $0$ and $1$ on an equal number of inputs.\footnote{We restrict the definition to balanced functions for technical reasons that are not crucial for this high-level discussion.}

\ifnum\focs=1
\begin{definition}[Lifted class]
    \label{def:lifted}
    For any function $g:\bits^k \to \bits$ and $n \in \N$, we use $\lift_n(g)$ to denote the $n$-bit \emph{lifted class} of $g$ defined as
    \begin{align*}
        \lift_n(g) \coloneqq \{ g( &f_1, \ldots, f_k)\mid f_i: \bits^n \to \bits \\
        & \text{ is balanced for each }i = 1,\ldots, k\}.
    \end{align*}
\end{definition}
\else
\begin{definition}[Lifted class]
    \label{def:lifted}
    For any function $g:\bits^k \to \bits$ and $n \in \N$, we use $\lift_n(g)$ to denote the $n$-bit \emph{lifted class} of $g$ defined as
    \begin{equation*}
        \lift_n(g) \coloneqq \set*{ g(f_1, \ldots, f_k)\mid f_i: \bits^n \to \bits\text{ is balanced for each }i = 1,\ldots, k}.
    \end{equation*}
\end{definition}
\fi

We show that the circuit complexity of approximating the worst-case function in $\lift_n(g)$ is {\sl characterized} by the junta complexity of $g$:\medskip

\begin{tcolorbox}[colback = white,arc=1mm, boxrule=0.25mm]
\begin{theorem}[Lifting junta complexity to circuit complexity]
    \label{thm:lift-proof-overview}
    For any $g:\bits^k \to \bits$ and $n \geq k$,
    \begin{itemize}
        \item \textbf{Upper bounds lift:} Fix any constant $c > 0$. If for all sets $H_g \subseteq \bits^k$ of density $c$, we have $J_{H_g}(g,\lfrac1{2}-\gamma) \le r_{\sm}$, then for all $F \in \lift_n(g)$ and sets $H_F \subseteq \bits^{nk}$ of density $c$, there is a circuit of size
        \begin{equation*}
            O(r_{\sm} \cdot \tfrac{2^n }{n})
        \end{equation*}
        that agrees with $F$ on $\tfrac{1}{2} + \gamma$ fraction of inputs in $H_F$.
        \item \textbf{Lower bounds lift:} If $J(g,\delta) \ge r_{\lar}$
 then there is an $F \in \lift_n(g)$ for which all circuits of size
        \begin{equation*}
            \Omega(r_{\mathrm{large}} \cdot \tfrac{2^n }{n})
        \end{equation*}
        agree with $F$ on at most $1 - \Omega(\delta)$ fraction of inputs in $\bits^{nk}$.
    \end{itemize}
\end{theorem}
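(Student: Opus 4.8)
The plan is to establish the two directions separately: the first is a short reduction, the second is where essentially all of the work lies.

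\textbf{Upper bounds lift.} The idea is to turn the hypothesis into a \emph{single} junta that is good for the one smooth distribution that matters. Fix $F = g(f_1,\dots,f_k)\in\lift_n(g)$ and a density-$c$ set $H_F\subseteq\bits^{nk}$, and let $\Phi(x) := (f_1(x^{(1)}),\dots,f_k(x^{(k)}))\in\bits^k$ for $x = (x^{(1)},\dots,x^{(k)})$. Since every $f_i$ is balanced, each fibre $\Phi^{-1}(b)$ has size exactly $2^{(n-1)k}$, so the pushforward $\mu$ of $\mathrm{Unif}(H_F)$ under $\Phi$ is a distribution on $\bits^k$ with $\mu(b)\le 2^{-k}/c$ — a ``$(1/c)$-smooth'' distribution, and in particular a convex combination of uniform distributions over density-$c$ sets (up to rounding). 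Next I would invoke von Neumann's minimax theorem: the hypothesis says that for every density-$c$ set $H_g$ there is an $r_{\sm}$-junta with advantage $\ge\gamma$ over $\mathrm{Unif}(H_g)$, and since both the set of $r_{\sm}$-juntas and the set of density-$c$ subsets of $\bits^k$ are finite, this yields a \emph{distribution} $\mathcal{H}$ over $r_{\sm}$-juntas with advantage $\ge\gamma$ against every such convex combination, in particular against $\mu$. Averaging, some $h^\star$ in the support of $\mathcal{H}$ has $\Pr_{b\sim\mu}[h^\star(b) = g(b)]\ge\tfrac12+\gamma$. Finally, realize $h^\star$ as a circuit that, on input $x$, computes $f_i(x^{(i)})$ for each of the $\le r_{\sm}$ coordinates $i$ that $h^\star$ depends on (size $O(2^n/n)$ apiece, by the Shannon--Lupanov bound) and then applies $h^\star$ (lower-order size); this has size $O(r_{\sm}\cdot 2^n/n)$ and agrees with $F$ exactly where $h^\star(\Phi(x)) = g(\Phi(x))$, i.e. on a $\mu$-mass $\ge\tfrac12+\gamma$ of $H_F$.

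\textbf{Lower bounds lift: framework.} For the lower bound I would use the probabilistic method, taking $f_1,\dots,f_k$ to be independent uniformly random balanced functions and showing that with positive probability the resulting $F = g(f_1,\dots,f_k)$ is not $\big(1-\Omega(\delta)\big)$-approximated by any circuit of size $s = \Theta(r_{\lar}\cdot 2^n/n)$. We have $\log_2|\lift_n(g)|\ge r_{\lar}2^n - O(nk)$, since $g$ — being hard for $(r_{\lar}-1)$-juntas — depends on at least $r_{\lar}$ of its inputs and the lifting map is essentially injective; against this stands the bound $2^{O(s\log(s+nk))} = 2^{O(r_{\lar}2^n)}$ on the number of circuits of size $\le s$. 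So it suffices to prove the key lemma: for every fixed $\psi\colon\bits^{nk}\to\bits$,
\[
\#\big\{\,F\in\lift_n(g)\ :\ \Pr_x[\psi(x) = F(x)]\ge 1-\Omega(\delta)\,\big\}\ \le\ |\lift_n(g)|\cdot 2^{-\Omega(r_{\lar}2^n)},
\]
with the constant in the second $\Omega(\cdot)$ exceeding the one hidden in the circuit count; a union bound over $\psi$ then finishes.

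\textbf{Lower bounds lift: proving the key lemma.} Writing $F = g(f_1,\dots,f_k)$ and expanding $(-1)^{g}$ in the Fourier basis of $\bits^k$ gives
\[
\Pr_x[\psi(x) = F(x)] - \tfrac12 \;=\; \tfrac12\sum_{T\subseteq[k]}\widehat g(T)\, c_T(\vec f),\qquad c_T(\vec f) := \Big\langle(-1)^{\psi(x)},\ \textstyle\prod_{i\in T}(-1)^{f_i(x^{(i)})}\Big\rangle.
\]
Because the $f_i$ are balanced, the ``lifted characters'' $\{\prod_{i\in T}(-1)^{f_i(x^{(i)})}\}_{T\subseteq[k]}$ are orthonormal, so $\sum_T c_T(\vec f)^2\le 1$; with $\sum_T\widehat g(T)^2 = 1$ (Parseval), agreement $\ge 1-\Omega(\delta)$ forces the correlation vector $c(\vec f)$ to be $\ell_2$-close to $\widehat g$. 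The hypothesis $J(g,\delta)\ge r_{\lar}$ translates, via the standard junta/Fourier correspondence (the best $K$-junta has error $\le\tfrac12\sum_{T\not\subseteq K}\widehat g(T)^2$), into $\sum_{T\not\subseteq K}\widehat g(T)^2\gtrsim\delta$ for every $|K|<r_{\lar}$; combining these two facts forces $\sum_{T\not\subseteq K}c_T(\vec f)^2\gtrsim\delta^2$ for every $|K|<r_{\lar}$ — i.e. the ``energy'' of the correlation profile of $\psi$ with $\vec f$ is spread over at least $r_{\lar}$ of the coordinates and cannot be accounted for by any $<r_{\lar}$ of the $f_i$. The remaining step is a counting argument: since $f_1,\dots,f_k$ are independent and each is an incompressible $2^n$-bit object, the probability that a \emph{fixed} $\psi$ is this correlated with a random balanced tuple is $2^{-\Omega(r_{\lar}2^n)}$; equivalently, a ``good'' $\vec f$ can be encoded (given $\psi$) by recording most $f_i$ verbatim and the $\Omega(r_{\lar})$ ``detectable'' ones relative to information read off $\psi$, saving $\Omega(r_{\lar}2^n)$ bits overall.

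\textbf{The main obstacle.} This last step is the crux. The difficulty is that $g$ may depend only \emph{weakly} on each individual coordinate — for $g = \Maj_k$ every coordinate has influence $\Theta(1/\sqrt k)$ — so ``detectability of $f_i$ from $\psi$'' cannot be quantified one coordinate at a time: doing so loses a $\mathrm{poly}(k)$ factor that collapses the savings from $\Omega(r_{\lar}2^n)$ down to $\Omega(2^n)$ and hence yields only the circuit-size loss $\Theta(2^n/n)$ instead of the needed $\Theta(r_{\lar}\cdot 2^n/n)$. The argument must instead aggregate the constraints from all coordinates simultaneously — for majority-like $g$ this is the observation that approximating $g(\vec f)$ well forces a \emph{sum} of the $f_i(x^{(i)})$'s to be estimated accurately, which in turn forces near-exact knowledge of many individual $f_i$ — and push the concentration bound through despite the dependencies among the $c_T(\vec f)$ (e.g. via hypercontractivity, together with the fact that a small circuit cannot ``substantially depend on'' $\Omega(r_{\lar})$ of the $nk$-bit input blocks). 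Making all the constants line up so that the per-circuit bound beats the circuit count is exactly what makes this ``a fairly involved counting argument.''
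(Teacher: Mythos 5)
Your upper bound is essentially the paper's: push $\mathrm{Unif}(H_F)$ forward along $\Phi$, note the pushforward has density $c$ because the $f_i$ are balanced, pick an $r_\sm$-junta for that distribution, and realize it as a circuit by computing $\le r_\sm$ of the $f_i$'s exactly via Lupanov. The one wrinkle you handle via minimax---that the hypothesis is stated over density-$c$ \emph{sets} while the pushforward is a density-$c$ \emph{distribution}---the paper sidesteps by just phrasing the formal statement (Claim~\ref{claim:circuit-upper-bound}) in terms of density-$c$ distributions from the start. Your detour is valid but unnecessary.

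The lower bound is where there is a genuine gap. Your framework (random balanced $f_i$'s, union bound over circuits, so it suffices to show a fixed $\psi$ is $(1-\Omega(\delta))$-close to at most a $2^{-\Omega(r_\lar 2^n)}$ fraction of $\lift_n(g)$) is the right shape and matches the paper. But you never actually prove the key per-$\psi$ bound. Your Fourier setup---$c_T(\vec f)$ is $\ell_2$-close to $\hat g$, and $J(g,\delta)\ge r_\lar$ forces Fourier mass outside every small $K$---is a reasonable way to extract a constraint on $\vec f$, but the final step, ``a good $\vec f$ can be encoded saving $\Omega(r_\lar 2^n)$ bits,'' is exactly what you then concede you cannot carry out: you observe that treating coordinates one at a time loses a $\mathrm{poly}(k)$ factor and collapses the savings to $\Omega(2^n)$, and you gesture at ``aggregating the constraints from all coordinates simultaneously'' and ``hypercontractivity'' without giving an argument. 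That is not a minor detail; it is the entire content of the lower bound.

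The paper's route is different in a way that actually closes this gap. Rather than correlating an arbitrary $\psi$ against the lifted characters, it first uses the triangle inequality to reduce the covering question to comparing two members of $\lift_n(g)$: $\max_C \Pr_{\bF}[\dist(C,\bF)\le\delta] \le \max_{F}\Pr_{\bF'}[\dist(F,\bF')\le 2\delta]$. This is a crucial move, because when both $F=g(f_1,\dots,f_k)$ and $F'=g(f_1',\dots,f_k')$ are in $\lift_n(g)$, the joint distribution of $(f_i(\bx),f_i'(\bx))$ is exactly the $\alpha$-correlated distribution with $\alpha_i=\Ex[f_i f_i']$, and $\dist(F,F')\le 2\delta$ becomes a statement about the $\alpha$-correlated error of $g$ itself (Proposition~\ref{prop:soft-junta-error}). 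This motivates defining \emph{soft} junta complexity $\tilde J$ (Definition~\ref{def:soft-junta-complexity}), which captures exactly the quantity $\sum_i\alpha_i^2$ that controls the covering. The per-$\psi$ bound then reduces to the concentration inequality $\Pr[\sum_i\balpha_i^2\ge t]\le\exp(-\Omega(t2^n-k))$ for random balanced $f_i'$ (Lemma~\ref{lem:concentration-alphas-proof-overview}), which is proved by coupling the balanced $f_i'$ to fully independent ones (to restore sub-Gaussianity) and then applying Bernstein for sub-exponential variables. Finally, $\tilde J$ is related back to $J$ up to constant factors (Lemma~\ref{lem:connect-soft-and-hard-juntas}), using the Fourier identity $\Ex_\bx[\Ex_{\by\mid\bx}g(\by)]^2=\sum_S\hat g(S)^2\prod_{i\in S}\alpha_i^2$---this is where the Fourier analysis you wanted to use actually appears, but one level removed from the counting. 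So the ``aggregate over all coordinates simultaneously'' difficulty you identify is resolved not by a cleverer compression of $\vec f$ against a fixed $\psi$, but by (i) the triangle-inequality reduction to two lifted functions, (ii) the soft-junta relaxation that turns the Fourier-mass constraint into a clean $\sum_i\alpha_i^2$ constraint, and (iii) a concentration inequality that handles the dependencies introduced by the balancedness constraint via coupling.
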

\end{tcolorbox}
\medskip 

\Cref{thm:impagliazzo intro} (tightness of the hardcore theorem for circuits) follows by combining \Cref{thm:lift-proof-overview} with \Cref{claim:junta-size-loss-proof-overview} (tightness of the hardcore theorem for juntas).

The upper bound of \Cref{thm:lift-proof-overview} is straightforward. A basic fact of circuit complexity shows that every $n$-bit function $f$ can be computed {\sl exactly} by a circuit of size $O(2^n/n)$~\cite{Lup58}. By the assumption on the junta complexity of $g$, there is some set of $r$ many $f_i$'s that are sufficient to approximate $F$ to the desired accuracy. For the upper bound, we compute these $f_i$'s exactly using $r$ many circuits of size $O(2^n/n)$ and then combine the responses. The lower bound shows that this naive strategy is optimal.

\begin{remark}[Contrast with Uhlig's mass production theorem]
    It is interesting to contrast our lower bound with {\sl Uhlig's mass production theorem} \cite{Uhl74,Uhl92}. This surprising theorem states that for any $g:\bits^k \to \bits$ and any {\sl single} $f:\bits^n \to \bits$, the composed function
\begin{equation*}
    F(X^{(1)},\ldots, X^{(k)}) = g(f(X^{(1)}), \ldots, f(X^{(k)}))
\end{equation*}
can be computed by a circuit of size $O(2^n/n)$, with {\sl no} overhead in terms of $g$'s complexity. This implies that many copies of a function $f$ can be computed ``for free," since a single copy of a worst-case $f$ requires circuits of size $\Omega(2^n/n)$~\cite{Sha49}. 

In contrast, the lower bound of \Cref{thm:lift-proof-overview} shows that if we wish to compute $g$ applied to $k$ {\sl different} functions $f_1,\ldots,f_k$, then an overhead equaling $g$'s junta complexity is necessary.
\end{remark}

\subsection{Soft junta complexity}
One reason that the lower bound in \Cref{thm:lift-proof-overview} is challenging to prove is that there is a broader class of ``softer" strategies to consider: Rather than {\sl exactly}  computing $r$ many $f_i$'s, one could choose to {\sl approximate} much more than $r$ many $f_i$'s. This could result in a smaller overall circuit since an arbitrary $f_i:\bits^n \to \bits$ can be approximated to non-trivial accuracy by a circuit of size $\ll 2^n/n$. To capture and reason about such strategies, we will introduce {\sl soft} junta complexity.
\begin{definition}[$\alpha$-correlated-distance and error]
    \label{def:alpha-corr-error}
    For any $\alpha \in [-1,1]^k$, let $\bx,\by$ be random variables on $\bits^k$ with the following joint distribution:
    \begin{enumerate}
        \item $\bx$ is drawn uniformly on $\bits^k$.
        \item Each bit of $\by_i$ is independently set to $\bx_i$ with probability $(1 + \alpha_i)/2$ and otherwise set to $-\bx_i$. Note that this guarantees the correlation is $\Ex[\bx_i \by_i] = \alpha_i$.
    \end{enumerate}
    For any $g,h: \bits^k \to \bits$ and $\alpha \in [-1,1]^k$, the \emph{$\alpha$-correlated-distance of $g$ and $h$} is defined as
    \begin{equation*}
        \dist_{\alpha}(g,h) \coloneqq \Pr[g(\by) \neq h(\bx)].
    \end{equation*}
    When $\alpha = \vec{1}$, we drop the subscript and refer to this quantity as simply the distance between $g$ and $h$,
    \begin{equation*}
        \dist(g,h) \coloneqq \Prx_{\bx \sim \bits^k}[g(\bx) \neq h(\bx)].
    \end{equation*}
    Finally, the \emph{$\alpha$-correlated-error of $g$} is the quantity
    \[ \error_{\alpha}(g) \coloneqq \mathop{\min}_{h : \bits^k\to\bits} \Big\{ \dist_{\alpha}(g,h) \Big\}. \]
\end{definition}
\begin{definition}[Soft junta complexity]
    \label{def:soft-junta-complexity}
    For any $g:\bits^k \to \bits$ and $\delta > 0$, the \emph{$\delta$-approximate soft junta complexity} of $g$, denoted $\tilde{J}(g, \delta)$, is defined as
    \begin{equation*}
        \tilde{J}(g, \delta) \coloneqq \inf_{\substack{\alpha \in [-1,1]^k, \\ \error_{\alpha}(g) \leq \delta}} \Bigg\{\sum_{i \in [k]}\alpha_i^2\Bigg\}.
    \end{equation*}
\end{definition}
Note that standard (non-soft) junta complexity can similarly be defined in terms of $\alpha$-correlated error, but where $\alpha$ is only allowed to be chosen from the set $\zo^k$. Soft junta complexity can therefore be thought of as a continuous relaxation of standard junta complexity.

The proof of the lower bound in~\Cref{thm:lift-proof-overview} has two main steps: First, we show that if $g$ has high soft junta complexity, then there is a function in $\lift_{n}(g)$ that requires a large circuit to approximate. 

\begin{lemma}[Step 1: Lower bound in terms of soft junta complexity]
\label{lem:lower-bound-from-soft}
     For any $k \leq 2^{n-1}$ and $g:\bits^k \to \bits$, there is some $F \in \lift_n(g)$ for which any circuit that agrees with $F$ on $1-\delta$ fraction of inputs has size at least $\Omega(\tilde{J}(g,2\delta) \cdot 2^n /n )$.
\end{lemma}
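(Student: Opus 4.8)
\textbf{Proof proposal for Lemma~\ref{lem:lower-bound-from-soft}.}

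The plan is to prove the contrapositive via a counting/probabilistic argument: if $\tilde J(g,2\delta)$ is large, then most functions $F$ in a suitable random subfamily of $\lift_n(g)$ cannot be $(1-\delta)$-approximated by any small circuit. First I would fix a notion of ``hard'' $F$: given $F = g(f_1,\dots,f_k) \in \lift_n(g)$ and a candidate circuit $C$ of size $s$ on $nk$ inputs, say $C$ is a good approximator if $\Prx_{X}[C(X) = F(X)] \ge 1-\delta$ over uniform $X \in \bits^{nk}$. The key structural observation connecting this to soft junta complexity is that a size-$s$ circuit $C$ on $nk$ bits, viewed through the random coordinate experiment, induces for each input block $X^{(i)} \in \bits^n$ a conditional behavior, and the amount of ``attention'' $C$ can pay to block $i$ is governed by how well $C$ correlates with $f_i$; if $C$ is small it cannot strongly depend on many of the blocks simultaneously. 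Concretely, I would argue that a good approximator $C$ yields a vector $\alpha \in [-1,1]^k$ with $\error_\alpha(g) \le 2\delta$ (hence $\sum_i \alpha_i^2 \ge \tilde J(g,2\delta)$), where $\alpha_i$ measures the correlation between (a restriction/projection of) $C$ and $f_i$; and then $\sum_i \alpha_i^2$ small enough forces $s$ to be large because computing something meaningfully correlated with many independent balanced functions $f_i$ requires circuit size scaling with $\sum_i \alpha_i^2$ times $2^n/n$ (each balanced $f_i$ being generically Shannon-hard).

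In more detail, the steps I would carry out are: (1) Choose the $f_i$ to be independent uniformly random balanced functions $\bits^n \to \bits$; this makes $F = g(\bf_1,\dots,\bf_k)$ a random element of $\lift_n(g)$, and it suffices to show the ``hard'' event has nonzero (in fact overwhelming) probability. (2) Fix a circuit $C$ of size $s$; bound $\Prx_{\bf_1,\dots,\bf_k}[C \text{ is a good approximator of } g(\bf)]$. For this, I would pass from $C$ to the induced correlations: define $\alpha_i(C,\bf_i)$ appropriately and show that being a good approximator forces, with the randomness of the $\bf_i$'s, that the realized $\alpha$ satisfies $\error_\alpha(g) \lesssim 2\delta$ with good probability — this is where the factor $2$ (versus $\delta$) comes from, absorbing the concentration slack. (3) Invoke the definition of $\tilde J$: such an $\alpha$ has $\sum_i \alpha_i^2 \ge \tilde J(g,2\delta)$. (4) Show that for a size-$s$ circuit $C$, the probability over random balanced $\bf_i$ that $C$ achieves correlation profile with $\sum_i \alpha_i^2 \ge \tilde J(g,2\delta) =: r$ is at most $\exp(-\Omega(r \cdot 2^n/n) \cdot (\text{something}) )$ times a modest factor — essentially because $C$ has to ``encode'' roughly $r$ functions' worth of information about truly random balanced functions, and a counting bound (number of size-$s$ circuits is $2^{O(s\log s)}$, or better $2^{O(s)}$ with a careful encoding) against the number of balanced functions on $n$ bits ($\binom{2^n}{2^{n-1}} = 2^{\Theta(2^n)}$) gives the needed deficit. (5) Union bound over all circuits $C$ of size $s = c\,r\,2^n/n$ for a small constant $c$: there are at most $2^{O(s \log s)} = 2^{O(r \cdot 2^n)}$ such circuits, and if the per-circuit failure probability is $2^{-\omega(r \cdot 2^n)}$, we win, so there exists $F \in \lift_n(g)$ beating all of them.

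The main obstacle I anticipate is Step~(2)/(4): correctly defining the correlation vector $\alpha$ extracted from an arbitrary circuit $C$ so that (a) a good approximator provably induces $\alpha$ with $\error_\alpha(g) \le 2\delta$, matching the $\alpha$-correlated-error formalism of Definition~\ref{def:alpha-corr-error} (with its explicit noisy-copy joint distribution), and (b) the event ``$C$ has correlation $\ge \alpha_i$ with the random balanced $\bf_i$'' has probability controllable by the size of $C$. The subtlety is that $C$'s dependence on block $i$ need not factor cleanly — $C$ sees all blocks jointly — so I would need a hybrid/averaging argument: freeze all blocks except block $i$ to random values, obtain an induced (random) circuit $C_i$ on $n$ bits of size $\le s$, let $\alpha_i$ be (roughly) its correlation with $\bf_i$, and argue that (i) the vector $(\alpha_i)$ this produces witnesses $\error_\alpha(g)\le 2\delta$ by a Markov/averaging argument over the freezing, using that $g$ is balanced, and (ii) a Chernoff bound over the $2^n$ inputs plus the $2^{O(s)}$ count of size-$\le s$ $n$-bit circuits controls each $\alpha_i$, and crucially these are sub-independent across $i$ given the product structure of the $\bf_i$'s, so $\Prx[\sum_i \alpha_i^2 \ge r]$ decays like the product of the individual tail bounds, yielding the $\exp(-\Omega(r\cdot 2^n/n))$ rate. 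Getting the constants in this trade-off to close — so that the union bound over circuits of size $\Omega(r\cdot 2^n/n)$ still leaves a surviving $F$ — is the delicate quantitative heart of the argument, and I would expect to tune the constant $c$ in $s = c\,r\,2^n/n$ and possibly lose logarithmic factors that are harmless at this level of generality but would need care to state cleanly.
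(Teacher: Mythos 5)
Your high‐level plan (probabilistic method over a random $F=g(\bf_1,\dots,\bf_k)$ with random balanced $\bf_i$, bound the per‐circuit covering probability, then union bound over the $2^{O(s\log s)}$ size-$s$ circuits) matches the paper's structure, and you correctly identify the crux: Step (2)/(4), extracting a correlation vector $\alpha$ from an arbitrary circuit $C$ such that a good approximator yields $\error_\alpha(g)\le 2\delta$. But the hybrid/freezing mechanism you sketch for doing this has a real gap. The quantity $\error_\alpha(g)$ is defined via a very specific \emph{noisy-copy} joint distribution $\mathcal{D}(\alpha)$ on $\bits^k\times\bits^k$ (Definition~\ref{def:alpha-corr-error}), and for the definition of $\tilde{J}$ to kick in you need to exhibit an $\alpha$ with $\error_\alpha(g)\le 2\delta$, not merely a circuit whose restrictions are ``correlated'' with the $f_i$. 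An arbitrary circuit $C$ on $nk$ bits does not factor as $g(C_1,\dots,C_k)$; its dependence across blocks is entangled, and freezing all but one block to random values gives you per-block correlations that do not obviously combine into a legal $\alpha$ with the required $\mathcal{D}(\alpha)$-structure. You flag this as ``the delicate quantitative heart,'' but I don't see how your averaging argument closes it.

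The paper avoids this difficulty entirely with a triangle-inequality reduction (stated in the proof of Lemma~\ref{lem:each-circuit-covers-few-F-soft}): if $C$ is $\delta$-close to both $F$ and $\bF'$, then $\dist(F,\bF')\le 2\delta$, so
\[
\max_{C}\Prx_{\bF}[\dist(C,\bF)\le\delta]\;\le\;\max_{F\in\lift_n(g)}\Prx_{\bF'}[\dist(F,\bF')\le 2\delta].
\]
This replaces the opaque object $C$ by a \emph{structured} competitor $F\in\lift_n(g)$, which does factor as $g(f_1,\dots,f_k)$. For such a pair the joint distribution of $(f_i(\bx),f_i'(\bx))$ over uniform $\bx$ is \emph{exactly} the noisy-copy distribution $\mathcal{D}(\alpha)$ with $\alpha_i=\Ex[f_i f_i']$ (Proposition~\ref{prop:soft-junta-error}), so $\dist(F,F')\le 2\delta$ immediately gives $\error_\alpha(g)\le 2\delta$ and hence $\sum_i\alpha_i^2\ge\tilde{J}(g,2\delta)$. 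The remaining step is a concentration inequality for $\sum_i\balpha_i^2$ with $\balpha_i=\Ex[f_i\boldf_i']$ over random balanced $\boldf_i'$ (Lemma~\ref{lem:concentration-alphas-proof-overview}, via a coupling-to-independence argument plus Bernstein), with no extra union bound over small $n$-bit circuits needed inside the per-$C$ bound. If you want to salvage your route, you should insert this triangle-inequality reduction at the start of Step (2): it both makes the $\alpha$ extraction rigorous and removes any dependence on the size of $C$ from the per-circuit probability bound, leaving the circuit-size parameter to appear only in the outer union bound.
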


Second, we show that although soft juntas are a broader class than (standard) juntas, their expressive powers are equivalent up to constant factors.

\begin{lemma}[Step 2: Relating soft junta complexity and standard junta complexity]
    \label{lem:connect-soft-and-hard-juntas}
    For any $g:\bits^n \to \bits$ and $\delta \geq 0$,
    \begin{equation*}
        \lfrac{1}{2} \cdot J(g, 4\delta) \leq \tilde{J}(g,\delta) \leq J(g,\delta).
    \end{equation*}
\end{lemma}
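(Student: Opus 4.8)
The plan is to prove the two inequalities separately; the right-hand inequality $\tilde J(g,\delta)\le J(g,\delta)$ is essentially immediate, while the left-hand inequality $\tfrac12 J(g,4\delta)\le \tilde J(g,\delta)$ is the substance of the lemma and requires a rounding argument that converts a real correlation vector $\alpha$ into an honest subset of coordinates. For the easy direction: if $J(g,\delta)=r$, there is an $r$-junta $h$ depending on a coordinate set $S$ with $|S|=r$ agreeing with $g$ on a $(1-\delta)$ fraction of inputs. Take $\alpha$ to be the indicator vector of $S$, so $\sum_i \alpha_i^2 = r$. Under the $\alpha$-correlated distribution, $\by_i=\bx_i$ exactly for $i\in S$ and $\by_i$ is an independent uniform bit for $i\notin S$; since $h$ ignores the coordinates outside $S$, we get $h(\by)=h(\bx_S)$, hence $\Pr[g(\by)\neq h(\bx)]=\Pr[g(\bx)\neq h(\bx)]\le\delta$, so $\error_\alpha(g)\le\delta$ and $\tilde J(g,\delta)\le r$.

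For the hard direction, suppose $\alpha\in[-1,1]^k$ witnesses $\tilde J(g,\delta)$, i.e.\ $\error_\alpha(g)\le\delta$ with witnessing $h$, and $\sum_i\alpha_i^2 = \tilde J(g,\delta)=:s$. The idea is to let $S=\{i:\alpha_i^2\ge \tfrac12\}$ (equivalently $|\alpha_i|\ge 1/\sqrt2$); then $|S|\le 2\sum_i\alpha_i^2 = 2s$, which gives the factor of $\tfrac12$ we are after. It remains to show that restricting attention to the coordinates in $S$ only degrades the agreement by a bounded factor, namely that some $|S|$-junta agrees with $g$ on a $(1-4\delta)$ fraction of inputs. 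Intuitively, the coordinates outside $S$ each carry correlation $|\alpha_i|<1/\sqrt2$ with the corresponding bit of $\by$, so under the $\alpha$-correlated coupling the bits $\by_i$ for $i\notin S$ are ``noisy enough'' that averaging over them is close to averaging over fresh uniform bits. Concretely, I would compare two distributions on pairs: the $\alpha$-correlated pair $(\bx,\by)$, and the pair $(\bx,\by')$ where $\by'_i=\by_i$ for $i\in S$ but $\by'_i$ is resampled to be uniform (correlation $0$) for $i\notin S$. The point is that $h(\by)$ versus $h(\by')$ — or more precisely the induced label $g(\by)$ — behaves well because $g$ itself, on the distribution of $\by$, has most of its ``signal'' coming from the high-correlation coordinates. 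The cleanest route is probably: define $\tilde\alpha$ to equal $\alpha_i$ on $S$ and $0$ off $S$; argue $\error_{\tilde\alpha}(g)\le$ (something like) $2\delta$ using that the symmetric difference between the $\alpha$- and $\tilde\alpha$-correlated distributions on $\by$, coupled optimally, has small total variation relative to the error budget — or a hybrid/data-processing argument bounding $|\error_{\tilde\alpha}(g)-\error_\alpha(g)|$; then, since $\tilde\alpha$ is supported on $S$ and the bits off $S$ are uniform, observe that the optimal $h$ for $\tilde\alpha$ can be taken to depend only on the $\by_i$ with $i \in S$, and then by an averaging argument over the uniform bits off $S$ we can further replace these noisy bits $\by_i$ ($i\in S$) by the true bits $\bx_i$ at the cost of another factor, finally yielding an honest $|S|$-junta in $\bx$ with error at most $4\delta$. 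Thus $J(g,4\delta)\le |S|\le 2s = 2\tilde J(g,\delta)$.

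The main obstacle is the last sentence of the previous paragraph: controlling how much agreement is lost when we (i) zero out the small-correlation coordinates and (ii) replace the remaining noisy copies $\by_i$ by the exact bits $\bx_i$. Step (ii) in particular needs care — a coordinate $i\in S$ has $\alpha_i$ possibly strictly less than $1$, so $\by_i\ne\bx_i$ with positive probability, and naively we would pay $\sum_{i\in S}(1-\alpha_i)/2$, which need not be $O(\delta)$. The resolution is to not insist on matching every coordinate but to pass through the $\tilde\alpha$-correlated error and then take the best $\bx$-junta on $S$; the key inequality is that $\error_{\tilde\alpha}(g)$ is within a constant factor of $\error_{\alpha}(g)$ whenever the coordinates being changed all have $|\alpha_i|$ bounded away from $1$ — this is where the threshold $1/\sqrt 2$ (and the resulting constants $2$ and $4$ in the lemma) gets used, and it is the one spot where a short but genuine probabilistic estimate (a coupling or a second-moment/Fourier bound on how $g$'s bias moves under changing $\alpha$ on low-correlation coordinates) is required. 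Everything else is bookkeeping.
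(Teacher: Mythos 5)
Your easy direction is correct: taking $\alpha$ to be the $\{0,1\}$-indicator of the junta's relevant coordinates $S$ gives $\by_S=\bx_S$ and $\by_{\bar S}$ a fresh uniform block, and since $h$ ignores $\bar S$ the $\alpha$-correlated error equals $\Pr_\bx[g(\bx)\neq h(\bx)]\le\delta$. For the hard direction, your thresholding plan is sound at a high level, but you have (correctly) flagged the one step that carries all the content---showing that the Boolean indicator of $S^* \coloneqq \{i : \alpha_i^2 \ge 1/2\}$ achieves correlated error at most $4\delta$---and you leave it open as ``a short but genuine probabilistic estimate (a coupling or a second-moment/Fourier bound).'' That is a genuine gap, not bookkeeping: the cardinality bound $|S^*|\le 2\ltwo{\alpha}^2$ is trivial, naive coupling fails for the reason you note, and without the missing estimate you have no proof of the left inequality.

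The gap is fillable, and your route then comes out legitimately different from the paper's and arguably cleaner. The tool you are reaching for is precisely the $\alpha$-correlated variance $\Ex_{\bx}[1-\Ex_{\by\mid\bx}[g(\by)]^2]$ from \Cref{def:correlated-variance}: by \Cref{claim:corr-var-properties} it sandwiches $\error_\alpha(g)$ within the factor range $[2,4]$ and admits the Fourier formula $1-\sum_T\hat g(T)^2\prod_{i\in T}\alpha_i^2$. Now every $T\not\subseteq S^*$ contains some $i$ with $\alpha_i^2<1/2$, so $1-\prod_{i\in T}\alpha_i^2>1/2$, and hence the $\alpha$-correlated variance is at least $\tfrac12\sum_{T\not\subseteq S^*}\hat g(T)^2$, which is exactly one half of the $1_{S^*}$-correlated variance. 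Chaining the sandwich: $\error_{1_{S^*}}(g)\le\tfrac12\sum_{T\not\subseteq S^*}\hat g(T)^2$, which is at most the $\alpha$-correlated variance, which is at most $4\error_\alpha(g)\le 4\delta$; hence $J(g,4\delta)\le|S^*|\le 2\tilde{J}(g,\delta)$, exactly the lemma's constants. The paper instead rounds randomly, drawing $\bz_i\sim\Ber(\alpha_i^2)$ and applying the same variance identity to get $\Ex[\error_\bz(g)]\le 2\error_\alpha(g)$, then invokes Markov on $\lone{\bz}$ together with a conditioning argument to extract a single good $z$; your deterministic threshold replaces that Markov-plus-conditioning step with nothing at all. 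Two small corrections to your narrative: the optimal predictor $h$ is a function of $\bx_{S^*}$, not of $\by_{S^*}$, since $h$ is fed the clean copy; and raising the correlations on $S^*$ from $\alpha_i$ to $1$ is free---it only decreases the correlated variance---not ``the cost of another factor.''
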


We overview our proofs of~\Cref{lem:lower-bound-from-soft,lem:connect-soft-and-hard-juntas} in turn in~\Cref{sec:lower-bound-from-soft,sec:connect-soft-and-hard-juntas} respectively. 

\begin{remark}[Soft {\sl query} complexity] 
In~\cite{BB20} Ben-David and Blais introduced a soft notion of query complexity (which they term {\sl noisy} query complexity) that generalizes standard query complexity the same way our definition of soft junta complexity generalizes standard junta complexity. \cite{BB20} show that relating soft and standard query complexity in the same way as we relate soft and standard junta complexity in \Cref{lem:connect-soft-and-hard-juntas} would resolve the {\sl randomized composition conjecture}, a major open problem in complexity theory.
\end{remark}

\subsection{\Cref{lem:lower-bound-from-soft}: Lower bound in terms of soft junta complexity}
\label{sec:lower-bound-from-soft}

We prove~\Cref{lem:lower-bound-from-soft} using a net-based argument. 

\begin{lemma}[Many functions are needed to cover $\lift_n(g)$]
    \label{lem:each-circuit-covers-few-F-soft}
    For any $g:\bits^k \to \bits$ and $C:\bits^{nk} \to \bits$,
    \begin{equation*}
        \Prx_{\bF \sim \lift_n(g)}[\dist(C, \bF) \leq \delta] \leq \exp({-\tilde{J}(g,2\delta)\cdot \Omega(2^n-k)})
    \end{equation*}
\end{lemma}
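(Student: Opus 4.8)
Drawing $\bF \sim \lift_n(g)$ amounts to drawing independent uniformly random balanced functions $\bff_1, \dots, \bff_k : \bits^n \to \bits$ and setting $\bF = g(\bff_1, \dots, \bff_k)$, so it suffices to bound $\Prx_{\bff_1, \dots, \bff_k}[\dist(C, g(\bff_1, \dots, \bff_k)) \le \delta]$. I would use a \emph{common-center} argument: if this probability is positive, fix one balanced tuple $(g_1, \dots, g_k)$ witnessing the event and set $F_0 \coloneqq g(g_1, \dots, g_k)$. By the triangle inequality for Hamming distance, every tuple $(\bff_1, \dots, \bff_k)$ realizing the event satisfies $\dist(g(\bff_1, \dots, \bff_k), F_0) \le 2\delta$. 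So it is enough to bound the probability that a random balanced tuple lands within $2\delta$ of the fixed function $F_0$.

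\textbf{Step 1: distance to the center forces large squared correlation.} Fix balanced tuples $(f_1, \dots, f_k)$ and $(g_1, \dots, g_k)$ and write $\alpha_i \coloneqq \corr(f_i, g_i)$ for the correlation between $f_i$ and $g_i$, so $\Prx_{x}[f_i(x) = g_i(x)] = (1 + \alpha_i)/2$. Draw $X = (X^{(1)}, \dots, X^{(k)})$ uniformly from $\bits^{nk}$. Since every $f_i$ and $g_i$ is balanced, for each $i$ the pair $(f_i(X^{(i)}), g_i(X^{(i)}))$ has uniform marginals and correlation exactly $\alpha_i$, and these two facts pin its joint law down to precisely the $\alpha_i$-correlated joint of \Cref{def:alpha-corr-error}; the pairs are independent across $i$ because the blocks $X^{(i)}$ are disjoint. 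Hence $\big((f_i(X^{(i)}))_{i}, (g_i(X^{(i)}))_{i}\big)$ is distributed exactly as the pair $(\bx, \by)$ of \Cref{def:alpha-corr-error} with correlation vector $\alpha = (\alpha_1, \dots, \alpha_k)$, and therefore
\[ \dist\big( g(f_1, \dots, f_k),\, g(g_1, \dots, g_k) \big) \;=\; \dist_\alpha(g, g) \;\ge\; \error_\alpha(g). \]
So if the left-hand side is at most $2\delta$ then $\error_\alpha(g) \le 2\delta$, which makes $\alpha$ feasible in the infimum defining $\tilde J(g, 2\delta)$ and forces $\sum_{i \in [k]} \alpha_i^2 \ge \tilde J(g, 2\delta)$. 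Combining with the reduction above,
\[ \Prx_{\bff_1, \dots, \bff_k}\!\big[ \dist(C, g(\bff_1, \dots, \bff_k)) \le \delta \big] \;\le\; \Prx_{\bff_1, \dots, \bff_k}\!\Big[ \textstyle\sum_{i \in [k]} \corr(\bff_i, g_i)^2 \;\ge\; \tilde J(g, 2\delta) \Big]. \]

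\textbf{Step 2: concentration of the sum of squared correlations (the crux).} The $g_i$ are now fixed balanced functions and the $\bff_i$ are independent uniform balanced functions. For each $i$, $\corr(\bff_i, g_i)$ is an affine image of the overlap between a uniformly random $2^{n-1}$-element subset of $\bits^n$ and the fixed $2^{n-1}$-element set $g_i^{-1}(1)$ — a centered hypergeometric variable — so by Hoeffding/Serfling concentration for sampling without replacement it has mean $0$ and is sub-Gaussian with variance proxy $\Theta(2^{-n})$. Hence the variables $Z_i \coloneqq \corr(\bff_i, g_i)^2 \in [0,1]$ are independent, have mean $\Theta(2^{-n})$, and are sub-exponential with parameter $\Theta(2^{-n})$. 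Writing $t \coloneqq \tilde J(g, 2\delta)$, a Bernstein inequality for sums of independent sub-exponential variables then gives $\Prx[\sum_{i \in [k]} Z_i \ge t] \le \exp(-\Omega(t \cdot 2^n))$ whenever $t = \Omega(k / 2^n)$ — and when $t$ is smaller the claimed bound is vacuous; a routine sharpening of the sampling-without-replacement tail replaces $2^n$ by $2^n - k$ to match the statement for all $k$. Together with Step 1 this proves the lemma.

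\textbf{Expected obstacle.} The whole argument lives or dies by the exponent being $\Omega(t \cdot 2^n)$ rather than $\Omega(t \cdot n)$: a union bound over which $\lesssim \tilde J(g, 2\delta)$ of the $\bff_i$ "matter", or a crude first-moment/Chernoff estimate on $\sum_i \corr(\bff_i, g_i)^2$, only yields $\exp(-\Omega(\tilde J(g, 2\delta) \cdot n))$, which is too weak to survive the later union bound over all small circuits (used to deduce \Cref{lem:lower-bound-from-soft}). Getting the sharp exponent requires exploiting that each single correlation $\corr(\bff_i, g_i)$ is concentrated at the fine scale $\Theta(2^{-n/2})$ — i.e.\ exponentially in $2^n$ — and then pushing these tight sub-Gaussian tails through a sub-exponential sum bound with careful attention to its two regimes. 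The one further subtlety is the observation in Step 1 that balancedness makes the within-block joint law \emph{exactly} the $\alpha$-correlated distribution, which is what legitimately brings $\error_\alpha$, and hence $\tilde J$, into play.
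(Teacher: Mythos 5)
Your proof is correct and follows essentially the same architecture as the paper's: the same triangle-inequality centering on a fixed $F_0 \in \lift_n(g)$, the same observation that balancedness of the $f_i, g_i$ makes the within-block pair $(f_i(\bX^{(i)}), g_i(\bX^{(i)}))$ \emph{exactly} the $\alpha_i$-correlated joint from \Cref{def:alpha-corr-error} (this is the paper's \Cref{prop:soft-junta-error}), and the same plan of proving the squared correlations $\balpha_i^2$ are sub-exponential with parameter $O(2^{-n})$, centering them, and applying a Bernstein tail. The one place you genuinely diverge is how you establish sub-Gaussianity of a single $\balpha_i = \corr(\bff_i, g_i)$. You invoke Hoeffding's sampling-without-replacement comparison theorem: the overlap $|\bff_i^{-1}(1) \cap g_i^{-1}(1)|$ is hypergeometric, and its MGF is dominated by that of the corresponding binomial, so it inherits the binomial's sub-Gaussian norm $\Theta(2^{n/2})$, hence $\balpha_i$ has sub-Gaussian norm $\Theta(2^{-n/2})$. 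The paper instead avoids that black box and builds an explicit coupling (\Cref{prop:sampling-from-middle-layer}, \Cref{cor:sub-gaussian}): draw a fully independent uniform $\bU' \in \bits^{2^n}$, rebalance it by flipping $\lvert \sum_x \bU'(x) \rvert/2$ coordinates to obtain $\bW'$, and bound $|\balpha_i|$ by the sum of two quantities ($|W \cdot \bU'|/N$ and $|\sum \bU'_i|/N$), each of which is a normalized sum of independent symmetric Bernoullis and hence sub-Gaussian with norm $O(2^{-n/2})$. Both routes are sound; yours is shorter if one is willing to cite Hoeffding's 1963 Theorem 4, while the paper's is self-contained. One small misstatement worth flagging: your remark that ``a routine sharpening of the sampling-without-replacement tail replaces $2^n$ by $2^n - k$'' attributes the $-k$ term to a finite-population correction in the hypergeometric tail, but that correction factor involves $N$ and the sample size $2^{n-1}$, not $k$. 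The $-k$ actually enters exactly where your own main computation already produces it: after centering, Bernstein's inequality for $k$ independent sub-exponentials with norm $M=\Theta(2^{-n})$ yields a tail $\exp(-\Omega(t/M - k)) = \exp(-\Omega(t\cdot 2^n - k))$, and that additive $-k$ is all that is needed. Aside from that loose side-remark, the argument is a correct alternative derivation of the lemma.
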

By \Cref{lem:each-circuit-covers-few-F-soft}, if every $F \in \lift_n(g)$ can be approximated to accuracy $1-\delta$ by a circuit of size-$s$, then the number of circuits of size $s$ must be at least $2^{\tilde{J}(g,2\delta)\cdot \Omega(2^n-k)}$. This, combined with the fact that there are only $(n+s)^{O(s)}$ size-$s$ circuits gives \Cref{lem:lower-bound-from-soft}.

The first observation in the proof of \Cref{lem:each-circuit-covers-few-F-soft} is that,
\ifnum\focs=1
\begin{align*}
    &\max_{C:\bits^{nk} \to \bits}\bigg\{\Prx_{\bF \sim \lift_n(g)}[\dist(C,\bF) \leq \delta]\bigg\}\\& \leq\max_{F \in \lift_n(g)}\bigg\{\Prx_{\bF' \sim \lift_n(g)}[\dist(F,\bF') \leq 2\delta]\bigg\}.
\end{align*}
\else
\begin{equation*}
    \max_{C:\bits^{nk} \to \bits}\bigg\{\Prx_{\bF \sim \lift_n(g)}[\dist(C,\bF) \leq \delta]\bigg\} \leq \max_{F \in \lift_n(g)}\bigg\{\Prx_{\bF' \sim \lift_n(g)}[\dist(F,\bF') \leq 2\delta]\bigg\}.
\end{equation*}
\fi
The above follows an easy application of the triangle inequality: If $C$ is $\delta$-close to both $F$ and $F'$, then $\dist(F, F') \leq 2\delta$. As a result, our goal is to analyze $\Prx_{\bF' \sim \lift_n(g)}[\dist(F,\bF') \leq 2\delta]$. This is where soft junta complexity plays a key role. As we show in \Cref{prop:soft-junta-error}, if $F=g(f_1,\ldots,f_k)$, and $F'=g({f_1}',\ldots,{f_k}')$ satisfy $\dist(F, F') \leq 2\delta$, then
\begin{equation*}
    \sum_{i =1}^k \Ex_{\bx \sim \bits^n}[f_i(\bx){f_i}'(\bx)]^2 \geq \tilde{J}(g,2\delta).
\end{equation*}
\Cref{lem:each-circuit-covers-few-F-soft} therefore follows from the below concentration inequality.
\begin{lemma}[Main concentration inequality]
\label{lem:concentration-alphas-proof-overview}
    For each $i \in [n]$, let $f_i:\bits\to\bits$ be an arbitrary balanced function,  ${\boldf_i}':\bits^n\to\bits$ be a uniformly random balanced function (chosen independently for each $i$), and $\balpha_i \coloneqq \Ex_{\bx\sim\bits^n}[f_i(\bx)\boldf{_i}'(\bx)]$. Then, for all $t \ge 0$,
    \begin{equation}
        \label{eq:Bernstein-applied}
        \Prx_{\balpha_1, \dots, \balpha_k}\left[\sum_{i=1}^k \balpha_i^2 \ge t\right]\le \exp(-\Omega(t\cdot 2^n-k)).
    \end{equation}
\end{lemma}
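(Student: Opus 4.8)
The plan is to reduce to a single coordinate, identify the exact law of each $\balpha_i$, and then apply a routine Chernoff/Bernstein bound across the $k$ (independent) coordinates. Since the random balanced functions $\boldf_i'$ are drawn independently and $\balpha_i$ is a function of $\boldf_i'$ alone, the variables $\balpha_1,\dots,\balpha_k$ are mutually independent (indeed i.i.d., as the law of $\balpha_i$ depends on $f_i$ only through $|f_i^{-1}(1)| = 2^{n-1}$). Write $N\coloneqq 2^n$ and identify a balanced function with the size-$(N/2)$ subset of $\bits^n$ on which it equals $+1$. A one-line count gives $\sum_{x\in\bits^n} f_i(x)\boldf_i'(x) = 4\bj_i - N$, where $\bj_i\coloneqq |f_i^{-1}(1)\cap \boldf_i'^{-1}(1)|$ — the number of ``marked'' elements in a uniformly random size-$(N/2)$ subset of an $N$-element universe with exactly $N/2$ marked elements — is hypergeometric with mean $N/4$ and variance $\tfrac{N^2}{16(N-1)}$. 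Hence $\balpha_i = \tfrac{4\bj_i}{N}-1$ has mean $0$ and variance $\tfrac{1}{N-1}$.

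\emph{Per-coordinate tail and moment generating function.} Because $\bj_i$ is hypergeometric with mean $N/4$, Hoeffding's inequality applied to sampling without replacement (equivalently, the hypergeometric is at least as concentrated as the matching binomial $\Bin(N/2,\tfrac12)$) gives $\Pr[|\bj_i - N/4|\ge\tau]\le 2\exp(-4\tau^2/N)$, hence
\[
  \Pr\big[\,|\balpha_i|\ge\lambda\,\big] \;\le\; 2\exp\!\big(-\tfrac14\lambda^2 N\big)\qquad\text{for all }\lambda\ge 0.
\]
Therefore $\mathbf{Y}_i\coloneqq N\balpha_i^2$ is nonnegative with $\Pr[\mathbf{Y}_i\ge y]\le 2e^{-y/4}$ and $\Ex[\mathbf{Y}_i] = \tfrac{N}{N-1}\le 2$, and integrating its tail bounds its moment generating function: $\Ex[e^{\lambda\mathbf{Y}_i}]\le 1 + \tfrac{2\lambda}{1/4-\lambda}\le e^{16\lambda}$ for every $\lambda\le\tfrac18$. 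In the language of Bernstein's inequality, each $\mathbf{Y}_i$ is a nonnegative sub-exponential random variable with absolute-constant parameters.

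\emph{Combining coordinates.} By independence, $\Ex\big[e^{\lambda\sum_{i=1}^k\mathbf{Y}_i}\big] = \prod_{i=1}^k\Ex[e^{\lambda\mathbf{Y}_i}]\le e^{16\lambda k}$ for $\lambda\le\tfrac18$, so for any $u\ge 0$, taking $\lambda=\tfrac18$,
\[
  \Prx_{\balpha_1,\dots,\balpha_k}\!\Big[\,\textstyle\sum_{i=1}^k\mathbf{Y}_i\ge u\,\Big] \;\le\; e^{-\lambda u + 16\lambda k} \;\le\; \exp\!\big(-\tfrac{u}{8} + 2k\big).
\]
Since $\sum_i\mathbf{Y}_i = N\sum_i\balpha_i^2$, substituting $u = t\cdot 2^n$ yields $\Pr[\sum_i\balpha_i^2\ge t]\le \exp(-\tfrac18 t\cdot 2^n + 2k) = \exp(-\Omega(t\cdot 2^n - k))$, which is \eqref{eq:Bernstein-applied}. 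Equivalently, the last display is Bernstein's inequality for the sum of independent sub-exponential variables $\sum_i(\mathbf{Y}_i - \Ex\mathbf{Y}_i)$.

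\emph{Main obstacle.} The only step with genuine content is the per-coordinate bound: one must recognize $\balpha_i$ as a rescaled, centered hypergeometric random variable and invoke a concentration inequality for sampling without replacement that delivers variance proxy $\Theta(2^{-n})$, rather than the trivial $\Theta(1)$ coming from $|\balpha_i|\le 1$. It is precisely this factor-$2^n$ gain in the exponent that makes $\mathbf{Y}_i = 2^n\balpha_i^2$ sub-exponential with absolute constants; after that, independence across the $k$ coordinates and a textbook Chernoff argument finish the proof. (Note the resulting bound is non-trivial only once $t$ exceeds a constant multiple of $\Ex[\sum_i\balpha_i^2] = \Theta(k/2^n)$; for smaller $t$ the quantity $t\cdot 2^n - k$ is non-positive and the claim is vacuous, which is harmless since the lemma is applied with $t = \tilde J(g,2\delta)$.)
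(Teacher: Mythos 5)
Your proof is correct, and it takes a genuinely different route from the paper's. The paper handles the awkward dependence induced by the ``balanced'' constraint via a coupling: it resamples $\boldf_i'$ by first drawing an unconstrained uniform function $\bU'$ and then flipping a random subset of coordinates to rebalance it, yielding the pointwise domination $|\balpha_i|\le|\by_i|+|\bz_i|$ by two explicitly sub-Gaussian random variables with norm $O(2^{-n/2})$, after which $\balpha_i^2$ is sub-exponential and Bernstein applies. You instead observe that $\tfrac{N}{4}(\balpha_i+1) = \bj_i$ is \emph{exactly} hypergeometric (the overlap of a uniform size-$N/2$ subset with a fixed size-$N/2$ subset), identify the variance $\tfrac{1}{N-1}$, and invoke Hoeffding's 1963 comparison theorem that sampling without replacement is at least as concentrated as the matching binomial $\Bin(N/2,\tfrac12)$ to get the same sub-Gaussian tail directly, without any coupling. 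Both approaches then finish identically: the $\balpha_i$ are independent across $i$ because each $\boldf_i'$ is drawn independently, so one runs a textbook Chernoff/Bernstein argument on the sub-exponential $\balpha_i^2$. Your route is shorter and gives explicit constants; the paper's coupling is somewhat more ``bare-hands'' and would transfer to more general conditioning events, but here the exact hypergeometric law is available and cleaner. One small remark: your final bound $\exp(-tN/8+2k)$ and the paper's $\exp(-\Omega(t\cdot 2^n - k))$ both have the form $\exp(-c_1 t\cdot 2^n + c_2 k)$, which — as you correctly note — is only informative once $t$ exceeds a constant multiple of $k/2^n$; this is the regime in which the lemma is invoked, so both statements serve the downstream application equally well.
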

For each $i \in [k]$ and $x \in \bits^n$, define $\bz(i,x) \coloneqq f_i(x)\boldf{_i}'(x)$. Then, \Cref{lem:concentration-alphas-proof-overview} \emph{almost} follows from the following logic using standard properties of sub-Gaussian and sub-exponential random variables.
\begin{enumerate}
    \item Each $\bz(i,x)$ is bounded on $[-1,1]$ and is therefore sub-Gaussian.
    \item If the $\bz(i,x)$'s {\sl were} independent---which unfortunately, they are {\sl not}---then the random variables $\balpha_i \coloneqq \Ex_{\bx \sim \bits^n}[\bz(i,\bx)]$ would also be sub-Gaussian with sub-Gaussian norm $O(1/\sqrt{2^n})$.
    \item Since the square of a sub-Gaussian random variable is sub-exponential, $\balpha_i^2$ is sub-exponential. Then, \Cref{eq:Bernstein-applied} follows from an appropriate form of Bernstein's inequality.
\end{enumerate}

The $\bz(i,x)$'s are not independent because $\boldf_i'$ is chosen uniformly among {\sl balanced} functions, meaning there are correlations between the coordinates of $\boldf_i'$. For example, consider the probability that $\balpha_i = 1$. If the $\bz(i,x)$ were independent, this probability would be $2^{-2^n}$. However, as $f_i$ and $\boldf_i'$ are balanced, this probability is $\binom{2^n}{ 2^{n-1}}^{-1} = \Theta(\sqrt{2^n} \cdot 2^{-2^n})$, which is substantially larger.

To get around this issue of independence, we use a coupling argument. We show that the $\bz(i,x)$'s can be coupled to idealized $\wh{\bz}(i,x)$'s that are independent, such that the number of $x \in \bits^n$ on which $\bz(i,x)$ and $\wh{\bz}(i,x)$ differ is also sub-Gaussian. After this coupling, a similar but carefully modified series of steps to the prior proof strategy gives \Cref{lem:concentration-alphas-proof-overview}.

\subsection{\Cref{lem:connect-soft-and-hard-juntas}: Relating soft and standard junta complexity}
\label{sec:connect-soft-and-hard-juntas}

One side of \Cref{lem:connect-soft-and-hard-juntas} is immediate: Soft juntas are more expressive than standard juntas so $\tilde{J}(g,\delta) \leq J(g,\delta)$. The other direction is more challenging: It says that, if $g$ has a soft junta achieving error $\delta$, there is a standard junta using only twice as many coordinates that achieves $4\delta$ error. To prove this, we will argue that an appropriately chosen random standard junta satisfies this.
\begin{claim}[Error of a random hard junta]
    \label{claim:random-junta-double-error-proof-overview}
    For any $\alpha \in [-1,1]^k$, let $\bz_i$ be drawn independently from $\Ber(\alpha_i^2)$ for each $i \in [k]$. Then, the expected $\bz$-correlated-error of $g$ is at most double the $\alpha$-correlated-error of $g$.
\end{claim}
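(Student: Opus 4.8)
The plan is to reformulate both error quantities as \emph{collision probabilities}. Fix $\bz\in\{0,1\}^k$ and let $S=\{i:\bz_i=1\}$. First I would observe that
\[ \error_{\bz}(g)\;\le\;\Prx_{(\by,\by')}\bigl[g(\by)\neq g(\by')\bigr], \]
where $(\by,\by')$ is the pair of inputs that agree on every coordinate in $S$ and are independent and uniform on $[k]\setminus S$. This holds because the randomized hypothesis that, on input $\bx$, resamples a uniform $\by'$ agreeing with $\bx$ on $S$ and outputs $g(\by')$ has expected disagreement with $g$ under the $\bz$-correlated distribution equal to exactly the right-hand side (conditioned on $\bx|_S$, the $\bz$-noisy copy of $\bx$ and the resampled $\by'$ are i.i.d.\ uniform completions), and $\error_{\bz}(g)$, being an infimum over \emph{deterministic} hypotheses, can only be smaller. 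On the other side, let $h^*$ be an optimal $\alpha$-correlated predictor for $g$; if $\by,\by'$ are two \emph{independent} $\alpha$-noisy copies of a common uniform $\bx$, then $g(\by)\neq g(\by')$ forces $g(\by)\neq h^*(\bx)$ or $g(\by')\neq h^*(\bx)$, so a union bound gives
\[ \Prx_{(\by,\by')}\bigl[g(\by)\neq g(\by')\bigr]\;\le\;2\,\error_\alpha(g). \]

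The step that glues these together is a distributional identity: averaging the pair $(\by,\by')$ over a random set $S$ that includes each coordinate $i$ independently with probability $\alpha_i^2$ produces exactly the law of two independent $\alpha$-noisy copies of a common uniform point. I would verify this coordinatewise --- both laws are products over $[k]$ --- by noting that in each coordinate both are symmetric distributions on $\bits^2$ with uniform marginals, hence determined by the probability that the two bits agree: for the $S$-mixture this is $\alpha_i^2\cdot 1+(1-\alpha_i^2)\cdot\tfrac12=\tfrac{1+\alpha_i^2}{2}$, and for two independent $\alpha$-noisy copies it is $\bigl(\tfrac{1+\alpha_i}{2}\bigr)^2+\bigl(\tfrac{1-\alpha_i}{2}\bigr)^2=\tfrac{1+\alpha_i^2}{2}$ as well. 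Chaining the three displays then yields
\[ \Ex_{\bz}\bigl[\error_{\bz}(g)\bigr]\;\le\;\Ex_{\bz}\!\left[\Prx_{(\by,\by')}[g(\by)\neq g(\by')]\right]\;=\;\Prx_{(\by,\by')}[g(\by)\neq g(\by')]\;\le\;2\,\error_\alpha(g), \]
which is the claim. Combined with Markov's inequality applied to $\error_{\bz}(g)$ and to $|S|=\sum_i\bz_i$ (whose mean is $\sum_i\alpha_i^2$), this is what delivers the nontrivial direction of \Cref{lem:connect-soft-and-hard-juntas}.

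The main obstacle is not any single calculation but finding this reformulation. The natural first attempt is to couple $\bz$ directly against a \emph{single} $\alpha$-noisy copy, but then the relevant per-coordinate agreement probabilities are $\alpha_i^2$ versus $\tfrac{1+\alpha_i}{2}$, which do not match, and the resulting bound is off by a multiplicative factor that can be as large as $k$. Recasting $\error_{\bz}$ as a collision against one resampled completion and $\error_\alpha$ as a collision against two independent noisy copies is precisely what makes the per-coordinate agreement probabilities ($\tfrac{1+\alpha_i^2}{2}$ on both sides) line up; after that, the randomized-hypothesis bound, the union bound, and the coordinatewise product structure are all routine.
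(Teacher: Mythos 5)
Your proof is correct, and it takes a genuinely different route from the paper's. The paper introduces the \emph{correlated variance} $V_\alpha(g)\coloneqq\Ex_{\bx}\bigl[1-\Ex_{\by\mid\bx}[g(\by)]^2\bigr]$, shows via the pointwise sandwiching $2f_1(t)\le 1-t^2\le 4f_1(t)$ for $f_1(t)=(1-|t|)/2$ that it is within a factor $2$ of $\error_\alpha(g)$ on each side, and then proves by a Fourier computation that $V_\alpha(g)=1-\sum_{S}\hat g(S)^2\prod_{i\in S}\alpha_i^2$ is multilinear in the $\alpha_i^2$'s, hence $\Ex_{\bz}[V_{\bz}(g)]=V_\alpha(g)$ exactly; chaining gives $\Ex_{\bz}[\error_{\bz}(g)]\le \tfrac12\Ex_{\bz}[V_{\bz}(g)]=\tfrac12 V_\alpha(g)\le 2\,\error_\alpha(g)$. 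You instead bypass Fourier analysis entirely: you cast $\error_{\bz}(g)$ as (at most) the probability that two completions of a shared $S$-restriction collide under $g$, cast $2\,\error_\alpha(g)$ as (at least) the same collision probability for two independent $\alpha$-noisy copies of a common $\bx$ via a union bound against the optimal predictor $h^*$, and then observe that the two pair-distributions coincide after averaging $S$ over $\bz$ --- a coordinatewise check that both laws are products, symmetric with uniform marginals, and have the same agreement probability $(1+\alpha_i^2)/2$ in each coordinate. Both arguments land on the same factor of $2$; yours is more elementary and localizes the ``reason'' for the claim in a single coupling identity, whereas the paper's variance route makes the multilinearity-in-$\alpha_i^2$ structure explicit, which is what lets property~2 of \Cref{claim:corr-var-properties} be stated as an exact equality rather than an inequality.

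One small expository remark: your union-bound step should note that $(\bx,\by')$ has the same joint law as $(\bx,\by)$, so the same $h^*$ is optimal for both copies and each term contributes exactly $\error_\alpha(g)$ (or, if the infimum is not attained, argue up to $\eps$); this is implicit in what you wrote but worth spelling out.
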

Given \Cref{claim:random-junta-double-error-proof-overview}, the other direction of \Cref{lem:connect-soft-and-hard-juntas} follows from the probabilistic method. 

The proof of \Cref{claim:random-junta-double-error-proof-overview} recasts $\alpha$-correlated-error in a more convenient form. For $\mcD(\alpha)$ be the distribution on $\bx,\by$ defined in \Cref{def:alpha-corr-error},
\begin{align*}
    \error_{\alpha}(g) \coloneqq& \min_{h:\bits^k \to \bits}\set*{\Prx_{\bx,\by \sim \mcD(\alpha)}[g(\by) \neq h(\bx)]} \\
    =& \Ex_{\bx} \bracket*{\min_{h(\bx) \in \bits} \Prx_{\by \mid \bx}[g(\by) \neq h(\bx)]} \\
    =& \Ex_{\bx} \bracket*{\frac{1 - \abs*{\Ex_{\by \mid \bx}[g(\by)]}}{2}}.
\end{align*}
The absolute value in the above expression is a bit difficult to work with, so we will replace it with a quadratic approximation. In particular, for $\Phi(t) = 1- t^2$, we have $\Phi(t)/4 \leq \tfrac{1 - |t|}{2} \leq \Phi(t)/2$. Therefore,
\begin{equation*}
    \error_{\alpha}(g) = \Theta\paren*{1 - \Ex_{\bx} \bracket*{ \Ex_{\by \mid \bx}[g(\by)]^2}}.
\end{equation*}
The last step is show that $\Ex_{\bx} \bracket*{ \Ex_{\by \mid \bx}[g(\by)]^2}$ is constant regardless of whether $\bx,\by \sim \mcD(\alpha)$ or $\bx,\by \sim \mcD(\bz)$ where $\bz$ is drawn as in \Cref{claim:random-junta-double-error-proof-overview}. To do so, we use Fourier analysis to write both quantities in terms of $g$'s Fourier spectrum and show they are equal.

\section{Proof overview for~\Cref{thm:smooth intro}: Sample complexity of smooth boosting}
Proving \Cref{thm:smooth intro} requires exhibiting a concept class $\mcC$ with two properties: First, there is a weak learner that uses $m$ samples and achieves accuracy $1/2 + \gamma$ with high probability on any smooth distribution, and second, any algorithm that learns $\mcC$ to accuracy 0.99 must use $\tilde{\Omega}(m/\gamma^2)$ samples. We'll set $\mcC = \lift_n(\Maj_k)$ where $n = \log m$ and $k = \tilde{\Theta}(1/\gamma^2)$. 

The lower bound transfers nicely from our proof that the hardcore theorem is tight.
\begin{lemma}[Strong learning $\lift_n(\Maj_k)$ requires many samples]
    \label{lem:learning-lb}
    For any $n \geq \Omega(\log k)$ and learning algorithm that, on the uniform distribution of inputs, learns $\lift_n(\Maj_k)$ to accuracy 0.99 with high probability must use at least $\Omega(k 2^n)$ samples.
\end{lemma}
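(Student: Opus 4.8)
The plan is to run an information-theoretic counting argument on top of the combinatorial covering bound already proved in~\Cref{lem:each-circuit-covers-few-F-soft}. I would consider a \emph{random} target $\bF\sim\lift_n(\Maj_k)$---that is, $\bF=\Maj_k(f_1,\dots,f_k)$ with $f_1,\dots,f_k:\bits^n\to\bits$ independent and uniformly distributed over balanced functions---and show that any learner drawing $q\le c_0\cdot k2^n$ examples, for a suitable absolute constant $c_0>0$, outputs a hypothesis of error at most $0.01$ on the uniform distribution with probability at most $\tfrac13$ over $\bF$, its sample, and its internal randomness (in fact this probability will tend to $0$). Since a learner for $\lift_n(\Maj_k)$ must succeed with high probability on \emph{every} target in the class---hence also in expectation over a uniformly random target---this contradiction forces $q=\Omega(k2^n)$, and averaging additionally exhibits a single worst-case target.

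The first ingredient is a soft-junta lower bound for $\Maj_k$ at small constant error. Since $J(g,\delta)$ is non-increasing in $\delta$, combining~\Cref{lem:connect-soft-and-hard-juntas} with~\Cref{claim:junta-size-loss-proof-overview}(1) gives
\begin{equation*}
  \tilde J(\Maj_k,0.02)\;\ge\;\tfrac12\,J(\Maj_k,0.08)\;\ge\;\tfrac12\,J(\Maj_k,0.2)\;>\;k/4.
\end{equation*}
Applying~\Cref{lem:each-circuit-covers-few-F-soft} with $g=\Maj_k$ and $\delta=0.01$---and using that the hypothesis $n\ge\Omega(\log k)$ (with a large enough constant) forces $k\le2^{n-1}$, so $2^n-k\ge2^{n-1}$---yields that for \emph{every} function $h:\bits^{nk}\to\bits$,
\begin{equation*}
  \Prx_{\bF\sim\lift_n(\Maj_k)}\!\big[\dist(h,\bF)\le0.01\big]\;\le\;\exp\!\big(-c_1\,k\,2^n\big)
\end{equation*}
for an absolute constant $c_1>0$, where $\dist(h,\bF)\le0.01$ is exactly the event that $h$ has accuracy at least $0.99$ on the uniform distribution. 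It is crucial that this holds for \emph{arbitrary} $h$, not merely for small circuits.

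The second ingredient is a union bound over the hypotheses a $q$-example learner can realize. Fix a learner $\mathcal{A}$ and condition on both its internal randomness and the (unlabeled) $q$ points of its sample; then its output is a deterministic function of the length-$q$ label vector, so across all possible targets it realizes one of at most $2^q$ hypotheses $h_1,\dots,h_{2^q}:\bits^{nk}\to\bits$---a list that does not depend on $\bF$. A union bound with the covering bound above gives, for each fixing of the conditioning,
\begin{equation*}
  \Prx_{\bF}\!\big[\dist(\mathcal{A}(\bF),\bF)\le0.01\big]\;\le\;2^q\exp(-c_1\,k\,2^n),
\end{equation*}
and the same bound survives averaging over the conditioning. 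Taking $q\le\tfrac{c_1}{2}k2^n$ makes the right-hand side $\exp(-\Omega(k2^n))\to0$; hence by averaging there is a fixed $F^\star\in\lift_n(\Maj_k)$ on which $\mathcal{A}$ attains accuracy $0.99$ with probability less than $\tfrac23$, contradicting success on every target. Therefore any learner succeeding with high probability needs $\Omega(k2^n)$ samples. This is the same net-based strategy that proves~\Cref{lem:lower-bound-from-soft}, with the count of size-$s$ circuits there replaced by the count $2^q$ of hypotheses realizable from $q$ labels---which is precisely why~\Cref{lem:each-circuit-covers-few-F-soft}, rather than its circuit-counting consequence, is the right tool.

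I do not anticipate a genuinely new obstacle here: the real content lives in~\Cref{lem:each-circuit-covers-few-F-soft}, and the rest is bookkeeping---swapping ``for every target'' for ``for a random target'', folding a randomized learner (and, if desired, a randomized output hypothesis) into the conditioning, and checking that~\Cref{lem:each-circuit-covers-few-F-soft} is genuinely stated for all hypotheses and not just circuits. The only arithmetic worth watching is that $2^q\exp(-c_1 k2^n)$ drops below any fixed constant as soon as $q\le c_0 k2^n$ for small enough $c_0$, with no logarithmic loss, since $k2^n\ge k^{1+\Omega(1)}$ is large.
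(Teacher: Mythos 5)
Your proposal is correct and follows essentially the same route as the paper's proof: draw $\bF$ uniformly from $\lift_n(\Maj_k)$, condition on the unlabeled sample (and learner randomness; the paper equivalently invokes the easy direction of Yao's lemma to deterministic learners) so that the learner can output at most $2^m$ hypotheses, union-bound against the covering probability from \Cref{lem:each-circuit-covers-few-F-soft} combined with \Cref{claim:junta-size-loss-proof-overview} and \Cref{lem:connect-soft-and-hard-juntas}, and conclude $m\ge\Omega(k(2^n-k))=\Omega(k2^n)$. The paper packages the covering bound into \Cref{cor:maj-covering}, but your direct derivation of $\tilde J(\Maj_k,0.02)>k/4$ is the same calculation.
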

The proof of \Cref{lem:learning-lb} utilizes the tools we have developed to prove the tightness of the hardcore theorem. Combining \Cref{lem:each-circuit-covers-few-F-soft} and \Cref{claim:junta-size-loss-proof-overview} gives that any hypothesis can only ``cover" $2^{-\Omega(k 2^n)}$ fraction of the possible $F \in \lift_n(\Maj_k)$. Any algorithm using $m$ samples only receives $m$ bits of information about $F$, and so can only effectively output $2^m$ possible hypothesis. Combining these, we must have that $m \geq \Omega(k 2^n)$.

\subsection{The weak learner}
All that remains is to prove the upper bound:
\begin{lemma}[$\lift_n(\Maj_k)$ can be weak learned with few samples]
    \label{lem:weak-learner}
    For any $n \geq \Omega(\log k)$, there is an algorithm that, for any smooth distribution and $F \in \lift_n(\Maj_k)$, uses $2^n$ samples and, with high probability outputs a hypothesis that has accuracy at least $\tfrac{1}{2} + \tilde{\Omega}(1/\sqrt{k})$.
\end{lemma}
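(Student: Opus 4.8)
The plan is to exhibit a concrete, sample-efficient weak learner for $\lift_n(\Maj_k)$ and argue it achieves advantage $\tilde\Omega(1/\sqrt k)$ against any smooth distribution. Recall $F = \Maj_k(f_1,\dots,f_k)$ where each $f_i:\bits^n\to\bits$ is balanced. The key structural fact I would use is exactly the one underlying part 2 of \Cref{claim:junta-size-loss-proof-overview}: a single well-chosen coordinate function $f_i$ (a $1$-junta in the lifted sense) already correlates with $\Maj_k$-of-the-$f_j$'s. More precisely, for a uniformly random input and random balanced $f_j$'s, $\Ex[\Maj_k(f_1,\dots,f_k)\cdot f_i] = \Theta(1/\sqrt k)$ by the standard computation that $\Maj_k$ has degree-$1$ Fourier coefficients of magnitude $\Theta(1/\sqrt k)$. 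The subtlety is that we are not over the uniform distribution but over an arbitrary smooth $\mcD$; this is where part 2 of \Cref{claim:junta-size-loss-proof-overview} comes in, applied with $H$ the support-reweighting induced by $\mcD$ — since $\mcD$ is $c$-smooth it behaves like a density-$\Omega(c)$ set, so \emph{some} coordinate $i^\star$ (or its negation) achieves correlation $\tfrac12 + \Omega_c(1/\sqrt k)$ with $F$ under $\mcD$.

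The algorithm is then the natural one: draw $2^n$ samples $(x,F(x))$ from $\mcD$; since $n \ge \Omega(\log k)$ and the domain of each $f_i$ has size $2^n$, with $O(2^n)$ samples we can, with high probability, essentially tabulate enough of the behavior of each candidate $f_i$. Concretely, for each coordinate $i\in[k]$ consider the hypothesis $h_i(x) = f_i(x)$ — but the learner does not know the $f_i$'s, so instead it should work at the level of empirical correlations. I would have the learner enumerate a small net of candidate hypotheses: the natural candidates are the $k$ coordinate functions, but since these are unknown, the cleaner route is to note that a weak learner is allowed to output \emph{any} hypothesis, and to build the hypothesis directly from the samples by, for each of a polynomial-size family of simple hypotheses (e.g. dictators on the hidden $f_i$-values, accessed through exhaustively learning each balanced $f_i$ from its $2^n$-size truth table), estimating its accuracy on a held-out batch and outputting the best. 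With $2^n = \Theta(m)$ samples and a union bound over the $O(k) \le \mathrm{poly}(2^n)$ candidates, each empirical accuracy is within $o(1/\sqrt k)$ of its true value, so the output hypothesis has true accuracy $\tfrac12 + \Omega_c(1/\sqrt k) - o(1/\sqrt k) = \tfrac12 + \tilde\Omega(1/\sqrt k)$.

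The main obstacle I anticipate is the mismatch between "the learner sees $F$, not the $f_i$'s" and "the good hypothesis is one of the $f_i$'s". Resolving it requires care in what the learner actually outputs: one option is that the domain $\bits^{nk}$ factors as $k$ blocks of $n$ bits, and $F$ restricted to varying only the $j$-th block (freezing the others) reveals $f_j$ up to the constant $\Maj_k$-behavior of the frozen coordinates — but a smooth $\mcD$ need not give us such structured queries, only random samples. So the honest approach is: use the $2^n$ random samples to estimate, for each $i$, the correlation $\Ex_{x\sim\mcD}[F(x)\,\chi_S(x)]$ for $S$ ranging over the $2^n$ single-block dictator-type characters $\{x_{(i-1)n+j}\}$, identify the block and character with the largest empirical correlation, and output the corresponding $\pm$ parity as the hypothesis; the analysis that this correlation is $\Omega_c(1/\sqrt k)$ for the best block is precisely the content of \Cref{claim:junta-size-loss-proof-overview}(2) combined with smoothness, and the sample complexity of estimating $O(k\cdot n)$ correlations to additive error $o(1/\sqrt k)$ via Hoeffding is $O(k\log(kn)) \le O(2^n)$ since $k = \tilde\Theta(1/\gamma^2)$ and $n = \log m$ are chosen compatibly. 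I would double-check this last inequality carefully, since it is the crux of why $2^n$ samples suffice, and handle the smoothness-to-density reduction (a smooth distribution of min-entropy deficiency bounded by $\log(1/c)$ versus a genuine density-$c$ set) with the standard argument that any $c$-smooth $\mcD$ is a convex combination of uniform distributions over density-$\ge c$ sets, so the guarantee of \Cref{claim:junta-size-loss-proof-overview}(2) passes through by averaging.
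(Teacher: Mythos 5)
There is a genuine gap, and it is precisely the obstacle you flag in your last paragraph but then do not actually resolve. Claim~\ref{claim:junta-size-loss-proof-overview}(2) says that $\Maj_k$ correlates with some dictator $y_{i^\star}$ of its \emph{inputs}; in the lifted setting the input to $\Maj_k$ is $y_i = f_i(X^{(i)})$, so the ``good dictator'' is the function $X \mapsto f_{i^\star}(X^{(i^\star)})$, an arbitrary unknown balanced function of $n$ bits. Your proposed fallback hypothesis class --- the $O(kn)$ raw input bits $x_{(i-1)n+j}$ --- has \emph{no} correlation guarantee with $F$: if each $f_i$ is (say) a full parity on its block, then every raw bit is uncorrelated with $F$, so the ERM over these candidates can return a hypothesis with advantage $0$. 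The other route you mention (``exhaustively learning each balanced $f_i$ from its $2^n$-size truth table'') is exactly what the paper explains cannot be done: the sample is labeled by $F$, not by $f_i$, so $2^n$ labeled examples give only a $\Theta(1/\sqrt k)$-noisy signal about each $f_i$'s truth table, not a clean reconstruction. There is no candidate set of size $\mathrm{poly}(k,n)$ over which to run ERM, because the family that contains a good hypothesis is the exponentially large family of balanced functions on one block.

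The paper's proof takes a structurally different route that sidesteps the need to identify or reconstruct any one $f_i$. It builds a table $g_i$ for each block by writing $g_i(X^{(i)}) \leftarrow F(X)$ as samples arrive, and works with the aggregate $G_S = \sum_i g_{i,S}$. The analysis never certifies any single $g_i$ (in fact the paper gives a smooth $\mcD$ where some $g_i$ is \emph{anti}-correlated with $f_i$); instead it shows (i) $\Ex_{\bX\sim\mcD}[F(\bX)G_S(\bX)] \ge \Omega(1/\kappa^3)$ on average over blocks, (ii) if $F$ is nearly unbiased, $|G_S|$ concentrates in a window of width $O(\sqrt{k\log k\,\kappa})$, and (iii) thresholding $G_S$ at a validated threshold then converts constant correlation with the integer-valued $G_S$ into advantage $\tilde\Omega(1/\sqrt k)$ for a $\pm1$ hypothesis. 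Steps (ii) and (iii), and the random/validated threshold in particular, are essential and absent from your proposal; a na\"ive $\sign(G_S)$ can fail even when $\Ex[F\cdot G_S]$ is constant. So the proposal is not a variant of the paper's proof with a small gap --- its central hypothesis class is wrong, and the paper's mechanism for getting around the ``you see $F$, not $f_i$'' problem is a different one.
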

One could hope that \Cref{lem:weak-learner} follows easily from the upper bound in \Cref{thm:lift-proof-overview}. Indeed, one view \Cref{thm:lift-proof-overview} is that learning $2^n$ bits of information about $F$ is sufficient to weak learn. In particular, it says that, for $F = \Maj(f_1, \ldots, f_k)$, fully learning the truth table of one of the $f_i$ would suffice. Unfortunately, while the learning algorithm will receive $2^n$ bits of information about $F$ through the sample, they won't be the right bits to strong learn any $f_i$. This is because the sample is labeled by $F$, not $f_i$. Therefore, there is only a weak correlation between the samples we see and the truth table of each $f_i$.

Instead, our learner will, roughly speaking, simultaneously weak learn all of $f_1, \ldots, f_k$ and combine these weak learners into one hypothesis.

\pparagraph{Learning over the uniform distribution.} For intuition, we first overview how to weak learn $\lift_n(\Maj_k)$ over the uniform distribution. Over an arbitrary smooth distribution, the algorithm will be similar, though the analysis is noticeably more involved.

Our weak learner builds weak learners $g_1, \ldots, g_k$ for $f_1, \ldots, f_k$ as follows. Whenever it receives a sample $(X,y)$, it sets $g_i(X^{(i)}) = y$ for each $i \in [k]$. The intuition is the label $y = \Maj(f_1(X^{(1)}), \ldots, f_k(X^{(k)}))$ is slightly correlated with each $f_i(X^{(i)})$, and so setting $g_i(X^{(i)}) = y$ achieves a positive correlation. As a result,
\ifnum\focs=1
\begin{equation}
    \label{eq:weak-learn-all}
    \Ex_{\bx \sim \bits^n}[f_i(\bx) g_i(\bx)] = \Theta\paren*{\tfrac{1}{\sqrt{k}}}
\end{equation}
for each $i = 1,\ldots, n$ with high probability.
\else
\begin{equation}
    \label{eq:weak-learn-all}
    \Ex_{\bx \sim \bits^n}[f_i(\bx) g_i(\bx)] = \Theta\paren*{\tfrac{1}{\sqrt{k}}} \quad\quad\text{for each $i = 1,\ldots, n$ with high probability.}
\end{equation}
\fi
The final step is to combine these weak learners by outputting the hypothesis $h(X) \coloneqq \Maj(g_1, \ldots, g_k)$. Since the base distribution is uniform, the weak learners, $g_1, \ldots, g_k$, are independent, and so it is fairly straightforward to compute the expected accuracy of $h$. After an appropriate calculation, we see that $h$ will, on average, achieve accuracy $\tfrac{1}{2} + \tfrac{1}{\sqrt{k}}$, as desired.

\subsection{Challenges of learning over non-uniform distributions}

We wish for our learner to succeed over any smooth distribution. The first challenge is that when the base distribution is not guaranteed to be uniform, \Cref{eq:weak-learn-all} may not hold: There are smooth distributions for which our algorithm will fail to weak learn some of the blocks. For example, consider the distribution that is uniform over $X$ satisfying, $F(X) \neq f_1(X^{(1)})$. This condition happens with probability $\tfrac{1}{2} - o(1)$ on the uniform distribution, so the resulting distribution is smooth (with parameter $2 + o(1)$). However, our strategy will give a weak learner that is \emph{anti}-correlated to $f_1$:
\begin{equation*}
     \Ex_{\bx}[f_1(\bx) g_1(\bx)] < 0.
\end{equation*}
The solution is, roughly speaking, to show that we weak learn on average over the blocks. The actual result we need is the following: Defining,
\begin{equation*}
    G(X) \coloneqq \sum_{i \in [k]} g_i(X^{(i)}),
\end{equation*}
we will show that $F$ and $G$ are well-correlated. For intuition, consider the case where the base distribution is truly uniform. Then, each $g_i$ has correlation $\Omega(1/\sqrt{k})$ with $f_i$, and each $f_i$ has correlation $\Omega(1/\sqrt{k})$ with $F$. Combining these gives that the correlation of $g_i$ and $F$ is $\Omega(1/k)$, which, by summing over the blocks, gives that $F$ and $G$ have a constant amount of correlation. We'll show that as long as the base distribution is smooth, the same holds:
\begin{equation}
    \label{eq:sum-correlates}
    \Ex[F(\bX)G(\bX)] \geq \Omega(1).
\end{equation}

\pparagraph{Loss of independence.} The second and more delicate challenge is that our weak learners, $g_{1}, \ldots, g_{k}$ are no longer independent. For example, the base distribution can be constructed in such a way that, for $x_1, x_2 \in \bits^n$, if we have successfully learned $g_1(x_1)$, then we are more likely to have also successfully learned $g_2(x_2)$. This can be accomplished by putting a relatively large weight on the inputs $X$ where $X^{(1)} = x_1, X^{(2)} = x_2,$ and $F(X) = f_1(x_1) = f_2(x_2)$.

To see why this lack of independence can be an issue, suppose our weak learners, $g_1, \ldots, g_k$, satisfied the following:
\begin{enumerate}
    \item On a third of inputs $X$, we get all blocks correct, meaning $f_i(X^{(i)}) = g_i(X^{(i)})$ for all $i \in [k]$.
    \item On the other two-thirds of inputs $X$, we get $\frac{k}{2} - 1$ blocks correct, meaning $f_i(X^{(i)}) = g_i(X^{(i)})$ for $\frac{k}{2} - 1$ choices of $i \in [k]$.
\end{enumerate}
In this setting, we will still have that $F$ and $G$ are well correlated (satisfying \Cref{eq:sum-correlates}), but if we output the hypothesis that is the majority of $g_1, \ldots, g_k$, that hypothesis will only get $1/3$ of inputs correct, worse than a random guess.

Our solution to this issue is to \emph{not} output the majority of the weak learners. Instead, we will show that for a randomly chosen threshold $\btau$, the hypothesis
\begin{equation*}
    h_{\btau}(X) \coloneqq \Ind\bracket{G(X) \geq \btau}
\end{equation*}
successfully weak learns on average over the choice of $\btau$. This random threshold alleviates the issue from earlier where, with large probability, the weak learners get exactly $\frac{k}{2} - 1$ blocks correct. Now, $h_{\btau}$ will successfully classify such inputs with probability close to $1/2$ (with the exact probability depending on the distribution of $\btau$).

We will choose $\btau$ uniformly from $\set{-u, -u+1, \ldots, u-1, u}$ for an appropriately chosen $u$. With a bit of arithmetic, we can lower bound the expected advantage at
\ifnum\focs=1
\begin{align*}
    \Ex_{\bX,\btau}&[f(\bX)h_{\btau}(\bX)] \\ &\geq \frac{\Ex_{\bX}[f(\bX)G(\bX)]}{u} - k \cdot \Prx_{\bX}[|G(\bX)| \geq u] \\
    &\geq \Omega(\tfrac{1}{u}) - k \cdot \Prx_{\bX}[|G(\bX)| \geq u].
\end{align*}
\else
\begin{equation*}
    \Ex_{\bX,\btau}[f(\bX) h_{\btau}(\bX)] \\ \geq \frac{\Ex_{\bX}[f(\bX)G(\bX)]}{u} - k \cdot \Prx_{\bX}[|G(\bX)| \geq u] \geq \Omega(\tfrac{1}{u}) - k \cdot \Prx_{\bX}[|G(\bX)| \geq u].
\end{equation*}
\fi
Here, we see the tension in choosing $u$: If it's too large, we will get little advantage from the first term, but if it's too small, the second term will subtract too much. The last step is to show that, as the base distribution is smooth, a Chernoff-like bound holds: For $u = O(\sqrt{k \log k})$, $\Prx_{\bX}[|G(\bX)| \geq u] \leq 1/k^2$, which makes the second term negligible for our purposes. As a result, our weak learner achieves advantage $\Omega(1/u) = \tilde{\Omega}(1/\sqrt{k})$.
\section{Preliminaries}
\paragraph{Notation and naming conventions.}{We write $[n]$ to denote the set $\{1,2,\ldots,n\}$. We use lowercase letters to denote bitstrings e.g. $x,y\in\zo^n$ and subscripts to denote bit indices: $x_i$ for $i\in [n]$ is the $i$th index of $x$. We use \textbf{boldface letters} e.g.~$\bx,\by$ to denote random variables. } For any distribution $\mcD$, we use $\mcD(x)$ as shorthand for $\Prx_{\bx \sim \mcD}[\bx =x]$.

\pparagraph{Standard concentration and anticoncentration inequalities.}

\ifnum\focs=1
\begin{fact}[Hoeffding's inequality \cite{hoeffding63:probability}]
\label{fact:hoeffding}
    Let $\bx_1, \dots, \bx_n$ be independent random variables such that for all $i$, $a_i \leq \bx_i \leq b_i$ with probability $1$. Then, for all $t > 0$,
    \begin{align*}
        \Prx&\bracket*{\abs*{\sum_{i \in [n]} \bx_i - \Ex \bracket*{\sum_{i \in [n]} \bx_i}} \geq t} \\ &\leq 2 \exp \paren*{\frac{-2t^2}{\sum_{i \in [n]} (b_i - a_i)^2}}.
    \end{align*}
\end{fact}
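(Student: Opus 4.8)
The plan is to prove Hoeffding's inequality by the standard Chernoff (exponential-moment) method, reducing the $n$-variable tail bound to a one-variable estimate on the moment generating function of a bounded random variable (Hoeffding's lemma).

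First I would reduce to the centered case. Replacing each $\bx_i$ by $\bx_i - \Ex[\bx_i]$ leaves the quantity $\sum_{i \in [n]}\bx_i - \Ex\bracket*{\sum_{i \in [n]}\bx_i}$ unchanged and leaves the width $b_i - a_i$ of the interval containing $\bx_i$ unchanged (that interval is merely translated). So it suffices to treat $\Ex[\bx_i] = 0$ with $a_i \le \bx_i \le b_i$ almost surely, and to bound the single tail $\Prx\bracket*{\sum_{i \in [n]} \bx_i \ge t}$; the other tail follows by applying this to $-\bx_1, \ldots, -\bx_n$, and a union bound over the two tails produces the leading factor $2$.

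For the upper tail, fix $s > 0$. Markov's inequality applied to $e^{s \sum_i \bx_i}$, together with independence, gives $\Prx\bracket*{\sum_i \bx_i \ge t} \le e^{-st}\, \Ex\bracket*{e^{s\sum_i \bx_i}} = e^{-st}\prod_{i \in [n]} \Ex\bracket*{e^{s \bx_i}}$. The crux is \emph{Hoeffding's lemma}: if $\bx$ is mean-zero and supported on $[a,b]$, then $\Ex\bracket*{e^{s\bx}} \le \exp\paren*{s^2(b-a)^2/8}$. I would prove it using convexity of $u \mapsto e^{su}$: for $x \in [a,b]$ one has $e^{sx} \le \tfrac{b-x}{b-a}e^{sa} + \tfrac{x-a}{b-a}e^{sb}$, so taking expectations and using $\Ex[\bx]=0$ bounds $\Ex\bracket*{e^{s\bx}}$ by $e^{\varphi(s)}$, where $\varphi(s) \coloneqq sa + \log\paren*{\tfrac{b}{b-a} - \tfrac{a}{b-a}e^{s(b-a)}}$. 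A short computation then shows $\varphi(0) = \varphi'(0) = 0$ and $\varphi''(s) \le (b-a)^2/4$ for every $s$ (the second derivative equals $(b-a)^2 q(1-q)$ for some $q = q(s) \in [0,1]$, i.e.\ the variance of a two-point distribution on $\{a,b\}$ normalized by $(b-a)^2$), so Taylor's theorem with remainder yields $\varphi(s) \le s^2(b-a)^2/8$.

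Substituting, $\Prx\bracket*{\sum_i \bx_i \ge t} \le \exp\paren*{-st + \tfrac{s^2}{8}\sum_{i \in [n]}(b_i - a_i)^2}$, and choosing $s \coloneqq 4t / \sum_{i \in [n]}(b_i - a_i)^2$ to minimize the exponent gives exactly $\exp\paren*{-2t^2 / \sum_{i \in [n]}(b_i-a_i)^2}$. Combining with the matching lower-tail bound completes the proof. The only step that is not pure bookkeeping is the uniform estimate $\varphi''(s) \le (b-a)^2/4$ inside Hoeffding's lemma, so I expect that to be the main (though still routine) obstacle.
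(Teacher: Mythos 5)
The paper does not supply a proof of this fact; it is stated as a standard citation to Hoeffding's 1963 paper. Your argument is the canonical textbook proof --- centering, one-sided Chernoff bound via Markov's inequality on the moment generating function, Hoeffding's lemma $\Ex[e^{s\bx}] \le \exp(s^2(b-a)^2/8)$ proved by convexity plus a second-derivative estimate on $\varphi(s) = \log\Ex[e^{s\bx}]$, and optimization at $s = 4t/\sum_i (b_i-a_i)^2$ --- and it is correct in every step. In particular the reduction to mean-zero variables preserves the interval widths, the product structure of the MGF uses independence exactly where it should, the identity $e^{\varphi(s)} = \tfrac{b}{b-a}e^{sa} - \tfrac{a}{b-a}e^{sb}$ matches your stated $\varphi(s) = sa + \log\paren*{\tfrac{b}{b-a} - \tfrac{a}{b-a}e^{s(b-a)}}$, the argument of the logarithm is a positive convex combination since $a \le 0 \le b$, and the variance interpretation of $\varphi''$ correctly gives the uniform bound $(b-a)^2/4$. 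Since the paper treats this as a black-box import, there is nothing further to compare against; your proposal is a complete and correct self-contained proof of the cited fact.
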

\else
\begin{fact}[Hoeffding's inequality \cite{hoeffding63:probability}]
\label{fact:hoeffding}
    Let $\bx_1, \dots, \bx_n$ be independent random variables such that for all $i$, $a_i \leq \bx_i \leq b_i$ with probability $1$. Then, for all $t > 0$,
    \begin{equation*}
        \Prx\bracket*{\abs*{\sum_{i \in [n]} \bx_i - \Ex \bracket*{\sum_{i \in [n]} \bx_i}} \geq t} \leq 2 \exp \paren*{\frac{-2t^2}{\sum_{i \in [n]} (b_i - a_i)^2}}.
    \end{equation*}
    
\end{fact}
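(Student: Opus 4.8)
The plan is to prove \Cref{fact:hoeffding} by the exponential-moment (Chernoff) method, which reduces the tail bound to controlling the moment generating function of each bounded summand. First I would fix $t>0$ and a free parameter $\lambda>0$, set $\mathbf{S}\coloneqq\sum_{i\in[n]}\bx_i$ and $\mu\coloneqq\Ex[\mathbf{S}]$, and apply Markov's inequality to the nonnegative random variable $e^{\lambda(\mathbf{S}-\mu)}$ to get $\Pr[\mathbf{S}-\mu\ge t]\le e^{-\lambda t}\,\Ex[e^{\lambda(\mathbf{S}-\mu)}]$. Since the $\bx_i$ are independent, the expectation factors as $\prod_{i\in[n]}\Ex[e^{\lambda(\bx_i-\Ex[\bx_i])}]$, so it suffices to bound each factor in terms of the width $b_i-a_i$.

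The crux is Hoeffding's lemma: any random variable $\by$ with $\Ex[\by]=0$ and $a\le\by\le b$ almost surely satisfies $\Ex[e^{\lambda\by}]\le\exp(\lambda^2(b-a)^2/8)$. I would establish this by using convexity of $s\mapsto e^{\lambda s}$ on $[a,b]$ to bound $e^{\lambda\by}\le\frac{b-\by}{b-a}e^{\lambda a}+\frac{\by-a}{b-a}e^{\lambda b}$, then taking expectations (using $\Ex[\by]=0$) to obtain $\Ex[e^{\lambda\by}]\le e^{\psi(\lambda)}$ for an explicit smooth function $\psi$ with $\psi(0)=\psi'(0)=0$. A direct computation shows $\psi''(\lambda)$ equals the variance of a random variable supported on $\{a,b\}$ under a $\lambda$-tilted law, which is at most $(b-a)^2/4$; a second-order Taylor expansion with remainder then gives $\psi(\lambda)\le\lambda^2(b-a)^2/8$.

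Substituting Hoeffding's lemma into the factored bound yields $\Ex[e^{\lambda(\mathbf{S}-\mu)}]\le\exp(\frac{\lambda^2}{8}\sum_{i\in[n]}(b_i-a_i)^2)$, hence $\Pr[\mathbf{S}-\mu\ge t]\le\exp(-\lambda t+\frac{\lambda^2}{8}\sum_i(b_i-a_i)^2)$. Optimizing over $\lambda>0$ --- the minimizer is $\lambda=4t/\sum_i(b_i-a_i)^2$ --- produces $\Pr[\mathbf{S}-\mu\ge t]\le\exp(-2t^2/\sum_i(b_i-a_i)^2)$. Running the same argument with $-\bx_i$ in place of $\bx_i$ (each lies in an interval of the same width) bounds $\Pr[\mathbf{S}-\mu\le-t]$ by the same quantity, and summing the two one-sided bounds gives the claimed factor of $2$.

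The main obstacle --- more a matter of care than of difficulty --- is Hoeffding's lemma, and within it the uniform bound $\psi''(\lambda)\le(b-a)^2/4$: one has to recognize $\psi''(\lambda)$ as a variance and invoke the fact that a random variable confined to an interval of length $b-a$ has variance at most $(b-a)^2/4$, with equality for the balanced two-point distribution on the endpoints. The remaining ingredients --- the Markov/Chernoff reduction, the factoring via independence, and the one-variable optimization over $\lambda$ --- are entirely routine.
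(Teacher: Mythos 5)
Your proof is correct and is the standard Chernoff/Hoeffding's-lemma argument; the paper does not supply its own proof, as it states this as a cited \textbf{Fact} from the literature, so there is nothing to compare against. The only place requiring care is the bound $\psi''(\lambda)\le(b-a)^2/4$ in Hoeffding's lemma, which you handle correctly by identifying $\psi''$ as a variance of a $[a,b]$-supported tilted law.
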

\fi

\begin{fact}[Bounded differences inequality \cite{Mcd89}]
    \label{fact:bounded-diff}
    For any domain $\mcX$, product distribution $\mcD$ over $\mcX^m$, and function $\Psi:\mcX^m \to \R$ that satisfies the $c$-bounded differences inequality, meaning for any $X,X' \in \mcX^m$ that differ in one coordinate, $\Psi(X) - \Psi(X') \leq c$,
    \begin{equation*}
        \Pr_{\bX \sim \mcD}\bracket*{\Psi(\bX) \leq \Ex[\Psi(\bX)] -  \eps} \leq \exp\paren*{-\frac{2\eps^2}{mc^2}}.
    \end{equation*}
\end{fact}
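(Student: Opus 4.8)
The plan is to prove this by the standard martingale (Doob) argument underlying McDiarmid's inequality, so the real content is in setting up the conditioning correctly rather than in any new idea. Write $\bX=(\bX_1,\dots,\bX_m)$; since $\mcD$ is a product distribution over $\mcX^m$, the coordinates $\bX_1,\dots,\bX_m$ are independent, which is what makes the conditional expectations below behave well. Introduce the Doob martingale $Y_i:=\Ex[\Psi(\bX)\mid \bX_1,\dots,\bX_i]$ for $i=0,1,\dots,m$, so that $Y_0=\Ex[\Psi(\bX)]$, $Y_m=\Psi(\bX)$, and therefore $\Psi(\bX)-\Ex[\Psi(\bX)]=\sum_{i=1}^m\Delta_i$ with $\Delta_i:=Y_i-Y_{i-1}$ a martingale difference sequence. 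It then suffices to prove the one-sided tail bound $\Pr[\sum_{i=1}^m\Delta_i\le-\eps]\le\exp(-2\eps^2/(mc^2))$.

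The step I expect to need the most care is controlling the conditional range of each $\Delta_i$. Fix a value $x_{<i}$ of the first $i-1$ coordinates and define $\varphi_{x_{<i}}(u):=\Ex[\Psi(\bX)\mid \bX_1=x_1,\dots,\bX_{i-1}=x_{i-1},\bX_i=u]$, which is meaningful because the remaining coordinates are independent of $(\bX_1,\dots,\bX_i)$. For any $u,u'$ and any fixed tail $y_{>i}$, the inputs $(x_{<i},u,y_{>i})$ and $(x_{<i},u',y_{>i})$ differ only in coordinate $i$, so the hypothesis gives $|\Psi(x_{<i},u,y_{>i})-\Psi(x_{<i},u',y_{>i})|\le c$ --- note the stated one-sided bound, read with the two inputs swapped, is really two-sided. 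Averaging over $y_{>i}$ shows $|\varphi_{x_{<i}}(u)-\varphi_{x_{<i}}(u')|\le c$, i.e.\ $\varphi_{x_{<i}}$ has range at most $c$. Since, conditioned on $\bX_{<i}=x_{<i}$, we have $Y_i=\varphi_{x_{<i}}(\bX_i)$ and $Y_{i-1}=\Ex_{\bX_i}[\varphi_{x_{<i}}(\bX_i)]$, the increment $\Delta_i$ is a mean-zero random variable supported on an interval of length at most $c$.

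Given this, I would finish with the exponential-moment (Chernoff / Azuma--Hoeffding) method. By Hoeffding's lemma, $\Ex[e^{\lambda\Delta_i}\mid\bX_1,\dots,\bX_{i-1}]\le e^{\lambda^2c^2/8}$ for every real $\lambda$ and every $i$; iterating this via the tower rule (peeling off $\Delta_m$, then $\Delta_{m-1}$, and so on) gives $\Ex[e^{\lambda(\Psi(\bX)-\Ex\Psi(\bX))}]\le e^{\lambda^2mc^2/8}$. Applying Markov's inequality to $e^{-\lambda(\Psi(\bX)-\Ex\Psi(\bX))}$ for $\lambda>0$ yields $\Pr[\Psi(\bX)\le\Ex\Psi(\bX)-\eps]\le e^{\lambda^2mc^2/8-\lambda\eps}$, and setting $\lambda=4\eps/(mc^2)$ makes the exponent exactly $-2\eps^2/(mc^2)$, as claimed. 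Apart from the range bound above, the only remaining items are routine: the measurability bookkeeping for the conditional expectations, and Hoeffding's lemma itself (a mean-zero random variable confined to an interval of width $c$ has moment generating function at most $e^{\lambda^2c^2/8}$).
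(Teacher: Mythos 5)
Your proof is correct and is the standard Doob-martingale proof of McDiarmid's bounded-differences inequality; the paper itself does not prove this statement but cites it as a known result from \cite{Mcd89}, so there is no in-paper argument to compare against. Your observation that the one-sided hypothesis $\Psi(X)-\Psi(X')\le c$ is really two-sided (swap $X$ and $X'$) is the right reading, and the range bound on the conditional expectation $\varphi_{x_{<i}}$, Hoeffding's lemma, the tower-rule iteration, and the final Chernoff optimization with $\lambda=4\eps/(mc^2)$ are all carried out correctly and yield exactly the claimed constant in the exponent.
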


\begin{fact}[Max probability the binomial puts on any outcome]
    \label{fact:bin-max-p}
    For any $k \in \N$, let $\bx_1, \ldots, \bx_k$ be independent and each uniform on $\bits$. Then, for any possible outcome $v$,
    \begin{equation*}
        \Pr\bracket[\Bigg]{\,\sum_{i \in [k]} \bx_i = v } \leq O\paren*{\tfrac{1}{\sqrt{k}}}.
    \end{equation*}
\end{fact}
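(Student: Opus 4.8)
The plan is to reduce the statement to the classical estimate $\binom{k}{\lfloor k/2\rfloor}2^{-k} = \Theta(1/\sqrt{k})$ for the central binomial coefficient. First I would set $\mathbf{S} \coloneqq \sum_{i \in [k]} \bx_i$ and observe that, since each $\bx_i$ is uniform on $\{\pm 1\}$, we have $\mathbf{S} = 2\mathbf{B} - k$ where $\mathbf{B} \sim \Bin(k,\tfrac12)$. Hence for every attainable outcome $v$ (one with $v \equiv k \pmod 2$ and $|v|\le k$),
\[ \Pr\Bigl[\, \sum_{i\in[k]}\bx_i = v \,\Bigr] = \binom{k}{(k+v)/2}\,2^{-k}, \]
and this probability is $0$ for all other $v$. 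So it suffices to bound $\binom{k}{j}2^{-k}$ by $O(1/\sqrt{k})$ uniformly over $j \in \{0,1,\dots,k\}$.

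Next I would use the unimodality of the binomial coefficients — $j \mapsto \binom{k}{j}$ is non-decreasing up to $j = \lfloor k/2\rfloor$ and non-increasing afterward — to reduce to the central coefficient: $\binom{k}{j} \le \binom{k}{\lfloor k/2\rfloor}$ for all $j$. The remaining bound on the central coefficient admits a short Wallis-type computation: for even $k = 2m$,
\[ \binom{2m}{m}2^{-2m} = \prod_{j=1}^{m} \frac{2j-1}{2j} \le \prod_{j=1}^m \sqrt{\frac{2j-1}{2j+1}} = \frac{1}{\sqrt{2m+1}}, \]
using $\bigl(\tfrac{2j-1}{2j}\bigr)^2 \le \tfrac{2j-1}{2j+1}$ and a telescoping product; the odd case $k = 2m+1$ follows since $\binom{2m+1}{m}2^{-(2m+1)} = \binom{2m+2}{m+1}2^{-(2m+2)}$, which is the even bound applied to $2(m+1)$. (Alternatively, one may simply invoke Stirling's formula, which gives the same $\Theta(1/\sqrt{k})$ estimate directly.) Chaining these inequalities yields $\Pr[\sum_i \bx_i = v] \le \binom{k}{\lfloor k/2 \rfloor}2^{-k} = O(1/\sqrt{k})$, as desired.

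There is no genuine obstacle here — this is a textbook fact — and the only points requiring slight care are the parity of $k$ and the fact that $\mathbf{S}$ takes values in an arithmetic progression of step $2$, both of which are dispatched cleanly by the reduction to $\Bin(k,\tfrac12)$ in the first step.
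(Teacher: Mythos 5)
The paper states this as a known fact without proof, so there is no argument to compare against. Your proof is correct and complete: the reduction $\sum_i \bx_i = 2\mathbf{B} - k$ with $\mathbf{B} \sim \Bin(k,\tfrac12)$, the unimodality bound $\binom{k}{j} \le \binom{k}{\lfloor k/2\rfloor}$, and the Wallis-product telescoping $\prod_{j=1}^m \tfrac{2j-1}{2j} \le (2m+1)^{-1/2}$ (with the reduction of the odd case to the even one via $\binom{2m+1}{m}2^{-(2m+1)} = \binom{2m+2}{m+1}2^{-(2m+2)}$) are all verified, and together they give exactly the $O(1/\sqrt{k})$ bound the paper takes for granted.
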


\pparagraph{Smooth distributions and density of a distribution.}
\begin{definition}[$\kappa$-smooth distribution]
    For any $\kappa \ge 1$ a probability distribution $\mcD$ over a domain $\mcX$ is \emph{$\kappa$-smooth} if for all $x \in \mcX$,
    $$\mcD(x) \leq \frac{\kappa}{|\mcX|}.$$
\end{definition}

\begin{definition}[Density of a distribution]
    \label{def:distribution-density}
    For any $c \in (0,1]$, a probability distribution $H$ over $\mcX$ has \emph{density $c$} if for all $x\in\mcX$, we have $$H(x)\le \frac{1}{c |\mcX|}.$$
\end{definition}

We remark that a distribution has density $c$ if and only if it is $(\kappa \coloneqq 1/c)$-smooth.

\pparagraph{A helpful function.}
For any $t \in \R$, we use the $\sign$ function to denote
\begin{equation*}
    \sign(t) = \begin{cases}
        1&\text{if }t \geq 0 \\
        -1&\text{otherwise.}
    \end{cases}
\end{equation*}

\pparagraph{Standard learning definitions.}
\begin{definition}[Distribution specific PAC learning]
    \label{def:PAC-learning}
    For any concept class $\mcC$, we say an algorithm, $\mcA$, \emph{learns} $\mcC$ to accuracy $1 - \eps$ with success probability $1 - \delta$ over distribution $\mcD$ using $m$ samples if the following holds: For any $f \in \mcC$, given $m$ independent samples of the form $(\bx, f(\bx))$ where $\bx \sim \mcD$, $\mcA$ returns a hypothesis $h$, that with probability at least $1- \delta$, satisfies
    \begin{equation*}
        \Prx_{\bx \sim \mcD}[f(\bx) = h(\bx)] \geq 1 - \eps.
    \end{equation*}
    Furthermore, if the hypothesis $h$ satisfies, with probability at least $1 - \delta$,
    \begin{equation*}
    \Prx_{\bx \sim \mcD}[f(\bx) = h(\bx)] \geq \frac{1}{2} + \gamma,
    \end{equation*}
    then we say that $\mcA$ \emph{$\gamma$-weak learns} $\mcC$ with success probability $1-\delta$ over distribution $\mcD$ using $m$ samples.
\end{definition}
Since weak learning is concerned with hypotheses that have accuracy close to $\frac{1}{2}$, it will often be more convenient to work with \emph{advantage}.
\begin{definition}[Advantage]
    For any function $f$, hypothesis $h$, and distribution $\mcD$, we define the \emph{advantage} of $h$ w.r.t.~$f$ on distribution $\mcD$ as
    \begin{equation*}
        \Ex_{\bx \sim \mcD}\bracket*{h(\bx)f(\bx)}
    \end{equation*}
    When the distribution $\mcD$ and function $f$ are clear from context, we simply call the above quantity the advantage of $h$.
\end{definition}

\begin{definition}[Boosting algorithm] An algorithm $\mcB$ is a \emph{distribution-independent boosting algorithm} if for any function $f$ and any distribution $\mcD$, if $\mcB$ is given parameters $\eps > 0$, $\delta > 0$ and has access to a $\gamma$-weak learner $\mcA$ for any distribution and an example oracle $\textup{EX}(f,\mcD)$ then $\mcB$ returns a hypothesis $h$, that with probability at least $1 - \delta$, satisfies  
    \begin{equation*}
        \Prx_{\bx \sim \mcD}[f(\bx) = h(\bx)] \geq 1 - \eps.
    \end{equation*}

A \emph{smooth boosting algorithm} is a boosting algorithm that only has access to weak learners for \emph{smooth} distributions.
\end{definition}

\section{Tightness of the hardcore theorem for juntas: Proof of \Cref{claim:junta-size-loss-proof-overview}}
\label{sec:maj-properties}

We prove the two points in \Cref{claim:junta-size-loss-proof-overview} separately. In this section, we switch to considering hardcore \textit{distributions} rather than hardcore \textit{sets} since this will simplify some of the proofs.


\subsection{First part of \Cref{claim:junta-size-loss-proof-overview}}

\begin{claim}[Constant hardness of $\MAJ_k$ for $\frac{k}{2}$-juntas: first part of \Cref{claim:junta-size-loss-proof-overview}]
    \label{claim:maj-mildy-hard}
    Let $h:\bits^k\to\bits$ be any $\frac{k}{2}$-junta. Then,
    $$
    \Prx_{\bx\sim\bits^k}[\MAJ_k(\bx)=h(\bx)]\le \frac{3}{4}+O(k^{-1/2}).
    $$
\end{claim}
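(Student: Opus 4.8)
The plan is to reduce, by conditioning on the junta's coordinates, to a one-dimensional question about sums of $\pm 1$'s, and then exploit two symmetries. First I would assume without loss of generality that $h$ depends only on a set $S\subseteq[k]$ of coordinates with $|S|=k/2$, and write $\bar S=[k]\setminus S$. For a uniform $\bx\in\bits^k$, let $a=\sum_{i\in S}\bx_i$ and $b=\sum_{j\in\bar S}\bx_j$; these are independent, each a sum of $k/2$ uniform signs, and in particular they have the same parity (both $\equiv k/2 \bmod 2$). Conditioning on $\bx_S$ (equivalently on $a$), the best a $\bits$-valued prediction can do is to output the likelier sign of $a+b$; since $\{\MAJ_k(\bx)=1\}\subseteq\{a+b\ge0\}$ and $\{\MAJ_k(\bx)=-1\}\subseteq\{a+b\le 0\}$ no matter how ties are broken, this gives
\[
\Prx_{\bx\sim\bits^k}[\MAJ_k(\bx)=h(\bx)] \le \Ex_{a}\big[M(a)\big], \qquad M(a):=\max\big\{\Prx_b[b\ge -a],\ \Prx_b[b\le -a]\big\}.
\]

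Next I would simplify $M(a)$ using symmetry of $b$: for $a\ge 0$ one has $\Pr[b\ge -a]=\Pr[b\le a]\ge\tfrac12\ge\Pr[b\le -a]$, so $M(a)=\Pr[b\le a]$; since $b$ and $a$ have the same parity ($b$ never lands strictly between $a$ and $a+2$), $\Pr[b\le a]=\Pr[b\ge a+2]+\Pr[|b|\le a]$, and hence $2M(a)-1=\Pr[|b|\le |a|]$, with the case $a<0$ handled symmetrically. Thus $\Ex_a[M(a)] \le \tfrac12+\tfrac12\,\Prx_{a,b}[|b|\le|a|]$. Now the second symmetry: $a$ and $b$ are i.i.d., so $\Pr[|b|\le|a|]=\Pr[|a|\le|b|]$, and these two probabilities sum to $1+\Pr[|a|=|b|]$; therefore $\Pr[|b|\le|a|]=\tfrac12+\tfrac12\,\Pr[|a|=|b|]$.

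Finally I would bound $\Pr[|a|=|b|]=\sum_{v\ge 0}\Pr[|a|=v]^2\le \max_{v}\Pr[|a|=v]$ (using $\sum_v\Pr[|a|=v]=1$), and $\max_v\Pr[|a|=v]\le 2\max_w\Pr[a=w]\le O(k^{-1/2})$ by \Cref{fact:bin-max-p} applied to the sum of $k/2$ uniform signs. Chaining the three displayed bounds gives $\Prx_{\bx\sim\bits^k}[\MAJ_k(\bx)=h(\bx)]\le\tfrac12+\tfrac12\big(\tfrac12+\tfrac12\cdot O(k^{-1/2})\big)=\tfrac34+O(k^{-1/2})$. The only real content here is the observation that the optimal junta's accuracy collapses to $\Prx_{a,b}[|b|\le|a|]$, after which both remaining steps are one-line symmetry and anticoncentration arguments; the main technical nuisance I expect is the bookkeeping of the tie-breaking convention in $\MAJ_k$ and the parity of $a$ and $b$, which the set containments above and the fact that $b\neq a+1$ take care of. As a sanity check, in the Gaussian scaling limit $a/\sqrt{k/2},\,b/\sqrt{k/2}\Rightarrow N(0,1)$ one gets $\Prx_{A,B}[|B|\le|A|]\to\tfrac12$, so $\Ex[M(a)]\to\tfrac34$, matching the leading term.
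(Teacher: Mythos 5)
Your proof is correct and takes essentially the same route as the paper's: reduce to the conditional-MAP junta (which the paper explicitly identifies as $\Maj_{k/2}$ on the relevant block), exploit the i.i.d.\ symmetry of the two block-sums, and finish with the anticoncentration bound $\max_v\Pr[a=v]=O(k^{-1/2})$ from \Cref{fact:bin-max-p}. The only noteworthy difference is cosmetic: you obtain the clean exact identity $\Ex_a[M(a)]=\tfrac34+\tfrac14\Pr[|a|=|b|]$ via the parity observation, whereas the paper lower bounds the error by $\tfrac14\Pr[|\bZ|\neq|\bY|]$ through the conditional-sign argument $\Pr[\sign(\bZ)\neq\sign(\bY)\mid |\bZ|>|\bY|]=\tfrac12$; both reduce to the same collision probability $\Pr[|a|=|b|]$ at the end.
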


\begin{proof}
We prove the claim by showing that for any $\frac{k}{2}$-junta $h$, we have
$$\underset{x \sim \bits^k}{\Pr}[h(x) \neq \MAJ_k(x)] \geq \frac{1}{4} - O ( k^{-1/2})$$
{By definition, there is some $|S| = \frac{k}{2}$ and $f:\bits^{k/2} \to \bits$ for which, for all $x \in \bits^n$
\begin{equation*}
    h(x) = f(x_{S}).
\end{equation*}
Our first observation is that $\Pr[h(\bx) \neq \Maj_k(\bx)]$ is minimized when the above $f$ is the $\Maj_{k/2}$ function. This is because, to choose the $f$ with minimum error, we should set
\begin{equation*}
    f(y) = \sign\paren*{\Ex_{\bx \sim \bits^n}[\Maj_k(\bx) \mid \bx_S = y]}.
\end{equation*}
Furthermore, conditioned on $\bx_S = y$, the sum $\bX = \sum_{i \in [k]}\bx_i$ is a random variable that is symmetric about the sum $Y = \sum_{i \in [k]}y_i$. Therefore, the sign of the above expectation is exactly the majority of $y$. We are left with analyzing,
\begin{equation*}
    \Prx_{\bx \sim \bits^k}[\Maj_{k/2}(\bx_S) \neq \Maj_{k}(\bx)].
\end{equation*}
For notational convenience, we introduce two random variables,
\begin{equation}
    \label{eq:compare-majs}
    \bY \coloneqq \sum_{i \in S}\bx_i \quad\quad\text{and}\quad\quad\bZ \coloneqq \sum_{i \notin S}\bx_i.
\end{equation}
These two random variables are independent, identically distributed, and each symmetric about $0$. Furthermore, \Cref{eq:compare-majs} can be rewritten as
\ifnum\focs=1
\begin{align*}
    \Pr&[\sign(\bY) \neq \sign(\bY + \bZ)] \\ &\geq \Pr[|\bZ| > |\bY| \text{ and }\sign(\bZ) \neq \sign(\bY)].
\end{align*}
\else
\begin{equation*}
    \Pr[\sign(\bY) \neq \sign(\bY + \bZ)] \geq \Pr[|\bZ| > |\bY| \text{ and }\sign(\bZ) \neq \sign(\bY)].
\end{equation*}
\fi
Since $\bZ$ and $\bY$ are independent and identically distributed,
\begin{equation*}
    \Pr[|\bZ| > |\bY|] = \frac{1}{2} \cdot \Pr[|\bZ| \neq |\bY|].
\end{equation*}
Furthermore, conditioned on $|\bZ| > |\bY|$, we know that $|\bZ| \geq 1$ and since $\bZ$ is equally likely to take on the values $+v$ and $-v$ for each $v \geq 1$, we have that
\begin{equation*}
    \Pr[\sign(\bZ) \neq \sign(\bY) \mid |\bZ| > |\bY|] = \frac{1}{2}.
\end{equation*}
Combining the above, we conclude,
\ifnum\focs=1
\begin{align*}
    \Pr&[\sign(\bY) \neq \sign(\bY + \bZ)] \\&\geq \frac{1}{4} \cdot \Pr[|\bZ| \neq |\bY|] \\
    &\geq \frac{1}{4} \cdot \paren*{1 - \max_{v} \Pr[|\bZ| = v]} \tag{Independence of $\bZ$ and $\bY$}\\
    &\geq \frac{1}{4} \cdot \paren*{1 - O(k^{-1/2})}. \tag{\Cref{fact:bin-max-p}}
\end{align*}
\else
\begin{align*}
    \Pr[\sign(\bY) \neq \sign(\bY + \bZ)] &\geq \frac{1}{4} \cdot \Pr[|\bZ| \neq |\bY|] \\
    &\geq \frac{1}{4} \cdot \paren*{1 - \max_{v} \Pr[|\bZ| = v]} \tag{Independence of $\bZ$ and $\bY$}\\
    &\geq \frac{1}{4} \cdot \paren*{1 - O(k^{-1/2})}. \tag{\Cref{fact:bin-max-p}}
\end{align*}
\fi
Therefore, the distance of every $\tfrac{k}{2}$-junta to $\Maj_k$ is at least $1/4 - O(k^{-1/2})$, which is exactly what we wished to show.

}
\end{proof}

\subsection{Second part of \Cref{claim:junta-size-loss-proof-overview}}
{
\begin{claim}[$\Maj_k$ is well-correlated with a random dictator: second part of \Cref{claim:junta-size-loss-proof-overview}]
\label{claim:maj-easy}
Let $H$ be a distribution of density $c$ over $\bits^n$. Then, 
$$
\Ex_{\bi \sim [k]}\bracket*{\Ex_{\bx \sim H}[\Maj_k(\bx) \bx_{\bi}]} \geq \Omega\paren*{\tfrac{c}{\sqrt{k}}}
$$
\end{claim}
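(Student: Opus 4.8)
The plan is to reduce the claim to a one-line anti-concentration fact about the coordinate-sum of $\Maj_k$. Let $H$ be the given density-$c$ distribution over $\bits^k$ and write $s(x)\coloneqq\sum_{i\in[k]}x_i$. By linearity of expectation,
\[
\Ex_{\bi\sim[k]}\bracket*{\Ex_{\bx\sim H}[\Maj_k(\bx)\bx_{\bi}]} \;=\; \frac1k\,\Ex_{\bx\sim H}\bracket*{\Maj_k(\bx)\,s(\bx)} \;=\; \frac1k\,\Ex_{\bx\sim H}\bracket[\big]{\abs{s(\bx)}},
\]
where the last equality uses the identity $\Maj_k(x)\cdot s(x)=\abs{s(x)}$, valid for all $x\in\bits^k$: it holds since $\Maj_k(x)=\sign(s(x))$ whenever $s(x)\neq 0$, and both sides vanish when $s(x)=0$ (so the tie-breaking convention is irrelevant). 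Hence it suffices to show $\Ex_{\bx\sim H}[\abs{s(\bx)}]\ge\Omega(c\sqrt{k})$.

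For this I would run a threshold argument: take $T\coloneqq\beta\sqrt{k}$ for a constant $\beta=\Theta(c)$ to be chosen, and observe $\Ex_{\bx\sim H}[\abs{s(\bx)}]\ge T\cdot\Prx_{\bx\sim H}[\abs{s(\bx)}\ge T]$, so that it remains to show $\Prx_{\bx\sim H}[\abs{s(\bx)}<T]\le\tfrac12$. Because $H$ has density $c$ (so $H(x)\le 1/(c\cdot 2^k)$ pointwise), we get $\Prx_{\bx\sim H}[\bx\in A]\le\tfrac1c\,\Prx_{\bx\sim\bits^k}[\bx\in A]$ for every event $A$, and it is enough to bound the same probability under the \emph{uniform} distribution by $c/2$.

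The one substantive ingredient is the anti-concentration estimate $\Prx_{\bx\sim\bits^k}[\abs{s(\bx)}<T]=O(\beta)$, and this is exactly where I would invoke \Cref{fact:bin-max-p}: the event $\abs{s(\bx)}<T$ confines $s(\bx)$ to at most $O(\beta\sqrt{k})$ integer values (those of the parity of $k$ lying in $(-T,T)$), while \Cref{fact:bin-max-p} caps the probability of each individual value at $O(1/\sqrt{k})$, for a total of $O(\beta)$. Choosing $\beta$ to be a sufficiently small constant multiple of $c$ makes $\tfrac1c\cdot O(\beta)\le\tfrac12$, and we conclude $\Ex_{\bx\sim H}[\abs{s(\bx)}]\ge T/2=\Omega(c\sqrt{k})$, hence $\Omega(c/\sqrt{k})$ after dividing by $k$. (In the degenerate regime $k=O(1/c^2)$ where $T<1$, the target $\Omega_c(1/\sqrt{k})$ is merely $\Omega_c(1)$ and can be checked by hand.)

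I do not expect a genuine obstacle here --- each of the three steps is a short calculation, consistent with the paper's remark that both parts of the claim follow from straightforward computations. If I had to single out the delicate point, it would be the density-to-uniform conversion together with the parity bookkeeping in the anti-concentration count of the last paragraph; that count is also conceptually the heart of the matter, since it is precisely the $\Theta(1/\sqrt{k})$ anti-concentration of $\Maj_k$'s coordinate-sum that prevents any density-$c$ hardcore set from driving the advantage of a single dictator below $\Omega_c(1/\sqrt{k})$.
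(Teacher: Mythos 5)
Your proof is correct and follows essentially the same route as the paper: reduce via linearity to $\frac1k\Ex_{\bx\sim H}[\abs{\sum_i\bx_i}]$, then threshold at $\Theta(c\sqrt{k})$ and use the binomial's $O(1/\sqrt{k})$ per-value anti-concentration together with the density-$c$ bound to show the sum is rarely small. The only cosmetic difference is that you factor the density bound into a clean ``convert to uniform, then apply \Cref{fact:bin-max-p}'' step, whereas the paper inlines the density factor and sums binomial coefficients directly; the content is identical.
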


\begin{proof}
    We start by observing that 
    \ifnum\focs=1
    \begin{align*}
        \Ex_{\bi \sim [k]}&\bracket*{\Ex_{\bx \sim H}[\Maj_k(\bx) \bx_{\bi}]} \\ &= \frac{1}{k}\sum_{i=1}^k\Ex_{\bx \sim H}[\Maj_k(\bx) \bx_{i}]\\
        &=\frac{1}{k}\Ex_{\bx\sim H}\bigg[\bigg(\sum_{i\in [k]}\bx_i\bigg)\MAJ_k(\bx)\bigg]\tag{Linearity of expectation}\\
        &=\frac{1}{k}\Ex_{\bx\sim H}\bigg[\bigg|\sum_{i\in [k]}\bx_i\bigg|\bigg].\tag{Definition of $\MAJ_k$}
    \end{align*} 
    \else
    \begin{align*}
        \Ex_{\bi \sim [k]}\bracket*{\Ex_{\bx \sim H}[\Maj_k(\bx) \bx_{\bi}]} &= \frac{1}{k}\sum_{i=1}^k\Ex_{\bx \sim H}[\Maj_k(\bx) \bx_{i}]\\
        &=\frac{1}{k}\Ex_{\bx\sim H}\bigg[\bigg(\sum_{i\in [k]}\bx_i\bigg)\MAJ_k(\bx)\bigg]\tag{Linearity of expectation}\\
        &=\frac{1}{k}\Ex_{\bx\sim H}\bigg[\bigg|\sum_{i\in [k]}\bx_i\bigg|\bigg].\tag{Definition of $\MAJ_k$}
    \end{align*} 
    \fi
    Thus, it is sufficient to prove
    \begin{equation}
        \label{eq:maj-corr}
        \Ex_{\bx\sim H}\bigg[\bigg|\sum_{i\in [k]}\bx_i\bigg|\bigg]\ge \Omega({c\sqrt{k}}).
    \end{equation}
    For all $L\le n$, we have
    \begin{align*}
        \Prx_{\bx\sim H}\bigg[\bigg|\sum_{i=1}^k\bx_i\bigg|\le L\bigg]&=\sum_{\ell=0}^{L}\Prx_{\bx\sim H}\bigg[\bigg|\sum_{i=1}^k\bx_i\bigg|=\ell\bigg]\\
        &\le \sum_{\ell=0}^{L}\frac{1}{c2^k}\cdot 2\binom{k}{\ell}\tag{$H$ is a $c$-density distribution}\\
        &\le O\left(\tfrac{L}{c\sqrt{k}}\right).\tag{$\binom{k}{\ell}\le O(2^k/\sqrt{k})$ for all $\ell$}
    \end{align*}
    Therefore, by choosing $L=\Theta(c\sqrt{k})$, we get
    \begin{align*}
        \Ex_{\bx\sim H}\bigg[\bigg|\sum_{i\in [k]}\bx_i\bigg|\bigg]&\ge L\Prx_{\bx\sim H}\bigg[\bigg|\sum_{i=1}^k\bx_i\bigg|>L\bigg]\\
        &\ge \Omega(c\sqrt{k})\tag{Choice of $L=\Theta(c\sqrt{k})$}
    \end{align*}
    which establishes \Cref{eq:maj-corr} as desired. 
\end{proof}

\subsection{Proof of \Cref{claim:junta-size-loss-proof-overview}}
We prove the two points separately:
\begin{enumerate}
    \item This follows immediately from \Cref{claim:maj-mildy-hard};
    \item \Cref{claim:maj-easy} shows that the correlation of $\MAJ_k$ with a random dictator is $\Omega_c\paren*{k^{-1/2}}$. Therefore, there is a fixed $i\in [n]$ achieving the desired accuracy: $\Ex_{\bx \sim H}[\Maj_k(\bx) \bx_{i}]\ge \Omega_c(k^{-1/2})$.
\end{enumerate}
This completes the proof.\hfill\qed
}

\section{Lifting junta complexity to circuit covering number: Proof of \Cref{thm:lift-proof-overview}}
\label{sec:lift}

In this section, we prove \Cref{thm:lift-proof-overview}. We prove the two parts of the theorem separately. 

\subsection{Proof of the first part of \Cref{thm:lift-proof-overview}}
\begin{claim}[Formal version of the first part of \Cref{thm:lift-proof-overview}]
    \label{claim:circuit-upper-bound}
    For any $g:\bits^k\to\bits$ and constant $c>0$, the following holds. If for all distributions $H_g$ over $\bits^k$ of density $c$ we have $J_{H_g}(g,\lfrac1{2}-\gamma) \le r$, then for all $F \in \lift_n(g)$ and distributions $H_F$ over $\bits^{nk}$ of density $c$, there is a circuit of size
        \begin{equation*}
            O(r\cdot \tfrac{2^n }{n}+\tfrac{2^r}{r})
        \end{equation*}
        that agrees with $F$ on $\tfrac{1}{2} + \gamma$ of inputs in $H_F$. 
\end{claim}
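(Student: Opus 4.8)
The plan is to reduce the claim to its junta analogue by pushing the hard distribution $H_F$ forward through the inner functions. Since $F \in \lift_n(g)$, write $F = g(f_1, \dots, f_k)$ where each $f_i : \bits^n \to \bits$ is balanced. First I would let $H_g$ be the law on $\bits^k$ of $\by \coloneqq (f_1(\bX^{(1)}), \dots, f_k(\bX^{(k)}))$ when $\bX \sim H_F$. The one calculation that matters is that $H_g$ again has density $c$: for any fixed $y \in \bits^k$, the fiber $\{X \in \bits^{nk} : f_i(X^{(i)}) = y_i \text{ for all } i \in [k]\}$ has size exactly $\prod_{i \in [k]} \abs{f_i^{-1}(y_i)} = 2^{(n-1)k}$ because each $f_i$ is balanced, so $H_g(y) \le 2^{(n-1)k} \cdot \tfrac{1}{c\,2^{nk}} = \tfrac{1}{c\,2^k}$. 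This is the only place balancedness is used, and it is precisely what makes the lifted class the right object: without it the fibers would be uneven and the pushforward could concentrate.

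Next I would apply the hypothesis of the claim to $H_g$ (which we just checked has density $c$): it yields an $r$-junta $h : \bits^k \to \bits$, on some coordinate set $S \subseteq [k]$ with $\abs{S} \le r$, of the form $h(y) = h'(y_S)$ for $h' : \bits^r \to \bits$, such that $\Prx_{\by \sim H_g}[h(\by) = g(\by)] \ge \tfrac12 + \gamma$. Define the circuit
\[
    C(X) \coloneqq h\bigl(f_1(X^{(1)}), \dots, f_k(X^{(k)})\bigr),
\]
which in fact depends only on the blocks $X^{(i)}$ with $i \in S$. Since $F = g(f_1, \dots, f_k)$ and $C = h(f_1, \dots, f_k)$, and since $H_g$ is by construction the law of $(f_1(\bX^{(1)}), \dots, f_k(\bX^{(k)}))$, we get
\[
    \Prx_{\bX \sim H_F}[C(\bX) = F(\bX)] = \Prx_{\by \sim H_g}[h(\by) = g(\by)] \ge \tfrac12 + \gamma ,
\]
i.e.\ $C$ agrees with $F$ on a $\tfrac12 + \gamma$ fraction of inputs drawn from $H_F$.

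Finally I would bound the size of $C$ as a circuit. Invoking the classical fact~\cite{Lup58} that every $m$-bit Boolean function is computed by a circuit of size $O(2^m/m)$: each of the (at most $r$) inner functions $f_i$ with $i \in S$ is realized by a circuit of size $O(2^n/n)$, and $h'$, a function of at most $r$ bits, by a circuit of size $O(2^r/r)$. Wiring the $f_i$-circuits into the inputs of the $h'$-circuit realizes $C$ with total size $O(r\cdot 2^n/n + 2^r/r)$, which is exactly the claimed bound. (In the regime $n \ge k \ge r$ of \Cref{thm:lift-proof-overview}, the second term is dominated by the first, recovering the simpler bound stated there.)

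There is no real obstacle in this direction — it is the easy ``upper bounds lift'' half of \Cref{thm:lift-proof-overview}. The only step needing genuine care is the density computation for the pushforward $H_g$, which crucially relies on each $f_i$ being balanced so that all fibers have the same size; everything else is bookkeeping together with the textbook circuit-size upper bound.
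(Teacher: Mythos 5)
Your proof matches the paper's argument essentially step for step: define the pushforward (the paper calls it the ``induced distribution'' $\mathrm{Ind}(H_F)$), verify via balancedness that the pushforward retains density $c$ (this is precisely \Cref{prop:induce}), invoke the junta hypothesis on the pushforward, and build the circuit by composing Lupanov circuits for the relevant inner $f_i$'s with a Lupanov circuit for the junta's decision function. The size accounting and the observation that the $O(2^r/r)$ term is dominated when $k \le n$ are also identical to what the paper does.
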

The first part of \Cref{thm:lift-proof-overview} follows immediately from \Cref{claim:circuit-upper-bound} by observing that the circuit in the claim has size at most $O(r\cdot 2^n/n+2^{k}/k)$ which is $O(r\cdot 2^n/n)$  when $k\le n$.

To prove the claim, we consider distributions induced by applying $n$-bit functions $f_1,\ldots,f_k$ to the blocks of a string $\bX$ sampled from a distribution $H$ over $\bits^{nk}$:

\begin{definition}[Induced distributions]
\label{def:induce}
    For $f_1,\ldots,f_k:\bn\to\bits$ and distribution $H$ over $\bits^{nk}$, the \emph{induced distribution} of $H$ with respect to $f_1,\ldots,f_k$, $\mathrm{Ind}(H)$, is a distribution over $\bits^k$ defined by the following experiment:
    \begin{enumerate}
        \item sample $\bX\sim H$;
        \item output $(f_1(\bX^{(1)}),\ldots,f_k(\bX^{(k)}))$.
    \end{enumerate}
\end{definition}

We prove the following simple proposition about induced distributions.

\begin{proposition}[Density of distributions induced by balanced functions]
\label{prop:induce}
    For balanced $f_1,\ldots,f_k:\bn\to\bits$ and density $c$ distribution $H$ over $\bits^{nk}$, the induced distribution $\mathrm{Ind}(H)$ with respect to $f_1,\ldots,f_k$ has density $c$.
\end{proposition}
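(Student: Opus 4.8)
The plan is to verify the density-$c$ condition of~\Cref{def:distribution-density} directly: it suffices to show that $\mathrm{Ind}(H)$ places mass at most $\tfrac{1}{c\cdot 2^k}$ on every point $y\in\bits^k$. By the definition of the induced distribution,
\[
\mathrm{Ind}(H)(y) \;=\; \Prx_{\bX\sim H}\big[(f_1(\bX^{(1)}),\ldots,f_k(\bX^{(k)}))=y\big] \;=\; \sum_{X\,:\,f_i(X^{(i)})=y_i\ \text{for all }i\in[k]} H(X),
\]
so the whole argument reduces to (i) counting how many strings $X\in\bits^{nk}$ appear in this sum and (ii) plugging in the per-point bound on $H$ coming from its density.

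For step (i), I would note that the blocks $X^{(1)},\dots,X^{(k)}$ are on disjoint sets of coordinates and the constraint $f_i(X^{(i)})=y_i$ only touches block $i$, so the number of admissible $X$ factors as $\prod_{i\in[k]}\abs{f_i^{-1}(y_i)}$. This is the one place where the balancedness hypothesis is used: each $f_i$ being balanced gives $\abs{f_i^{-1}(y_i)}=2^{n-1}$, independently of whether $y_i=1$ or $y_i=-1$, so the number of admissible $X$ is exactly $(2^{n-1})^k=2^{(n-1)k}$ for every $y$. For step (ii), since $H$ has density $c$ over $\bits^{nk}$, every summand satisfies $H(X)\le \tfrac{1}{c\cdot 2^{nk}}$, and combining with the count from step (i) yields
\[
\mathrm{Ind}(H)(y)\;\le\; 2^{(n-1)k}\cdot\frac{1}{c\cdot 2^{nk}}\;=\;\frac{1}{c\cdot 2^{k}}
\]
for every $y\in\bits^k$, which is exactly the assertion that $\mathrm{Ind}(H)$ has density $c$.

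There is no genuine obstacle in this proposition; it is a short counting argument. The only subtlety worth flagging is the role of balancedness: it is precisely what makes the preimage sizes $\abs{f_i^{-1}(y_i)}$ independent of the target value $y$, so that the uniform per-point upper bound on $H$ over $\bits^{nk}$ rescales cleanly (by the factor $2^{nk}/2^{k}=2^{(n-1)k}$) into a uniform per-point upper bound over $\bits^{k}$. If the $f_i$ were not balanced, some $y$ would have larger preimages and the induced distribution could have worse density, so this assumption cannot simply be dropped.
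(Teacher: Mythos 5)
Your proof is correct and follows essentially the same argument as the paper's: count the preimages of $y$ (at most $(2^{n-1})^k$ by balancedness), bound $H$ pointwise by $\tfrac{1}{c\cdot 2^{nk}}$, and multiply. Your write-up is slightly more explicit in factoring the preimage count over blocks, but the substance is identical.
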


\begin{proof}
    Since $f_1,\ldots,f_k$ are balanced, any $y\in\bits^k$ can be obtained as $y=(f_1(X^{(1)}),\ldots,f_k(X^{(k)}))$ by at most $(2^{n-1})^k$ distinct strings $X\in\bits^{nk}$. Therefore, for all $y\in \bits^k$, we have
    \ifnum\focs=1
    \begin{align*}
        &\Prx_{\by\sim\mathrm{Ind}(H)}[\by =y]\\&=\Prx_{\bX\sim H}\big[y=(f_1(\bX^{(1)}),\ldots,f_k(\bX^{(k)}))\big]\\
        &\le \frac{|\{X\in \bits^{nk}\mid y=(f_1(X^{(1)}),\ldots,f_k(X^{(k)}))\}|}{c2^{nk}}\tag{$H$ has density $c$}\\
        &\le \frac{(2^{n-1})^k}{c2^{nk}}=\frac{1}{c2^{k}}\tag{$f_1,\ldots,f_k$ are balanced}
    \end{align*}
    \else
    \begin{align*}
        \Prx_{\by\sim\mathrm{Ind}(H)}[\by =y]&=\Prx_{\bX\sim H}\big[y=(f_1(\bX^{(1)}),\ldots,f_k(\bX^{(k)}))\big]\\
        &\le \frac{1}{c2^{nk}}\cdot |\{X\in \bits^{nk}\mid y=(f_1(X^{(1)}),\ldots,f_k(X^{(k)}))\}|\tag{$H$ has density $c$}\\
        &\le \frac{(2^{n-1})^k}{c2^{nk}}=\frac{1}{c2^{k}}\tag{$f_1,\ldots,f_k$ are balanced}
    \end{align*}
    \fi
    which completes the proof.
\end{proof}

We also use the following standard fact about the circuit complexity of Boolean functions.
\begin{fact}[Upper bound on the circuit size of Boolean functions \cite{Lup58}]
    \label{fact:lupanov}
    Every Boolean function on $n$ variables is computed by a circuit of size $O(2^n/n)$.
\end{fact}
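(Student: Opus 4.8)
The final statement is the classical Lupanov bound, which I would prove via the standard ``$(k,s)$-decomposition'' of the truth table. Split the $n$ inputs as $y=(x_1,\dots,x_k)$ and $z=(x_{k+1},\dots,x_n)$ for a parameter $k$, let $m_a\colon\{0,1\}^k\to\{0,1\}$ be the minterm that is $1$ exactly on $y=a$ (a conjunction of $k$ literals) and $m_b\colon\{0,1\}^{n-k}\to\{0,1\}$ the analogous minterm on $z$, and view the truth table of $f$ as a $2^k\times 2^{n-k}$ matrix $M$ with $M[a,b]=f(a,b)$. Partition the $2^k$ rows into $p=\lceil 2^k/s\rceil$ groups $A_1,\dots,A_p$, each of at most $s$ rows, where $s$ is a second parameter. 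For a group $A_i$ and a column pattern $v\in\{0,1\}^{A_i}$, let $C_{i,v}\subseteq\{0,1\}^{n-k}$ be the set of columns $b$ on which $(M[a,b])_{a\in A_i}$ equals $v$; for each fixed $i$ the sets $\{C_{i,v}\}_v$ partition $\{0,1\}^{n-k}$. Since the $A_i$ partition $\{0,1\}^k$, this gives the identity
\[
 f(y,z)\;=\;\bigvee_{i=1}^{p}\;\bigvee_{v\,:\,C_{i,v}\neq\emptyset}\Bigl(\bigvee_{a\in A_i:\,v_a=1}m_a(y)\Bigr)\wedge\Bigl(\bigvee_{b\in C_{i,v}}m_b(z)\Bigr),
\]
and the plan is to turn this formula directly into a circuit and bound its gate count.

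The count splits into four parts. First, all $2^k$ minterms $m_a$ can be produced by a shared subcircuit of size $O(2^k)$: build the $2^j$ minterms on $(x_1,\dots,x_j)$ from the $2^{j-1}$ minterms on $(x_1,\dots,x_{j-1})$ using $2^j$ AND gates, for a total of $\sum_{j\le k}2^j=O(2^k)$; symmetrically, all $2^{n-k}$ minterms $m_b$ cost $O(2^{n-k})$. Second, the ``$y$-side'' ORs $\bigvee_{a\in A_i:v_a=1}m_a$: for a fixed $i$ there are at most $\min(2^s,2^{n-k})\le 2^s$ nonempty patterns $v$, each an OR of at most $s$ of the already-built minterms, so $O(s\,2^s)$ gates per group and $O(p\,s\,2^s)$ in total. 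Third, the ``$z$-side'' ORs $\bigvee_{b\in C_{i,v}}m_b$: for a fixed $i$ the nonempty $C_{i,v}$ partition $\{0,1\}^{n-k}$, so $\sum_v|C_{i,v}|=2^{n-k}$ and these ORs use $O(2^{n-k})$ gates for group $i$, hence $O(p\,2^{n-k})$ overall. Fourth, the AND gates pairing the two sides and the outer OR add another $O(p\,2^s)$. (Replacing each unbounded-fan-in OR by a fan-in-two tree changes every count by at most a constant factor, and re-encoding the $\{0,1\}$ output in whatever output alphabet is desired costs $O(1)$ more.) Using $p=\lceil 2^k/s\rceil$, the total is
\[
 O\!\left(2^{k}+2^{n-k}+2^{k+s}+\tfrac{2^{n}}{s}\right).
\]

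Finally I would choose the parameters. Taking $k=s=\lfloor n/3\rfloor$ makes each of $2^k,2^{n-k},2^{k+s}$ at most $2^{2n/3+O(1)}$, which is $o(2^n/n)$ for large $n$, while the remaining term equals $2^n/s=O(2^n/n)$; hence the circuit has size $O(2^n/n)$ (for small $n$ the bound is vacuous since $2^n/n$ is then a constant). There is no deep obstacle here --- the one point that needs care is the bookkeeping that ensures the \emph{only} term of order $2^n$ is the $z$-side ``partition'' term $p\cdot 2^{n-k}=2^n/s$. This is exactly why the shared-minterm construction matters (so the $2^k$ and $2^{n-k}$ minterm costs are additive rather than multiplicative) and why one groups columns by their pattern $v$ and uses that $\{C_{i,v}\}_v$ partitions $\{0,1\}^{n-k}$ (so each group's $z$-side ORs cost $2^{n-k}$ rather than $2^s\cdot 2^{n-k}$).
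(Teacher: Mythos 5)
The paper does not prove this statement at all: it is imported as a black-box classical result with a citation to Lupanov (1958), so there is no internal proof to compare against. Your argument is the standard Lupanov $(k,s)$-decomposition and it is correct as written: the representation of $f$ as $\bigvee_{i,v}(\bigvee_{a\in A_i:v_a=1}m_a)\wedge(\bigvee_{b\in C_{i,v}}m_b)$ is valid, the four gate counts $O(2^k)$, $O(2^{n-k})$, $O(2^{k+s})$, and $O(2^n/s)$ are each justified (in particular the key point that the nonempty $C_{i,v}$ partition the columns, so the $z$-side ORs cost only $2^{n-k}$ per group), and the choice $k=s=\lfloor n/3\rfloor$ makes every term except $2^n/s$ equal to $2^{O(n)}$ with exponent strictly below $n$, hence $o(2^n/n)$. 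For the purposes of this paper only the $O(2^n/n)$ bound is needed, so you do not need the more delicate parameter choice ($k\approx\log n$, $s\approx n-c\log n$) that yields Lupanov's tight constant $(1+o(1))2^n/n$.
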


Now we prove the main claim of this section.
\begin{proof}[Proof of \Cref{claim:circuit-upper-bound}]
    Let $F=g(f_1,\ldots,f_k)$ for balanced $f_1,\ldots,f_k$. Let $H_g=\mathrm{Ind}(H_F)$ be the induced distribution of $H_F$ with respect to $f_1,\ldots,f_k$. By \Cref{prop:induce}, $H_g$ has density $c$. Therefore, there is a junta over $r$ many variables $\{x_{i_1},\ldots,x_{i_r}\}$ which computes $g$ to accuracy $\frac{1}{2}+\gamma$. \Cref{fact:lupanov} implies that $g$ is computed to accuracy $\frac{1}{2}+\gamma$ by a circuit $C_g$ of size $O(2^{r}/r)$. Let $C_i$ be a circuit of size $O(2^n/n)$ which computes $f_i$ exactly. Then, we construct a circuit $C_F$ for $F$ defined by:
    $$
    C_F(X)\coloneqq C_g(C_{i_1}(X^{(i_1)}),\ldots,C_{i_{r}}(X^{(i_r)})).
    $$
    This circuit has size $r\cdot O(2^n/n)+O(2^r/r)$ and computes $F$ to accuracy $\frac{1}{2}+\gamma$ over $H_F$ as desired.
\end{proof}

\subsection{Proof of the second part of \Cref{thm:lift-proof-overview}}
We state the formal version of the second part of \Cref{thm:lift-proof-overview}.
\begin{theorem}[Formal version of the second part of \Cref{thm:lift-proof-overview}]
\label{thm:lift-proof-formal}
    For any $g:\bits^k\to \bits$ and $k\le 2^{n-1}$, if $J(g,\delta) \ge r_{\lar}$ then there is an $F \in \lift_n(g)$ for which all circuits of size
        \begin{equation*}
            \Omega(r_{\mathrm{large}} \cdot \tfrac{2^n }{n})
        \end{equation*}
        agree with $F$ on at most $1 - \delta/8$ fraction of inputs in $\bits^{nk}$.
\end{theorem}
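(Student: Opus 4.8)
The plan is to prove \Cref{thm:lift-proof-formal} by a counting (probabilistic-method) argument, following the two-step outline sketched in the proof overview: first reduce the circuit lower bound to the \emph{soft} junta complexity of $g$ via \Cref{lem:lower-bound-from-soft}, and then convert back to standard junta complexity via \Cref{lem:connect-soft-and-hard-juntas}. Concretely, \Cref{lem:lower-bound-from-soft} already gives: for $k \le 2^{n-1}$ and any $g$, there is an $F \in \lift_n(g)$ such that any circuit agreeing with $F$ on a $1-\delta'$ fraction of $\bits^{nk}$ has size $\Omega(\tilde J(g,2\delta')\cdot 2^n/n)$. I would instantiate this with $\delta' \coloneqq \delta/8$, so that $2\delta' = \delta/4$, and then invoke \Cref{lem:connect-soft-and-hard-juntas} in the form $\tilde J(g,\delta/4) \ge \tfrac12 J(g,\delta)$. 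Combining, $\tilde J(g,\delta/4) \ge \tfrac12 J(g,\delta) \ge \tfrac12 r_{\lar}$, and hence any circuit agreeing with $F$ on more than a $1-\delta/8$ fraction of inputs has size $\Omega(r_{\lar}\cdot 2^n/n)$, which is exactly the claimed bound.

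In more detail, the key steps in order are: (i) assume $J(g,\delta) \ge r_{\lar}$; (ii) apply \Cref{lem:connect-soft-and-hard-juntas} with error parameter $\delta/4$ to get $\tilde J(g,\delta/4) \ge \tfrac12 J(g,\delta) \ge \tfrac12 r_{\lar}$; (iii) apply \Cref{lem:lower-bound-from-soft} with $\delta \gets \delta/8$ (using the hypothesis $k \le 2^{n-1}$), obtaining an $F \in \lift_n(g)$ such that every circuit agreeing with $F$ on a $1 - \delta/8$ fraction of $\bits^{nk}$ has size at least $\Omega(\tilde J(g, \delta/4)\cdot 2^n/n) = \Omega(r_{\lar}\cdot 2^n/n)$; (iv) contrapositively, every circuit of size $\Omega(r_{\lar}\cdot 2^n/n)$ agrees with $F$ on at most a $1 - \delta/8$ fraction of inputs. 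One should double-check the constants: the ``$2\delta$'' inside \Cref{lem:lower-bound-from-soft} composed with the ``$4\delta$'' inside \Cref{lem:connect-soft-and-hard-juntas} produces a factor $8$, which is why the theorem statement has $1-\delta/8$; the $\Omega(\cdot)$ in front absorbs the loss of the factor $\tfrac12$ from \Cref{lem:connect-soft-and-hard-juntas}.

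Since the two lemmas being cited are themselves the substantive contributions (and are proved later in the paper), the proof of \Cref{thm:lift-proof-formal} as stated is essentially just this bookkeeping: unwinding definitions and tracking the constants through the two reductions. The only genuine content-level obstacle, which is deferred to the proofs of \Cref{lem:lower-bound-from-soft,lem:connect-soft-and-hard-juntas}, is the main concentration inequality (\Cref{lem:concentration-alphas-proof-overview}) and its coupling argument around the lack of independence among the coordinates of a uniformly random balanced function; but at this level the plan is simply to chain the two lemmas together and verify the arithmetic of the error and size parameters.
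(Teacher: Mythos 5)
Your proof is correct and is essentially identical to the paper's: the paper also instantiates \Cref{lem:lower-bound-from-soft} with error parameter $\delta/8$ and then applies \Cref{lem:connect-soft-and-hard-juntas} to replace $\tilde J(g,\delta/4)$ with $\Omega(J(g,\delta))$, tracking the same constants.
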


\begin{proof}
    By \Cref{lem:lower-bound-from-soft}, there is an $F\in\lift_n(g)$ such that all circuits of size $O(\Tilde{J}(g,\delta/4)\cdot 2^n/n)$ agree with $F$ on at most $1-\delta/8$ fraction of inputs in $\bits^{nk}$. The proof is completed by observing that $J(g,\delta)\le O(\Tilde{J}(g,\delta/4))$ by \Cref{lem:connect-soft-and-hard-juntas}.
\end{proof}

\subsection{Proof of \Cref{lem:lower-bound-from-soft}}
\Cref{lem:each-circuit-covers-few-F-soft} implies that the number of functions needed to approximate each $F\in \text{Lift}_n(g)$ to accuracy $1-\delta$ is at least $2^{\Tilde{J}(g,2\delta)\cdot \Omega(2^n)}$ when $k\le 2^{n-1}$. The number of circuits of size $\Tilde{J}(g,2\delta)\cdot O(2^n/n)$ is at most $\left(\Tilde{J}(g,2\delta)\cdot 2^n/n\right)^{\Tilde{J}(g,2\delta)\cdot O(2^n/n)}\le 2^{\Tilde{J}(g,2\delta)\cdot O(2^n)}$ since $\Tilde{J}(g,2\delta)\le 2^{n-1}$. Therefore, there must exist some $F\in \text{Lift}_n(g)$ which cannot be approximated to accuracy $1-\delta$ by any circuit of size $O(\Tilde{J}(g,2\delta)\cdot 2^n/n)$.\hfill\qed

\subsection{Proof of \Cref{lem:each-circuit-covers-few-F-soft}}
As discussed in \Cref{sec:lower-bound-from-soft}, by the triangle inequality, it is sufficient to show that $\Pr[\dist(F,\bF') \leq 2\delta]\le 2^{-\tilde{J}(g,2\delta)\cdot \Omega(2^n-k)}$ for every $F=g(f_1,\ldots,f_k)$. First, we show that if $F$ and $F'=g({f_1}',\ldots,{f_k}')$ satisfy $\dist(F, F') \leq 2\delta$, then the sum of the correlations squared of $f_i$ and ${f_i}'$ is lower bounded by the $\delta$-error soft junta complexity of $g$:

\begin{proposition}[Soft junta complexity lower bounds the correlation of inner balanced functions]
\label{prop:soft-junta-error}
    For all $g:\bits^k\to\bits$, $F=g(f_1,\ldots,f_k)$ and $F'=g(f_1',\ldots,f_k')$, if $\dist(F,F')\le \delta$, then
    $$
    \sum_{i=1}^k\Ex_{\bx\sim\bits^n}[f_i(\bx){f_i}'(\bx)]^2\ge \Tilde{J}(g,\delta).
    $$
\end{proposition}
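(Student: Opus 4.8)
The plan is to certify that the correlation vector $\alpha \coloneqq (\alpha_1,\ldots,\alpha_k)$, where $\alpha_i \coloneqq \Ex_{\bx\sim\bits^n}[f_i(\bx){f_i}'(\bx)] \in [-1,1]$, is a feasible point for the infimum in \Cref{def:soft-junta-complexity} defining $\tilde J(g,\delta)$; that is, I would show $\error_\alpha(g)\le\delta$. Granting this, the desired conclusion $\tilde J(g,\delta)\le \sum_{i\in[k]} \alpha_i^2$ is immediate from the definition of soft junta complexity.

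The heart of the matter is to realize the $\alpha$-correlated distribution of \Cref{def:alpha-corr-error} as a marginal of the joint behavior of $F$ and $F'$. Sample $\bX\sim\bits^{nk}$ uniformly with blocks $\bX^{(1)},\ldots,\bX^{(k)}$, and set $\bx_i\coloneqq f_i(\bX^{(i)})$ and $\by_i\coloneqq {f_i}'(\bX^{(i)})$ for each $i\in[k]$, so that $F(\bX)=g(\bx_1,\ldots,\bx_k)$ and $F'(\bX)=g(\by_1,\ldots,\by_k)$. I claim $((\bx_1,\by_1),\ldots,(\bx_k,\by_k))$ is distributed exactly as the pair $(\bx,\by)\sim\mcD(\alpha)$ from \Cref{def:alpha-corr-error}: (i) the pairs $(\bx_i,\by_i)$ are mutually independent because the blocks $\bX^{(i)}$ are; (ii) $\bx_i$ and $\by_i$ are each uniform on $\bits$ because $f_i$ and ${f_i}'$ are balanced; and (iii) two uniform $\pm 1$ variables of correlation $\alpha_i$ have a unique joint law --- $\by_i=\bx_i$ with probability $(1+\alpha_i)/2$ and $\by_i=-\bx_i$ otherwise --- which is exactly the law of $(\bx_i,\by_i)$ in the definition. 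Hence
\begin{align*}
    \error_\alpha(g) &\le \dist_\alpha(g,g) = \Prx_{(\bx,\by)\sim\mcD(\alpha)}[g(\bx)\ne g(\by)]\\
    &= \Prx_{\bX\sim\bits^{nk}}[g(\bx_1,\ldots,\bx_k)\ne g(\by_1,\ldots,\by_k)] = \dist(F,F') \le \delta,
\end{align*}
where the last inequality is the hypothesis.

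The only step requiring care is the distributional identity, specifically points (i) and (ii): both the mutual independence of the pairs and the uniformity of each coordinate rely essentially on the balancedness of the $f_i$ and ${f_i}'$ --- without it, $\bx_i$ need not be uniform and its correlation with $\by_i$ would no longer pin down the joint law, so $\dist(F,F')$ could not be rewritten as $\dist_\alpha(g,g)$. Everything else is a direct unwinding of \Cref{def:alpha-corr-error,def:soft-junta-complexity}.
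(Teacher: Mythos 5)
Your proof is correct and takes essentially the same approach as the paper's: both identify the correlation vector $\alpha_i=\Ex[f_i f_i']$ as feasible for the infimum defining $\tilde J(g,\delta)$ by showing that the joint law of $(f_i(\bX^{(i)}), f_i'(\bX^{(i)}))_{i\in[k]}$ under uniform $\bX$ coincides with $\mcD(\alpha)$, which reduces $\error_\alpha(g)\le\dist_\alpha(g,g)$ to the hypothesis $\dist(F,F')\le\delta$. The one small difference is that you make explicit the role of balancedness and block-independence in pinning down the joint distribution, whereas the paper states the marginal-uniformity-plus-correlation argument more tersely; the content is the same.
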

\begin{proof}
    Let $i\in [k]$ and consider the random variable $(\by_i,\by{_i}')=(f_i(\bx),f{_i}'(\bx))$ where $\bx\sim\bits^n$ is drawn uniformly at random and the random variable $(\bz_i,\bz{_i}')$ where $\bz_i$ sampled uniformly at random from $\bits$ and $\bz{_i}'$ is sampled independently from the distribution where $\bz{_i}'$ is set to $\bz_i$ with probability $(1+\alpha_i)/2$ for $\alpha_i\coloneqq \Ex_{\bx\sim\bn}[f_i(\bx){f_i}'(\bx)]$ and is otherwise set to $-\bz_i$. This corresponds to the joint distribution from \Cref{def:alpha-corr-error} where the correlation vector $\alpha\in [-1,1]^n$ is determined by $\Ex_{\bx\sim\bn}[f_i(\bx){f_i}'(\bx)]$ for $i=1,\ldots,n$.

    We claim that $(\by{_i},\by{_i}')$ and $(\bz{_i},\bz{_i}')$ are distributed the same: for all $(y,y')\in \bits\times\bits$, we have $\Pr[(\by{_i},\by{_i}')=(y,y')]=\Pr[(\bz{_i},\bz{_i}')=(y,y')]$. To see this, note that the pdfs of the random variables have four possible values and each bit is marginally uniform. Therefore, it is sufficient to show that the correlation of the two random variables is the same. And indeed by definition, we have 
    \ifnum\focs=1
    \begin{align*}
        \E[\bz{_i}\bz{_i}']&=\lfrac{1}{2}\E[\bz{_i}'\mid\bz{_i}=1]-\lfrac{1}{2}\E[\bz{_i}'\mid\bz{_i}=-1]\tag{$\bz{_i}$ is uniform random}\\
        &=\lfrac{1}{2}\Ex_{\bx\sim\bn}[f_i(\bx){f_i}'(\bx)]\\
        &\qquad-\lfrac{1}{2}\left(-\Ex_{\bx\sim\bn}[f_i(\bx){f_i}'(\bx)]\right)\tag{Definition of $(\bz_i,{\bz_i}')$}\\
        &=\Ex_{\bx\sim\bn}[f_i(\bx){f_i}'(\bx)]=\E[\by{_i}\by{_i}']\tag{Definition of $(\by{_i},\by{_i}')$}.
    \end{align*}
    \else
    \begin{align*}
        \E[\bz{_i}\bz{_i}']&=\lfrac{1}{2}\E[\bz{_i}'\mid\bz{_i}=1]-\lfrac{1}{2}\E[\bz{_i}'\mid\bz{_i}=-1]\tag{$\bz{_i}$ is uniform random}\\
        &=\lfrac{1}{2}\Ex_{\bx\sim\bn}[f_i(\bx){f_i}'(\bx)]-\lfrac{1}{2}\left(-\Ex_{\bx\sim\bn}[f_i(\bx){f_i}'(\bx)]\right)\tag{Definition of $(\bz_i,{\bz_i}')$}\\
        &=\Ex_{\bx\sim\bn}[f_i(\bx){f_i}'(\bx)]=\E[\by{_i}\by{_i}']\tag{Definition of $(\by{_i},\by{_i}')$}.
    \end{align*}
    \fi
    It follows that
    \begin{align*}
        \delta&\ge\Prx_{\bX\sim\bits^{nk}}[F(\bX)\neq F'(\bX)]\tag{Assumption}\\
        &=\Pr\left[g(\by_1,\ldots, \by_k)\neq g(\by{_1}',\ldots, \by{_k}')\right]\tag{Definition of $\by_i,\by{_i}'$}\\
        &=\Pr\left[g(\bz_1,\ldots, \bz_k)\neq g(\bz{_1}',\ldots, \bz{_k}')\right]\tag{$(\by{_i},\by{_i}')$ and $(\bz{_i},\bz{_i}')$ are distributed the same}\\
        &=\Pr[g(\bz)\neq g(\bz')].
    \end{align*}
    Therefore, the $\alpha$-correlated error of $g$ is at most $\delta$. By the definition of $\Tilde{J}(g,\delta)$, we get
    $$
    \Tilde{J}(g,\delta)\le \sum_{i=1}^k\alpha_i^2=\sum_{i=1}^k\Ex_{\bx\sim\bits^n}[f_i(\bx){f_i}'(\bx)]^2
    $$
    which completes the proof.
\end{proof}
Now, letting ${\boldf_i}':\bits^n\to\bits$ be a uniformly random balanced function (chosen independently for each $i$), and $\balpha_i \coloneqq \Ex_{\bx\sim\bits^n}[f_i(\bx)\boldf{_i}'(\bx)]$, we have:
\begin{align*}
    \Pr[\dist(F,\bF') \leq 2\delta] &\le \Prx_{\balpha_1, \dots, \balpha_k}\left[\sum_{i=1}^k \balpha_i^2 \ge \Tilde{J}(g,2\delta)\right]\tag{\Cref{prop:soft-junta-error}}\\
    &\le \exp(-\Omega(\Tilde{J}(g,2\delta)\cdot 2^n-k))\tag{\Cref{lem:concentration-alphas-proof-overview}}
\end{align*}
which completes the proof.\hfill\qed

\subsection{Proof of \Cref{lem:concentration-alphas-proof-overview}}
The proof of this lemma uses basic facts about the sums of sub-exponential random variables. Before proving the lemma, we state the requisite definitions and facts that we use.

\begin{definition}[Sub-Gaussian random variable]
    A random variable $\bX$ is \emph{sub-Gaussian} if there is some $t>0$ for which 
    $$
    \E[\exp(\bX^2/t^2)]\le 2
    $$
    The sub-Gaussian norm of $\bX$ is defined to be $\inf\{t>0:\E[\exp(\bX^2/t^2)]\le 2\}$.
\end{definition}
As suggested by the name, the sub-Gaussian norm is a norm of the vector space over $\R$ of sub-Gaussian random variables. In particular, the sum of two sub-Gaussian random variables $\bX,\bY$ (not necessarily independent) is itself a sub-Gaussian random variable and the sub-Gaussian norm of $\bX+\bY$ is bounded by the sum of the sub-Gaussian norms of $\bX$ and $\bY$.

A symmetric Bernoulli random variable is one that is uniform on $\pm 1$. The sum of independent symmetric Bernoulli random variables is sub-Gaussian:
\begin{fact}[Sum of symmetric Bernoullis is sub-Gaussian]
\label{fact:sum-symmetric-bernoullis}
    The sum of $N$ independent symmetric Bernoulli random variables is sub-Gaussian with sub-Gaussian norm $O(\sqrt{N})$ and variance $N$. 
\end{fact}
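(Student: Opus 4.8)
My plan is to prove this standard fact by first disposing of the variance claim, which is trivial, and then establishing the sub-Gaussian norm bound directly from the tail bound already available in the paper. Write $\mathbf{S} \coloneqq \sum_{i=1}^N \bx_i$ where each $\bx_i$ is an independent symmetric Bernoulli. Since $\Ex[\bx_i] = 0$ and $\Ex[\bx_i^2] = 1$, independence gives $\mathrm{Var}(\mathbf{S}) = N$. For the sub-Gaussian norm, I would avoid routing through the moment generating function and instead apply Hoeffding's inequality (\Cref{fact:hoeffding}) with $a_i = -1$ and $b_i = 1$, so that $\sum_i (b_i - a_i)^2 = 4N$, yielding for every $r > 0$
\begin{equation*}
  \Pr\big[\,|\mathbf{S}| \ge r\,\big] \;\le\; 2\exp\!\left(-\frac{r^2}{2N}\right).
\end{equation*}

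Next I would bound $\Ex[\exp(\mathbf{S}^2/t^2)]$ for a parameter $t$ of order $\sqrt{N}$, to be pinned down at the end, using the layer-cake (tail-integral) identity:
\begin{equation*}
  \Ex\!\left[\exp\!\left(\tfrac{\mathbf{S}^2}{t^2}\right)\right] \;=\; \int_0^\infty \Pr\!\left[\exp\!\left(\tfrac{\mathbf{S}^2}{t^2}\right) > u\right] du \;=\; 1 + \int_1^\infty \Pr\!\left[\,|\mathbf{S}| > t\sqrt{\ln u}\,\right] du,
\end{equation*}
where the contribution of $u \in [0,1]$ is exactly $1$. Substituting the Hoeffding tail bound, the integrand for $u > 1$ is at most $2\exp(-t^2 \ln u /(2N)) = 2 u^{-t^2/(2N)}$, and $\int_1^\infty 2 u^{-a}\, du = 2/(a-1)$ whenever $a > 1$. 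Taking $t = \sqrt{6N}$ makes $a = t^2/(2N) = 3$, so the second integral is at most $1$ and hence $\Ex[\exp(\mathbf{S}^2/t^2)] \le 2$. This exhibits a valid Orlicz parameter $t = O(\sqrt{N})$, so the sub-Gaussian norm of $\mathbf{S}$ is $O(\sqrt{N})$, which completes the argument.

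I do not expect a genuine obstacle here; this is a textbook fact and the only care required is bookkeeping. The one subtlety worth flagging is that the tail integral must converge to something strictly below $1$, which forces $t$ to exceed the naive scale $\sqrt{2N}$ coming directly from the Hoeffding exponent by a constant factor — any $t \ge \sqrt{6N}$ works and keeps $t = O(\sqrt{N})$. An alternative and equally short route bounds the moment generating function via $\Ex[e^{\lambda \bx_i}] = \cosh(\lambda) \le e^{\lambda^2/2}$, hence $\Ex[e^{\lambda \mathbf{S}}] \le e^{N\lambda^2/2}$, and then invokes the standard equivalence between the sub-Gaussian MGF bound, the sub-Gaussian tail bound, and the Orlicz-norm characterization; I would prefer the Hoeffding-based derivation above since \Cref{fact:hoeffding} is already in hand and it keeps the proof self-contained.
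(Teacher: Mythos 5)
Your proof is correct. The paper does not actually prove the sub-Gaussian norm bound; it states it as a standard fact (citing only the easy variance computation in the sentence following the statement) and implicitly defers to the textbook reference \cite{Ver18} that it cites elsewhere in the same subsection. Your derivation fills in the standard argument: Hoeffding for the tail bound, the layer-cake identity to turn the tail bound into an Orlicz-norm bound, and a concrete choice $t = \sqrt{6N}$ to land the integral below $1$. The arithmetic checks out (the exponent from \Cref{fact:hoeffding} with $\sum_i (b_i - a_i)^2 = 4N$ is $-r^2/(2N)$, so $a = t^2/(2N) = 3$ gives $\int_1^\infty 2u^{-3}\,du = 1$), and your observation that $t$ must exceed $\sqrt{2N}$ by a constant factor to make the integral converge below $1$ is exactly the right subtlety to flag. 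The alternative MGF route you sketch ($\cosh\lambda \le e^{\lambda^2/2}$) is equally valid and is in fact the more common textbook path; both give $O(\sqrt{N})$.
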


To see that the variance is $N$, note that, by independence, the variance is the sum of the variances of each independent symmetric Bernoulli random variable, and since a symmetric Bernoulli random variable is uniform on $\pm 1$, its variance is $1$.


\begin{fact}[Dominating sub-Gaussian random variables]
\label{fact:dominating}
    If $\bY$ is a sub-Gaussian random variable with sub-Gaussian norm $M_Y$ and $\bX$ is a random variable such that $|\bX|\le |\bY|$ with probability $1$, then $\bX$ is sub-Gaussian with sub-Gaussian norm $\le M_Y$.  
\end{fact}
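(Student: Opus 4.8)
The plan is to argue directly from the definition of the sub-Gaussian norm as an infimum, using only monotonicity of $x \mapsto e^x$ and of expectation. Recall that the sub-Gaussian norm of a random variable $\bZ$ is $\inf\{t > 0 : \E[\exp(\bZ^2/t^2)] \le 2\}$, and $M_Y$ denotes this quantity for $\bZ = \bY$.

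First I would note that the hypothesis $|\bX| \le |\bY|$ with probability $1$ gives $\bX^2 \le \bY^2$ with probability $1$, hence for every fixed $t > 0$ we have $\exp(\bX^2/t^2) \le \exp(\bY^2/t^2)$ with probability $1$, and therefore $\E[\exp(\bX^2/t^2)] \le \E[\exp(\bY^2/t^2)]$ by monotonicity of expectation. Consequently the set $T_Y \coloneqq \{t > 0 : \E[\exp(\bY^2/t^2)] \le 2\}$ is contained in $T_X \coloneqq \{t > 0 : \E[\exp(\bX^2/t^2)] \le 2\}$.

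Since $\bY$ is sub-Gaussian, $T_Y$ is nonempty with $\inf T_Y = M_Y$; by the inclusion $T_Y \subseteq T_X$, the set $T_X$ is also nonempty, so $\bX$ is sub-Gaussian, and its norm $\inf T_X$ satisfies $\inf T_X \le \inf T_Y = M_Y$, which is exactly the claim. I do not expect any genuine obstacle here; the only point worth a word of care is that the argument never needs the infimum defining the sub-Gaussian norm to be attained, since comparing the two norms reduces entirely to the set inclusion $T_Y \subseteq T_X$.
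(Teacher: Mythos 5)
Your proof is correct, and it is the natural argument from the definition: domination $|\bX|\le|\bY|$ pointwise gives $\E[\exp(\bX^2/t^2)]\le\E[\exp(\bY^2/t^2)]$ for every $t>0$, hence the set of admissible $t$ for $\bY$ is contained in that for $\bX$, and taking infima gives the claimed bound on the norm. The paper states this as a Fact without proof, and your argument is exactly the one-liner the authors surely had in mind; your remark that the infimum need not be attained is a nice precaution but, as you say, not an obstacle.
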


\begin{definition}[Sub-exponential random variable]
    A random variable $\bX$ is \emph{sub-exponential} if there is a $t>0$ such that 
    $$
    \E[\exp(|\bX|/t)]\le 2.
    $$
    The sub-exponential norm of $\bX$ is defined to be $\inf\{t>0:\E[\exp(|\bX|/t)]\le 2\}$. Alternatively, a random variable $\bX$ is sub-exponential if and only if $\sqrt{|\bX|}$ is sub-Gaussian. If $\sqrt{|\bX|}$ has sub-Gaussian norm $M$, then $\bX$ has sub-exponential norm $M^2$. 
\end{definition}

One of the basic facts about sub-exponential random variables is Bernstein's inequality which bounds the tails of sums of independent, sub-exponential random variables. See the textbook by Vershynin \cite[Theorem 2.8.1]{Ver18}) for an overview of this inequality along with its proof.

\begin{fact}[Bernstein's inequality \cite{Ber46}]
\label{fact:bernstein}
    Let $\bZ_1,\ldots,\bZ_k$ be independent, sub-exponential random variables with mean $0$ and sub-exponential norm $M$. Then for every $t\ge 0$, we have
    $$
    \Pr\left[\sum_{i=1}^k \bZ_i\ge t\right]\le \exp\left(-\Omega\left(t/M-k\right)\right).
    $$
\end{fact}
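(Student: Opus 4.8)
Bernstein's inequality for sub-exponential summands has a standard textbook proof via the exponential-moment (Chernoff) method, and I would follow it. The plan has three steps: (i) bound the moment generating function of a single summand on a neighborhood of the origin; (ii) tensorize over the independent $\bZ_i$ and apply Markov's inequality; and (iii) optimize the free parameter $\lambda$ and simplify the resulting two-regime bound into the stated form $\exp(-\Omega(t/M-k))$.

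\textbf{Steps (i) and (ii).} Fix $i$ and write $\bZ\coloneqq\bZ_i$, so $\E[\bZ]=0$ and, by the definition of the sub-exponential norm, $\E[\exp(|\bZ|/M)]\le 2$. From the latter I would extract the moment bound $\E[|\bZ|^p]\le 2\,p!\,M^p$ for every integer $p\ge 1$, using $|\bZ|^p/(p!\,M^p)\le \exp(|\bZ|/M)$. Expanding $e^{\lambda\bZ}$ as a power series, taking expectations term by term, and using $\E[\bZ]=0$ to kill the linear term gives, for all $|\lambda|\le \tfrac{1}{2M}$,
\[
\E\!\left[e^{\lambda\bZ}\right]\ \le\ 1+\sum_{p\ge 2}\frac{|\lambda|^p\,\E[|\bZ|^p]}{p!}\ \le\ 1+\sum_{p\ge 2}2\,(|\lambda|M)^p\ \le\ 1+4\lambda^2 M^2\ \le\ e^{4\lambda^2 M^2},
\]
the second-to-last inequality being the geometric-series bound valid once $|\lambda|M\le\tfrac12$. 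Then, by independence of the $\bZ_i$, for any $0\le\lambda\le\tfrac{1}{2M}$,
\[
\Pr\!\left[\sum_{i=1}^k\bZ_i\ge t\right]\ \le\ e^{-\lambda t}\prod_{i=1}^k\E\!\left[e^{\lambda\bZ_i}\right]\ \le\ \exp\!\left(-\lambda t+4kM^2\lambda^2\right).
\]

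\textbf{Step (iii).} The unconstrained minimizer of $-\lambda t+4kM^2\lambda^2$ is $\lambda_0=t/(8kM^2)$. If $\lambda_0\le\tfrac1{2M}$, i.e.\ $t\le 4kM$, I take $\lambda=\lambda_0$ and get the bound $\exp(-t^2/(16kM^2))$; otherwise I take $\lambda=\tfrac1{2M}$ and get $\exp(-t/(2M)+k)$. Hence
\[
\Pr\!\left[\sum_{i=1}^k\bZ_i\ge t\right]\ \le\ \exp\!\left(-\min\!\Big(\tfrac{t^2}{16kM^2},\ \tfrac{t}{2M}-k\Big)\right).
\]
Writing $s\coloneqq t/M$: when $s>4k$ we have $\tfrac{t}{2M}-k=\tfrac12(s-2k)\ge\tfrac14(s-k)$; when $k<s\le 4k$ we have $\tfrac{t^2}{16kM^2}=\tfrac{s^2}{16k}\ge\tfrac14(s-k)$, since $s\mapsto s^2/(s-k)$ is minimized on $(k,\infty)$ at $s=2k$ with value $4k$; and if $s\le k$ the asserted bound already exceeds $1$. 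So the right-hand side is at most $\exp\!\big(-\tfrac14(t/M-k)\big)=\exp(-\Omega(t/M-k))$, as claimed.

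\textbf{Main obstacle.} The only non-mechanical part is step (i): converting the defining inequality $\E[\exp(|\bZ|/M)]\le 2$ into a clean quadratic MGF bound near the origin, which is exactly where the mean-zero assumption is used---it removes the order-$\lambda$ term so that the error is $O(\lambda^2)$ rather than $O(\lambda)$. Everything after that---the Chernoff tensorization and the one-variable optimization with its elementary case split---is routine. Alternatively, and as the paper effectively does, one may simply invoke the textbook statement (Vershynin, Theorem~2.8.1) and read off the displayed form.
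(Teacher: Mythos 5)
Your proof is correct. Note that the paper does not actually prove this statement: it is stated as a black-box \textbf{Fact} with a pointer to Vershynin's textbook (Theorem 2.8.1), so there is no in-paper argument to compare against. What you supply is the standard Chernoff-method proof, and all of its steps check out: the moment bound $\E[|\bZ|^p]\le 2\,p!\,M^p$ follows from $x^p/p!\le e^x$; the MGF bound $\E[e^{\lambda\bZ}]\le e^{4\lambda^2M^2}$ for $|\lambda|\le 1/(2M)$ correctly uses mean-zero to kill the linear term and the geometric series to control the tail; and the two-regime optimization is right (the sub-Gaussian regime gives $\exp(-t^2/(16kM^2))$ for $t\le 4kM$ and the sub-exponential regime gives $\exp(-t/(2M)+k)$ otherwise). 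The one genuinely non-routine piece is the final reduction of the usual $\min(t^2/(kM^2),\,t/M)$ form to the paper's $\exp(-\Omega(t/M-k))$ form, and your case analysis there is correct: with $s=t/M$, the inequality $s^2\ge 4k(s-k)$ is just $(s-2k)^2\ge 0$, the large-$s$ case is a direct comparison, and for $s\le k$ the bound is vacuous. This last step is worth writing out precisely because the paper's version of Bernstein is stated in this slightly nonstandard form, and your argument shows it really does follow from the textbook statement with an absolute constant (you get $c=1/4$).
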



In order to apply Bernstein's inequality, we need to center a sub-exponential random variable so it has mean $0$. It's straightforward to show that a centered sub-exponential random variable is still sub-exponential and has the same sub-exponential norm (up to constants):

\begin{fact}[Centering a sub-exponential random variable]
    \label{fact:centering}
    If $\bX$ is a sub-exponential random variable with mean $\mu$ and sub-exponential norm $M$, then $\bX-\mu$ is a sub-exponential random variable with sub-exponential norm $O(M)$. 
\end{fact}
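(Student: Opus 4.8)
The plan is to use that the quantity $\inf\{t>0:\E[\exp(|\bX|/t)]\le 2\}$ from the definition above is the Orlicz norm associated to the convex function $\psi(x)=e^x-1$ (indeed $\E[\psi(|\bX|/t)]\le 1 \iff \E[\exp(|\bX|/t)]\le 2$), and in particular it obeys the triangle inequality on the vector space of sub-exponential random variables. Granting this, the argument is immediate modulo one estimate: writing the sub-exponential norm as $\|\cdot\|$, we have $\|\bX-\mu\| \le \|\bX\| + \|{-\mu}\| = M + \|\mu\|$, so it suffices to bound the sub-exponential norm of the constant random variable $\mu$ by $O(M)$.

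First I would compute $\|\mu\|$ for a constant: $\E[\exp(|\mu|/t)] = \exp(|\mu|/t) \le 2$ exactly when $t \ge |\mu|/\ln 2$, so $\|\mu\| = |\mu|/\ln 2$. Next I would bound $|\mu| = |\E\bX| \le \E|\bX|$ and note that the sub-exponential norm controls the first absolute moment: for any $t > M$ we have $\E[\exp(|\bX|/t)] \le 2$, so by Jensen's inequality $\exp(\E|\bX|/t) \le \E[\exp(|\bX|/t)] \le 2$, i.e.\ $\E|\bX| \le t\ln 2$; letting $t \downarrow M$ gives $\E|\bX| \le M\ln 2$. Chaining these, $\|\mu\| = |\mu|/\ln 2 \le \E|\bX|/\ln 2 \le M$, hence $\|\bX-\mu\| \le 2M = O(M)$.

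If one prefers not to invoke the Orlicz-norm triangle inequality as a black box, the same bound follows from a direct computation: for any $t_0 > M$, set $t = 2t_0$ and use $|\bX-\mu| \le |\bX| + |\mu|$ together with the moment bound $|\mu| \le \E|\bX| \le t_0 \ln 2$ to get $\E[\exp(|\bX-\mu|/t)] \le \exp(|\mu|/(2t_0)) \cdot \E[\exp(|\bX|/(2t_0))] \le 2^{1/2}\cdot(\E[\exp(|\bX|/t_0)])^{1/2} \le 2^{1/2}\cdot 2^{1/2} = 2$, the middle inequality being Jensen for the concave map $x\mapsto \sqrt{x}$. This shows $\bX-\mu$ is sub-exponential with norm at most $2t_0$, and taking $t_0 \downarrow M$ finishes. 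There is no genuine obstacle here; the only mild subtlety is that the infimum defining the sub-exponential norm need not be attained, which is why I work with $t$ (resp.\ $t_0$) strictly above $M$ throughout and pass to the limit at the end.
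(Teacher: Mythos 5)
The paper states \Cref{fact:centering} as a standard background fact and does not include a proof, so there is nothing to compare your argument against line by line; I can only check it on its own merits, and it checks out. Both of your routes are correct. The key estimate you need in each is $\E|\bX|\le M\ln 2$, which you obtain cleanly from Jensen applied to $\exp$, and the constant-norm computation $\|\mu\|=|\mu|/\ln 2$ is right. The first route leans on the fact that $\psi(x)=e^x-1$ is a Young function so that $\|\cdot\|_\psi$ is a genuine norm (triangle inequality); the second avoids that black box by a direct Cauchy--Schwarz-type bound $\E[\exp(|\bX|/(2t_0))]\le(\E[\exp(|\bX|/t_0)])^{1/2}$ and the split $|\bX-\mu|\le|\bX|+|\mu|$. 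Your care about the infimum possibly not being attained, and the monotone-convergence pass $t_0\downarrow M$ at the end, is correct and appropriate. The resulting constant $2M$ is not tight (if $\mu=0$ you trivially have norm $M$) but it is certainly $O(M)$, which is all the Fact claims.
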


\paragraph{An equivalent way of sampling a random balanced function.}
We consider a random function $f:\bits^n\to\bits$ as a uniform random string from $\bits^N$ corresponding to $f$'s truth table in lexicographic order where $N=2^n$. Therefore, a uniform random \textit{balanced} function will correspond to a uniform random string from $\bits^N$ of Hamming weight $N/2$. An equivalent way of sampling a uniform random balanced function is to first sample a uniform random function and then correct it to be balanced. The following proposition formalizes this equivalence.

\begin{proposition}
\label{prop:sampling-from-middle-layer}
    Let $N\in \N$ be even. Let ${\bW}$ be a random string obtained by the following
    \begin{itemize}
        \item sample $\bU$ uniformly at random from $\bits^N$ and for $\bell=\sum_{i=1}^N \bU_i$:
        \begin{itemize}
            \item if $\bell >0$, then {uniformly at random} select $\bell/2$ many $+1$ coordinates in $\bU$ and flip them to $-1$ to form $\bW$;
            \item if $\bell<0$, then {uniformly at random} select $|\bell|/2$ many $-1$ coordinates in $\bU$ and flip them to $+1$ to form $\bW$.
        \end{itemize}
    \end{itemize}
    Then, ${\bW}$ is distributed uniformly at random among strings $\bits^N$ of Hamming weight $N/2$.
\end{proposition}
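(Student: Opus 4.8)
The plan is to show that $\bW$ is uniform on the middle layer $\binom{[N]}{N/2}$ (identified with weight-$N/2$ strings) by directly computing $\Pr[\bW = w]$ for an arbitrary fixed string $w$ of Hamming weight $N/2$ and checking it equals $\binom{N}{N/2}^{-1}$. Fix such a $w$; write $P = \{i : w_i = +1\}$ and $M = \{i : w_i = -1\}$, so $|P| = |M| = N/2$. I would decompose the event $\{\bW = w\}$ according to the value $\bell = \sum_i \bU_i$, which is always even since $N$ is even. For a given even $\bell = 2t$ with $t > 0$ (the case $t < 0$ is symmetric, and $t = 0$ means $\bW = \bU$ directly), the string $\bW$ is obtained from $\bU$ by flipping $t$ of the $+1$'s of $\bU$ down to $-1$. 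So $\bW = w$ can happen only if $\bU$ agrees with $w$ on all of $M$ (no $-1 \to +1$ flips occur when $\bell > 0$), and $\bU$ restricted to $P$ is $w|_P = \vec 1$ with exactly $t$ extra coordinates in $M$... wait — more carefully: $\bU$ agrees with $w$ on $M$, and on $P$ the string $\bU$ has $N/2 + t$... no. Let me restate: if $\bU$ has $N/2 + t$ ones total and we flip $t$ of them to $-1$ to get $w$ which has $N/2$ ones, then $\bU$'s set of $+1$ coordinates is exactly $P$ together with $t$ coordinates from $M$, i.e. $\bU$ agrees with $w$ on $P$ and disagrees on exactly $t$ coordinates of $M$.

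Thus, for fixed $t > 0$: the number of $\bU$'s contributing is $\binom{N/2}{t}$ (choosing which $t$ coordinates of $M$ are $+1$ in $\bU$), each with probability $2^{-N}$; and conditioned on such a $\bU$, the algorithm flips a uniformly random size-$t$ subset of $\bU$'s $+1$-set (which has size $N/2 + t$) to $-1$, and the probability this subset is exactly the "correct" $t$ coordinates of $M$ equals $\binom{N/2+t}{t}^{-1}$. Hence the contribution of the $\bell = 2t > 0$ case is
\begin{equation*}
    2^{-N} \binom{N/2}{t} \binom{N/2+t}{t}^{-1}.
\end{equation*}
A symmetric computation handles $t < 0$ (with $|t|$ in place of $t$, flipping $-1$'s of $\bU$ up), giving the same expression, and the $t = 0$ case contributes $2^{-N}\binom{N/2}{0}\binom{N/2}{0}^{-1} = 2^{-N}$. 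Summing over all $t$ (using $\bell$ ranges over even integers in $[-N, N]$, equivalently $t \in \{-N/2, \dots, N/2\}$) gives
\begin{equation*}
    \Pr[\bW = w] = 2^{-N}\left( 1 + 2\sum_{t=1}^{N/2} \binom{N/2}{t}\binom{N/2+t}{t}^{-1} \right).
\end{equation*}

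The final step — and the one requiring the only real calculation — is to verify this equals $\binom{N}{N/2}^{-1}$, i.e. the binomial identity $1 + 2\sum_{t=1}^{N/2} \binom{N/2}{t}\binom{N/2+t}{t}^{-1} = 2^N / \binom{N}{N/2}$. I expect this to be the main obstacle, though it is routine; a clean way to sidestep even this is to observe that $\Pr[\bW = w]$ manifestly does not depend on the particular choice of $w$ among weight-$(N/2)$ strings (the expression above is symmetric in $P$ and $M$ and only used $|P| = |M| = N/2$), and $\bW$ always has Hamming weight exactly $N/2$ by construction (flipping $|\bell|/2$ coordinates changes the weight from $N/2 + \bell/2$ to $N/2$). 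A distribution on $\binom{[N]}{N/2}$ that assigns equal probability to every element is the uniform distribution, so no binomial identity is needed at all — this symmetry argument is the version I would actually write.
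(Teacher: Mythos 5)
Your proof is correct, and your final ``clean way'' is in fact where the paper goes directly. The paper's proof is a three-sentence symmetry argument: $\bW$ always has Hamming weight $N/2$ by construction, and the entire sampling procedure (drawing $\bU$ uniformly, then correcting a uniformly random subset of the majority symbol) visibly commutes with any permutation of the $N$ coordinates, so $\Pr[\bW = w] = \Pr[\bW = \sigma(w)]$ for every permutation $\sigma$; since all weight-$N/2$ strings lie in a single orbit under $S_N$, the distribution is uniform. Your version reaches the same place by a more computational route: you condition on $\bell = 2t$, count contributing $\bU$'s and the conditional flip probability to get $2^{-N}\binom{N/2}{t}\binom{N/2+t}{t}^{-1}$ per value of $t$, and then observe the total is manifestly independent of the particular $w$. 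Both are valid. The paper's argument buys brevity and requires no case analysis; yours buys an explicit closed-form for $\Pr[\bW = w]$ (and, as a by-product, the binomial identity $1 + 2\sum_{t=1}^{N/2}\binom{N/2}{t}\binom{N/2+t}{t}^{-1} = 2^N/\binom{N}{N/2}$), but the computation is scaffolding you ultimately discard once you invoke the $w$-independence. If you wanted to match the paper's economy, you could skip the decomposition over $\bell$ entirely and argue permutation-invariance of the whole procedure up front.
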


\begin{proof}
    By construction, $\bW$ is always a string of Hamming weight $N/2$. Moreover, the sampling process is invariant under any permutation of the coordinates. Therefore, all strings of Hamming weight $N/2$ are equally likely.
\end{proof}


{
\begin{corollary}[Each $\balpha_i$ is the sum of sub-Gaussians]
\label{cor:sub-gaussian}
    For each $i \in [k]$ let $\balpha_i$ be as defined in \Cref{lem:concentration-alphas-proof-overview}. It can be coupled to $(\by_i, \bz_i)$, each having mean $0$, variance $1/2^n$, and sub-Gaussian norm at most $O(1/\sqrt{2^n})$, so that
    \begin{equation*}
        \abs*{\balpha_i} \leq \abs*{\by_i} + \abs*{\bz_i} \quad\quad\text{with probability $1$.}
    \end{equation*}
\end{corollary}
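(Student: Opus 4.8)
The plan is to prove the corollary via the alternative sampling procedure for a uniform random balanced function from \Cref{prop:sampling-from-middle-layer}, which writes $\boldf_i'$ as a small correction of a \emph{fully} uniform random function. Fix $i \in [k]$ and write $N = 2^n$. I would sample $\boldf_i' = \bW_i$ by first drawing $\bU_i:\bits^n\to\bits$ uniformly at random, setting $\bell_i \coloneqq \sum_{x} \bU_i(x)$ (which is even since $N$ is even), and then flipping exactly $|\bell_i|/2$ coordinates to balance it as in the proposition. Doing this with a fresh $\bU_i$ for each $i$ keeps all the quantities below independent across $i$.

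Next I would define the two coupled variables
\[
  \by_i \coloneqq \Ex_{\bx \sim \bits^n}[f_i(\bx)\bU_i(\bx)] \qquad\text{and}\qquad \bz_i \coloneqq \Ex_{\bx \sim \bits^n}[\bU_i(\bx)] = \bell_i/N.
\]
The key observation is that $N\by_i$ and $N\bz_i$ are each sums of $N$ independent symmetric Bernoulli random variables: for $\bz_i$ this is immediate, and for $\by_i$ it holds because $f_i$ is a \emph{fixed} $\bits$-valued function, so the terms $f_i(x)\bU_i(x)$ are still independent and uniform on $\bits$. Then \Cref{fact:sum-symmetric-bernoullis} gives that $N\by_i$ and $N\bz_i$ have mean $0$, variance $N$, and sub-Gaussian norm $O(\sqrt N)$; rescaling by $1/N$ delivers exactly the claimed parameters, namely mean $0$, variance $1/N = 1/2^n$, and sub-Gaussian norm $O(1/\sqrt N) = O(1/\sqrt{2^n})$.

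Finally I would establish the pointwise domination $|\balpha_i| \le |\by_i| + |\bz_i|$. Since $\boldf_i'$ agrees with $\bU_i$ outside the $|\bell_i|/2$ flipped coordinates, and on each flipped coordinate $x$ we have $\boldf_i'(x) = -\bU_i(x)$ so that $f_i(x)\boldf_i'(x) - f_i(x)\bU_i(x) \in \{-2,2\}$, we get
\[
  \abs*{N\balpha_i - N\by_i} = \abs*{\sum_{x\ \text{flipped}}\paren*{f_i(x)\boldf_i'(x) - f_i(x)\bU_i(x)}} \le 2\cdot\tfrac{|\bell_i|}{2} = |\bell_i| = N|\bz_i|,
\]
and dividing by $N$ and applying the triangle inequality finishes the proof. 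I do not expect a genuine obstacle here: the only points needing a little care are checking that multiplication by the fixed sign pattern $f_i$ preserves the ``sum of independent symmetric Bernoullis'' structure of $\by_i$, and accounting correctly for the correction step so that the discrepancy between $\balpha_i$ and $\by_i$ is controlled by $|\bz_i|$ rather than by something larger.
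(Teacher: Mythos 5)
Your proof is correct and follows essentially the same route as the paper: couple $\boldf_i'$ to an intermediate uniform $\bU_i$ via \Cref{prop:sampling-from-middle-layer}, take $\by_i = W\cdot\bU_i/N$ and $\bz_i = \bell_i/N$, invoke \Cref{fact:sum-symmetric-bernoullis}, and bound the discrepancy by the number of corrected coordinates. Your bookkeeping of the domination step is in fact a bit cleaner than the paper's prose, since you explicitly track that each of the $|\bell_i|/2$ flips shifts the dot product by $\pm 2$, so the two factors of $2$ cancel to give exactly $|\bell_i| = N|\bz_i|$.
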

}

\begin{proof}
    Let $N=2^n$ and $W,\bW'\in \bits^N$ correspond to the truth tables for $f_i,{\boldf_i}'$, respectively. Note that both $W$ and $\bW'$ are strings of Hamming weight $N/2$ and $\bW'$ is uniform random among all such strings. We can rewrite $\balpha_i$ as $\balpha_i=\tfrac{{W}\cdot \bW'}{N}$. Let $\bW'$ be obtained by the process in \Cref{prop:sampling-from-middle-layer} and let $\bU'$ be the intermediate random variable which is independently distributed uniform at random in $\bits^N$. Since $\bW'$ is distributed uniformly at random among strings $\bits^N$ of Hamming weight $N/2$, we can write
    \begin{align*}
        \left|\balpha\right| &=\left|\frac{{W}\cdot \bW'}{N}\right|\tag{\Cref{prop:sampling-from-middle-layer}}\\
        &\le \frac{1}{N}\left(|W\cdot \bU'|+\left|\sum_{i=1}^N \bU{_i}'\right|\right).
    \end{align*}
    The last inequality follows from the fact that $|{W}\cdot \bW'|$ is at most $|W\cdot\bU'|$ plus the total number of coordinate changes made to $\bU'$ to form $\bW'$ which is bounded by $\left|\sum_{i=1}^N \bU{_i}'\right|$. Both $W\cdot \bU'$ and $\sum_{i=1}^N \bU{_i}'$ are distributed as the sum of $N$ independent symmetric Bernoulli random variables. Therefore, $W\cdot \bU'$ and $\sum_{i=1}^N \bU{_i}'$ are random variables with mean $0$, variance $N$ and sub-Gaussian norm $O(\sqrt{N})$. Finally, multiplying each random variable by $1/N$ makes the variance $1/N$ and the sub-Gaussian norm $O(1/\sqrt{N})$ as desired.
\end{proof}

\begin{proof}[Proof of \Cref{lem:concentration-alphas-proof-overview}]
    Since the absolute value of a sub-Gaussian random variable is also sub-Gaussian, \Cref{cor:sub-gaussian} implies that $|\balpha_i|$ dominated by the sum of sub-Gaussian random variables with sub-Gaussian norm $O(1/\sqrt{2^n})$. Therefore, by the property of dominating sub-Gaussian random variables (\Cref{fact:dominating}), $|\balpha_i|$ is sub-Gaussian with sub-Gaussian norm $O(1/\sqrt{2^n})$. In particular, $\balpha_i^2$ is sub-exponential with sub-exponential norm $O(1/2^n)$. To apply Bernstein's inequality (\Cref{fact:bernstein}), we first need to center the $\balpha_i^2$'s so that they have mean $0$. If $\mu$ denotes the mean of $\balpha^2$, then the random variable $\balpha_i^2-\mu$ has mean $0$ and sub-exponential norm $O(1/2^n)$. Moreover, the mean $\mu$ is at most $O(1/2^n)$ since
    \begin{align*}
        \E[\balpha_i^2]&\le \E[(|\by_i|+|\bz_i|)^2]\tag{\Cref{cor:sub-gaussian}}\\
        &\le 2\left(\E[\by_i^2]+\E[\bz_i^2]\right)\tag{Cauchy-Schwarz inequality}\\
        &= 2\left(\Var[\by_i]+\Var[\by_i]\right)\tag{Definition of variance and $\by_i,\bz_i$ have mean $0$}\\
        &\le O(1/2^n).\tag{\Cref{cor:sub-gaussian}}
    \end{align*}
    Therefore, for all $t'>0$, we get
    \begin{align*}
        \exp(-\Omega(t'\cdot 2^n-k))&\ge \Pr\left[\sum_{i=1}^k \left(\balpha_i^2-\mu\right) \ge t'\right]\tag{\Cref{fact:bernstein}}\\
        &=\Pr\left[\sum_{i=1}^k \balpha_i^2 \ge t'+k\mu\right]\\
        &\ge \Pr\left[\sum_{i=1}^k \balpha_i^2 \ge t'+O(k/2^n)\right]\tag{$\mu\le O(1/2^n)$}.
    \end{align*}
    By choosing $t'=t-\Theta(k/2^n)$, we get the desired result. 
\end{proof}

\subsection{Proof of \Cref{lem:connect-soft-and-hard-juntas}}
Let $\alpha \in [-1,1]^k$ with $\ltwo{\alpha}^2 = \tilde{J}(g,\delta)$ be the correlation vector for which the $\alpha$-correlated-error of $g$ is at most $\delta$. Let $\bz_i$ be drawn independently from $\Ber(\alpha_i^2)$ for each $i \in [k]$. In expectation, $\lone{\bz} =\ltwo{\alpha}^2 = \tilde{J}(g,\delta)$, so by Markov's inequality, with probability at least $1/2$, $\lone{\bz} \leq 2\cdot \tilde{J}(g,\delta)$. 
    
Next, we know that the expected $\bz$-correlated-error of $g$ is at most $2\delta$. Conditioning on a probability-$(1/2)$ event can at most double that expectation, so the expected $\bz$-correlated-error of $g$ conditioned on $\lone{\bz} \leq 2\cdot \tilde{J}(g,\delta)$ is at most $4\delta$. In particular, this means there is a single choice of $z$ for which $\lone{z} \leq 2\cdot \tilde{J}(g,\delta)$ and the $z$-correlated-error of $g$ is at most $4\delta$. Therefore,
\begin{equation*}
    J(g,4\delta) \leq 2\cdot \tilde{J}(g,\delta)
\end{equation*}
which completes the proof.\hfill\qed

\subsection{Proof of \Cref{claim:random-junta-double-error-proof-overview}}

We start by defining a useful quantity: correlated-variance.
\begin{definition}[Correlated-variance]
    \label{def:correlated-variance}
    For any $\alpha \in [-1,1]^k$, the \emph{$\alpha$-correlated-variance} of $g$ is defined for $\bx,\by$ distributed according to $\mathcal{D}(\alpha)$ as:
    \begin{equation*}
        \Ex_{\bx}\bracket*{\Varx_{\by \mid \bx}\bracket*{g(\by)}} = \Ex_{\bx}\bracket*{1 - \Ex_{\by \mid \bx}\bracket*{g(\by)}^2}
    \end{equation*}
\end{definition}
The reason $\alpha$-correlated variance is useful is because it has two key properties.

\begin{claim}[Properties of correlated-variance]
    \label{claim:corr-var-properties}
    There is a notion of correlated-variance (defined in \Cref{def:correlated-variance}) satisfying, for any $g:\bits^k \to \bits$,
    \begin{enumerate}
        \item The $\alpha$-correlated-variance of $g$ is between double the $\alpha$-correlated-error of $g$ and quadruple the $\alpha$-correlated-error of $g$.
        \item For any $\bz$ supported on $[-1,1]^k$ drawn from a product distribution with $\Ex[\bz_i^2] = \alpha_i^2$, the expected $\bz$-correlated-variance of $g$ is equal to the $\alpha$-correlated-variance of $g$. 
    \end{enumerate}
\end{claim}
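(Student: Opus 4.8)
The plan is to prove both properties by reducing correlated-error and correlated-variance to a single Fourier-analytic quantity. Throughout, write $t_\alpha(\bx) \coloneqq \Ex_{\by\mid\bx}[g(\by)]$ for $(\bx,\by)\sim\mcD(\alpha)$, so that, since $g$ is $\pm 1$-valued (hence $\Ex_{\by\mid\bx}[g(\by)^2]=1$), the $\alpha$-correlated-variance of $g$ from \Cref{def:correlated-variance} equals $\Ex_{\bx}[1-t_\alpha(\bx)^2]$, and a one-line computation---the error-minimizing $h$ matches the sign of $t_\alpha(\bx)$ at each $\bx$---shows $\error_\alpha(g) = \Ex_{\bx}\big[\tfrac{1-|t_\alpha(\bx)|}{2}\big]$.

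For the first property, it suffices to compare $1-|t|$ and $1-t^2$ pointwise for $t=t_\alpha(\bx)\in[-1,1]$: factoring $1-t^2 = (1-|t|)(1+|t|)$ and using $1\le 1+|t|\le 2$ gives $1-|t| \le 1-t^2 \le 2(1-|t|)$, and taking expectations over $\bx$ yields $2\,\error_\alpha(g) \le \Ex_{\bx}[1-t_\alpha(\bx)^2] \le 4\,\error_\alpha(g)$, which is precisely the claimed two-sided bound between the $\alpha$-correlated-variance and the $\alpha$-correlated-error.

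For the second property, the key is an explicit formula for the correlated-variance. Conditioned on $\bx$, the coordinates of $\by$ are independent with $\Ex[\by_i\mid\bx]=\alpha_i\bx_i$ (directly from \Cref{def:alpha-corr-error}), so expanding $g$ in the Fourier basis and using conditional independence gives
\[
t_\alpha(\bx) = \Ex_{\by\mid\bx}\!\Big[\,\sum_{S\subseteq[k]}\hat{g}(S)\prod_{i\in S}\by_i\,\Big] = \sum_{S\subseteq[k]}\hat{g}(S)\Big(\prod_{i\in S}\alpha_i\Big)\prod_{i\in S}\bx_i .
\]
Squaring and using orthonormality of the characters $\prod_{i\in S}\bx_i$ under the uniform distribution, the cross terms vanish and $\Ex_{\bx}[t_\alpha(\bx)^2] = \sum_{S}\hat{g}(S)^2\prod_{i\in S}\alpha_i^2$, so the $\alpha$-correlated-variance of $g$ equals $1-\sum_{S}\hat{g}(S)^2\prod_{i\in S}\alpha_i^2$---a quantity depending on $\alpha$ only through the squares $\alpha_i^2$. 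The very same manipulation, applied with $\alpha$ replaced by a fixed $z\in[-1,1]^k$, shows the $z$-correlated-variance of $g$ is $1-\sum_{S}\hat{g}(S)^2\prod_{i\in S}z_i^2$. Hence if $\bz$ is drawn from a product distribution on $[-1,1]^k$ with $\Ex[\bz_i^2]=\alpha_i^2$, then using independence of the $\bz_i$'s to move the expectation inside the product,
\[
\Ex_{\bz}\big[\,\bz\text{-correlated-variance of }g\,\big] = 1-\sum_{S\subseteq[k]}\hat{g}(S)^2\prod_{i\in S}\Ex[\bz_i^2] = 1-\sum_{S\subseteq[k]}\hat{g}(S)^2\prod_{i\in S}\alpha_i^2 ,
\]
which is exactly the $\alpha$-correlated-variance of $g$.

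I do not expect a genuine obstacle; the calculation is short. The step requiring the most care is the Fourier identity for $t_\alpha(\bx)$: one must use that, conditioned on $\bx$, the bits of $\by$ are \emph{mutually} independent (so $\Ex_{\by\mid\bx}[\prod_{i\in S}\by_i]=\prod_{i\in S}\Ex[\by_i\mid\bx]$), and note that $\Ex[\by_i\mid\bx]=\alpha_i\bx_i$ holds for every $\alpha_i\in[-1,1]$, negative values included. This sign-agnosticism is exactly why the final formula involves only $\alpha_i^2$, which is the mechanism behind the second property.
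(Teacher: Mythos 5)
Your proof is correct and takes essentially the same approach as the paper: the first part uses the same pointwise sandwich $1-|t|\le 1-t^2\le 2(1-|t|)$ (the paper states it as $2f_1\le f_2\le 4f_1$ with $f_1(t)=(1-|t|)/2$, $f_2(t)=1-t^2$), and the second part uses the same Fourier expansion, conditional independence of the $\by_i$, and orthogonality of characters to reduce both correlated-variances to $1-\sum_S \hat g(S)^2\prod_{i\in S}\alpha_i^2$.
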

\begin{proof} 
    Let $\bx,\by$ be drawn from the joint distribution $\mathcal{D}(\alpha)$ defined in \Cref{def:alpha-corr-error}. For the first property, we note the $\alpha$-correlated-error of $g$ can be written as
    \begin{equation*}
        \min_{h:\bits^k \to \bits} \set*{\Ex_{\bx}\bracket*{\Prx_{\by \mid \bx}[g(\by) \neq h(\bx)]}}.
    \end{equation*}
    To minimize the above, $h(x)$ should be set to $\sign(\Ex[g(\by) \mid \bx])$, giving that the $\alpha$-correlated error of $g$ is
    \begin{equation*}
        \Ex_{\bx}\bracket*{\frac{1}{2} - \frac{1}{2} \cdot\abs*{\Ex_{\by \mid\bx}[g(\by) ]}}.
    \end{equation*}
    Let $f_1(x) = (1 - |x|)/2$ and $f_2 = 1-x^2$. The first property follows from the sandwiching $2 f_1(x) \leq f_2(x) \leq 4f_1(x)$ which holds for all $x \in [-1,1]$.

    The proof of the second property uses basic Fourier analysis. Recall that every function $g:\bits^n\to\bits$ has a Fourier expansion which can be written as
    $$
    g(x)=\sum_{S\sse [n]}\hat{g}(S)\prod_{i\in S}x_i
    $$
    where $\hat{g}(S)\in \R$. Using this, we first observe that for any $x\in\bits^k$, we have
    \begin{align*}
        \Ex_{\by\mid x}[g(\by)]&=\Ex_{\by\mid x}\bracket*{\sum_{S\sse [n]}\hat{g}(S)\prod_{i\in S}\by_i}\tag{Fourier expansion of $g$}\\
        &=\sum_{S\sse [n]}\hat{g}(S)\prod_{i\in S}\Ex_{\by\mid x}[\by_i]\tag{Linearity of expectation and the independence of $\by_i$}\\
        &=\sum_{S\sse [n]}\hat{g}(S)\prod_{i\in S}x_i\alpha_i\tag{Definition of $\by_i$}.
    \end{align*}
    It follows that
    \ifnum\focs=1
    \begin{align*}
        &\Ex_{\bx}\bracket*{\Ex_{\by \mid\bx }\bracket*{g(\by)}^2}\\
        &=\Ex_{\bx}\bracket*{\paren*{\sum_{S\sse [n]}\hat{g}(S)\prod_{i\in S}\bx_i\alpha_i}^2}\\
        &=\sum_{S_1,S_2\sse [n]}\hat{g}(S_1)\hat{g}(S_2)\Ex_{\bx}\bracket*{\prod_{i\in S_1}\bx_i\alpha_i\prod_{i\in S_2}\bx_i\alpha_i}\\
        &=\sum_{S_1,S_2\sse [n]}\Bigg(\hat{g}(S_1)\hat{g}(S_2)\cdot \\
        &\qquad \Ex_{\bx}\bracket*{\prod_{i\in S_1\cap S_2}(\bx_i\alpha_i)^2\prod_{i\in S_1\Delta S_2}\bx_i\alpha_i}\Bigg)\\
        &=\sum_{S_1,S_2\sse [n]}\hat{g}(S_1)\hat{g}(S_2)\prod_{i\in S_1\cap S_2}\alpha_i^2\prod_{i\in S_1\Delta S_2}\alpha_i\Ex_{\bx}[\bx_i]\tag{Independence of $\bx_i$ and $\bx_i^2=1$}.
    \end{align*}
    \else
    \begin{align*}
        \Ex_{\bx}\bracket*{\Ex_{\by \mid\bx }\bracket*{g(\by)}^2}&=\Ex_{\bx}\bracket*{\paren*{\sum_{S\sse [n]}\hat{g}(S)\prod_{i\in S}\bx_i\alpha_i}^2}\\
        &=\sum_{S_1,S_2\sse [n]}\hat{g}(S_1)\hat{g}(S_2)\Ex_{\bx}\bracket*{\prod_{i\in S_1}\bx_i\alpha_i\prod_{i\in S_2}\bx_i\alpha_i}\\
        &=\sum_{S_1,S_2\sse [n]}\hat{g}(S_1)\hat{g}(S_2)\Ex_{\bx}\bracket*{\prod_{i\in S_1\cap S_2}(\bx_i\alpha_i)^2\prod_{i\in S_1\Delta S_2}\bx_i\alpha_i}\\
        &=\sum_{S_1,S_2\sse [n]}\hat{g}(S_1)\hat{g}(S_2)\prod_{i\in S_1\cap S_2}\alpha_i^2\prod_{i\in S_1\Delta S_2}\alpha_i\Ex_{\bx}[\bx_i]\tag{Independence of $\bx_i$ and $\bx_i^2=1$}.
    \end{align*}
    \fi
    In the above sum, if $S_1\neq S_2$ then $S_1\Delta S_2$ is nonempty and so the entire term evaluates to $0$ because $\Ex_{\bx}[\bx_i]=0$. Therefore, we can rewrite 
    \begin{equation}
    \label{eq:correlated-variance}
    \Ex_{\bx}\bracket*{\Ex_{\by \mid \bx}\bracket*{g(\by)}^2}=\sum_{S\sse [n]} \hat{g}(S)^2\prod_{i\in S}\alpha_i^2.
    \end{equation}
    In particular, if $\E[\bz_i^2]=\alpha_i^2$, then for $\bx,\by$ drawn from the distribution $\mathcal{D}(\bz)$, we have
    \begin{align*}
        \Ex_{\bx}\bracket*{\Ex_{\by \mid \bx}\bracket*{g(\by)}^2}&=\E\bracket*{\sum_{S\sse [n]} \hat{g}(S)^2\prod_{i\in S}\bz_i^2}\tag{\Cref{eq:correlated-variance}}\\
        &=\sum_{S\sse [n]} \hat{g}(S)^2\prod_{i\in S}\E[\bz_i^2]\tag{Linearity of expectation and independence of the $\bz_i$}\\
        &=\sum_{S\sse [n]} \hat{g}(S)^2\prod_{i\in S}\alpha_i^2\tag{Assumption that $\E[\bz_i^2]=\alpha_i^2$}\\
        &=\Ex_{\bx}\bracket*{\Ex_{\by \mid \bx}\bracket*{g(\by)}^2}\tag{\Cref{eq:correlated-variance}}
    \end{align*}
    where in the last equation $\bx,\by$ are distributed according to $\mathcal{D}(\alpha)$. This shows that the expected $\bz$-correlated variance of $g$ is equal to the $\alpha$-correlated-variance of $g$ which completes the proof.
\end{proof}

\begin{proof}[Proof of \Cref{claim:random-junta-double-error-proof-overview}]
    Let $\delta$ be the $\alpha$-correlated-error of $g$. Then, by property 1 of \Cref{claim:corr-var-properties}, the $\alpha$-correlated-variance of $g$ is at most $\delta/2$. By property 2, the expected $\bz$-correlated-variance is therefore also at most $\delta/2$. Using the other side of property 1 gives that the expected $\bz$-correlated-error is at most $2\delta$.
\end{proof}

\section{Proof of \Cref{thm:impagliazzo intro}}

\begin{theorem}[Formal statement of \Cref{thm:impagliazzo intro}]
    \label{thm:impagliazzo formal}
    For any $\gamma>0$ and $s\ge \Omega(1/\gamma^2)$, there is a function $F:\bits^N\to\bits$ such that 
    \begin{enumerate}
        \item {$F$ is mildly hard for size-$s$ circuits}: every circuit of size $s$ agrees with $F$ on at most $99\%$ of inputs in $\bits^N$.
        \item {For every hardcore distribution, $F$ is mildly correlated with a small circuit}: for all constant density distributions $H$ over $\bits^N$, there is a circuit of size $O(s\gamma^2)$ which computes $F$ with probability $\tfrac{1}{2}+\gamma$ over $H$.
    \end{enumerate}
\end{theorem}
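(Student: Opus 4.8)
The plan is to instantiate the lifting theorem of \Cref{sec:lift} with the majority function and transport the junta-level tightness of \Cref{claim:junta-size-loss-proof-overview} up to circuits. Given $\gamma$, set $k \coloneqq \Theta(1/\gamma^2)$, with the hidden constant chosen small enough that the correlation $\Omega_c(1/\sqrt{k})$ guaranteed by \Cref{claim:maj-easy} (for the fixed constant density $c$) is at least $\gamma$. Given the target circuit size $s$, pick $n$ to be the least integer with $2^n/n \ge C\cdot s/k$ for a suitable absolute constant $C$; the ``doubling'' behaviour of $2^n/n$ then also gives $2^n/n = O(s/k)$, and—provided $s$ is large enough in terms of $\gamma$—the choice is compatible with the side condition $k \le 2^{n-1}$ that \Cref{lem:lower-bound-from-soft}, hence \Cref{thm:lift-proof-formal}, requires. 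Finally set $N \coloneqq nk$, and let $F \in \lift_n(\Maj_k)$ be the function identified in the next step.

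For part~1 (mild hardness), \Cref{claim:maj-mildy-hard} gives $J(\Maj_k,\tfrac15) > k/2$. Feeding $r_{\lar} = k/2$ and $\delta = \tfrac15$ into \Cref{thm:lift-proof-formal} produces an $F \in \lift_n(\Maj_k)$ for which every circuit of size $\Omega\big(\tfrac k2\cdot\tfrac{2^n}{n}\big) = \Omega\big(k\cdot \tfrac{2^n}{n}\big) = \Omega(s)$ agrees with $F$ on at most a $1 - \tfrac1{40}$ fraction of $\bits^N$. Taking $C$ in the definition of $n$ large enough makes the hardness threshold at least $s$, and since $1 - \tfrac1{40} = 0.975 \le 0.99$, part~1 holds for this $F$.

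For part~2 (correlation with a small circuit on every hardcore distribution), \Cref{claim:maj-easy} together with averaging over coordinates shows that for every density-$c$ distribution $H_g$ over $\bits^k$ there is a single dictator agreeing with $\Maj_k$ on a $\tfrac12 + \Omega_c(1/\sqrt k) \ge \tfrac12 + \gamma$ fraction of $H_g$; that is, $J_{H_g}(\Maj_k,\tfrac12 - \gamma) \le 1$. Invoking \Cref{claim:circuit-upper-bound} with $r = 1$ then yields, for \emph{every} $F \in \lift_n(\Maj_k)$ and every density-$c$ distribution $H_F$ over $\bits^N$, a circuit of size $O\big(1\cdot\tfrac{2^n}{n} + 2\big) = O\big(\tfrac{2^n}{n}\big) = O(s/k) = O(s\gamma^2)$ that agrees with $F$ on a $\tfrac12 + \gamma$ fraction of $H_F$. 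Because this conclusion is universally quantified over the lifted class, it applies in particular to the $F$ pinned down in part~1, so a single $F$ witnesses both items.

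The heavy lifting is done by \Cref{thm:lift-proof-overview}, so the only real work here is the parameter bookkeeping, and that is also where I expect the main obstacle to lie: one must hold $2^n/n$ in a narrow window—large enough that the lifted hardness threshold $\Theta(k\cdot 2^n/n)$ exceeds $s$, small enough that the lifted dictator circuit has size $O(s\gamma^2)$—while simultaneously keeping $k \le 2^{n-1}$ so that the lower-bound machinery of \Cref{lem:lower-bound-from-soft} applies. Checking that these three constraints are jointly satisfiable is exactly what pins down the admissible range of $s$ as a function of $\gamma$; one also has to confirm, via \Cref{prop:induce}, that ``density $c$'' is preserved under the lift with constants independent of $\gamma$, so that the ``constant density'' hypothesis in part~2 survives intact.
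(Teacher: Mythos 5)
Your proof is correct and follows essentially the same route as the paper's: both derive part~1 from $J(\Maj_k,0.2)>k/2$ (\Cref{claim:maj-mildy-hard}) fed into \Cref{thm:lift-proof-formal}, derive part~2 from the dictator bound $J_{H_g}(\Maj_k,\tfrac12-\gamma)\le 1$ (\Cref{claim:maj-easy}) fed into \Cref{claim:circuit-upper-bound}, and then set $k=\Theta(1/\gamma^2)$ and choose $n$ so that $s=\Theta(k\cdot 2^n/n)$ subject to $k\le 2^{n-1}$. The only differences are cosmetic (you spell out the $1-\tfrac1{40}\le 0.99$ arithmetic and the $+2^r/r$ term with $r=1$, while the paper leaves these implicit), and your remark that the parameter bookkeeping is the delicate spot matches where the paper itself is terse.
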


\begin{proof}
There is an $F\in \text{Lift}_n(\MAJ_k)$ such that all circuits of size $O(k\cdot\tfrac{2^n}{n})$ agree with $F$ on at most $0.99$ fraction of inputs from $\bits^{nk}$. This is because $J(\MAJ_k,0.2)\ge k/2$ by \Cref{claim:junta-size-loss-proof-overview} and so \Cref{thm:lift-proof-formal} implies there is an $F$ for which all circuits of size $O(k\cdot\tfrac{2^n}{n})$ agree with it on at most $1-0.2/8\le 0.99$ fraction of inputs from $\bits^{nk}$. This $F$ satisfies the first part of the theorem statement. 

For the second part, let $H$ be a distribution of constant density over $\bits^{nk}$. By \Cref{claim:circuit-upper-bound}, there is a circuit of size $O(2^n/n)$ that computes $F$ to accuracy $1/2+\Omega(1/\sqrt{k})$ over $H$. This is because by \Cref{claim:maj-easy}, $\MAJ_k$ can be computed to accuracy $1/2+\Omega(1/\sqrt{k})$ over constant density distributions by a $1$-junta. 

Therefore, given a parameters $\gamma,s$, we choose $n$ and $k\le 2^{n-1}$ so that $s=\Theta(k\cdot\tfrac{2^n}{n})$, and $\gamma=\Theta(1/\sqrt{k})$. Such a choice of $k\le 2^{n-1}$ exists by our assumption that $\gamma \ge \Omega(1/\sqrt{s})\ge \Omega(1/2^n)$. By the above two paragraphs, the theorem holds for this choice of parameters. 
\end{proof}

\newcommand{\uint}{u}
\newcommand{\lint}{-u}

\newcommand{\uintval}{O(\sqrt{k \log{k}\kappa})}
\newcommand{\lintval}{-\uintval}
\newcommand{\advantage}{\frac{1}{\sqrt{k \log k}\kappa^{7/2}}}
\newcommand{\advantageline}{1/\sqrt{k \log k \kappa^7}}
\newcommand{\pconcentrate}{\frac{1}{k^2}}

\newcommand{\wcorr}{\Omega(1/\kappa^3)}

\newcommand{\bitsN}{\bits^N}
\newcommand{\bitsnk}{\bits^{k n}}
\newcommand{\bitsmk}{\bitsnk}

\newcommand{\bitsmklog}{\bits^{k \log m}}
\newcommand{\bitsmlog}{\bits^{\log m}}

\newcommand{\bitsm}{\bits^{n}}

\section{Proof of \Cref{thm:smooth intro}}

This section will give the proof (up to a log factor) of \Cref{thm:smooth intro}. We will allow the user to specify the desired weak learner's sample complexity $m$ and weak learning parameter $\gamma$. 

\begin{theorem}[\Cref{thm:smooth intro} formalized]
    \label{thm:boosting-formal}
     For any $k$ and $n \geq \Omega_{\kappa}(\log k)$ let $\mcC \coloneqq \lift_n(\Maj_k)$. Then,
    
    \begin{itemize}
        \item \Cref{lem:boosting-upper}: There is an $O(2^n)$-sample learner which, for any distribution that is $\kappa$-smooth on $\bitsnk$, achieves advantage $\Omega\paren*{\advantageline}$ with high probability for the concept class $\mcC$.
        \item \Cref{lem:boosting-lower}: Learning $\mcC$ to accuracy 0.99 w.r.t.~the uniform distribution requires $\Omega(k2^n)$ samples.
    \end{itemize}
Setting $m = O(2^n)$ and $\gamma = \Omega(\advantageline)$, this implies that for $\mcC = \lift_n(\Maj_k)$, there exists a weak learner that achieves $\gamma$ advantage with high probability using $m$ samples but any algorithm that learns $\mcC$ to accuracy 0.99 must use $\tilde{\Omega}(m/\gamma^2)$ samples.
\end{theorem}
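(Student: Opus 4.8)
The plan is to instantiate the concept class $\mcC \coloneqq \lift_n(\Maj_k)$ with $n=\Theta(\log m)$ and $k=\tilde{\Theta}(1/\gamma^2)$, prove the sample lower bound (\Cref{lem:boosting-lower}) and the weak-learning upper bound (\Cref{lem:boosting-upper}) separately, and then read off the stated separation by matching parameters. The lower bound is essentially a corollary of the hardcore-tightness machinery already developed: \Cref{claim:junta-size-loss-proof-overview} gives $J(\Maj_k,0.2)>k/2$, so \Cref{lem:connect-soft-and-hard-juntas} yields $\tilde{J}(\Maj_k,\delta)=\Omega(k)$ for a suitable absolute constant $\delta>0$, and plugging this into \Cref{lem:each-circuit-covers-few-F-soft}---which bounds $\Prx_{\bF\sim\lift_n(\Maj_k)}[\dist(h,\bF)\le 0.01]$ for an \emph{arbitrary} hypothesis $h:\bits^{nk}\to\bits$, not merely a circuit---gives $\Prx_{\bF}[\dist(h,\bF)\le 0.01]\le 2^{-\Omega(k2^n)}$. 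A learner that draws $s$ examples $(\bx_j,\bF(\bx_j))$ with $\bx_j$ uniform (hence independent of $\bF$) observes only $s$ label bits, so once its internal randomness and the points $\bx_j$ are fixed its output hypothesis takes at most $2^s$ values; hence its probability of being $0.01$-close to a uniformly random $\bF\in\lift_n(\Maj_k)\subseteq\mcC$ is at most $2^s\cdot 2^{-\Omega(k2^n)}$, and succeeding with high probability forces $s\ge\Omega(k2^n)$.

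For the upper bound I would have the learner draw $2^n$ i.i.d.\ examples $(\bX,\by)$ from the $\kappa$-smooth distribution and, for each block $i\in[k]$, build a table $g_i:\bits^n\to\bits$ by setting $g_i(\bX^{(i)})\coloneqq\by$ on every point seen in block $i$ (filling in the rest arbitrarily); then set $G(X)\coloneqq\sum_{i\in[k]}g_i(X^{(i)})$ and output $h_{\btau}(X)\coloneqq\Ind[G(X)\ge\btau]$ for $\btau$ drawn uniformly on $\{-u,\dots,u\}$ with $u=\uintval$. The analysis has three ingredients. (i) \emph{On-average weak learning of the blocks}: although an individual $g_i$ can be anti-correlated with $f_i$ on a skewed smooth distribution, \Cref{claim:maj-easy} says the average over $i$ of the dictator-correlation of $\Maj_k$ under any constant-density distribution is $\Omega_\kappa(1/\sqrt{k})$; combining this with the facts that the marginal of a $\kappa$-smooth distribution on each block is again $\kappa$-smooth and that the induced distribution on $\bits^k$ has constant density (\Cref{prop:induce}), I would generalize the uniform-case computation---each $g_i$ is $\Theta(1/\sqrt{k})$-correlated with $f_i$, each $f_i$ is $\Theta(1/\sqrt{k})$-correlated with $F$, so each $g_i$ is $\Theta(1/k)$-correlated with $F$ and the sum over blocks is $\Theta(1)$---to conclude $\Ex_{\bX}[F(\bX)G(\bX)]\ge\Omega_\kappa(1)$. (ii) \emph{A random threshold}: averaging $h_{\btau}$ over the uniform $\btau$ turns the indicator into a rescaled copy of $G$ up to the contribution of inputs with $|G(X)|\ge u$, giving
\begin{equation*}
\Ex_{\bX,\btau}[F(\bX)h_{\btau}(\bX)] \geq \frac{\Ex_{\bX}[F(\bX)G(\bX)]}{u} - k\cdot\Prx_{\bX}[|G(\bX)|\ge u].
\end{equation*}
(iii) \emph{Smoothness kills anticoncentration}: a bounded-differences argument (\Cref{fact:bounded-diff}), using that a $\kappa$-smooth $\bX$ makes $G(\bX)$ concentrate around a value of magnitude $O_\kappa(\sqrt{k})$, gives $\Prx_{\bX}[|G(\bX)|\ge u]\le 1/k^2$ for $u=\uintval$, so the subtracted term above is negligible and the expected advantage over $\btau$ is $\Omega(1/u)=\Omega(\advantageline)$; averaging over $\btau$ then fixes a single good threshold with high probability, proving \Cref{lem:boosting-upper}. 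To finish, set $m\coloneqq O(2^n)$ and $\gamma\coloneqq\Omega(\advantageline)$, so that $k=\tilde{\Theta}(1/\gamma^2)$ and $2^n=\Theta(m)$: \Cref{lem:boosting-upper} supplies a $\gamma$-weak learner for $\mcC$ using $m$ samples on every smooth distribution, while \Cref{lem:boosting-lower} shows strong learning $\mcC$ to accuracy $0.99$ on the uniform distribution needs $\Omega(k2^n)=\tilde{\Omega}(m/\gamma^2)$ samples---exactly the claimed separation.

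The main obstacle I anticipate is the non-uniform case of the upper bound, i.e.\ ingredients (i) and (iii). The $g_i$'s are simultaneously \emph{biased}---a smooth adversary can make some of them anti-correlated with the corresponding $f_i$---and \emph{dependent}---it can correlate ``which blocks were learned'' across inputs, e.g.\ by placing extra weight on $X$ with $X^{(1)}=x_1$, $X^{(2)}=x_2$, $F(X)=f_1(x_1)=f_2(x_2)$---so the naive $\Maj(g_1,\dots,g_k)$ hypothesis can classify a two-thirds-weight region with exactly $\tfrac{k}{2}-1$ correct blocks and thereby do worse than random. Resolving this requires (a) measuring progress through the linear statistic $G$ rather than through a majority vote, so that on-average weak learning of the blocks suffices, and (b) randomizing the threshold so that inputs near $\tfrac{k}{2}$ correct blocks are classified correctly with probability $\approx\tfrac{1}{2}$; carrying both through while tracking the polynomial dependence on $\kappa$, and establishing the Chernoff-type concentration of $G(\bX)$ with no useful independence among the coordinates of $\bX$ beyond smoothness, are the technically delicate parts.
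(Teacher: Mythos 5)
Your lower bound and the overall structure of the upper bound match the paper, but the upper bound has a genuine gap in how the final hypothesis is produced. You propose to \emph{output} $h_{\btau}$ for a \emph{uniformly random} $\btau\in\{-u,\dots,u\}$, justified by the bound $\Ex_{\bX,\btau}[F(\bX)h_{\btau}(\bX)]\geq\Omega(1/u)$. That expectation guarantee does not yield a weak learner that succeeds ``with high probability.'' Since the advantage $A(\tau)\coloneqq\Ex_{\bX}[F(\bX)h_\tau(\bX)]$ is bounded by $1$, an expected advantage of $\Omega(1/u)$ only implies, via a Markov-style argument, that $\Pr_{\btau}[A(\btau)\geq\Omega(1/u)]\geq\Omega(1/u)$, which is polynomially small in $k$, not high probability. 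The paper resolves exactly this issue by drawing $2m$ samples, splitting into a training set (used to build $G_{\bS}$) and a validation set, enumerating the finitely many candidate thresholds $\{h_\tau:\tau\in\{-u,\dots,u\}\}$ together with the two constant hypotheses, and outputting the candidate with best empirical advantage on the validation set; a uniform-convergence bound for finite hypothesis classes (Lemma 10.18 / Corollary 10.19) then transfers the \emph{existence} of a good threshold to a high-probability guarantee on the returned hypothesis. Your argument establishes existence but has no selection mechanism.

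A second, smaller omission: your ingredient (iii), the concentration $\Pr_{\bX}[|G(\bX)|\geq u]\leq 1/k^2$, is established in the paper only under the extra hypothesis $|\Ex_{\bX\sim\mcD}[F(\bX)]|\leq 1/\sqrt{k}$ (\Cref{lem:g-concentrates}). When $F$ is very biased this concentration can fail, which is why the paper's hypothesis set $\mcH$ also contains the two constant functions $\boldsymbol{-1},\boldsymbol{1}$: one of them already achieves advantage $1/\sqrt{k}$ whenever $|\Ex[F]|>1/\sqrt{k}$, and only the low-bias case goes through the $G$-plus-threshold route. You should either add the constants to your hypothesis pool and case-split on the bias of $F$, or argue directly that the concentration holds without the bias assumption (which it does not, as stated).

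Aside from these two points---hypothesis selection via a held-out validation set, and the biased-$F$ case---your decomposition (lower bound via \Cref{lem:each-circuit-covers-few-F-soft} and an information-counting argument; upper bound via per-block tables $g_i$, the summed statistic $G$, and a random threshold, with smoothness driving both the on-average correlation and the concentration of $G$) is the same as the paper's.
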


We start by proving the lower bound since it follows directly from our results on the tightness of the hardcore theorem. In particular, combining \Cref{claim:junta-size-loss-proof-overview} and \Cref{lem:connect-soft-and-hard-juntas,lem:each-circuit-covers-few-F-soft}, we immediately obtain the following.
\begin{corollary}
    \label{cor:maj-covering}
    For any $n \in \N$ and $h: \bits^{nk} \to \bits$,
    \begin{equation*}
        \Prx_{\bF \sim \lift_n(\Maj_k)}[\dist(h, \bF) \leq 0.01] \leq 2^{-\Omega(k \cdot (2^n - k))}.
    \end{equation*}
\end{corollary}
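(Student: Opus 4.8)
The plan is to obtain \Cref{cor:maj-covering} by composing three ingredients already in hand: the covering bound \Cref{lem:each-circuit-covers-few-F-soft}, the equivalence \Cref{lem:connect-soft-and-hard-juntas} between soft and standard junta complexity, and the mild hardness of $\Maj_k$ for $\tfrac{k}{2}$-juntas from \Cref{claim:junta-size-loss-proof-overview} (equivalently \Cref{claim:maj-mildy-hard}). First I would instantiate \Cref{lem:each-circuit-covers-few-F-soft} with $g=\Maj_k$, with the arbitrary function $C=h$, and with error parameter $\delta=0.01$. This yields
\[
  \Prx_{\bF \sim \lift_n(\Maj_k)}\big[\dist(h,\bF)\le 0.01\big]\;\le\;\exp\!\big(-\tilde{J}(\Maj_k,0.02)\cdot \Omega(2^n-k)\big),
\]
so everything reduces to showing that $\tilde{J}(\Maj_k,0.02)=\Omega(k)$.

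Second I would lower bound $\tilde{J}(\Maj_k,0.02)$. The left inequality of \Cref{lem:connect-soft-and-hard-juntas} gives $\tilde{J}(\Maj_k,0.02)\ge \tfrac12\,J(\Maj_k,0.08)$. Standard junta complexity is non-increasing in its error parameter---allowing a larger fraction of disagreement can only make the approximation task easier---so $J(\Maj_k,0.08)\ge J(\Maj_k,0.2)$, and the first part of \Cref{claim:junta-size-loss-proof-overview} gives $J(\Maj_k,0.2)>k/2$ for all sufficiently large even $k$. (Concretely, \Cref{claim:maj-mildy-hard} shows every $\tfrac k2$-junta disagrees with $\Maj_k$ on at least a $\tfrac14-O(k^{-1/2})$ fraction of inputs, which exceeds $0.2$ once $k$ is large enough.) Chaining these bounds gives $\tilde{J}(\Maj_k,0.02)\ge k/4=\Omega(k)$.

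Finally I would substitute this back into the first display to conclude
\[
  \Prx_{\bF \sim \lift_n(\Maj_k)}\big[\dist(h,\bF)\le 0.01\big]\;\le\;\exp\!\big(-\Omega(k)\cdot \Omega(2^n-k)\big)\;=\;2^{-\Omega(k(2^n-k))},
\]
which is exactly the claimed inequality.

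I do not expect a genuine obstacle: \Cref{cor:maj-covering} is a bookkeeping consequence of results already proved, and the whole argument is the three-line chain above. The only points that call for a moment of care are (i) recording that $J(g,\cdot)$ is monotone in the error parameter, so that the mismatch between the $0.08$ produced by \Cref{lem:connect-soft-and-hard-juntas} and the $0.2$ supplied by \Cref{claim:junta-size-loss-proof-overview} is harmless, and (ii) folding the several constants (the $\tfrac12$ from \Cref{lem:connect-soft-and-hard-juntas}, the $\tfrac12$ in the junta lower bound, and the implicit constant in the $\Omega(2^n-k)$ of \Cref{lem:each-circuit-covers-few-F-soft}) into the single $\Omega(\cdot)$ in the exponent. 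The statement is read in the regime of $k$ where \Cref{claim:junta-size-loss-proof-overview} applies, which is the only regime used downstream.
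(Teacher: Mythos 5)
Your proposal is correct and follows exactly the route the paper intends: \Cref{cor:maj-covering} is stated as an immediate consequence of combining \Cref{lem:each-circuit-covers-few-F-soft}, \Cref{lem:connect-soft-and-hard-juntas}, and \Cref{claim:junta-size-loss-proof-overview}, and your three-step chain (applying the covering lemma, passing from soft to hard junta complexity, and invoking the junta lower bound for $\Maj_k$ via monotonicity of $J$ in the error parameter) is precisely that combination made explicit. The bookkeeping of constants and the note about the regime of $k$ are both handled correctly.
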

We show how the lower bound of \Cref{thm:boosting-formal} follows easily from \Cref{cor:maj-covering}.

\begin{lemma}[Lower bound of \Cref{thm:boosting-formal}, Restatement of \Cref{lem:learning-lb}]
    \label{lem:boosting-lower} For any $n \geq \Omega(\log k)$, any algorithm that learns $\lift_n(\Maj_k)$ to accuracy $0.99$ with success probability $0.01$ over the uniform distribution must use $m \geq \Omega(2^n k)$ samples.
\end{lemma}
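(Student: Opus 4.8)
The plan is to derive the lower bound from \Cref{cor:maj-covering} by a counting argument: a learner that draws $m$ labeled examples learns only $m$ bits about the unknown target, so it can effectively produce at most $2^m$ distinct hypotheses, while \Cref{cor:maj-covering} says that each fixed hypothesis is $0.01$-close to only a $2^{-\Omega(k(2^n-k))}$ fraction of the functions in $\lift_n(\Maj_k)$.

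Let $\mcA$ be any learner meeting the stated guarantee with $m$ samples; we will show $m \ge \Omega(k2^n)$. First I would pass from the worst-case guarantee to an average-case one: since $\mcA$ outputs an $h$ with $\dist(h,F)\le 0.01$ with probability at least $0.01$ for \emph{every} $F\in\lift_n(\Maj_k)$, the same holds when the target $\bF$ is drawn uniformly from $\lift_n(\Maj_k)$, the probability now being over $\bF$, the $m$ i.i.d.\ sample points $\bX_1,\dots,\bX_m$ (uniform on $\bits^{nk}$, hence independent of $\bF$), and $\mcA$'s internal randomness. Averaging over the sample points and the randomness, I would fix them to values for which $\Prx_{\bF}[\dist(\mcA(\cdots),\bF)\le 0.01]\ge 0.01$.

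Once the sample points $X_1,\dots,X_m$ and the internal randomness are fixed, the entire transcript seen by $\mcA$ is determined by the length-$m$ label vector $(\bF(X_1),\dots,\bF(X_m))\in\bits^m$, so $\mcA$'s output lies in a fixed set $\mathcal{H}$ of at most $2^m$ hypotheses that does not depend on $\bF$. Hence $\Prx_{\bF}[\exists h\in\mathcal{H}:\dist(h,\bF)\le 0.01]\ge 0.01$, and a union bound over $\mathcal{H}$ together with \Cref{cor:maj-covering} gives
\[
0.01 \;\le\; |\mathcal{H}|\cdot 2^{-\Omega(k(2^n-k))} \;\le\; 2^{\,m-\Omega(k(2^n-k))}.
\]
Taking logarithms, and using that $n\ge\Omega(\log k)$ forces $2^n-k\ge 2^n/2$, I conclude $m\ge\Omega(k2^n)$.

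I do not expect a genuine obstacle here: this lemma is essentially a bookkeeping consequence of the tools already developed for the hardcore theorem. The only points that need care are (i) observing that the sample points themselves carry no information about $\bF$, so the ``$2^m$ hypotheses'' bound is valid however adaptively the learner uses its examples, and (ii) the covering estimate of \Cref{cor:maj-covering}, which rests on $\tilde{J}(\Maj_k,0.02)=\Omega(k)$ and hence on \Cref{claim:junta-size-loss-proof-overview} together with \Cref{lem:connect-soft-and-hard-juntas}.
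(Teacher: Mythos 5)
Your proof is correct and matches the paper's argument essentially step for step: both pass to a random $\bF$ uniform on $\lift_n(\Maj_k)$, fix the (label-independent) sample points and any randomness, bound the number of effective hypotheses by $2^m$, and union-bound against \Cref{cor:maj-covering}. The only cosmetic difference is that the paper phrases the reduction to deterministic learners via Yao's minimax lemma, whereas you just average over the internal randomness directly; these are equivalent.
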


\begin{proof}
    By the easy direction of Yao's lemma, it suffices to show that for any \emph{deterministic} learner $\mcA$, there is a distribution of concepts $\bF$ supported on $\lift_n(\Maj_k)$ for which the probability that $\mcA$ successfully learns $\bF$ is less than $0.01$. We'll set this distribution to the uniform distribution on $\lift_n(\Maj_k)$. Therefore, for $\bS$ denoting the sample of $m$ points $\mcA$ receives, it suffices to show that
    \begin{equation*}
        \Ex_{\bF \sim \lift_n(\Maj_k)}\bracket*{\Prx_{\bS}\bracket*{\dist(\mcA(\bS), \bF) \leq 0.01}} < 0.01.
    \end{equation*}
    For the sample $\bS = \bracket*{(\bx_1, f(\bx_1)), \ldots, (\bx_m, f(\bx_m))}$, we denote the unlabeled portion of the sample and labeled portion as
    \begin{equation*}
        \bS_x \coloneqq \bracket*{\bx_1, \ldots, \bx_m} \quad\quad\text{and} \quad\quad \bS_y \coloneqq\bracket*{f(\bx_1), \ldots, f(\bx_m)}.
    \end{equation*}
    The key observation is that the unlabeled portion of the sample is \emph{independent} of $\bF$. Therefore, we can rewrite
    \ifnum\focs=1
    for $\bF\sim \lift_n(\Maj_k)$:
    \begin{align*}
        &\Ex_{\bF}\bracket*{\Prx_{\bS}\bracket*{\dist(\mcA(\bS), \bF) \leq 0.01}} \\&= \Ex_{\bS_x}\bracket*{\Ex_{\bF} \bracket*{\Prx_{\bS_y}\bracket*{\dist(\mcA(\bS_x, \bS_y), \bF) \leq 0.01}}}\\
        &\leq \sup_{S_x}\paren*{\Ex_{\bF} \bracket*{\Prx_{\bS_y}\bracket*{\dist(\mcA(S_x, \bS_y), \bF) \leq 0.01}}}
    \end{align*}
    \else
    \begin{align*}
        \Ex_{\bF \sim \lift_n(\Maj_k)}\bracket*{\Prx_{\bS}\bracket*{\dist(\mcA(\bS), \bF) \leq 0.01}} &= \Ex_{\bS_x}\bracket*{\Ex_{\bF \sim \lift_n(\Maj_k)} \bracket*{\Prx_{\bS_y}\bracket*{\dist(\mcA(\bS_x, \bS_y), \bF) \leq 0.01}}}\\
        &\leq \sup_{S_x}\paren*{\Ex_{\bF \sim \lift_n(\Maj_k)} \bracket*{\Prx_{\bS_y}\bracket*{\dist(\mcA(S_x, \bS_y), \bF) \leq 0.01}}}
    \end{align*}
    \fi
    Since $\mcA$ is deterministic and $S_y$ contains only $m$ bits of information, after fixing $S_x$ there are only $2^m$ possible hypotheses that $\mcA$ can output. Therefore, by union bound, the above is at most
    \ifnum\focs=1
    \begin{align*}
        2^m \cdot \sup_{h} & \paren*{\Prx_{\bF \sim \lift_n(\Maj_k)}[\dist(h, \bF) \leq 0.01]} \\
        &\leq 2^m \cdot 2^{-\Omega(k \cdot (2^n - k))}
    \end{align*}
    where the supremum is taken over all $h:\bits^{nk} \to \bits$ and
    \else
    \begin{equation*}
        2^m \cdot \sup_{h:\bits^{nk} \to \bits} \paren*{\Prx_{\bF \sim \lift_n(\Maj_k)}[\dist(h, \bF) \leq 0.01]} \leq 2^m \cdot 2^{-\Omega(k \cdot (2^n - k))}
    \end{equation*}
    \fi
    where the second inequality uses \Cref{cor:maj-covering}. Therefore, for $\mcA$ to successfully learn, it must be that $m \geq \Omega(k \cdot (2^n - k)) = \Omega(2^n k)$ using the fact that $n \geq \Omega(\log k)$.
\end{proof}

The rest of this section will be devoted to proving the upper bound of \Cref{thm:boosting-formal}.

\subsection{Proof overview of the upper bound of~\Cref{thm:boosting-formal}}
\label{sec:boosting-upper-bound}

\begin{figure}[t] 

  \captionsetup{width=.9\linewidth}

    \begin{tcolorbox}[colback = white,arc=1mm, boxrule=0.25mm]
    \vspace{2pt} 
    
    \textbf{Initialization:} Draw a random sample of $2m$ many points from $\mathcal{D}$ and split it into a size-$m$ training set $\boldsymbol{S_\mathrm{train}}$ and a size-$m$ validation set $\boldsymbol{S_\mathrm{val}}$. Since we will mostly prove properties relating to $\boldsymbol{S_\mathrm{train}}$, we will use the simpler notation $\bS \coloneqq \boldsymbol{S_\mathrm{train}}$ when the sample used is clear from context.
    
    Initialize $g_{1,\bS}, \ldots, g_{k,\bS}:\bitsm \to \{-1,0,1\}$ each as the constant zero function. \vspace{2pt}\\
    \textbf{Learning:} For each point $(X,y) \in \bitsmk \times \bits$ in the training set and coordinate $i \in [k]$, overwrite
    \begin{equation*}
        g_{i,\bS}(X^{(i)}) \leftarrow  y.
    \end{equation*}
    Afterwards, define $G_{\bS}:\bitsmk \to \{-k, \dots, k\}$ as
    \begin{equation*}
        G_{\bS}(X) \coloneqq \sum_{i \in [k]} g_{i,\bS}(X^{(i)}).
    \end{equation*}
    \textbf{Choose threshold:} for a given threshold $\tau$, let
    $$h_{\tau}(X) \coloneqq \sign[G_{\bS}(X) \geq \tau].$$

    Let $u = \uintval$. We define $\mcH$, the set of hypotheses to be $\mcH = \{h_\tau(X) \, | \, \tau \in \set{\lint, \dots, \uint}\} \cup \{\boldsymbol{-1}, \boldsymbol{1}\}$ where $\boldsymbol{-1}$ and $\boldsymbol{1}$ are the constant $-1$ and $1$ functions respectively.
    
    Output the $h \in \mcH$ with maximum advantage on the validation set. 
    \end{tcolorbox}
\caption{Our algorithm for weak learning $\mcC$.}
\label{fig:weak-learner}
\end{figure}

We present a sample-efficient weak learning algorithm that satisfies the upper bound for the problem in \Cref{thm:boosting-formal}. 
\begin{lemma}[Upper bound of \Cref{thm:boosting-formal}, Formal version of \Cref{lem:weak-learner}]
    \label{lem:boosting-upper}
    In the setting of \Cref{thm:boosting-formal}, let $\mcD$ be any distribution that is $\kappa$-smooth on $\bitsmk$. There is a $2m$-sample weak learning algorithm achieving advantage $\Omega\paren*{\advantage}$ for the concept class $\mcC$ on the input distribution $\mcD$.
\end{lemma}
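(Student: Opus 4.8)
The algorithm is the one displayed in \Cref{fig:weak-learner}; it draws $2m=O(2^n)$ examples, so the sample-complexity bound is immediate and the content is in lower-bounding the advantage of the hypothesis it returns. I would hang the argument on two facts about the random function $G_{\bS}(X)=\sum_{i\in[k]}g_{i,\bS}(X^{(i)})$ built from the training set $\bS$, together with the random-threshold trick previewed in \Cref{sec:boosting-upper-bound}. The first fact (\emph{correlation}) is that, with high probability over $\bS$, $\Ex_{\bX\sim\mcD}[F(\bX)\,G_{\bS}(\bX)]\ge\wcorr$. I would prove this first in expectation over $\bS$. The key identity is that, conditioned on a value $x$ appearing as the $i$th block of some training example, $g_{i,\bS}(x)$ equals $F(\bX')$ for $\bX'\sim\mcD$ conditioned on $\bX'^{(i)}=x$, drawn independently of the test point; hence, writing $\mcD_i$ for the $i$th block marginal,
\begin{equation*}
  \Ex_{\bS,\bX}\big[F(\bX)\,g_{i,\bS}(\bX^{(i)})\big]=\Ex_{x\sim\mcD_i}\Big[\Prx_{\bS}[\text{$x$ seen in block $i$}]\cdot\big(\Ex_{\mcD}[F\mid \bX^{(i)}=x]\big)^2\Big]\ \ge\ 0.
\end{equation*}
For the uniform distribution each summand is $\Theta(1)\cdot\widehat{\MAJ}_k(\{i\})^2=\Theta(1/k)$ and the sum over $i$ is $\Theta(1)$; the job is to prove the analogous $\Omega_\kappa(1)$ bound for an arbitrary $\kappa$-smooth $\mcD$.

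For general $\mcD$ I would argue as follows. By \Cref{prop:induce} the distribution of the tuple of block-values under $\mcD$ is again $\kappa$-smooth, which reduces the sum above — after dropping the ``seen'' factors on the points carrying all but a small fraction of $\mcD_i$'s mass, using that $\mcD_i$ is $\kappa$-smooth on $\bits^n$ so such points are seen with $\Omega(1)$ probability — to a statement of the form: for every $\kappa$-smooth distribution on $\bits^k$, $\sum_{i}\Ex[(\Ex[\MAJ_k(\bz)\mid \bz_i])^2]=\Omega_\kappa(1)$, where $\bz$ denotes a draw from that distribution. This last inequality is where the anticoncentration of $\MAJ_k$ (\Cref{fact:bin-max-p}) enters; it is the precise sense in which the learner ``weak-learns on average over the blocks'' even though an individual $g_{i,\bS}$ may be \emph{anti}correlated with its $f_i$. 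Finally I would upgrade the in-expectation bound to a high-probability one via the bounded differences inequality (\Cref{fact:bounded-diff}): changing one training example alters $g_{i,\bS}$ at $O(1)$ inputs per block, hence changes $\Ex_{\bX}[F G_{\bS}]$ by $O(k\kappa/2^n)$, and the hypothesis $n\ge\Omega_\kappa(\log k)$ makes $2^n\gg(k\kappa)^2$, so the fluctuation is $o(\wcorr)$.

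The second fact (\emph{anticoncentration of $G_{\bS}$}) is that for \emph{every} fixed $\bS$ and $u=\uintval$, $\Prx_{\bX\sim\mcD}[\,|G_{\bS}(\bX)|\ge u\,]\le\pconcentrate$: since $\mcD$ is $\kappa$-smooth one has $\Prx_{\mcD}[\cdot]\le\kappa\,\Prx_{\Unif}[\cdot]$, and under the uniform distribution the $k$ terms $g_{i,\bS}(\bX^{(i)})$ are independent and bounded in $[-1,1]$, so Hoeffding's inequality (\Cref{fact:hoeffding}) gives $\Prx_{\Unif}[|G_{\bS}(\bX)|\ge u]\le 2\exp(-\Omega(u^2/k))\le\pconcentrate/\kappa$ for the stated $u$ (a large bias of $G_{\bS}$ would spoil the centering here, but a large bias of $G_{\bS}$ forces a noticeable bias of $F$ under $\mcD$, so one of the constants $\boldsymbol{1},\boldsymbol{-1}\in\mcH$ already has constant advantage and we are done; thus one may assume the bias is small). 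Given these two facts, a direct computation of $\Ex_{\btau}[h_{\btau}(X)]$ for $\btau$ uniform on $\{\lint,\dots,\uint\}$ — it is affine in the truncation of $G_{\bS}(X)$ to $[\lint,\uint]$ — yields
\begin{equation*}
  \Ex_{\bX,\btau}\big[F(\bX)h_{\btau}(\bX)\big]\ \ge\ \frac{\Ex_{\bX}[F(\bX)G_{\bS}(\bX)]}{O(u)}-k\cdot\Prx_{\bX}\big[|G_{\bS}(\bX)|\ge u\big]\ \ge\ \Omega\!\left(\tfrac{1}{u\kappa^3}\right)-\tfrac1k\ \ge\ \Omega(\advantageline),
\end{equation*}
using $u=\uintval$ and $k$ large relative to $\kappa$, so some $h^{\star}\in\mcH$ has advantage $\Omega(\advantageline)$ under $\mcD$. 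Since $|\mcH|=O(u)$ is polynomial in $k$ and $\kappa$ while the validation set has $m=2^n$ points, Hoeffding and a union bound give that with high probability the empirical advantage on $\boldsymbol{S_\mathrm{val}}$ is within $o(\advantageline)$ of the true advantage for all $h\in\mcH$ simultaneously (again using $n\ge\Omega_\kappa(\log k)$ to make $m$ polynomially larger than $1/(\advantageline)^2$); hence the hypothesis the algorithm outputs has advantage $\Omega(\advantageline)$, equivalently accuracy $\tfrac12+\Omega(\advantageline)$ under $\mcD$.

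The main obstacle is the general-$\mcD$ case of the correlation fact, i.e.\ proving $\sum_i\Ex_{x\sim\mcD_i}[\Prx_{\bS}(\text{$x$ seen in block $i$})\cdot(\Ex_{\mcD}[F\mid \bX^{(i)}=x])^2]=\Omega_\kappa(1)$. The three sources of difficulty are: (i) individual blocks may be anticorrelated with their $f_i$, so one genuinely needs an averaging-over-blocks argument rather than a per-block one; (ii) once $\mcD$ is not a product distribution the functions $g_{1,\bS},\dots,g_{k,\bS}$ are no longer independent, which both complicates this step and forces the separate distribution-to-uniform handling of the anticoncentration fact; and (iii) the coupon-collector ``seen'' factors, whose control requires the smoothness of the block marginals together with a careful accounting of how much of the relevant correlation sits on block-values that actually occur in an $m$-sample training set. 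Everything else — the anticoncentration of $G_{\bS}$, the threshold computation, and the validation-set selection — is routine concentration and uniform-convergence.
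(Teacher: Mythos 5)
Your proposal tracks the paper's overall plan closely — build $G_{\bS}=\sum_i g_{i,\bS}$, show $\Ex_{\cD}[F\,G_{\bS}]=\Omega_\kappa(1)$ in expectation over $\bS$ via Jensen and \Cref{claim:maj-easy} and then concentrate with bounded differences, show $G_{\bS}$ is concentrated in a window of width $O(\sqrt{k\log k\,\kappa})$, deduce via the random-threshold computation (\Cref{claim:ev-sign-fctn}) that some $h_\tau$ has advantage $\Omega(1/(u\kappa^3))$, and finish with uniform convergence on the validation set. The correlation step (your Fact 1) and the final two steps are essentially the paper's \Cref{lem:weak-correlation,lem:weakly-correlated-h-exists-if-f-low-bias,corollary:erm-learner-is-close}.

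There is, however, a genuine gap in your Fact 2. You assert anticoncentration of $G_{\bS}$ around $0$ \emph{for every fixed} $\bS$, routing around the centering issue with the claim that ``a large bias of $G_{\bS}$ forces a noticeable bias of $F$ under $\mcD$.'' That implication is false for a fixed realization of $\bS$: even with $\cD$ uniform and $F$ perfectly balanced, an unlucky draw of $\bS$ in which nearly every training label is $+1$ makes $g_{i,\bS}\in\{0,+1\}$ for every $i$, so $\Ex_{\Unif}[G_{\bS}]=\Theta(k)$ while $\Ex_{\cD}[F]=0$. The correct statement is a \emph{high-probability} one over $\bS$, and establishing it is the paper's \Cref{claim:g-concentrates-in-expectation}: one shows $\Ex_{\bS}\bracket*{\Ex_{\cD}[G_{\bS}]}=\Ex_{\cD}\bracket*{F(\bX)\sum_i q_{i,\bS}(\bX^{(i)})}\approx ck\cdot\Ex_{\cD}[F]$, and the error term $\Ex_{\cD}\bracket*{\sum_i|q_{i,\bS}(\bX^{(i)})-c|}$ is controlled only by a concentration statement about the block marginals $\cD_i(\bx)$ around $1/m$ (\Cref{claim:p-concentrate}), which is itself a nontrivial use of smoothness. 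This coupon-collector machinery is precisely what you assign to Fact 1 in your closing paragraph (iii), but Fact 1 only needs the easy lower bound $q_{i,\bS}(x)\ge m\cD_i(x)/(1+\kappa)$ from \Cref{fact:bin-large}; the hard direction — showing $q_{i,\bS}$ does not deviate far above $c$ too often — is needed exactly for controlling the bias of $G_{\bS}$, and your proposal has no substitute for it. Once $|\Ex_{\cD}[G_{\bS}]|\le O(\sqrt{k}\kappa)$ is established w.h.p.\ over $\bS$, you also need the bridge $|\Ex_{\cD}[G_{\bS}]-\Ex_{\Unif}[G_{\bS}]|\le\sqrt{\kappa k}$ (Cauchy–Schwarz plus smoothness) before your uniform-distribution Hoeffding argument applies; this step is present in the paper (\cref{eq:expectations-close}) but absent from your sketch.
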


In the following, we will let $F \in \mcC$ be the target function, $\mcD$ be a $\kappa$-smooth distribution on $\bitsmk$,  and $G_S$ be as defined in \Cref{fig:weak-learner}.

We describe the learning algorithm of \Cref{lem:boosting-upper} in \Cref{fig:weak-learner}. It build estimators $g_{1,S}, \dots, g_{k,S}$ for $f_1, \dots, f_k$ as follows: For each point $X$ in the training set $S$, if $\Maj(f_1(X^{(1)}), \ldots, f_k(X^{(k)})) = y$ then set $g_{i,S}(X^{(i)}) = y$. A natural approach, then, would be to return $G_S \coloneqq \sum_{i \in [k]}g_{i,S}$ as the final weak learner, however, $G_S$ here is not a $\bits$ classifier so we turn it into a classifier by trying different threshold functions and returning the one with best advantage on the validation set.

At a high level, the proof will consist in showing that there always exists a hypothesis $h \in \mcH$ that achieves weak correlation with $F$ where, by weak correlation, we mean correlation $\Tilde{\Omega}(1/\sqrt{k})$. We then show that if such a hypothesis exists, the hypothesis chosen by \Cref{fig:weak-learner} also achieves weak correlation with $F$.

We note $\boldsymbol{-1}$ and $\boldsymbol{-1}$ the constant $-1$ and $1$ functions respectively. We start by noticing that if $F$ is very biased ($\abs*{\Ex_{\bX \sim \cD}[F(\bX)]} > 1 / \sqrt{k}$) then either $\Ex_{\bX \sim \cD}[F(\bX) (\boldsymbol{-1}(\bX))] > 1/\sqrt{k}$ or $\Ex_{\bX \sim \cD}[F(\bX) \boldsymbol{1}(\bX)] > 1/\sqrt{k}$. Note that $\boldsymbol{-1}$ and $\boldsymbol{1}$ are both hypotheses in $\mcH$, the set of possible hypotheses for the weak learner. Hence for the rest of the proof, we'll assume that $F$ has low bias ($\abs*{\Ex_{\bX \sim \cD}[F(\bX)]} \leq 1 / \sqrt{k}$).

The rest of the proof can be broken down into 4 main steps:
\begin{enumerate}
    \item \Cref{lem:weak-correlation}: We show that with high probability over the draw of the random sample $\bS$, $G_{\bS}$ achieves constant correlation with $F$ ($\Ex_{\bX \sim \cD}[G_{\bS}(\bX)F(\bX)] \geq \Omega_\kappa(1)$).
    \item \Cref{lem:g-concentrates}: We show that conditioned on $F$ having low bias, then with high probability over the samples $\bS$, the $G_{\bS}$ that we construct has low values on most inputs, $$\Prx_{\bX \sim \cD} \left [ \abs*{G_{\bS}(\bX)} \leq O(\sqrt{k \log{k} \kappa}) \right ] \geq 1 - \frac{1}{k^2}.$$
    \item \Cref{lem:weakly-correlated-h-exists-if-f-low-bias}: We show that having both items 1 and 2 is sufficient to prove that there exists a good hypothesis $h^* \in \mcH$ that achieves weak correlation $ \Ex_{\bX \sim \cD}[F(\bX) \cdot h^*(\bX)] \geq \Omega \paren*{\advantageline}$. This implies that with high probability over the draw of the random sample $\bS$, there exists a hypothesis that achieves weak correlation with $F$.
    \item Proof of \Cref{lem:boosting-upper}: In the final step, we show that if there exists a hypothesis that is weakly correlated with $F$, then the hypothesis we choose by minimizing validation error is also weakly correlated with $F$ with high probability. This is a simple generalization argument and uses standard learning theory results. The previous steps show that a weakly correlated hypothesis exists with high probability over the draw of the random sample $\bS$, and so applying the generalization result proves \Cref{lem:boosting-upper}.
\end{enumerate}

\subsection{Notation and basic technical tools}
\label{sec:notation-technical-tools}
\paragraph{Notation.} {We start by introducing a few pieces of notation. For each $i \in [k]$ and $x \in \bitsm$, let:
\begin{itemize}
    \item $\cD_i(x) \coloneqq \Prx_{\bX \sim \mcD}[\bX^{(i)} = x]$
    \item $\mu_i(x) \coloneqq \Ex_{\bX \sim \mcD}[F(\bX) \mid \bX^{(i)} = x]$
    \item $q_{i,\bS}(x) \coloneqq \Prx_{\bS}[\exists \bX \in \bS: \bX^{(i)} = x]$
\end{itemize}
Notice that $q_{i,\bS}(x) = 1 - (1 - \cD_i(x))^m$.}

\paragraph{Basic technical tools.}
We show that for any choice of $F \in \mcC$ and $\kappa$-smooth $\mcD$, the individual $f_i$ have a noticeable amount of correlation with $F$:\begin{corollary}[The $f_i$ are correlated with $F$]
    \label{cor:many-f-correlated}
    For any $F = \Maj(f_1, \ldots, f_k) \in \mcC$ and $\kappa$-smooth $\mcD$,
    \begin{equation*}
        \sum_{i \in [k]} \Ex_{\bX \sim \mcD}[F(\bX)f_{i}(\bX^{(i)})] \geq \Omega\paren*{\frac{\sqrt{k}}{\kappa}}.
    \end{equation*}
\end{corollary}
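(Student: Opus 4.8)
The plan is to observe that this corollary is essentially \Cref{claim:maj-easy} transported from $\bits^k$ to $\bits^{nk}$ via the induced-distribution machinery of \Cref{def:induce,prop:induce}. First I would rewrite the left-hand side using the definition of $\Maj$: since $F = \Maj(f_1,\ldots,f_k)$, for every $X$ we have $F(X) = \sign\big(\sum_{i\in[k]} f_i(X^{(i)})\big)$, so $F(X)\cdot\sum_{i\in[k]} f_i(X^{(i)}) = \big|\sum_{i\in[k]} f_i(X^{(i)})\big|$. Taking expectations over $\bX\sim\mcD$ and using linearity,
\begin{equation*}
    \sum_{i\in[k]} \Ex_{\bX\sim\mcD}[F(\bX) f_i(\bX^{(i)})] = \Ex_{\bX\sim\mcD}\Big[\Big|\sum_{i\in[k]} f_i(\bX^{(i)})\Big|\Big],
\end{equation*}
so it suffices to lower bound the right-hand side by $\Omega(\sqrt{k}/\kappa)$.

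Next I would pass to the induced distribution $H \coloneqq \mathrm{Ind}(\mcD)$ over $\bits^k$ (sample $\bX\sim\mcD$, output $(f_1(\bX^{(1)}),\ldots,f_k(\bX^{(k)}))$), as in \Cref{def:induce}. Because $F\in\mcC=\lift_n(\Maj_k)$, each $f_i$ is balanced, so \Cref{prop:induce} gives that $H$ has density $1/\kappa$, i.e.\ is $\kappa$-smooth. By construction of $H$, the right-hand side above equals $\Ex_{\by\sim H}\big[\big|\sum_{i\in[k]} \by_i\big|\big]$.

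Finally, $\Ex_{\by\sim H}[\,|\sum_{i}\by_i|\,]$ for a density-$c$ distribution over $\bits^k$ is exactly the quantity bounded in the proof of \Cref{claim:maj-easy} — the anti-concentration argument there (culminating in \Cref{eq:maj-corr}, which uses $\binom{k}{\ell}\le O(2^k/\sqrt{k})$ for all $\ell$) shows it is $\Omega(c\sqrt{k})$. Setting $c=1/\kappa$ and chaining the three displays proves the corollary. I do not anticipate any real obstacle here: the only things to check are that the balancedness hypothesis of \Cref{prop:induce} is met (it is, by definition of the lifted class) and, if one prefers to cite \Cref{claim:maj-easy} verbatim rather than reuse its internal estimate, that $\Ex_{\by\sim H}[\Maj_k(\by)\by_i] = \Ex_{\bX\sim\mcD}[F(\bX)f_i(\bX^{(i)})]$, so that the corollary is literally $k$ times the conclusion of that claim.
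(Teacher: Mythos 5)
Your proof is correct and takes essentially the same route as the paper's one-line proof: reduce to \Cref{claim:maj-easy} by passing to the induced distribution $\mathrm{Ind}(\mcD)$ over $\bits^k$, using \Cref{prop:induce} together with the observation that $\kappa$-smooth means density $1/\kappa$. Your version simply makes explicit the balancedness check and the identity $F(X)\sum_i f_i(X^{(i)})=|\sum_i f_i(X^{(i)})|$ that the paper leaves implicit (and which is already the content of \Cref{claim:maj-easy}'s internal calculation).
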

\begin{proof}
This result holds by applying \Cref{claim:maj-easy} using the fact that a $\kappa$-smooth distribution has $1/\kappa$ density.
\end{proof}

We'll also use the following basic probability fact:
\begin{fact}
    \label{fact:bin-large}
    Let $\ba \sim \mathrm{Bin}(n,p)$ with mean $\mu \coloneqq np$. Then,
    \begin{equation*}
        \Pr[\ba \geq 1] \geq \frac{\mu}{\mu+1}.
    \end{equation*}
\end{fact}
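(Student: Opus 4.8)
The statement to prove is \Cref{fact:bin-large}: for $\ba \sim \Bin(n,p)$ with $\mu \coloneqq np$, we have $\Pr[\ba \ge 1] \ge \tfrac{\mu}{\mu+1}$. The plan is to reduce this to an elementary inequality. Since the only way $\ba = 0$ is for all $n$ independent trials to fail, $\Pr[\ba = 0] = (1-p)^n$, so $\Pr[\ba \ge 1] = 1 - (1-p)^n$. Hence the claim is equivalent to
\begin{equation*}
(1-p)^n \le \frac{1}{\mu + 1} = \frac{1}{np+1}.
\end{equation*}

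The main step I would take is to bound the left-hand side by a Gaussian-type exponential and then invoke $e^x \ge 1+x$. Concretely, using $1 - t \le e^{-t}$ (valid for all real $t$, and applicable since $p \in [0,1]$ so $1-p \ge 0$), we get $(1-p)^n \le e^{-np} = e^{-\mu}$. Then, applying $e^{x} \ge 1+x$ at $x = \mu \ge 0$ gives $e^{\mu} \ge 1 + \mu$, and taking reciprocals (both sides positive) yields $e^{-\mu} \le \tfrac{1}{1+\mu}$. Chaining these two bounds gives $(1-p)^n \le \tfrac{1}{1+\mu}$, hence $\Pr[\ba \ge 1] = 1 - (1-p)^n \ge 1 - \tfrac{1}{1+\mu} = \tfrac{\mu}{\mu+1}$, which is exactly the desired inequality.

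If one prefers to avoid transcendental functions altogether, an alternative is to show directly that $\phi(p) \coloneqq (1-p)^n(1+np) \le 1$ for $p \in [0,1]$: indeed $\phi(0) = 1$ and $\phi'(p) = -n(n+1)\,p\,(1-p)^{n-1} \le 0$ on $[0,1]$, so $\phi$ is nonincreasing, whence $\phi(p) \le \phi(0) = 1$, i.e.\ $(1-p)^n \le \tfrac{1}{1+np}$. Either route settles the fact. I do not expect any real obstacle here — this is a one- or two-line argument; the only things to keep straight are that $p$ being a probability guarantees $1-p \ge 0$ (so raising $1-p \le e^{-p}$ to the $n$-th power is valid) and that the degenerate case $p = 0$ is trivially consistent ($\mu = 0$ and both sides are $0$).
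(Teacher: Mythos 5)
Your main argument is exactly the paper's proof: chain $(1-p)^n \le e^{-\mu} \le \tfrac{1}{\mu+1}$ by applying $1+x \le e^x$ twice, once with $x=-p$ and once with $x=\mu$. Correct and identical in substance (the calculus-based alternative you sketch is a valid extra, but not needed).
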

\begin{proof}
    \begin{equation*}
        \Pr[\ba = 0] = (1 - p)^n \leq e^{-\mu} \leq \frac{1}{\mu + 1} \tag{Use $1 + x \leq e^x$ twice}
    \end{equation*}
    which implies the desired result.
\end{proof}

\subsection{$G_{S}$ is well correlated with $F$}
We want to show that, with high probability over the random sample $\bS$, $G_{\bS}$ (defined in the weak learning algorithm) has a constant amount of correlation with $F$.

\begin{lemma}
    \label{lem:weak-correlation}
    With probability at least $1 - \exp \left (-\Omega \left (\lfrac{m}{k^2 \kappa^8} \right ) \right )$ over the draw of the random sample $\bS$,
    \begin{equation*}
        \Ex_{\bX \sim \mcD}\bracket*{F(\bX) \cdot G_{\bS}(\bX)} \geq \Omega(1/\kappa^3).
    \end{equation*}
\end{lemma}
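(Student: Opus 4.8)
The plan is to compute the expectation $\Ex_{\bS}[\Psi(\bS)]$ of $\Psi(\bS)\coloneqq\Ex_{\bX\sim\mcD}[F(\bX)\,G_{\bS}(\bX)]$ over the random training set, show it is $\Omega(1/\kappa^3)$, and then invoke the bounded differences inequality (\Cref{fact:bounded-diff}) to argue that $\Psi(\bS)$ is close to this mean with the stated probability. Throughout I use that $m=\Theta(2^n)$ in the setting of \Cref{thm:boosting-formal} and that $\kappa$-smoothness of $\mcD$ forces each block marginal to satisfy $\cD_i(x)\le\kappa/2^n$.

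\emph{Computing the mean.} Since the test point $\bX$ is independent of $\bS$, we have $\Ex_{\bS}[\Psi(\bS)]=\Ex_{\bX}\big[F(\bX)\sum_{i\in[k]}\Ex_{\bS}[g_{i,\bS}(\bX^{(i)})]\big]$. The key identity is that $\Ex_{\bS}[g_{i,\bS}(x)]=q_{i,\bS}(x)\cdot\mu_i(x)$ for every $i$ and $x\in\bitsm$: a value is written into $g_{i,\bS}(x)$ exactly when some sample point has its $i$-th block equal to $x$, an event of probability $q_{i,\bS}(x)=1-(1-\cD_i(x))^m$, and the value written is the label of the \emph{last} such point, which---conditioned on the set of sample indices that hit $x$---is distributed as $F(\bX)$ with $\bX\sim\mcD$ conditioned on $\bX^{(i)}=x$, hence has expectation $\mu_i(x)$. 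Substituting this and using the identity $\Ex_{\bX}[F(\bX)\,h(\bX^{(i)})]=\sum_x\cD_i(x)\,h(x)\,\mu_i(x)$ with $h=q_{i,\bS}\cdot\mu_i$ yields
\begin{equation*}
    \Ex_{\bS}[\Psi(\bS)]=\sum_{i\in[k]}\sum_{x\in\bitsm}\cD_i(x)\,q_{i,\bS}(x)\,\mu_i(x)^2 .
\end{equation*}

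\emph{Lower bounding the mean---the crux.} The subtlety is that $q_{i,\bS}(x)$ is tiny on blocks $x$ of small density and so cannot simply be discarded; the resolution is to account for each factor of $\kappa$ separately. First, $1-(1-t)^m\ge\tfrac12\min(mt,1)$ together with $m=\Theta(2^n)$ and $2^n\cD_i(x)\le\kappa$ gives $q_{i,\bS}(x)\,\cD_i(x)\ge\Omega\!\big(2^n\cD_i(x)^2/\kappa\big)$, where one factor of $\kappa$ is lost (the worst case being $\cD_i(x)$ near its cap $\kappa/2^n$). The remaining two factors come from \Cref{cor:many-f-correlated}, which gives $\sum_i\sum_x\cD_i(x)f_i(x)\mu_i(x)=\sum_i\Ex_{\bX}[F(\bX)f_i(\bX^{(i)})]\ge\Omega(\sqrt k/\kappa)$; since $f_i(x)^2=1$, Cauchy--Schwarz over the $k\cdot 2^n$ pairs $(i,x)$ yields $\sum_i\sum_x\cD_i(x)^2\mu_i(x)^2\ge\tfrac{1}{k2^n}\big(\Omega(\sqrt k/\kappa)\big)^2=\Omega\!\big(1/(2^n\kappa^2)\big)$. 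Multiplying the two estimates term by term, $\Ex_{\bS}[\Psi(\bS)]\ge\Omega(2^n/\kappa)\cdot\Omega(1/(2^n\kappa^2))=\Omega(1/\kappa^3)$. I expect this step to be the main obstacle; the rest is essentially bookkeeping.

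\emph{Concentration.} Finally I apply \Cref{fact:bounded-diff} to $\Psi$ viewed as a function of the $m$ i.i.d.\ points of $\bS$. Replacing one sample point changes each $g_{i,\bS}$ in at most two coordinates---the $i$-th blocks of the removed and inserted points---each by at most $2$, so $|\Psi(\bS)-\Psi(\bS')|\le 2\sum_i\Prx_{\bX}[\bX^{(i)}\in\{\text{those two values}\}]\le 4k\kappa/2^n$ by smoothness, i.e.\ $\Psi$ has bounded differences $c=O(k\kappa/2^n)$. Taking $\eps=\tfrac12\Ex_{\bS}[\Psi(\bS)]=\Omega(1/\kappa^3)$, \Cref{fact:bounded-diff} bounds the failure probability by $\exp\!\big(-\Omega(\eps^2 2^{2n}/(mk^2\kappa^2))\big)=\exp\!\big(-\Omega(2^{2n}/(mk^2\kappa^8))\big)$, which equals $\exp(-\Omega(m/(k^2\kappa^8)))$ since $m=\Theta(2^n)$; on the complementary event $\Psi(\bS)\ge\Ex_{\bS}[\Psi(\bS)]-\eps=\Omega(1/\kappa^3)$, as claimed.
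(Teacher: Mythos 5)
Your proof is correct and takes essentially the same route as the paper: you establish the identity $\Ex_{\bS}[\Psi(\bS)]=\sum_{i,x}\cD_i(x)q_{i,\bS}(x)\mu_i(x)^2$, lower-bound it via $q_{i,\bS}(x)\gtrsim m\cD_i(x)/\kappa$, feed in \Cref{cor:many-f-correlated} through a Cauchy--Schwarz argument, and finish with \Cref{fact:bounded-diff}. The only cosmetic difference is that the paper applies Jensen's inequality twice (once over $x$, once over $i$) where you apply a single Cauchy--Schwarz over the $k\cdot 2^n$ pairs, and the paper bounds $q_{i,\bS}(x)$ via \Cref{fact:bin-large} rather than $1-(1-t)^m\geq\tfrac12\min(mt,1)$; these are equivalent.
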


To prove \Cref{lem:weak-correlation}, we first prove that the expected value over $\bS$ of the correlation between $G_{\bS}$ and $F$ is high in \Cref{claim:weak-correlation-expectation}. We will then use a concentration inequality to show how this implies that $G_{\bS}$ is well correlated with $F$ with high probability over the draw of the random sample $\bS$.

\begin{claim}
    \label{claim:weak-correlation-expectation}
    \Cref{lem:weak-correlation} is true in expectation over the random sample $\bS$, that is,
    \begin{equation*}
        \Ex_{\bS}\bracket*{\Ex_{\bX \sim \mcD}\bracket*{F(\bX) \cdot G_{\bS}(\bX)}} \geq \Omega(1/\kappa^3).
    \end{equation*}
\end{claim}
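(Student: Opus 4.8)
The plan is to evaluate $\Ex_{\bS}\bigl[\Ex_{\bX\sim\mcD}[F(\bX)G_{\bS}(\bX)]\bigr]$ essentially exactly and then lower bound the resulting expression using \Cref{cor:many-f-correlated} together with Cauchy--Schwarz. First, by linearity over $i\in[k]$ and Fubini (the test point $\bX$ is drawn independently of $\bS$, and each $g_{i,\bS}$ is bounded),
\[
\Ex_{\bS}\bigl[\Ex_{\bX\sim\mcD}[F(\bX)G_{\bS}(\bX)]\bigr]=\sum_{i\in[k]}\Ex_{\bX\sim\mcD}\bigl[F(\bX)\cdot\Ex_{\bS}[g_{i,\bS}(\bX^{(i)})]\bigr].
\]
The key identity to establish is $\Ex_{\bS}[g_{i,\bS}(x)]=q_{i,\bS}(x)\,\mu_i(x)$ for every $x\in\bitsm$. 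Indeed, $g_{i,\bS}(x)$ is nonzero exactly when some training point has $i$th block equal to $x$, an event of probability $q_{i,\bS}(x)=1-(1-\cD_i(x))^m$; and conditioned on the precise set of indices whose $i$th block equals $x$, the training point performing the final overwrite of $g_{i,\bS}(x)$ is---by independence of the $m$ samples---distributed as $\bX\sim\mcD$ conditioned on $\bX^{(i)}=x$, so the label it records has conditional expectation $\mu_i(x)$. Substituting this and conditioning on $\bX^{(i)}=x$ in the display above yields
\[
\Ex_{\bS}\bigl[\Ex_{\bX\sim\mcD}[F(\bX)G_{\bS}(\bX)]\bigr]=\sum_{i\in[k]}\Ex_{x\sim\cD_i}\bigl[q_{i,\bS}(x)\,\mu_i(x)^2\bigr],
\]
where $\cD_i$ denotes the marginal distribution of the $i$th block.

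It remains to lower bound this sum. Since each $f_i$ is balanced and $\mcD$ is $\kappa$-smooth, \Cref{cor:many-f-correlated} gives $\sum_{i\in[k]}\Ex_{\bX\sim\mcD}[F(\bX)f_i(\bX^{(i)})]\ge\Omega(\sqrt{k}/\kappa)$; writing the $i$th summand as $\Ex_{x\sim\cD_i}[f_i(x)\mu_i(x)]$ and using $|f_i(x)|=1$ gives $\sum_{i\in[k]}\Ex_{x\sim\cD_i}[|\mu_i(x)|]\ge\Omega(\sqrt{k}/\kappa)$. Applying Cauchy--Schwarz to the double sum $\sum_i\Ex_{x\sim\cD_i}[\,\cdot\,]$ with the pairing $|\mu_i(x)|=\bigl(\sqrt{q_{i,\bS}(x)}\,|\mu_i(x)|\bigr)\cdot\bigl(1/\sqrt{q_{i,\bS}(x)}\bigr)$ (legitimate since $q_{i,\bS}(x)>0$ on the support of $\cD_i$) gives
\[
\sum_{i\in[k]}\Ex_{x\sim\cD_i}\bigl[q_{i,\bS}(x)\mu_i(x)^2\bigr]\ \ge\ \frac{\bigl(\sum_{i}\Ex_{x\sim\cD_i}[|\mu_i(x)|]\bigr)^2}{\sum_{i}\Ex_{x\sim\cD_i}[1/q_{i,\bS}(x)]}\ \ge\ \frac{\Omega(k/\kappa^2)}{\sum_{i}\Ex_{x\sim\cD_i}[1/q_{i,\bS}(x)]}.
\]
Finally, by \Cref{fact:bin-large} we have $q_{i,\bS}(x)\ge\tfrac{m\cD_i(x)}{1+m\cD_i(x)}$, so $1/q_{i,\bS}(x)\le 1+\tfrac{1}{m\cD_i(x)}$ and hence $\Ex_{x\sim\cD_i}[1/q_{i,\bS}(x)]\le 1+\tfrac{1}{m}|\mathrm{supp}(\cD_i)|\le 1+\tfrac{2^n}{m}=O(1)$ using $m=\Theta(2^n)$. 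Summing over $i$, the denominator is $O(k)$, and the whole quantity is $\Omega(1/\kappa^2)\ge\Omega(1/\kappa^3)$, proving the claim.

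The step that requires genuine care is the identity $\Ex_{\bS}[g_{i,\bS}(x)]=q_{i,\bS}(x)\mu_i(x)$: one must verify that the ``overwrite'' behavior of the construction introduces no bias in the recorded label, which comes down to the observation that conditioning on \emph{which} training indices land in block-value $x$ leaves those samples mutually independent and each distributed as $\mcD\mid\bX^{(i)}=x$ (so it would not even matter whether one kept the first, the last, or a uniformly random such sample). Everything else is routine; the only place the sample budget enters is the bound $|\mathrm{supp}(\cD_i)|\le 2^n=O(m)$, which is precisely why the learner is allotted $\Theta(2^n)$ samples.
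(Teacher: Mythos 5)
Your proof is correct, and it arrives at the same key identity $\Ex_{\bS}[g_{i,\bS}(x)]=q_{i,\bS}(x)\mu_i(x)$ and the same intermediate formula $\Ex_{\bS}\bigl[\Ex_{\bX}[F(\bX)G_{\bS}(\bX)]\bigr]=\sum_{i\in[k]}\Ex_{x\sim\cD_i}[q_{i,\bS}(x)\mu_i(x)^2]$ as the paper, and it uses \Cref{cor:many-f-correlated} as the external correlation input. The divergence is in how the factor $q_{i,\bS}(x)$ is handled. The paper lower bounds $q_{i,\bS}(x)$ pointwise: it combines \Cref{fact:bin-large} with $\kappa$-smoothness ($\cD_i(x)\le\kappa/m$) to get $q_{i,\bS}(x)\ge m\cD_i(x)/(1+\kappa)$, then applies Jensen twice (over the $2^n$ choices of $x$, and over the $k$ blocks). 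You instead keep $q_{i,\bS}(x)$ inside a Cauchy--Schwarz with the pairing $|\mu_i(x)|=\bigl(\sqrt{q}|\mu_i|\bigr)\cdot\bigl(1/\sqrt{q}\bigr)$ and then control $\Ex_{x\sim\cD_i}[1/q_{i,\bS}(x)]$ in expectation rather than in the worst case, again via \Cref{fact:bin-large} but \emph{without} the smoothness bound $\cD_i(x)\le\kappa/m$. Both routes implicitly need $m=\Theta(2^n)$ (the paper's Jensen over $x$ tacitly requires $m\gtrsim|\mathrm{supp}(\cD_i)|$; you make this explicit), so neither is more demanding on the sample budget. Your version is actually slightly tighter quantitatively: you get $\Omega(1/\kappa^2)$ rather than the paper's $\Omega(1/\kappa^3)$, because avoiding the pointwise $q\ge m\cD_i(x)/(1+\kappa)$ saves a factor of $\kappa$. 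Since the claim only asserts $\Omega(1/\kappa^3)$ this is merely a bonus, but it is a genuine improvement in the constant, obtained at no cost in complexity of argument. The remark about the ``overwrite'' being unbiased is also correct and is the same justification the paper gives in one line; your spelled-out version (conditioning on which indices land in block-value $x$, noting those samples remain i.i.d.\ from $\mcD\mid\bX^{(i)}=x$) is a fair elaboration.
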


\begin{proof}

    We begin by recalling some notation that will aid in the proof. For any training set $S \in (\bitsmk \times \bits)^m$, let $g_{1,S}, \ldots, g_{k,S}$ be the functions that \Cref{fig:weak-learner} would construct given the dataset $S$, and $G_S$ the function summing them up $G_S \coloneqq \sum_{i \in [k]} g_{i,S}$. Our goal is to understand the average correlation,
    \begin{equation*}
        \Ex_{\bX \sim \mcD}\bracket*{F(\bX) \cdot G_{S}(\bX)} = \sum_{i \in [k]}\Ex_{\bX \sim \mcD}[F(\bX^{(i)}) \cdot g_{i,S}(\bX)] \tag{Linearity of expectation}.
    \end{equation*}
    Recalling the notation introduced in \Cref{sec:notation-technical-tools}, the above can be written as
    \ifnum\focs=1
    \begin{equation}
        \label{eq:def-correlation}
        \Ex_{\bX}\bracket*{F(\bX) \cdot G_{\bS}(\bX)} = \sum_{i \in [k], x \in \bitsm} \cD_i(x) \mu_i(x) g_{i, \bS}(x)
    \end{equation}
    where $bX \sim \mcD$.
    \else
    \begin{equation}
        \label{eq:def-correlation}
        \Ex_{\bX \sim \mcD}\bracket*{F(\bX) \cdot G_{\bS}(\bX)} = \sum_{i \in [k], x \in \bitsm} \cD_i(x) \mu_i(x) g_{i, \bS}(x).
    \end{equation}
    \fi
    The goal of this proof is to show that the expected value of $\Ex_{\bX \sim \mcD}\bracket*{F(\bX) \cdot G_{\bS}(\bX)}$ over the randomness of the sample $\bS$ is large. We start by noting that $\Ex_{\bS}[g_{i,\bS}(x)] = q_{i,\bS}(x) \cdot \mu_i(x)$ since $g_{i,\bS}(x)$ will be overwritten by $F(X)$ where $X$ is the last point in the sample whose $i^{th}$ coordinate is $x$. Therefore,
    \ifnum\focs=1
    \begin{align*}
        &\Ex_{\bS}\bracket*{\Ex_{\bX \sim \mcD}\bracket*{F(\bX) \cdot G_{\bS}(\bX)}} \\
        &= \sum_{i \in [k], x \in \bitsm} \cD_i(x) \mu_i(x) \Ex_{\bS}[g_{i,\bS}(x)] \\
        &= \sum_{i \in [k], x \in \bitsm} \cD_i(x) q_{i,\bS}(x) \mu_i(x)^2.
    \end{align*}
    \else
    \begin{equation*}
        \Ex_{\bS}\bracket*{\Ex_{\bX \sim \mcD}\bracket*{F(\bX) \cdot G_{\bS}(\bX)}} = \sum_{i \in [k], x \in \bitsm} \cD_i(x) \mu_i(x) \Ex_{\bS}[g_{i,\bS}(x)] = \sum_{i \in [k], x \in \bitsm} \cD_i(x) q_{i,\bS}(x) \mu_i(x)^2.
    \end{equation*}
    \fi
    \Cref{fact:bin-large} gives that for all $S$, $q_{i,S}(x) \geq \frac{m \cD_i(x)}{1 + m\cD_i(x)}$. Since $\mcD$ is $\kappa$-smooth, we know that $\cD_i(x) \leq \kappa/m$. Therefore, $q_{i,S}(x) \geq \frac{m \cD_i(x)}{1 + \kappa}$, and so
    \ifnum\focs=1
    \begin{align*}
        \Ex_{\bS}&\bracket*{\Ex_{\bX \sim \mcD}\bracket*{F(\bX) \cdot G_{\bS}(\bX)}} \\ & \geq \frac{m}{1+\kappa} \cdot \sum_{i \in [k], x \in \bitsm} \cD_i(x)^2 \mu_i(x)^2.
    \end{align*}
    \else
    \begin{equation*}
        \Ex_{\bS}\bracket*{\Ex_{\bX \sim \mcD}\bracket*{F(\bX) \cdot G_{\bS}(\bX)}} \geq \frac{m}{1+\kappa} \cdot \sum_{i \in [k], x \in \bitsm} \cD_i(x)^2 \mu_i(x)^2.
    \end{equation*}
    \fi
    Since $f_i(x)^2 = 1$, we are free to add it as a term to the above equation, giving
    \ifnum\focs=1
    \begin{align*}
        &\Ex_{\bS}\bracket*{\Ex_{\bX \sim \mcD}\bracket*{F(\bX) \cdot G_{\bS}(\bX)}} \\ &\geq \frac{m}{1+\kappa}\sum_{i \in [k]}\sum_{x \in \bitsm}(f_i(x)\cD_i(x)\mu_i(x))^2 \\
        &\geq \frac{1}{1+\kappa}\sum_{i \in [k]} \paren*{\sum_{x \in \bitsm}f_i(x)\cD_i(x)\mu_i(x)}^2 \tag{Jensen's inequality}\\
        &=\frac{1}{1+\kappa}\sum_{i \in [k]}  \Ex_{\bX \sim \mcD}\bracket*{f_i(\bX^{(i)})F(\bX)}^2 \tag{$\mu_i(x) \coloneqq \Ex_{\bX \sim \mcD}[F(\bX) \mid \bX^{(i)} = x]$}\\
        &\geq \frac{1}{1+\kappa} \cdot \frac{1}{k} \cdot \paren*{\sum_{i \in [k]}  \Ex_{\bX \sim \mcD}\bracket*{f_i(\bX^{(i)})F(\bX)}}^2\tag{Jensen's inequality} \\
        &\geq \frac{1}{1+\kappa} \cdot \frac{1}{k} \cdot \Omega\paren*{\frac{\sqrt{k}}{\kappa}}^2 = \Omega(1/\kappa^3). \tag{\Cref{cor:many-f-correlated}}
    \end{align*}
    \else
    \begin{align*}
        \Ex_{\bS}\bracket*{\Ex_{\bX \sim \mcD}\bracket*{F(\bX) \cdot G_{\bS}(\bX)}} &\geq \frac{m}{1+\kappa}\sum_{i \in [k]}\sum_{x \in \bitsm}(f_i(x)\cD_i(x)\mu_i(x))^2 \\
        &\geq \frac{1}{1+\kappa}\sum_{i \in [k]} \paren*{\sum_{x \in \bitsm}f_i(x)\cD_i(x)\mu_i(x)}^2 \tag{Jensen's inequality}\\
        &=\frac{1}{1+\kappa}\sum_{i \in [k]}  \Ex_{\bX \sim \mcD}\bracket*{f_i(\bX^{(i)})F(\bX)}^2 \tag{$\mu_i(x) \coloneqq \Ex_{\bX \sim \mcD}[F(\bX) \mid \bX^{(i)} = x]$}\\
        &\geq \frac{1}{1+\kappa} \cdot \frac{1}{k} \cdot \paren*{\sum_{i \in [k]}  \Ex_{\bX \sim \mcD}\bracket*{f_i(\bX^{(i)})F(\bX)}}^2\tag{Jensen's inequality} \\
        &\geq \frac{1}{1+\kappa} \cdot \frac{1}{k} \cdot \Omega\paren*{\frac{\sqrt{k}}{\kappa}}^2 = \Omega(1/\kappa^3). \tag{\Cref{cor:many-f-correlated}}
    \end{align*}
    \fi
    This completes the proof.
\end{proof}

    We have thus proved that, in expectation over the random draw of the sample $\bS$, $G_{\bS}$ has constant correlation with $F$. We now want to prove \Cref{lem:weak-correlation} by showing that this happens with high probability over the draw of the sample. To do this, we show that $\Ex_{\bX \sim \mcD}\bracket*{F(\bX) \cdot G_{\bS}(\bX)}$ concentrates around its mean using the bounded differences inequality. 
    
    \begin{proof}
    
    Consider any samples $S, S'$ differing in one data point. Then $g_i(S)$ and $g_i(S')$ can differ on at most $2$ inputs (corresponding to the $X^{(i)}$ and ${X^{(i)}}'$ of the differing point). By \Cref{eq:def-correlation},
    \ifnum\focs=1
    \begin{align*}
        &\Ex_{\bX \sim \mcD}\bracket*{F(\bX) \cdot G_{S}(\bX)} - \Ex_{\bX \sim \mcD}\bracket*{F(\bX) \cdot G_{S'}(\bX)} \\ &= \sum_{i \in [k], x \in \bitsm}\cD_i(x) \mu_i(x)\cdot (g_{i,S}(x) - g_{i,S'}(x))\\
        &\leq   \sum_{i \in [k], x \in \bitsm}\frac{2\kappa}{m}\cdot \Ind[g_{i,S}(x) \neq g_{i,S'}(x)] \tag{$\cD_i(x) \leq \kappa/m$, $\mu_i(x) \leq 1$} \\
        &\leq \frac{4k\kappa}{m}. \tag{At most $2k$ points differ.}
    \end{align*}
    Therefore, by \Cref{fact:bounded-diff},
    \begin{align*}
        \Prx_{\bS} &\left [\Ex_{\bX}\bracket*{F(\bX) \cdot G_{\bS}(\bX)} \leq \Ex_{\bS}\bracket*{\Ex_{\bX}\bracket*{F(\bX) \cdot G_{\bS}(\bX)}} - \eps \right ] \\
        &\leq  \exp\paren*{-\frac{\eps^2m}{8k^2 \kappa^2 }}.
    \end{align*}
    Setting $\eps = O(1/\kappa^3)$ using the earlier bound that $\Ex_{\bS}\bracket*{\Ex_{\bX \sim \mcD}\bracket*{F(\bX) \cdot G_{\bS}(\bX)}} \geq \Omega(1/\kappa^3)$, we have that
    \begin{align*}
        \Prx_{\bS}&\bracket*{\Ex_{\bX \sim \mcD}\bracket*{F(\bX) \cdot G_{\bS}(\bX)} \leq \Omega(1/\kappa^3)} \\ &\leq \exp\paren*{-\Omega\paren*{\frac{m}{k^2 \kappa^8}}}.\qedhere
    \end{align*}
    \else
    \begin{align*}
        \Ex_{\bX \sim \mcD}\bracket*{F(\bX) \cdot G_{S}(\bX)} - \Ex_{\bX \sim \mcD}\bracket*{F(\bX) \cdot G_{S'}(\bX)} &= \sum_{i \in [k], x \in \bitsm}\cD_i(x) \mu_i(x)\cdot (g_{i,S}(x) - g_{i,S'}(x))\\
        &\leq   \sum_{i \in [k], x \in \bitsm}\frac{2\kappa}{m}\cdot \Ind[g_{i,S}(x) \neq g_{i,S'}(x)] \tag{$\cD_i(x) \leq \kappa/m$, $\mu_i(x) \leq 1$} \\
        &\leq \frac{4k\kappa}{m}. \tag{At most $2k$ points differ.}
    \end{align*}
    Therefore, by \Cref{fact:bounded-diff},
    \begin{equation*}
        \Prx_{\bS} \left [\Ex_{\bX \sim \mcD}\bracket*{F(\bX) \cdot G_{\bS}(\bX)} \leq \Ex_{\bS}\bracket*{\Ex_{\bX \sim \mcD}\bracket*{F(\bX) \cdot G_{\bS}(\bX)}} - \eps \right ] \leq  \exp\paren*{-\frac{\eps^2m}{8k^2 \kappa^2 }}.
    \end{equation*}
    Setting $\eps = O(1/\kappa^3)$ using the earlier bound that $\Ex_{\bS}\bracket*{\Ex_{\bX \sim \mcD}\bracket*{F(\bX) \cdot G_{\bS}(\bX)}} \geq \Omega(1/\kappa^3)$, we have that
    \begin{equation*}
         \Prx_{\bS}\bracket*{\Ex_{\bX \sim \mcD}\bracket*{F(\bX) \cdot G_{\bS}(\bX)} \leq \Omega(1/\kappa^3)} \leq \exp\paren*{-\Omega\paren*{\frac{m}{k^2 \kappa^8}}}. \qedhere
    \end{equation*}
    \fi
\end{proof}

\subsection{$G$ concentrates if $F$ has low bias}
Given the result given in \Cref{lem:weak-correlation}, it could seem that we are essentially done since we know that, with high probability over $\bS$, $G_{\bS}$ achieves constant correlation with $F$. However, $G_{\bS}$ is not a $\bits$ classifier (in particular, $G_{\bS}$ takes values in $\{-k, \dots, k\}$.) We could easily turn $G_{\bS}$ into a classifier by returning $\sign(G_{\bS})$ instead of $G_{\bS}$ but this could cause the following subtle issue. Imagine the scenario where $F$ has bias $\E_{\bX \sim \cD}[F(\bX)] = \frac{1}{k}$ and $G_S(X) = k$ with probability $1$. In this case, the property from \Cref{lem:weak-correlation} is respected since $F$ and $G_{S}$ have constant correlation. However, by turning $G_{S}$ into a classifier, the correlation between $F$ and $\sign(G)$ goes down to $1/k$, whereas our goal is to prove a correlation of $\Omega(1/\sqrt{k})$. The issue here is that the correlation between $F$ and $G_{S}$ comes mostly from the magnitude of $G$, not from $G_{S}$ being a good predictor for $F$. Thankfully, it turns out that we can show that if $F$ has low bias then, with high probability over $\bS$, $G_{\bS}$ will only take on values with small magnitude. Hence we can prove that the bad scenario described above is very unlikely.

\begin{lemma}[$G$ concentrates]
\label{lem:g-concentrates}
    Let $\uint = \uintval$ as defined in \Cref{fig:weak-learner}. If $\abs{\Ex_{\bX \sim \mcD}[F(\bX)]} \leq 1/\sqrt{k}$, then, with probability at least $1 - \exp \paren*{-\Omega(\frac{m}{k})}$ over the draw of the random sample $\bS$,
    \begin{equation*}
        \Prx_{\bX \sim \cD}\bracket*{\abs*{G_{\bS}(\bX)} \leq \uint} \geq 1 - \frac{1}{k^2}.
    \end{equation*}
\end{lemma}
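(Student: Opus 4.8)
The plan is to follow the same two-phase template used for \Cref{lem:weak-correlation}: first bound the failure probability in expectation over the training set $\bS$, and then upgrade this to a high-probability statement over $\bS$ using the bounded-differences inequality (\Cref{fact:bounded-diff}) applied to the $m$ i.i.d.\ sample points --- it is this second phase that produces the $\exp(-\Omega(m/k))$ factor.

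First I would reduce the problem to controlling a ``drift'' term. Fix $\bS$, so that $G_{\bS}=\sum_{i\in[k]}g_{i,\bS}$ is a fixed function in which each summand $g_{i,\bS}(\bX^{(i)})$ depends only on the $i$-th block of $\bX$ and takes values in $\{-1,0,1\}$. Since $\mcD$ is $\kappa$-smooth, for every fixed $c$ and $t$ we have $\Prx_{\bX\sim\mcD}\bracket*{|G_{\bS}(\bX)-c|>t}\le \kappa\cdot\Prx_{\bX\sim\Unif}\bracket*{|G_{\bS}(\bX)-c|>t}$, and under the uniform distribution on $\bitsmk$ the blocks $\bX^{(1)},\dots,\bX^{(k)}$ are independent. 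Hence, taking $c\coloneqq\mu_{\bS}\coloneqq\Ex_{\bX\sim\Unif}[G_{\bS}(\bX)]$, the variable $G_{\bS}(\bX)$ is a sum of $k$ independent random variables bounded in $[-1,1]$, so Hoeffding's inequality (\Cref{fact:hoeffding}) gives $\Prx_{\bX\sim\Unif}\bracket*{|G_{\bS}(\bX)-\mu_{\bS}|>t}\le 2e^{-t^2/2k}$ and therefore $\Prx_{\bX\sim\mcD}\bracket*{|G_{\bS}(\bX)-\mu_{\bS}|>t}\le 1/k^2$ once $t=\Theta(\sqrt{k\log(\kappa k)})$. Consequently any $\bS$ with $|\mu_{\bS}|\le\uint-t$ satisfies $\Prx_{\bX\sim\mcD}\bracket*{|G_{\bS}(\bX)|>\uint}\le 1/k^2$; since $\uint=\uintval$ and $t=\Theta(\sqrt{k\log(\kappa k)})$, it therefore suffices to show $|\mu_{\bS}|=O(\sqrt{k\log k\,\kappa})$ with probability $1-\exp(-\Omega(m/k))$ over $\bS$ (with a large enough hidden constant in $\uint$).

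The hard part is controlling $\mu_{\bS}$, and this is where the hypothesis $|\Ex_{\bX\sim\mcD}[F(\bX)]|\le 1/\sqrt k$ is essential --- it rules out exactly the scenario flagged before the lemma in which $G_{\bS}$ is large but nearly constant. I would first compute the expectation: since $g_{i,\bS}(x)$ equals the $F$-label of the last point of $\bS$ whose $i$-th block is $x$ (and is $0$ when no such point exists), $\Ex_{\bS}[g_{i,\bS}(x)]=q_{i,\bS}(x)\mu_i(x)$, and summing over $x$ and $i$ gives $\Ex_{\bS}[\mu_{\bS}]=\tfrac1{2^n}\sum_{i\in[k]}\Ex_{\bX\sim\mcD}\bracket*{F(\bX)\cdot\tfrac{q_{i,\bS}(\bX^{(i)})}{\cD_i(\bX^{(i)})}}$, where by $\kappa$-smoothness the ratio $\tfrac{q_{i,\bS}(x)}{\cD_i(x)}=\tfrac{1-(1-\cD_i(x))^m}{\cD_i(x)}$ is a decreasing function of $\cD_i(x)$ taking values in $\big[\tfrac{(1-e^{-\kappa})}{\kappa}\,m,\;m\big]$. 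On the ``light'' blocks of $\bX$ --- those with $\cD_i(\bX^{(i)})$ small, where the ratio is close to $m$ --- the contribution is, up to an $O(\sqrt k)$ error, proportional to the bias $\Ex_{\bX\sim\mcD}[F(\bX)]$ and hence small; on the ``heavy'' blocks, where the ratio is bounded away from $m$, the point is that $\kappa$-smoothness combined with the product structure of $\bitsmk$ forces $\mcD$ to place almost no mass on inputs with many heavy blocks, since the set of inputs whose heavy blocks all lie in the corresponding (small) heavy sets has size, and hence $\mcD$-mass, decaying geometrically in the number of heavy blocks. I expect this to be the technical heart of the proof: a naive termwise estimate over the $k$ blocks only gives $O(k)$, and one genuinely needs the tensor/volume structure together with smoothness to bring $|\Ex_{\bS}[\mu_{\bS}]|$ down to $O(\sqrt{k\log k\,\kappa})$.

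Finally, to pass from this expectation to a high-probability bound over $\bS$, I would apply \Cref{fact:bounded-diff} to the map $\bS\mapsto\mu_{\bS}=\tfrac1{2^n}\sum_i\sum_x g_{i,\bS}(x)$: swapping one of the $m$ sample points changes each $g_{i,\bS}$ on at most two inputs and hence changes $\mu_{\bS}$ by at most $O(k/2^n)$, which is $O(k/m)$ since $m=O(2^n)$. The bounded-differences inequality then gives $\Prx_{\bS}\bracket*{|\mu_{\bS}-\Ex_{\bS}[\mu_{\bS}]|>\Theta(\sqrt k)}\le\exp(-\Omega(m/k))$, using $n\ge\Omega_\kappa(\log k)$ so that $m/k$ is large. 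Combined with the bound on $\Ex_{\bS}[\mu_{\bS}]$ and the reduction above, this shows $|\mu_{\bS}|\le\uint-t$ except with probability $\exp(-\Omega(m/k))$ over $\bS$, which yields $\Prx_{\bX\sim\mcD}\bracket*{|G_{\bS}(\bX)|\le\uint}\ge 1-1/k^2$ with the stated probability.
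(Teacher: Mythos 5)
Your overall template is the paper's: for a fixed $S$, use $\kappa$-smoothness to transfer a Hoeffding bound for $G_S(\bX)$ under $\cU$ to one under $\cD$ (the blocks $\bX^{(i)}$ are only independent under $\cU$); then bound the centering term $\mu_{\bS}$ by first bounding its mean over $\bS$ and then applying the bounded-differences inequality to the $m$ sample points. The centering choice ($\cU$-mean vs.\ $\cD$-mean) and the final bookkeeping differ only cosmetically from \Cref{claim:g-concentrates-in-expectation}. So there is no dispute about the scaffolding.

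The gap is exactly where you flag one: the bound $|\Ex_{\bS}[\mu_{\bS}]| = O(\sqrt{k\log k\,\kappa})$ is the heart of the matter, and what you propose for it does not obviously close. Your ``light/heavy blocks'' sketch rests on the heavy sets $H_i \coloneqq \{x : \cD_i(x) \text{ large}\}$ being individually small, so that inputs $X$ with many heavy blocks have exponentially small mass. But $\kappa$-smoothness of $\cD$ (a global bound on $\cD(X)$) only gives $\cD_i(x) \le \kappa/m$; it does not make each $H_i$ small. A single marginal $\cD_i$ could put weight $2/m$ on half of $\bitsm$, making $|H_i| = 2^n/2$, and then ``the set of inputs all of whose heavy blocks lie in their heavy sets'' has uniform density $\approx \binom{k}{j}2^{-k}$, which is a binomial tail, not geometrically decaying. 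The naive termwise bound you note ($|q_i(x)/\cD_i(x) - m| \lesssim m^2\cD_i(x)$, summed over $i$) gives only $O(k\kappa)$, and nothing in your outline brings this down to $O(\sqrt{k}\cdot\mathrm{poly}(\kappa))$.

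What the paper actually does (\Cref{claim:p-concentrate}) is a Chebyshev-style argument that controls the block-marginals \emph{on average over $i$}: defining $\ell(\bX) = \sum_i \Ind[\bX^{(i)}\in B_i]$, under $\cU$ this is a sum of independent Bernoullis with mean $\mu$ and variance $\le\mu$, and $\kappa$-smoothness gives $\Ex_{\bX\sim\cD}[\ell(\bX)] \le \mu + \sqrt{\kappa\mu}$, which, compared against the definitional lower bound $\Ex_{\cD}[\ell] \ge (1+v)\mu$, forces $\mu \le \kappa/v^2$. That yields $\Prx_{\bi,\bx}[\cD_{\bi}(\bx)\notin[\frac{1-v}{m},\frac{1+v}{m}]] \le \frac{2\kappa}{v^2 k}$, which feeds into an integral bound on $\sum_i\Ex_{\cD}|m\cD_i(\bX^{(i)})-1|$ and gives the needed $O(\sqrt{k}\kappa)$. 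You would need some lemma of this strength; as written your proposal asserts rather than proves it, and the specific mechanism you gesture at does not substitute for it.
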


In order to prove \Cref{lem:g-concentrates}, we'll use the following:
\begin{claim}
    \label{claim:p-concentrate}
    For any $\kappa$-smooth distribution $\mcD$ on $\bitsmklog$, $i \in [k]$, $x \in \bitsmlog$, and for all $v \geq 0$,
    \begin{equation*} 
        \Prx_{\bi \in [k], \bx \sim \bitsmlog}\bracket*{\cD_{\bi}(\bx) \notin \bracket*{ \lfrac{1-v}{m}, \lfrac{1+v}{m}}} \leq \frac{2\kappa}{v^2k}.
    \end{equation*}
\end{claim}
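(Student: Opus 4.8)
The plan is to apply Chebyshev's inequality to the random variable $\cD_{\bi}(\bx)$, where $\bi\sim[k]$ and $\bx\sim\bitsmlog$ are independent and uniform. Since each $\cD_i$ is a probability distribution on the $m$-element set $\bitsmlog$, we have $\Ex_{\bx}[\cD_i(\bx)]=1/m$ for every $i$; in particular the conditional mean $\Ex_{\bx}[\cD_{\bi}(\bx)\mid \bi]$ is the constant $1/m$, so by the law of total variance
\begin{equation*}
    \Varx_{\bi,\bx}[\cD_{\bi}(\bx)] \;=\; \Ex_{\bi}\bracket*{\Varx_{\bx}[\cD_{\bi}(\bx)]} \;=\; \frac{1}{k}\sum_{i\in[k]}\Varx_{\bx}[\cD_i(\bx)].
\end{equation*}
Thus the claim reduces to the $L^2$ estimate $\sum_{i\in[k]}\Varx_{\bx}[\cD_i(\bx)]\le (\kappa-1)/m^2$: given this, Chebyshev's inequality yields, for every $v>0$,
\begin{equation*}
    \Prx_{\bi,\bx}\bracket*{\cD_{\bi}(\bx)\notin\bracket*{\tfrac{1-v}{m},\tfrac{1+v}{m}}} = \Prx_{\bi,\bx}\bracket*{\abs*{\cD_{\bi}(\bx)-\tfrac1m}>\tfrac{v}{m}} \le \frac{(\kappa-1)/(km^2)}{v^2/m^2} = \frac{\kappa-1}{v^2 k} \le \frac{2\kappa}{v^2 k},
\end{equation*}
and the $v=0$ case of the statement is vacuous.

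To prove the $L^2$ estimate I would pass to the Fourier expansion of $\mcD$ on $\bitsmklog$, writing $\mcD(X)=\sum_{S\subseteq[k\log m]}\widehat{\mcD}(S)\chi_S(X)$ with the uniform-measure normalization. Smoothness bounds the total Fourier weight: $\sum_S\widehat{\mcD}(S)^2=\Ex_{\bX}[\mcD(\bX)^2]\le \big(\max_{X}\mcD(X)\big)\cdot\Ex_{\bX}[\mcD(\bX)]\le \tfrac{\kappa}{m^k}\cdot\tfrac1{m^k}$, and since $\widehat{\mcD}(\emptyset)=m^{-k}$ this leaves $\sum_{S\neq\emptyset}\widehat{\mcD}(S)^2\le(\kappa-1)/m^{2k}$. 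Next, marginalizing onto block $i$ annihilates every character except those supported within that block: a short computation gives $\cD_i(x)=m^{k-1}\sum_{S\subseteq B_i}\widehat{\mcD}(S)\chi_S(x)$, where $B_i\subseteq[k\log m]$ indexes the coordinates of block $i$, so by Parseval on $\bitsmlog$ we obtain $\Varx_{\bx}[\cD_i(\bx)]=\tfrac1m\bigl(\sum_x\cD_i(x)^2-\tfrac1m\bigr)=m^{2k-2}\sum_{\emptyset\neq S\subseteq B_i}\widehat{\mcD}(S)^2$. Finally, summing over $i$ and using that the blocks $B_1,\dots,B_k$ are pairwise disjoint — so each nonempty $S$ is contained in at most one $B_i$ — bounds $\sum_{i}\Varx_{\bx}[\cD_i(\bx)]$ by $m^{2k-2}\sum_{S\neq\emptyset}\widehat{\mcD}(S)^2\le (\kappa-1)/m^2$, which is exactly what is needed.

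The crux, and the one step that needs care, is this last summation over blocks. The ``naive'' per-block bound $\sum_x\cD_i(x)^2\le m^{k-1}\sum_X\mcD(X)^2\le \kappa/m$ (Cauchy--Schwarz over the $m^{k-1}$ preimages of $x$) is essentially tight already for a single $i$, and summing it over all $k$ blocks would only give $\sum_{i}\Varx_{\bx}[\cD_i(\bx)]\le k(\kappa-1)/m^2$ — losing precisely the factor of $k$ that the statement demands. Exploiting the mutual orthogonality of the characters $\chi_S$ as $S$ ranges over subsets of distinct blocks is what recovers that factor; everything else is bookkeeping with normalizations together with the single invocation of Chebyshev above.
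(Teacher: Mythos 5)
Your proof is correct, and it takes a genuinely different route from the paper's. You reduce the claim to the Fourier-analytic $L^2$ estimate $\sum_{i}\Varx_{\bx}[\cD_i(\bx)]\le(\kappa-1)/m^2$, pass to the character expansion of $\mcD$ on $\bits^{k\log m}$, observe that marginalizing onto block $i$ kills every character not supported inside block $i$ (so $\cD_i(x)=m^{k-1}\sum_{S\subseteq B_i}\widehat{\mcD}(S)\chi_S(x)$), and then exploit the disjointness of the blocks so that each nonempty $S$ contributes to at most one $\Varx_{\bx}[\cD_i(\bx)]$; a single application of Chebyshev then finishes, and in fact gives the slightly tighter bound $(\kappa-1)/(v^2k)$ rather than $2\kappa/(v^2k)$. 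The paper instead works with indicator sums directly: for each tail it defines the bad set $B_i=\{x:\cD_i(x)\ge(1+v)/m\}$, sets $\mu=\sum_i|B_i|/m$, and compares $\Ex_{\bX\sim\mcD}[\ell(\bX)]$ for $\ell(X)=\sum_i\Ind[X^{(i)}\in B_i]$ against the uniform-distribution variance of $\ell$; smoothness gives $\Ex_{\mcD}[\ell]\le\mu+\sqrt{\kappa\mu}$, while the definition of $B_i$ forces $\Ex_{\mcD}[\ell]\ge\mu(1+v)$, yielding $\mu\le\kappa/v^2$, and the two tails are added. Both are second-moment arguments, but yours reduces cleanly to one Parseval computation and one Chebyshev step, while the paper's is more elementary (no Fourier machinery) at the cost of handling the two tails separately and a mildly more delicate inequality-chasing step. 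The one place your argument truly needs care — recovering the factor of $k$ that a naive per-block Cauchy–Schwarz would lose — you correctly identify and resolve via the orthogonality of characters across disjoint blocks.
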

\begin{proof}
    We begin by showing $\Prx_{\bi \in [k], \bx \sim \bitsmlog}\bracket*{\cD_{\bi}(\bx) \geq \lfrac{1+v}{m}} \leq \frac{\kappa}{v^2k}$. For each $i \in [k]$, let
    \begin{equation*}
        B_i \coloneqq \set{x \in \bitsmlog \mid \cD_i(x) \geq \lfrac{1+v}{m}}.
    \end{equation*}
    The $B_i$'s are the ``bad'' sets of values for each $i$. We want to show that they are relatively small.

    Define $r_i \coloneqq \frac{|B_i|}{m}$ and $\mu \coloneqq r_1 + \cdots + r_k$. Clearly,
    \begin{equation*}
         \Prx_{\bi \in [k], \bx \sim \bitsmlog}\bracket*{\cD_{\bi}(\bx) \geq \lfrac{1+v}{m}} = \frac{\mu}{k}.
    \end{equation*}
    For any $X \in \bitsmklog$, define,
    \begin{equation*}
        \ell(X) \coloneqq \sum_{i \in [k]} \Ind[X^{(i)} \in B_i].
    \end{equation*}
    $\ell(X)$ counts the number of bad sets $x$ is contained in.
    
    Then, by linearity of expectation,
    \begin{equation}
        \label{eq:sum-of-p}
        \Ex_{\bX \sim \mcD}[\ell(\bX)] = \sum_{i \in [k], x \in B_i} \cD_i(x).
    \end{equation}
    We also know that, for $\bX \sim \bitsmklog$, $\ell(\bX)$ is the sum of $k$ independent Bernoullis with means $r_1, \ldots, r_k$ respectively. Therefore, it has mean $\mu$ and variance at most $\mu$. This gives that,
    \begin{align*}
        \Ex_{\bX \sim \mcD}\bracket*{\abs*{\ell(\bX) - \mu}} &\leq \sqrt{\Ex_{\bX \sim \mcD}\bracket*{\paren*{\ell(\bX) - \mu}^2} } \tag{Jensen's inequality} \\
        &\leq \sqrt{\kappa\cdot \Ex_{\bX \sim \bitsmklog}\bracket*{\paren*{\ell(\bX) - \mu}^2} }\tag{$\mcD$ is $\kappa$-smooth} \\
        &\leq \sqrt{\kappa\mu}.
    \end{align*}
    which means that $\Ex_{\bX \sim \mcD}[\ell(\bX)] \leq \mu + \sqrt{\kappa\mu}$. Combining with \Cref{eq:sum-of-p},
    \begin{equation*}
         \sum_{i \in [k], x \in B_i} \cD_i(x)  \leq \mu + \sqrt{\kappa\mu}.
    \end{equation*}
    However, we also know that every $x \in B_i$ satisfies $\cD_i(x) \geq \frac{1+v}{m}$, giving that
    \begin{equation*}
         \sum_{i \in [k], x \in B_i} \cD_i(x) \geq m \mu \cdot \left (\frac{1+v}{m} \right ) = \mu + \mu v.
    \end{equation*}
    Combining the above,
    \begin{equation*}
        \mu + \mu v\leq \mu + \sqrt{\kappa \mu}.
    \end{equation*}
    Solving the above equation, we have
    \begin{equation*}
        \mu \leq \frac{\kappa}{v^2}
    \end{equation*}
    which gives the first desired statement. For the second, we instead define $B_i$ as the set of points for which $\cD_i(x) \leq \frac{1-v}{m}$ and the rest of the proof is identical.
\end{proof}
Toward proving \Cref{lem:g-concentrates}, we start by showing that the expected value of $G_{\bS}$ concentrates with high probability over the draw of the random sample $\bS$.
\begin{claim}[The expected value of $G_{\bS}$ concentrates]
\label{claim:g-concentrates-in-expectation}
    If $\abs{\Ex_{\bX \sim \mcD}[F(\bX)]} \leq 1/\sqrt{k}$, then, with probability at least $1 - \exp \paren*{-\Omega(\frac{m}{k})}$ over the draw of the random sample $\bS$,
    \begin{equation*}
        \abs*{\Ex_{\bX \sim \mcD}[G_{\bS}(\bX)]} \leq O(\sqrt{k}\kappa).
    \end{equation*}
\end{claim}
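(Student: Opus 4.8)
The plan is to control $\Psi(\bS)\coloneqq\Ex_{\bX\sim\mcD}[G_\bS(\bX)]$ in two steps: first bound its expectation $\Ex_\bS[\Psi(\bS)]$ over the random sample, then show $\Psi(\bS)$ concentrates around that expectation. Writing $\uint=\uintval$, the concentration step will give a deviation of $O(\sqrt k\,\kappa)$, so it suffices to prove $|\Ex_\bS[\Psi(\bS)]|\le O(\sqrt k\,\kappa)$ and combine the two bounds by a union bound.

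For the concentration step I would view $\Psi$ as a function of the $m$ i.i.d.\ training points. Exactly as in the proof of \Cref{claim:weak-correlation-expectation}, replacing one training point changes each $g_{i,\bS}$ on at most two inputs of $\bitsm$, each of which carries weight $\cD_i(x)\le\kappa/m$ in $\Psi(\bS)=\sum_{i\in[k],x\in\bitsm}\cD_i(x)g_{i,\bS}(x)$; since $g_{i,\bS}$ is $\{-1,0,1\}$-valued, $\Psi$ changes by at most $c\coloneqq 4k\kappa/m$. Applying the bounded differences inequality (\Cref{fact:bounded-diff}) in both directions over the $m$ sample points gives $\Prx_\bS[\,|\Psi(\bS)-\Ex_\bS[\Psi(\bS)]|\ge\eps\,]\le 2\exp(-\eps^2 m/(8k^2\kappa^2))$, and taking $\eps=\Theta(\sqrt k\,\kappa)$ makes this $\exp(-\Omega(m/k))$, as required.

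The substance is the bound on $\Ex_\bS[\Psi(\bS)]$. As computed in \Cref{claim:weak-correlation-expectation}, $\Ex_\bS[g_{i,\bS}(x)]=q_{i,\bS}(x)\mu_i(x)$, so $\Ex_\bS[\Psi(\bS)]=\sum_{i\in[k],x\in\bitsm}\cD_i(x)\,q_{i,\bS}(x)\,\mu_i(x)$ with $q_{i,\bS}(x)=1-(1-\cD_i(x))^m$. I would split $q_{i,\bS}(x)=\bar q+(q_{i,\bS}(x)-\bar q)$ around the canonical collision probability $\bar q\coloneqq 1-(1-1/m)^m=\Theta(1)$. The $\bar q$-part equals $\bar q\sum_i\sum_x\cD_i(x)\mu_i(x)=\bar q\sum_i\Ex_{\bX\sim\mcD}[F(\bX)]=\bar q\,k\,\Ex_{\bX\sim\mcD}[F(\bX)]$, which has absolute value at most $k\cdot(1/\sqrt k)=\sqrt k$ by the low-bias hypothesis. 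For the remainder $\sum_{i,x}\cD_i(x)(q_{i,\bS}(x)-\bar q)\mu_i(x)$, I would split the domain into inputs of \emph{typical} marginal density $\cD_i(x)\in[\tfrac{1-v}{m},\tfrac{1+v}{m}]$ and \emph{atypical} ones. On typical inputs $|q_{i,\bS}(x)-\bar q|=O(v)$, since $p\mapsto 1-(1-p)^m$ is $O(m)$-Lipschitz on a constant-factor neighborhood of $1/m$, contributing $O(v)\sum_{i,x}\cD_i(x)|\mu_i(x)|$. On atypical inputs I would use $|q_{i,\bS}(x)-\bar q|\le 1$ together with the fact — established en route in the proof of \Cref{claim:p-concentrate} — that the total $\cD$-mass of atypical-density inputs, $\sum_i\cD_i(\{x:\cD_i(x)\notin[\tfrac{1-v}{m},\tfrac{1+v}{m}]\})$, is $O(\kappa/v^2)$; this portion thus contributes $O(\kappa/v^2)$. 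Choosing $v$ to be a small constant then yields $|\Ex_\bS[\Psi(\bS)]|=O(\sqrt k\,\kappa)$.

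I expect the main obstacle to be the typical-input term $O(v)\sum_{i,x}\cD_i(x)|\mu_i(x)|=O(v)\sum_i\Ex_{\bX\sim\mcD}[\,|\Ex_{\bX\sim\mcD}[F(\bX)\mid\bX^{(i)}]|\,]$, for which the trivial bound $|\mu_i(x)|\le 1$ only gives $O(vk)$ — too weak. The ingredient I would need is that $\sum_i\Ex_{\bX\sim\mcD}[\,|\Ex[F(\bX)\mid\bX^{(i)}]|\,]=O(\sqrt k\,\kappa)$: morally, that the majority gap $|\sum_i f_i(\bX^{(i)})|$ has typical size $O(\sqrt k)$ under the uniform distribution, hence $O(\kappa\sqrt k)$ under any $\kappa$-smooth $\mcD$ via $\Ex_\mcD[\cdot]\le\kappa\,\Ex_{\unif}[\cdot]$, combined with the monotonicity of $\Maj$ to relate $\mu_i$ to the probability that coordinate $i$ is pivotal. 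Pinning down this bound (and checking its interaction with the choice of $v$) is the delicate step; everything else is routine.
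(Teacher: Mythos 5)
Your decomposition around $\bar q = c = 1-(1-1/m)^m$ and your concentration step (bounded differences with $c = 4k\kappa/m$, then $\eps = \Theta(\sqrt k\,\kappa)$) match the paper exactly. The divergence — and the gap — is in how you bound the remainder $\sum_{i,x}\cD_i(x)(q_{i,\bS}(x)-c)\mu_i(x)$.

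You keep $\mu_i(x)$ in play and do a typical/atypical split on the value of $\cD_i(x)$ at a fixed scale $v$. With only the crude bound $|\mu_i(x)|\le 1$ this gives $O(vk)+O(\kappa/v^2)$, which optimizes to $O(k^{2/3}\kappa^{1/3})$ — too weak. You correctly recognize this and flag the missing ingredient: $\sum_i \Ex_{\bX\sim\mcD}[\,|\mu_i(\bX^{(i)})|\,] = O(\sqrt k\,\kappa)$. But that claim is genuinely nontrivial: $\mu_i(x)=\Ex_{\bX\sim\mcD}[F(\bX)\mid\bX^{(i)}=x]$ is a \emph{$\mcD$-conditional} expectation, not a uniform one, so the ``majority gap is $O(\sqrt k)$'' heuristic (which lives in the uniform world) does not port over by simply paying a factor $\kappa$. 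The low-bias hypothesis controls $\sum_i\Ex_\mcD[\mu_i(\bX^{(i)})] = k\,\Ex_\mcD[F]$ but says nothing about cancellation inside $\sum_i\Ex_\mcD[|\mu_i(\bX^{(i)})|]$, and a Cauchy--Schwarz/variance bound only gives $O(k)$, not $O(\sqrt k)$. So this is a real open step, not a routine one, and it would need an argument specific to the structure of $\Maj_k$ composed with a $\kappa$-smooth $\mcD$.

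The paper's proof avoids this entirely by applying $|F(\bX)|=1$ \emph{before} decomposing: it bounds
\[
\Big|\Ex_{\bX\sim\mcD}\Big[F(\bX)\textstyle\sum_i q_{i,\bS}(\bX^{(i)})\Big] - ck\,\Ex_\mcD[F(\bX)]\Big| \;\le\; \Ex_{\bX\sim\mcD}\Big[\textstyle\sum_i \big|q_{i,\bS}(\bX^{(i)}) - c\big|\Big],
\]
so that $\mu_i$ never appears. It then uses the Lipschitz bound $|q_{i,\bS}(x)-c|\le|m\cD_i(x)-1|$ and the layer-cake representation $\Ex[|Z|] = \int_0^\infty\Pr[|Z|\ge t]\,dt$, fed by \Cref{claim:p-concentrate}. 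That integral is the key: it effectively averages your typical/atypical split over all scales $v=t$ simultaneously, with the tail $\Pr[\ge t]\le 2\kappa/(t^2 k)$ for $t\ge 1/\sqrt k$ integrating to $O(\kappa/\sqrt k)$ per coordinate, hence $O(\sqrt k\,\kappa)$ total. You should replace your typical/atypical/$\mu_i$-bound route with this: pull out $|F|\le1$, use the Lipschitz bound, and integrate the tail — everything else in your write-up then goes through.
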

\begin{proof}
    Similarly to previous proofs, we will start by showing that the statement holds in expectation over the random sample $\bS$ then use the bounded differences inequality to show that it holds with high probability over the draw of $\bS$. Using $\cD_i, \mu_i,$ and $q_{i,S}$ as defined in \Cref{sec:notation-technical-tools},
    \begin{align*}
        \Ex_{\bS}\bracket*{\Ex_{\bX \sim \mcD}[G_{\bS}(\bX)]} = \sum_{i \in [k], x \in \bitsm} \cD_i(x) \mu_i(x) q_{i,\bS}(x). 
    \end{align*}
    \ifnum\focs=1
    Expanding the above notation,
    \begin{align*}
        &\Ex_{\bS}\bracket*{\Ex_{\bX \sim \mcD}[G_{\bS}(\bX)]} \\
        &= \sum_{\substack{i \in [k]\\ x \in \bitsm}} \Prx_{\bX}[\bX^{(i)} = x] \Ex_{\bX} [F(\bX) \mid \bX^{(i)} = x] q_{i,\bS}(x)\\
        &= \Ex_{\bX}\bracket*{F(\bX) \cdot \sum_{i \in [k]} q_{i,\bS}(\bX^{(i)})}
    \end{align*}
    where $\bx\sim \mcD$.
    The key insight is that $q_{i,\bS}(\bx)$ concentrates so the above is roughly proportional to $\Ex_{\bX \sim \mcD}[F(\bX)]$. Let $c \coloneqq 1 - (1 - 1/m)^m$. Then, since $F(\bX) \in \bits$,
    \begin{align*}
        &\abs*{\Ex_{\bX \sim \mcD}\bracket*{F(\bX) \cdot \sum_{i \in [k]} q_{i,\bS}(\bX^{(i)})} - ck \cdot \Ex_{\bX \sim \mcD}[F(\bX)]} \\
        &\leq \Ex_{\bX \sim \mcD}\bracket*{\sum_{i \in [k]}\abs*{q_{i,\bS}(\bX^{(i)}) - c}}.
    \end{align*}
    Recall that $q_{i,\bS}(x) = 1 - (1 - \cD_i(x))^m$. This function is $m$-Lipschitz as long as $t \geq 0$, so $|q_{i,\bS}(x_i) - c| \leq m|\cD_i(x_i) - 1/m| = |m\cD_i(x_i) - 1|$ , and therefore,
    \begin{align*}
        \Ex_{\bX \sim \mcD}&\bracket*{\sum_{i \in [k]}\abs*{q_{i,\bS}(\bX^{(i)}) - c}} \\
        &\leq \sum_{i \in [k]}  \Ex_{\bX \sim \mcD}\bracket*{\abs*{m\cD_i(\bX^{(i)}) - 1}} \\
        &= \sum_{i \in [k]}\int_{0}^\infty \Prx_{\bX \sim \mcD}[\abs{m\cD_i(\bX^{(i)}) - 1} \geq t]dt \\
        &= k \cdot \int_{0}^\infty\Prx_{\bX \sim \cD, \bi \in [k]}[\abs{m\cD_{\bi}(\bX^{(\bi)}) - 1} \geq t] dt \\
        &\leq k \cdot \paren*{1/\sqrt{k}+ \int_{1/\sqrt{k}}^\infty \frac{2\kappa}{t^2k}dt}\tag{\Cref{claim:p-concentrate}} \\
        &= k \cdot \paren*{1/\sqrt{k}+ \frac{2 \kappa}{\sqrt{k}}} = \sqrt{k} \cdot (2 \kappa+1).
    \end{align*}
    Combining the above results, we have thus shown that:
    \begin{align*}
        \Ex_{\bS}&\bracket*{\Ex_{\bX \sim \mcD}[G_{\bS}(\bX)]} \\&= \Ex_{\bX \sim \mcD}\bracket*{F(\bX) \cdot \sum_{i \in [k]} q_{i,\bS}(\bX^{(i)})}\\
        &\leq \Bigg|\Ex_{\bX \sim \mcD}\bracket*{F(\bX) \cdot \sum_{i \in [k]} q_{i,\bS}(\bX^{(i)})} \\
        &\qquad - ck \cdot \Ex_{\bX \sim \mcD}[F(\bX)] + ck \cdot \Ex_{\bX \sim \mcD}[F(\bX)]\Bigg| \\
        &\leq \Bigg|\Ex_{\bX \sim \mcD}\bracket*{F(\bX) \cdot \sum_{i \in [k]} q_{i,\bS}(\bX^{(i)})} \\
        &\qquad - ck \cdot \Ex_{\bX \sim \mcD}[F(\bX)]\Bigg| + ck \cdot \abs*{\Ex_{\bX \sim \mcD}[F(\bX)]} \tag{triangle inequality} \\
        &\leq \Ex_{\bX \sim \mcD}\bracket*{\sum_{i \in [k]}\abs*{q_{i,\bS}(\bX^{(i)}) - c}} + ck \cdot \Ex_{\bX \sim \mcD}[F(\bX)] \\
        &\leq \sqrt{k} \cdot (2\kappa+1) + k \cdot \Ex_{\bX \sim \mcD}[F(\bX)] \tag{$c \leq 1$} \\
        &\leq \sqrt{k} \cdot (2\kappa+2). \tag{We assume $\abs{\Ex_{\bX \sim \mcD}[F(\bX)]} \leq 1/\sqrt{k}$}
    \end{align*}
    \else
    Expanding the above notation,
    \begin{align*}
        \Ex_{\bS}\bracket*{\Ex_{\bX \sim \mcD}[G_{\bS}(\bX)]} &= \sum_{i \in [k], x \in \bitsm} \Prx_{\bX \sim \cD}[\bX^{(i)} = x] \cdot \Ex_{\bX \sim \cD} [F(\bX) \mid \bX^{(i)} = x] \cdot q_{i,\bS}(x)\\
        &= \Ex_{\bX \sim \mcD}\bracket*{F(\bX) \cdot \sum_{i \in [k]} q_{i,\bS}(\bX^{(i)})}.
    \end{align*}
    The key insight is that $q_{i,\bS}(\bx)$ concentrates so the above is roughly proportional to $\Ex_{\bX \sim \mcD}[F(\bX)]$. Let $c \coloneqq 1 - (1 - 1/m)^m$. Then, since $F(\bX) \in \bits$,
    \begin{equation*}
        \abs*{\Ex_{\bX \sim \mcD}\bracket*{F(\bX) \cdot \sum_{i \in [k]} q_{i,\bS}(\bX^{(i)})} - ck \cdot \Ex_{\bX \sim \mcD}[F(\bX)]} \leq \Ex_{\bX \sim \mcD}\bracket*{\sum_{i \in [k]}\abs*{q_{i,\bS}(\bX^{(i)}) - c}}.
    \end{equation*}
    Recall that $q_{i,\bS}(x) = 1 - (1 - \cD_i(x))^m$. This function is $m$-Lipschitz as long as $t \geq 0$, so $|q_{i,\bS}(x_i) - c| \leq m|\cD_i(x_i) - 1/m| = |m\cD_i(x_i) - 1|$ , and therefore,
    \begin{align*}
        \Ex_{\bX \sim \mcD}\bracket*{\sum_{i \in [k]}\abs*{q_{i,\bS}(\bX^{(i)}) - c}} &\leq \sum_{i \in [k]}  \Ex_{\bX \sim \mcD}\bracket*{\abs*{m\cD_i(\bX^{(i)}) - 1}} \\
        &= \sum_{i \in [k]}\int_{0}^\infty \Prx_{\bX \sim \mcD}[\abs{m\cD_i(\bX^{(i)}) - 1} \geq t]dt \\
        &= k \cdot \int_{0}^\infty\Prx_{\bX \sim \cD, \bi \in [k]}[\abs{m\cD_{\bi}(\bX^{(\bi)}) - 1} \geq t] dt \\
        &\leq k \cdot \paren*{1/\sqrt{k}+ \int_{1/\sqrt{k}}^\infty \frac{2\kappa}{t^2k}dt}\tag{\Cref{claim:p-concentrate}} \\
        &= k \cdot \paren*{1/\sqrt{k}+ \frac{2 \kappa}{\sqrt{k}}} = \sqrt{k} \cdot (2 \kappa+1).
    \end{align*}
    Combining the above results, we have thus shown that:
    \begin{align*}
        \Ex_{\bS}\bracket*{\Ex_{\bX \sim \mcD}[G_{\bS}(\bX)]} &= \Ex_{\bX \sim \mcD}\bracket*{F(\bX) \cdot \sum_{i \in [k]} q_{i,\bS}(\bX^{(i)})}\\
        &\leq \abs*{\Ex_{\bX \sim \mcD}\bracket*{F(\bX) \cdot \sum_{i \in [k]} q_{i,\bS}(\bX^{(i)})} - ck \cdot \Ex_{\bX \sim \mcD}[F(\bX)] + ck \cdot \Ex_{\bX \sim \mcD}[F(\bX)]} \\
        &\leq \abs*{\Ex_{\bX \sim \mcD}\bracket*{F(\bX) \cdot \sum_{i \in [k]} q_{i,\bS}(\bX^{(i)})} - ck \cdot \Ex_{\bX \sim \mcD}[F(\bX)]} + ck \cdot \abs*{\Ex_{\bX \sim \mcD}[F(\bX)]} \tag{triangle inequality} \\
        &\leq \Ex_{\bX \sim \mcD}\bracket*{\sum_{i \in [k]}\abs*{q_{i,\bS}(\bX^{(i)}) - c}} + ck \cdot \Ex_{\bX \sim \mcD}[F(\bX)] \\
        &\leq \sqrt{k} \cdot (2\kappa+1) + k \cdot \Ex_{\bX \sim \mcD}[F(\bX)] \tag{$c \leq 1$} \\
        &\leq \sqrt{k} \cdot (2\kappa+2). \tag{We assume $\abs{\Ex_{\bX \sim \mcD}[F(\bX)]} \leq 1/\sqrt{k}$}
    \end{align*}
    \fi

    Finally, we show that the above quantity concentrates over the randomness of the sample $\bS$ using the bounded differences inequality. Consider any training sets $S, S'$ that differ in one data point. Then, $g_{i,S}$ and $g_{i,S'}$ can differ on at most $2$ inputs (corresponding to the $x_i$ and $x_i'$ of the differing point). Each change in $g_i$ can only change $\Ex_{\bX}[G_{S}(\bX)]$ by at most $2 \cdot \kappa/m$, so
    \begin{equation*}
        \Ex_{\bX}[G_{S}(\bX)] - \Ex_{\bX}[G_{S'}(\bX)]  \leq \frac{4k\kappa}{m}.
    \end{equation*}
    \ifnum\focs=1
    Therefore, by \Cref{fact:bounded-diff},
    \begin{align*}
        \Prx_{\bS}&\bracket*{\abs*{\Ex_{\bX \sim \cD}[G_{\bS}(\bX)]} \geq \Ex_{\bS'}\bracket*{\abs*{\Ex_{\bX \sim \cD}[G_{\bS'}(\bX)]}} + \eps} \\
        &\leq \exp\paren*{-\frac{\eps^2m}{8k^2\kappa^2}},
    \end{align*}
    Setting $\eps = O(\sqrt{k}\kappa)$ and using the above bound $\Ex_{\bS'}\bracket*{\abs*{G_{\bS'}(\bX)}} \leq O(\sqrt{k} \kappa)$, we have that
    \begin{equation*}
        \Prx_{\bS}\bracket*{\abs*{\Ex_{\bX \sim \cD}[G_{\bS}(\bX)]} \geq O(\sqrt{k}\kappa)} \leq \exp\paren*{-\Omega \paren*{\frac{m}{k}}}. \qedhere 
    \end{equation*}
    \else
    Therefore, by \Cref{fact:bounded-diff},
    \begin{equation*}
        \Prx_{\bS}\bracket*{\abs*{\Ex_{\bX \sim \cD}[G_{\bS}(\bX)]} \geq \Ex_{\bS'}\bracket*{\abs*{\Ex_{\bX \sim \cD}[G_{\bS'}(\bX)]}} + \eps} \leq \exp\paren*{-\frac{\eps^2m}{8k^2\kappa^2}},
    \end{equation*}
    Setting $\eps = O(\sqrt{k}\kappa)$ and using the above bound $\Ex_{\bS'}\bracket*{\abs*{G_{\bS'}(\bX)}} \leq O(\sqrt{k} \kappa)$, we have that
    \begin{equation*}
        \Prx_{\bS}\bracket*{\abs*{\Ex_{\bX \sim \cD}[G_{\bS}(\bX)]} \geq O(\sqrt{k}\kappa)} \leq \exp\paren*{-\Omega \paren*{\frac{m}{k}}}. \qedhere 
    \end{equation*}
    \fi
\end{proof}
\Cref{claim:g-concentrates-in-expectation} tells us that, with high probability over the draw of the random sample $\bS$, the {\sl expected value} of $G_{\bS}$ concentrates. We want to use it to show \Cref{lem:g-concentrates} which states that with high probability over the draw of the random sample $\bS$, $G_{\bS}$ concentrates on $\cD$.

\begin{proof}
    The quantity we are interested in is $\Prx_{\bX \sim \cD} \left [\abs*{G_{\bS}(\bX)} \leq O(\sqrt{k\log{k}}\kappa) \right ]$. We will bound it by showing that, for any sample $S$, with high probability over $\bX \sim \cD$, $G_{S}(\bX)$ is close to its expectation. Let $\mcU$ be the uniform distribution on $\bitsmk$. We start by proving the simpler statement that $G_{S}(\bX)$ is close to its expectation on the uniform distribution ($\bX \sim \mcU$) then show how this result can be leveraged to prove our original statement. By definition, $G_{S}(X) = \sum_{i \in [k]} g_{i,S}(X^{(i)})$. The advantage of the uniform distribution is that $\bX^{(i)}$ is independent from $\bX^{(j)}$ for all $i \neq j$. Hence $G_{S}(\bX)$ for $\bX$ uniform is a sum of $k$ independent $\bits$ random variables and we can use a Hoeffding bound (\Cref{fact:hoeffding}) to get:
    \ifnum\focs=1
    \begin{equation}
    \label{eq:G-hoeffding}
        \Prx_{\bX } \left [\abs*{ G_{S}(\bX) - \Ex_{\boldsymbol{X'}}[G_{S}(\boldsymbol{X'})] } \geq \eps \right ] \leq 2 \exp \left ( \frac{-\eps^2}{k} \right ).
    \end{equation}
    where both $\bX$ and $\bX'$ are uniformly distributed.
    We also show that the difference in the expected value of $G_{S}$ under $\cD$ and $\cU$ can't be too large.
    \begin{align*}
        &\abs*{\Ex_{\bX \sim \cD}\left [ G_{S}(\bX)\right ] - \Ex_{\bX \sim \cU}\left [ G_{S}(\bX)\right ]} \\ &\leq \Ex_{\bX \sim \cD}\left [ \abs*{ G_{S}(\bX) - \Ex_{\bX \sim \cU}\left [ G_{S}(\bX)\right ] } \right ] \tag{triangle inequality}\\
        &\leq \sqrt{\Ex_{\bX \sim \cD}\left [ \paren*{ G(\bX) - \Ex_{\bX \sim \cU}\left [ G_{S}(\bX)\right ] }^2 \right ] } \tag{Jensen's inequality} \\
        &\leq \sqrt{\kappa \cdot \Ex_{\bX \sim \cU}\left [ \paren*{ G_{S}(\bX) - \Ex_{\bX \sim \cU}\left [ G_{S}(\bX)\right ] }^2 \right ] } \tag{$\cD$ is $\kappa$-smooth} \\
        &\leq \sqrt{\kappa k}, \stepcounter{equation}\tag{\theequation}\label{eq:expectations-close}
    \end{align*}
    \else
    \begin{equation}
    \label{eq:G-hoeffding}
        \Prx_{\bX \sim \mcU} \left [\abs*{ G_{S}(\bX) - \Ex_{\boldsymbol{X'} \sim \mcU}[G_{S}(\boldsymbol{X'})] } \geq \eps \right ] \leq 2 \exp \left ( \frac{-\eps^2}{k} \right ).
    \end{equation}
    
    We also show that the difference in the expected value of $G_{S}$ under $\cD$ and $\cU$ can't be too large.
    \begin{align*}
        \abs*{\Ex_{\bX \sim \cD}\left [ G_{S}(\bX)\right ] - \Ex_{\bX \sim \cU}\left [ G_{S}(\bX)\right ]} &\leq \Ex_{\bX \sim \cD}\left [ \abs*{ G_{S}(\bX) - \Ex_{\bX \sim \cU}\left [ G_{S}(\bX)\right ] } \right ] \tag{triangle inequality}\\
        &\leq \sqrt{\Ex_{\bX \sim \cD}\left [ \paren*{ G(\bX) - \Ex_{\bX \sim \cU}\left [ G_{S}(\bX)\right ] }^2 \right ] } \tag{Jensen's inequality} \\
        &\leq \sqrt{\kappa \cdot \Ex_{\bX \sim \cU}\left [ \paren*{ G_{S}(\bX) - \Ex_{\bX \sim \cU}\left [ G_{S}(\bX)\right ] }^2 \right ] } \tag{$\cD$ is $\kappa$-smooth} \\
        &\leq \sqrt{\kappa k}, \stepcounter{equation}\tag{\theequation}\label{eq:expectations-close}
    \end{align*}
    \fi

where the last inequality comes from the fact that $\Ex_{\bX \sim \cU}\left [ \paren*{ G_{S}(\bX) - \Ex_{\bX \sim \cU}\left [ G_{S}(\bX)\right ] }^2 \right ]$ is the variance of $G_{S}(\bX)$. Since $G_{S}(\bX)$ with $\bX$ uniform is a sum of $k$ independent $\bits$ random variables, its variance is at most $k$.

Going back to our original claim, we can now prove that $G_{S}(\bX)$ concentrates around its expectation. To simplify notation, we will write $\mu_{\cD, S} \coloneqq \Ex_{\bX \sim \cD} \left [ G_{S}(\bX) \right ]$ and analogously for $\mu_{\cU,S}$.
\ifnum\focs=1
\begin{align*}
    \Prx_{\bX \sim \cD}& \left [ \abs*{G_{S}(\bX) - \mu_{\cD,S}} \geq \eps \right ] \\&\leq \Prx_{\bX \sim \cD} \left [ \abs*{G_{S}(\bX) - \mu_{\cU,S}} + \abs*{\mu_{\cU,S} - \mu_{\cD,S}} \geq \eps \right ] \tag{triangle inequality} \\
    &\leq \Prx_{\bX \sim \cD} \left [ \abs*{G_{S}(\bX) - \mu_{\cU,S}} \geq \eps - \sqrt{\kappa k} \right ] \tag{\Cref{eq:expectations-close}} \\
    &\leq \kappa \cdot \Prx_{\bX \sim \cU} \left [ \abs*{G_S(\bX) - \mu_{\cU,S}} \geq \eps - \sqrt{\kappa k} \right ] \tag{$\cD$ is $\kappa$-smooth} \\
    &\leq 2\kappa \cdot \exp \paren*{\frac{-(\eps - \sqrt{\kappa k})^2}{k}}. \tag{\Cref{eq:G-hoeffding}} \\
\end{align*}
\else
\begin{align*}
    \Prx_{\bX \sim \cD} \left [ \abs*{G_{S}(\bX) - \mu_{\cD,S}} \geq \eps \right ] &\leq \Prx_{\bX \sim \cD} \left [ \abs*{G_{S}(\bX) - \mu_{\cU,S}} + \abs*{\mu_{\cU,S} - \mu_{\cD,S}} \geq \eps \right ] \tag{triangle inequality} \\
    &\leq \Prx_{\bX \sim \cD} \left [ \abs*{G_{S}(\bX) - \mu_{\cU,S}} \geq \eps - \sqrt{\kappa k} \right ] \tag{\Cref{eq:expectations-close}} \\
    &\leq \kappa \cdot \Prx_{\bX \sim \cU} \left [ \abs*{G_S(\bX) - \mu_{\cU,S}} \geq \eps - \sqrt{\kappa k} \right ] \tag{$\cD$ is $\kappa$-smooth} \\
    &\leq 2\kappa \cdot \exp \paren*{\frac{-(\eps - \sqrt{\kappa k})^2}{k}}. \tag{\Cref{eq:G-hoeffding}} \\
\end{align*}
\fi
Setting $\eps = \sqrt{k \log{(2k^2\kappa)}} + \sqrt{\kappa k} = O(\sqrt{k \log{k} \kappa})$ we get:
\begin{equation}
\label{eq:G-close-expectation}
    \Prx_{\bX \sim \cD} \left [ \abs*{G_{S}(\bX) - \mu_{\cD,S}} \geq O(\sqrt{k \log{k} \kappa}) \right ] \leq \frac{1}{k^2}.
\end{equation}

We can conclude by using the result from \Cref{claim:g-concentrates-in-expectation}. Since we assume that $\abs{\Ex_{\bX \sim \mcD}[F(\bX)]} \leq 1/\sqrt{k}$, then we know that with probability at least $1 - \exp \paren*{-\Omega(\frac{m}{k})}$ over the draw of the random sample $\bS$,
    \begin{equation*}
        \abs*{\Ex_{\bX \sim \mcD}[G_{\bS}(\bX)]} \leq O(\sqrt{k}\kappa).
    \end{equation*}
    
Plugging this into \Cref{eq:G-close-expectation}, we get that, with probability at least  $1 - \exp \paren*{-\Omega(\frac{m}{k})}$ over the draw of the random sample $\bS$,
\ifnum\focs=1
\begin{align*}
    \Prx_{\bX \sim \cD}& \left [ \abs*{G_{\bS}(\bX)} \geq O(\sqrt{k \log{k} \kappa}) \right ] \\ &\leq \Prx_{\bX \sim \cD} \left [ \abs*{G_{\bS}(\bX) - O(\sqrt{k\kappa})} \geq O(\sqrt{k \log{k} \kappa}) \right ] \\
    &\leq \frac{1}{k^2}.
\end{align*}
\else
\begin{align*}
    \Prx_{\bX \sim \cD} \left [ \abs*{G_{\bS}(\bX)} \geq O(\sqrt{k \log{k} \kappa}) \right ] \leq \Prx_{\bX \sim \cD} \left [ \abs*{G_{\bS}(\bX) - O(\sqrt{k\kappa})} \geq O(\sqrt{k \log{k} \kappa}) \right ] \leq \frac{1}{k^2}.
\end{align*}
\fi
We conclude by using the fact that $\uint = \uintval$.
\end{proof}

\subsection{The weak learner is weakly correlated with $F$}
    We start by showing a useful property of the sign function, namely, the expectation over a continuous interval $\btau \sim [-a, a]$ of the sign function $\sign(y \geq \btau)$ looks like a linear version of a threshold function. More formally,
    \begin{claim}[Expected value of sign functions over a symmetric interval]
    \label{claim:ev-sign-fctn} The following holds,
        \begin{equation*}
            \Ex_{\btau \sim [-a, a]}\left [\sign(y \geq \btau) \right ] = \begin{cases*}
                -1 & if $y < -a$ \\
                y/a & if $y \in [\pm a]$ \\
                1 & if $y > a$
            \end{cases*}
        \end{equation*}
    \end{claim}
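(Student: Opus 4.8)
The plan is to evaluate the expectation directly by conditioning on where $y$ sits relative to the endpoints $\pm a$. Recall that, by the convention for $\sign$ in the excerpt, $\sign(y \ge \btau)$ equals $+1$ when $\btau \le y$ and equals $-1$ when $\btau > y$. Hence $\Ex_{\btau \sim [-a,a]}[\sign(y \ge \btau)] = \Prx_{\btau}[\btau \le y] - \Prx_{\btau}[\btau > y] = 2\Prx_{\btau}[\btau \le y] - 1$, so the whole claim reduces to computing the CDF of a uniform variable: $\Prx_{\btau \sim \unif[-a,a]}[\btau \le y]$.

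This CDF is immediate. For $y < -a$ it is $0$, giving $2\cdot 0 - 1 = -1$. For $y > a$ it is $1$, giving $2\cdot 1 - 1 = 1$. For $y \in [-a,a]$ the mass of $[-a,y]$ under $\unif[-a,a]$ is $(y+a)/(2a)$, so the expectation is $2\cdot\frac{y+a}{2a} - 1 = \frac{y}{a}$. Assembling the three cases gives exactly the stated piecewise formula. I would present this as a single short displayed computation (an $\mathrm{align}$ with the case split and the three evaluations), or even inline, since nothing more is needed.

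There is essentially no obstacle here; this is a one-line CDF calculation. The only point worth a remark is the behavior at the boundary values $y = \pm a$ together with the tie-breaking convention $\sign(0) = +1$: since $\btau$ is a continuous random variable, $\Prx_{\btau}[\btau = y] = 0$, so the value assigned at the single point $\btau = y$ does not affect the expectation, and the linear branch $y/a$ agrees with $\pm 1$ at $y = \pm a$ (indeed $\frac{a}{a} = 1$ and $\frac{-a}{a} = -1$), so the three cases are mutually consistent at their shared endpoints. The only bookkeeping care is to make sure the interpolation is normalized by the interval length $2a$ rather than $a$.
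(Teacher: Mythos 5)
Your proof is correct and is essentially the same computation as the paper's: both decompose $\sign(y\geq\btau)$ as $(+1)\cdot\Prx[\btau\le y]+(-1)\cdot\Prx[\btau>y]$ and evaluate the uniform CDF on $[-a,a]$. Writing it as $2\Prx[\btau\le y]-1$ is a minor cosmetic simplification, not a different argument.
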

\begin{proof}
    The cases where $y \notin [\pm a]$ are immediate. For the case where $y \in [\pm a]$ we have that:
    \ifnum\focs=1
    \begin{align*}
        \Ex_{\btau \sim [-a, a]}&\left [\sign(y \geq \btau) \, | \, y \in [\pm a] \right ] \\
        &= 1 \cdot \Prx_{\btau \sim [-a, a]}[y \geq \btau] + (-1) \cdot \Prx_{\btau \sim [-a, a]}[y < \btau] \\
        &= \frac{y + a}{2a} - \frac{a - y}{2a} \\
        &= \frac{y}{a}.
    \end{align*}
    \else
    \begin{align*}
        \Ex_{\btau \sim [-a, a]}\left [\sign(y \geq \btau) \, | \, y \in [\pm a] \right ] &= 1 \cdot \Prx_{\btau \sim [-a, a]}[y \geq \btau] + (-1) \cdot \Prx_{\btau \sim [-a, a]}[y < \btau] \\
        &= \frac{y + a}{2a} - \frac{a - y}{2a} \\
        &= \frac{y}{a}. \qedhere
    \end{align*}
    \fi
\end{proof}

\begin{lemma}[Existence of a weakly correlated hypothesis]
\label{lem:weakly-correlated-h-exists-if-f-low-bias}
        Let $\uint = \uintval$ and $\mcH$ be the set of hypotheses, as defined in \Cref{fig:weak-learner}. Assume that $\Ex_{\bX \sim \mcD}\bracket*{F(\bX) \cdot G_{S}(\bX)} \geq \Omega(1/\kappa^3)$ and $\Prx_{\bX \sim \cD}\bracket*{\abs*{G_{S}(\bX)} \leq u} \geq 1 - 1/k^2$. Then, there exists an $h^* \in \mcH$ such that:
    \begin{equation*}
        \Ex_{\bX \sim \cD}[F(\bX) \cdot h^*(\bX)] \geq \Omega \paren*{\frac{1}{\uint \kappa^3}}.
    \end{equation*}
\end{lemma}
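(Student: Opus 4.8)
The plan is to apply the probabilistic method over a random threshold, as sketched in~\Cref{sec:boosting-upper-bound}. Fix the sample $S$, abbreviate $G\coloneqq G_S$, and assume (as in the hypotheses) that $\Ex_{\bX\sim\cD}[F(\bX)G(\bX)]\ge\Omega(1/\kappa^3)$ and $\Prx_{\bX\sim\cD}[|G(\bX)|\le\uint]\ge1-1/k^2$; take $\uint$ to be an integer. Draw $\btau$ uniformly from the \emph{continuous} interval $[-\uint,\uint]$ and set $h_{\btau}(X)\coloneqq\sign[G(X)\ge\btau]$. Because $G$ is integer-valued, $h_{\btau}=h_{\lceil\btau\rceil}$ for every $\btau$, and $\lceil\btau\rceil\in\{-\uint,\dots,\uint\}$, so $\Ex_{\btau}\!\big[\Ex_{\bX}[F(\bX)h_{\btau}(\bX)]\big]$ is a convex combination of the advantages of the hypotheses $h_{\tau}\in\mcH$ with $\tau\in\{-\uint,\dots,\uint\}$. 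Hence it suffices to lower bound this average by $\Omega\big(1/(\uint\kappa^3)\big)$: the best such $\tau^\ast$ then yields $h^\ast=h_{\tau^\ast}\in\mcH$ with the claimed advantage.

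Next I would swap the two expectations and invoke~\Cref{claim:ev-sign-fctn}. Writing $\phi(g)\coloneqq\Ex_{\btau\sim[-\uint,\uint]}[\sign(g\ge\btau)]$, the claim gives $\phi(g)=g/\uint$ when $|g|\le\uint$ and $\phi(g)=\sign(g)\in\{\pm1\}$ otherwise. Letting $E$ denote the event $|G(\bX)|\le\uint$, so that $\Pr[E^c]\le1/k^2$, we get
\begin{align*}
    \Ex_{\btau}\Ex_{\bX}[F(\bX)h_{\btau}(\bX)]&=\Ex_{\bX}[F(\bX)\phi(G(\bX))]\\
    &=\frac{1}{\uint}\Ex_{\bX}\big[F(\bX)G(\bX)\Ind[E]\big]+\Ex_{\bX}\big[F(\bX)\sign(G(\bX))\Ind[E^c]\big].
\end{align*}

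The last step is to absorb the contributions of $E^c$. Since $|F|\le1$, $|G|\le k$ pointwise, and $\Pr[E^c]\le1/k^2$, we have $\big|\Ex_{\bX}[F(\bX)G(\bX)\Ind[E^c]]\big|\le1/k$ and $\big|\Ex_{\bX}[F(\bX)\sign(G(\bX))\Ind[E^c]]\big|\le1/k^2$. Combining with $\Ex_{\bX}[F(\bX)G(\bX)]\ge\Omega(1/\kappa^3)$ gives
\[
    \Ex_{\btau}\Ex_{\bX}[F(\bX)h_{\btau}(\bX)]\ \ge\ \frac{1}{\uint}\Big(\Omega(1/\kappa^3)-\tfrac1k\Big)-\tfrac1{k^2}.
\]
Since $\uint=O(\sqrt{k\log k\,\kappa})$, for $k$ large enough relative to $\kappa$ the term $1/k$ is dominated by $\Omega(1/\kappa^3)$ and $1/k^2\ll1/(\uint\kappa^3)$, so the average advantage is $\Omega\big(1/(\uint\kappa^3)\big)$, completing the argument.

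I expect the only real work to be bookkeeping: verifying that the discrete family $\mcH$ faithfully realizes the continuous-threshold average (which relies on $G$ being integer-valued) and that restricting to the high-probability event $E$ does not erase the correlation. The latter is exactly the role of~\Cref{lem:g-concentrates}, which ensures $\Pr[E^c]\le1/k^2$ so that even the crude bound $|G|\le k$ suffices to make the $E^c$ contribution to $\Ex[FG]$ negligible. No input from the circuit-complexity or Fourier-analytic parts of the paper is needed here; this lemma is a purely self-contained consequence of~\Cref{claim:ev-sign-fctn} together with its two stated hypotheses.
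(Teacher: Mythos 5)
Your proposal is correct and follows essentially the same route as the paper's own proof: both average over a continuous random threshold, invoke \Cref{claim:ev-sign-fctn} to express the averaged hypothesis, split according to whether $|G_S(\bX)|\le u$, bound the $E^c$ contributions crudely using $\Pr[E^c]\le 1/k^2$ and $|G|\le k$, and conclude by noting that the ceiling of the continuous threshold lies in the discrete set $\{-u,\dots,u\}$ so some $h^*\in\mcH$ achieves at least the average. The only cosmetic difference is that you carry the error terms as indicator-weighted expectations rather than the paper's conditional-expectation decomposition; the arithmetic is the same, including the implicit requirement that $k$ be large relative to $\kappa$ so that $1/k$ is dominated by $\Omega(1/\kappa^3)$.
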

\begin{proof}
    We will consider the correlation achieved by weak learning algorithm's threshold functions in expectation. Here we are taking the expectation over continuous threshold functions $h_\tau$ for $\tau \in \left [\lint, \uint \right ]$ instead of over integers. We will show below that since $G_{S}(X)$ only outputs integer values, this is equivalent to having $\tau$ only take integer values in the interval. To simplify notation, we will write $\Ex_{\btau} \coloneqq \Ex_{\btau \sim \left [\lint, \uint \right ]}$. 

    \ifnum\focs=1
    \begin{align*}
    \Ex_\btau &\left [\Ex_{\bX }[F(\bX) \cdot h_\btau(\bX)] \right ] \\&= \Ex_{\bX } \left [F(\bX) \cdot \Ex_\btau[h_\btau(\bX)] \right ] \\
    &= \Ex_{\bX } \left [F(\bX) \cdot \Ex_\btau[\sign(G_{S}(\bX) \geq \btau)] \right ]  \tag{definition of $h_\tau$}\\
    &= \Ex_{\bX } \left [-F(\bX) \, | \, G_{S}(\bX) < \lint \right] \Prx_{\bX }[G_{S}(\bX) < \lint]+\\
     & \quad\Ex_{\bX } \left [F(\bX) \, | \, G_{S}(\bX) > \uint \right] \Prx_{\bX }[G_{S}(\bX) > \uint]+\\
         & \quad \bigg(\Ex_{\bX } \left [\frac{F(\bX) \cdot G_{S}(\bX)}{\uint} \, \bigg | \, G_{S}(\bX) \in \left [\pm \uint \right ]  \right] \cdot\\
         &\quad\Prx_{\bX } \left [G_{S}(\bX) \in \left [\pm \uint \right ] \right ]\bigg)
    \end{align*}
    \else
    \begin{align*}
    \Ex_\btau \left [\Ex_{\bX \sim \cD}[F(\bX) \cdot h_\btau(\bX)] \right ] &= \Ex_{\bX \sim \cD} \left [F(\bX) \cdot \Ex_\btau[h_\btau(\bX)] \right ] \\
    &= \Ex_{\bX \sim \cD} \left [F(\bX) \cdot \Ex_\btau[\sign(G_{S}(\bX) \geq \btau)] \right ]  \tag{definition of $h_\tau$}.\\
    \begin{split}
    &= \Ex_{\bX \sim \cD} \left [-F(\bX) \, | \, G_{S}(\bX) < \lint \right] \Prx_{\bX \sim \cD}[G_{S}(\bX) < \lint]\\
     & \quad + \Ex_{\bX \sim \cD} \left [F(\bX) \, | \, G_{S}(\bX) > \uint \right] \Prx_{\bX \sim \cD}[G_{S}(\bX) > \uint]\\
         & \quad + \Ex_{\bX \sim \cD} \left [\frac{F(\bX) \cdot G_{S}(\bX)}{\uint} \, \bigg | \, G_{S}(\bX) \in \left [\pm \uint \right ]  \right] \Prx_{\bX \sim \cD} \left [G_{S}(\bX) \in \left [\pm \uint \right ] \right ],
    \end{split}
    \end{align*}
    \fi
where the last step uses \Cref{claim:ev-sign-fctn}.

Since we are interested in showing that the expected correlation of the $h_\tau$ with $F$ is high, we provide lower bounds for the 3 terms on the right. The first 2 are immediate. Indeed, we are assuming that $\Prx_{\bX \sim \cD}\bracket*{\abs*{G_{S}(\bX)} \leq u} \geq 1 - 1/k^2$ and we have that $F(X) \in \{-1, 1 \}$, hence:
\ifnum\focs=1
\begin{align*}
    \Ex_{\bX \sim \cD} &\left [-F(\bX) \, | \, G_{S}(\bX) < \lint \right] \Prx_{\bX \sim \cD}[G_{S}(\bX) < \lint] \\&\geq - \pconcentrate.
\end{align*}
\else
    \begin{equation*}
        \Ex_{\bX \sim \cD} \left [-F(\bX) \, | \, G_{S}(\bX) < \lint \right] \Prx_{\bX \sim \cD}[G_{S}(\bX) < \lint] \geq - \pconcentrate.
    \end{equation*}
\fi
The same holds for the second term.

It remains to lower bound 
\ifnum\focs=1
\begin{align*}
    \Ex_{\bX} &\left [\frac{F(\bX) \cdot G_{S}(\bX)}{\uint} \, \bigg | \, G_{S}(\bX) \in \left [\pm \uint \right ]  \right] \cdot\\ &\Prx_{\bX} \left [G_{S}(\bX) \in \left [\pm \uint \right ] \right ]
\end{align*}
\else
$$\Ex_{\bX} \left [\frac{F(\bX) \cdot G_{S}(\bX)}{\uint} \, \bigg | \, G_{S}(\bX) \in \left [\pm \uint \right ]  \right] \Prx_{\bX} \left [G_{S}(\bX) \in \left [\pm \uint \right ] \right ]$$ 
\fi
where $\bX \sim \cD$. By assumption, we have that $\Prx_{\bX \sim \cD} \left [G_{S}(\bX) \in \left [\pm \uint \right ] \right ] \geq 1 - 1/k^2$. We know from \Cref{lem:weak-correlation} that $F$ and $G$ are weakly correlated on $\cD$, so we have to show that most of this correlation remains when we condition on $G_{S}(\bX) \in \left [\pm \uint \right ]$. To show this, we will use the same trick as earlier by conditioning the expectation.
\ifnum\focs=1
\begin{align*}
        \Ex_{\bX \sim \cD}& \left [F(\bX) \cdot G_{S}(\bX) \right]\\
        &= \bigg(\Ex_{\bX \sim \cD} \left [F(\bX) \cdot G_{S}(\bX) \, | \, G_{S}(\bX) \in \left [\pm \uint \right ]  \right] \\ &\quad \cdot\Prx_{\bX \sim \cD} \left [G_{S}(\bX) \in \left [\pm \uint \right ] \right ]\bigg) \\
        & \quad + \bigg(\Ex_{\bX \sim \cD} \left [F(\bX) \cdot G_{S}(\bX) \, | \, G_{S}(\bX) \notin \left [\pm \uint \right ]  \right] \\
        &\quad \cdot \Prx_{\bX \sim \cD} \left [G_{S}(\bX) \notin \left [\pm \uint \right ] \right ]\bigg) \\
        & \leq \Ex_{\bX \sim \cD} \left [F(\bX) \cdot G_{S}(\bX) \, | \, G_{S}(\bX) \in \left [\pm \uint \right ]  \right] + \frac{2}{k},
\end{align*}
\else
\begin{align*}
        \Ex_{\bX \sim \cD} \left [F(\bX) \cdot G_{S}(\bX) \right] &= \Ex_{\bX \sim \cD} \left [F(\bX) \cdot G_{S}(\bX) \, | \, G_{S}(\bX) \in \left [\pm \uint \right ]  \right]  \Prx_{\bX \sim \cD} \left [G_{S}(\bX) \in \left [\pm \uint \right ] \right ] \\
    \begin{split}
        & \quad + \Ex_{\bX \sim \cD} \left [F(\bX) \cdot G_{S}(\bX) \, | \, G_{S}(\bX) \notin \left [\pm \uint \right ]  \right]  \Prx_{\bX \sim \cD} \left [G_{S}(\bX) \notin \left [\pm \uint \right ] \right ]
    \end{split} \\
        & \leq \Ex_{\bX \sim \cD} \left [F(\bX) \cdot G_{S}(\bX) \, | \, G_{S}(\bX) \in \left [\pm \uint \right ]  \right] + \frac{2}{k},
\end{align*}
\fi
where the last inequality is given by the fact that $G$ is upper bounded by $k$ and $F$ is upper bounded by $1$, and the fact that $\Prx_{\bX \sim \cD}\bracket*{\abs*{G_{S}(\bX)} \notin [\pm u]} \leq 1/k^2$. Using our assumption that $\Ex_{\bX \sim \mcD}\bracket*{F(\bX) \cdot G_{S}(\bX)} \geq \Omega(1/\kappa^3)$, we can rearrange and find:

\ifnum\focs=1
\begin{align*}
    \Ex_{\bX \sim \cD}& \left [F(\bX) \cdot G_{S}(\bX) \, | \, G_{S}(\bX) \in \left [\pm \uint \right ]  \right] \\&\geq \Ex_{\bX \sim \cD} \left [F(\bX) \cdot G_{S}(\bX) \right] - \frac{2}{k} \\
    &\geq \wcorr - \frac{2}{k}.
\end{align*}

We can now conclude:
\begin{align*}
    \Ex_\btau& \left [\Ex_{\bX \sim \cD}[F(\bX) \cdot h_\btau(\bX)] \right ] \\&= \Ex_{\bX \sim \cD} \left [F(\bX) \cdot \Ex_\btau[\sign(G_{S}(\bX) \geq \btau)] \right ] \\
    &= \Ex_{\bX \sim \cD} \left [-F(\bX) \, | \, G(\bX) < \lint \right] \Prx_{\bX \sim \cD}[G_{S}(\bX) < \lint]\\
     & \quad + \Ex_{\bX \sim \cD} \left [F(\bX) \, | \, G_{S}(\bX) > \uint \right] \Prx_{\bX \sim \cD}[G_{S}(\bX) > \uint]\\
         & \quad + \bigg(\Ex_{\bX \sim \cD} \left [\frac{F(\bX) \cdot G_{S}(\bX)}{\uint} \, \bigg | \, G_{S}(\bX) \in \left [\pm \uint \right ]  \right] \\&\qquad\cdot\Prx_{\bX \sim \cD} \left [G_{S}(\bX) \in \left [\pm \uint \right ] \right ]\bigg)\\
    &\geq - \frac{2}{k^2} + \Omega \paren*{\frac{1}{\uint \kappa^3}} - \frac{2}{\uint k} \\
    &\geq \Omega \paren*{\frac{1}{\uint \kappa^3}},
    \end{align*}
\else
\begin{align*}
    \Ex_{\bX \sim \cD} \left [F(\bX) \cdot G_{S}(\bX) \, | \, G_{S}(\bX) \in \left [\pm \uint \right ]  \right] &\geq \Ex_{\bX \sim \cD} \left [F(\bX) \cdot G_{S}(\bX) \right] - \frac{2}{k} \\
    &\geq \wcorr - \frac{2}{k}.
\end{align*}

We can now conclude:
\begin{align*}
    \Ex_\btau \left [\Ex_{\bX \sim \cD}[F(\bX) \cdot h_\btau(\bX)] \right ] &= \Ex_{\bX \sim \cD} \left [F(\bX) \cdot \Ex_\btau[\sign(G_{S}(\bX) \geq \btau)] \right ] \\
    \begin{split}
    &= \Ex_{\bX \sim \cD} \left [-F(\bX) \, | \, G(\bX) < \lint \right] \Prx_{\bX \sim \cD}[G_{S}(\bX) < \lint]\\
     & \quad + \Ex_{\bX \sim \cD} \left [F(\bX) \, | \, G_{S}(\bX) > \uint \right] \Prx_{\bX \sim \cD}[G_{S}(\bX) > \uint]\\
         & \quad + \Ex_{\bX \sim \cD} \left [\frac{F(\bX) \cdot G_{S}(\bX)}{\uint} \, \bigg | \, G_{S}(\bX) \in \left [\pm \uint \right ]  \right] \Prx_{\bX \sim \cD} \left [G_{S}(\bX) \in \left [\pm \uint \right ] \right ]
    \end{split}\\
    &\geq - \frac{2}{k^2} + \Omega \paren*{\frac{1}{\uint \kappa^3}} - \frac{2}{\uint k} \\
    &\geq \Omega \paren*{\frac{1}{\uint \kappa^3}},
    \end{align*}
\fi 
    
    where in the last line we use the definition of $\uint$ ($\uint = \uintval$).
    
In particular, this implies that there exists $h_\tau^*$ for $\tau \in \left [\lint, \uint \right ]$ such that 
\begin{equation*}
\Ex_{\bX \sim \cD}[F(\bX) \cdot h_{\tau}^*(\bX)] \geq \Omega \paren*{\frac{1}{\uint \kappa^3}}.
\end{equation*}
We remark that since $G_S(X)$ only outputs integer values,
\ifnum\focs=1
\begin{align*}
h_{\tau}^*(\bX) &= \sign(G_{S}(\bX) \geq \tau) \\ &= \sign(G_{S}(\bX) \geq \lceil \tau \rceil )\\ &= h_{\lceil \tau \rceil}(\bX).
\end{align*}
\else
$$h_{\tau}^*(\bX) = \sign(G_{S}(\bX) \geq \tau) = \sign(G_{S}(\bX) \geq \lceil \tau \rceil ) = h_{\lceil \tau \rceil}(\bX).$$
\fi 
This implies that there exists $h_\tau^*$ for $\tau \in \{-u, \dots, u\}$ such that
\begin{equation*}
\Ex_{\bX \sim \cD}[F(\bX) \cdot h_{\tau}^*(\bX)] \geq \Omega \paren*{\frac{1}{\uint \kappa^3}}.
\end{equation*}
We conclude by noting that $h_\tau$ for $\tau \in \{-u, \dots, u\}$ are hypotheses in $\mcH$.
\end{proof}

\subsection{Proof of \Cref{lem:boosting-upper}}
\label{section:unif-convergence-weak-learner}
The previous results imply that there exists a hypothesis $h \in \mcH$ that achieves weak correlation with $F$ with high probability. All that remains to show is that if there exists such a good hypothesis in $\mcH$, then the hypothesis that the weak learning algorithm chooses also achieves weak correlation with $F$. To do this, we show that the hypothesis class we are using satisfies uniform convergence.
\begin{definition}[Loss function]
    We define the \emph{loss} of a classifier $h$ w.r.t.~a target function $F$ over a distribution $\cD$ as: 
    \begin{equation*}
        L_{\cD, F}(h) \coloneqq \Prx_{\bX \sim \cD}[h(\bX) \neq F(\bX)].
    \end{equation*}
    We overload the notation to define the empirical loss of $h$ w.r.t.~$F$ over a set of inputs $S$:
    \begin{equation*}
        L_{S, F}(h) = \frac{1}{|S|} \sum_{X \in S} \Ind[h(X) \neq F(X)].
    \end{equation*}
    We will omit the subscript for the target function $F$ when it is clear from context.
\end{definition}
In what follows, we will use a standard result from learning theory that states that for finite hypothesis classes, the error of the hypotheses on the training set is close to their error on the underlying distribution with high probability.
\begin{lemma}[Uniform convergence for finite hypothesis classes, section 4.2 from \cite{shalev14:understanding}]
\label{lemma:finite-class-unif-convergence}
    Let $\mcH$ be a finite hypothesis class, $Z$ a domain and $L: \mcH \times Z \rightarrow [0,1]$ a loss function then, $\mcH$ has the uniform convergence property with sample complexity $\left \lceil \frac{\log (2 |\mcH|/\delta)}{2 \eps^2} \right \rceil$. Formally, this means that if $\bS$ is a sample of $m \geq \left \lceil \frac{\log (2 |\mcH|/\delta)}{2 \eps^2} \right \rceil$ examples drawn i.i.d.~according to $\mcD$, then with probability at least $1 - \delta$ over the random sample $\bS$, we have that:
    \begin{equation*}
        \text{For all } h \in \mcH, |L_{\bS}(h) - L_\cD(h)| \leq \eps.
    \end{equation*}
\end{lemma}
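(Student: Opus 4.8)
The final statement, \Cref{lemma:finite-class-unif-convergence}, is the textbook uniform convergence bound for finite classes, and the plan is to prove it in the standard way: a Hoeffding bound per hypothesis together with a union bound. Fix $h \in \mcH$. The empirical loss $L_{\bS}(h) = \frac{1}{m}\sum_{z \in \bS} L(h,z)$ is an average of $m$ i.i.d.\ random variables each supported on $[0,1]$ with common mean $L_{\cD}(h)$, so \Cref{fact:hoeffding} (with $a_i = 0$, $b_i = 1$, and an overall $1/m$ scaling) gives $\Pr_{\bS}[\,|L_{\bS}(h) - L_{\cD}(h)| \geq \eps\,] \leq 2\exp(-2\eps^2 m)$. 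A union bound over the $|\mcH|$ hypotheses yields $\Pr_{\bS}[\exists h \in \mcH : |L_{\bS}(h) - L_{\cD}(h)| \geq \eps] \leq 2|\mcH|\exp(-2\eps^2 m)$, and this is at most $\delta$ exactly when $m \geq \log(2|\mcH|/\delta)/(2\eps^2)$, which is the claimed sample complexity. There is no real obstacle here; the only hypothesis used is that the loss takes values in $[0,1]$, which is assumed.

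It is more informative to also describe how this lemma closes out the proof of \Cref{lem:boosting-upper}, which the surrounding section has just begun. First I would dispose of the heavily-biased case: if $|\Ex_{\bX \sim \mcD}[F(\bX)]| > 1/\sqrt{k}$ then one of the constant hypotheses $\boldsymbol{1},\boldsymbol{-1}\in\mcH$ already has advantage exceeding $1/\sqrt{k} \geq \Omega(\advantageline)$, so from now on assume $|\Ex_{\bX \sim \mcD}[F(\bX)]| \leq 1/\sqrt{k}$. In this regime I would combine the three results already established: \Cref{lem:weak-correlation} (with high probability over the training set $\bS$, $\Ex_{\bX\sim\mcD}[F(\bX)G_{\bS}(\bX)] \geq \Omega(1/\kappa^3)$), \Cref{lem:g-concentrates} (with high probability over $\bS$, $\Prx_{\bX \sim \mcD}[|G_{\bS}(\bX)| \leq u] \geq 1 - 1/k^2$ for $u = \uintval$), and \Cref{lem:weakly-correlated-h-exists-if-f-low-bias} (conditioned on both of these, some $h^{\ast} \in \mcH$ satisfies $\Ex_{\bX\sim\mcD}[F(\bX)h^{\ast}(\bX)] \geq \Omega(1/(u\kappa^3)) = \Omega(\advantageline)$). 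A union bound over the two failure events then shows: with high probability over the draw of $\bS$ there exists $h^\ast\in\mcH$ whose advantage with respect to $F$ on $\mcD$ is at least $\gamma \coloneqq \Omega(\advantageline)$, equivalently $L_{\mcD}(h^\ast) \leq \tfrac12 - \tfrac{\gamma}{2}$ using $\Ex[F h] = 1 - 2L_{\mcD}(h)$.

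The last step is the validation argument. Since $\mcH$ is a deterministic function of $\boldsymbol{S_{\mathrm{train}}}$ only, conditioned on $\boldsymbol{S_{\mathrm{train}}}$ the class $\mcH$ is fixed while $\boldsymbol{S_{\mathrm{val}}}$ is an independent i.i.d.\ sample from $\mcD$; this independence is exactly what licenses the application of \Cref{lemma:finite-class-unif-convergence} on $\boldsymbol{S_{\mathrm{val}}}$. Applying it with $\eps \coloneqq \gamma/8$, and using $|\mcH| = O(u) = \mathrm{poly}(k,\kappa)$, the hypothesis $n \geq \Omega_\kappa(\log k)$ guarantees that the $m = O(2^n)$ validation examples satisfy $m \geq \log(2|\mcH|/\delta)/(2\eps^2)$, so with high probability every $h\in\mcH$ has $|L_{\boldsymbol{S_{\mathrm{val}}}}(h) - L_{\mcD}(h)| \leq \eps$. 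The algorithm returns the $h$ of largest validation advantage, i.e.\ smallest $L_{\boldsymbol{S_{\mathrm{val}}}}(h)$; comparing it to $h^\ast$ gives $L_{\mcD}(h) \leq L_{\boldsymbol{S_{\mathrm{val}}}}(h) + \eps \leq L_{\boldsymbol{S_{\mathrm{val}}}}(h^\ast) + \eps \leq L_{\mcD}(h^\ast) + 2\eps \leq \tfrac12 - \tfrac{\gamma}{2} + \tfrac{\gamma}{4} = \tfrac12 - \tfrac{\gamma}{4}$, i.e.\ the returned hypothesis has advantage at least $\gamma/2 = \Omega(\advantageline)$ on $\mcD$. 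A final union bound over all the failure events above (each of probability $\exp(-\Omega(\mathrm{poly}(k,\kappa)))$ under $n \geq \Omega_\kappa(\log k)$) yields \Cref{lem:boosting-upper}. The only point requiring genuine care in this wrap-up is the parameter bookkeeping: $\eps$ must be driven down to $\Theta(\gamma) = \tilde{\Theta}(1/\sqrt{k})$ up to $\mathrm{poly}(\kappa)$ factors while keeping the validation sample at $O(2^n)$, which is precisely why one needs $n$ to be $\Omega_\kappa(\log k)$ rather than merely $\Omega(\log k)$.
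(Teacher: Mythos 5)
Your Hoeffding-plus-union-bound argument is exactly the standard textbook proof of this lemma, and it is correct; the paper itself does not reprove it but cites it directly from \cite{shalev14:understanding}. Your additional wrap-up of how the lemma closes out \Cref{lem:boosting-upper} (via the biased/unbiased case split, the conjunction of \Cref{lem:weak-correlation} and \Cref{lem:g-concentrates}, and then \Cref{corollary:erm-learner-is-close}) also matches the paper's argument, including the observation that $\mcH$ depends only on $\boldsymbol{S_{\mathrm{train}}}$ so the uniform convergence bound can be applied to the independent validation set.
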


\begin{corollary}[The weak learner converges uniformly]
\label{corollary:weak-learner-unif-convergence}
    Let $\mcH = \{h_\tau(\boldsymbol{S_\mathrm{train}}) \, | \, \tau \in \{\lint,\ldots, \uint\}\}$ be the threshold functions constructed in \Cref{fig:weak-learner}. Let $F$ be the $\Maj(f_1, \dots, f_k)$ function from \Cref{thm:boosting-formal}. If $|\boldsymbol{S_\mathrm{val}}| \geq \Omega \left ( \frac{\log (k \kappa/\delta)}{\eps^2} \right )$ then, with probability $1 - \delta$ over the random sample $\boldsymbol{S_\mathrm{val}}$ we have that:
    \begin{equation*}
        \text{For all } h \in \mcH: |L_{\boldsymbol{S_\mathrm{val}}, F}(h) - L_{\cD, F}(h)| \leq \eps.
    \end{equation*}
\end{corollary}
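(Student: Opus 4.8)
The plan is to derive this as an essentially immediate consequence of \Cref{lemma:finite-class-unif-convergence}, the standard uniform convergence bound for finite hypothesis classes. The only facts that need checking are that $\mcH$ is finite and not too large, and that the relevant loss function takes values in $[0,1]$. First I would observe that $\mcH$ has size $O(u) = \uintval = \mathrm{poly}(k,\kappa)$: by construction in \Cref{fig:weak-learner} it consists of the $2u+1$ threshold functions $h_\tau$ for $\tau \in \{\lint,\ldots,\uint\}$ together with the two constant functions $\boldsymbol{-1},\boldsymbol{1}$. Crucially, although the $h_\tau$ are defined in terms of $G_{\boldsymbol{S_\mathrm{train}}}$ and hence depend on the training sample, they do \emph{not} depend on the validation sample $\boldsymbol{S_\mathrm{val}}$. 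So I would condition on $\boldsymbol{S_\mathrm{train}}$, which fixes $\mcH$ as a fixed finite class, and then treat $\boldsymbol{S_\mathrm{val}}$ as a fresh i.i.d.\ sample from $\cD$ that is independent of $\mcH$.

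Next I would apply \Cref{lemma:finite-class-unif-convergence} with domain $Z = \bits^{nk}\times\bits$ and loss function $L(h,(X,y)) = \Ind[h(X)\neq y]$, which takes values in $\{0,1\}\subseteq[0,1]$; for the target $F = \Maj(f_1,\dots,f_k)$ this loss is exactly the $0$--$1$ loss $L_{\cD,F}$, with empirical counterpart $L_{\boldsymbol{S_\mathrm{val}},F}$, appearing in the statement. The lemma then yields that a validation set of size $\lceil \log(2|\mcH|/\delta)/(2\eps^2)\rceil$ suffices for $|L_{\boldsymbol{S_\mathrm{val}},F}(h)-L_{\cD,F}(h)|\le\eps$ to hold simultaneously for all $h\in\mcH$ with probability at least $1-\delta$ over $\boldsymbol{S_\mathrm{val}}$. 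Since $|\mcH| = O(\sqrt{k\log k\,\kappa})$, we have $\log(2|\mcH|/\delta) = O(\log(k\kappa/\delta))$, so the stated hypothesis $|\boldsymbol{S_\mathrm{val}}|\ge\Omega(\log(k\kappa/\delta)/\eps^2)$ is enough. Finally, since this conclusion holds for every fixing of $\boldsymbol{S_\mathrm{train}}$, it holds unconditionally once we average over $\boldsymbol{S_\mathrm{train}}$ as well.

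There is no genuine obstacle here—the one point that requires a moment's care is precisely the independence between $\boldsymbol{S_\mathrm{train}}$ (which determines the data-dependent class $\mcH$) and $\boldsymbol{S_\mathrm{val}}$ (over which uniform convergence is invoked), which is exactly why the algorithm of \Cref{fig:weak-learner} splits its $2m$ samples into a training set and a separate validation set. Everything else is a routine substitution of $|\mcH|=O(\sqrt{k\log k\,\kappa})$ into the finite-class uniform convergence bound.
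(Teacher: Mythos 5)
Your proof is correct and follows essentially the same route as the paper: both just invoke \Cref{lemma:finite-class-unif-convergence} with the observation that $|\mcH| = O(u) = O(\sqrt{k \log k\,\kappa})$, so $\log(2|\mcH|/\delta) = O(\log(k\kappa/\delta))$. The one thing you spell out that the paper leaves implicit is the conditioning on $\boldsymbol{S_\mathrm{train}}$ to make $\mcH$ a fixed class independent of $\boldsymbol{S_\mathrm{val}}$ — a genuinely necessary step, and a good one to make explicit.
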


\begin{proof}
    The proof is immediate by applying \Cref{lemma:finite-class-unif-convergence} using the fact that the hypothesis class $\mcH$ has size $2\uint$.
\end{proof}

\begin{corollary}
\label{corollary:erm-learner-is-close}
    Let $\mcH = \{h_\tau(\boldsymbol{S_\mathrm{train}}) \, | \, \tau \in \{\lint,\ldots, \uint\}\}$ be the threshold functions constructed in \Cref{fig:weak-learner}. Let $F$ be the $\Maj(f_1, \dots, f_k)$ function from \Cref{thm:boosting-formal}. Note $h_{\boldsymbol{S_\mathrm{val}}} \in \arg \min L_{\boldsymbol{S_\mathrm{val}}}(h)$ the hypothesis chosen by the weak learning algorithm. If $|\boldsymbol{S_\mathrm{val}}| \geq \Omega \left ( \frac{\log (k \kappa /\delta)}{\eps^2} \right )$ then, with probability at least $1 - 2\delta$ over the random sample $\boldsymbol{S_{val}}$, we have that:
    \begin{equation*}
        L_\cD(h_{\boldsymbol{S_\mathrm{val}}}) \leq \min_{h \in \mcH} L_\cD(h) + 2 \eps.
    \end{equation*}
\end{corollary}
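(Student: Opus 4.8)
The plan is to run the textbook ``empirical risk minimizer is competitive with the best hypothesis in the class'' argument, using \Cref{corollary:weak-learner-unif-convergence} as the uniform-convergence ingredient. The first step is a bookkeeping one: fix the training sample $\boldsymbol{S_\mathrm{train}}$ arbitrarily, so that $\mcH = \{h_\tau(\boldsymbol{S_\mathrm{train}}) : \tau \in \{\lint,\ldots,\uint\}\}$ becomes a fixed finite hypothesis class (of size $2\uint = O(\sqrt{k\log k\,\kappa})$). The point to keep in mind is that $\boldsymbol{S_\mathrm{val}}$ is drawn independently of $\boldsymbol{S_\mathrm{train}}$ in the initialization step of \Cref{fig:weak-learner}, so after conditioning on $\boldsymbol{S_\mathrm{train}}$ the validation set is a fresh i.i.d.\ sample from $\cD$, against which the uniform-convergence statement of \Cref{corollary:weak-learner-unif-convergence} applies verbatim; and since the bound will hold for every fixed $\boldsymbol{S_\mathrm{train}}$, it holds for the random one too.

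Next I would apply \Cref{corollary:weak-learner-unif-convergence} with the assumed sample size $|\boldsymbol{S_\mathrm{val}}| \geq \Omega(\log(k\kappa/\delta)/\eps^2)$: with probability at least $1-\delta$ over $\boldsymbol{S_\mathrm{val}}$, we have $|L_{\boldsymbol{S_\mathrm{val}},F}(h) - L_{\cD,F}(h)| \leq \eps$ simultaneously for all $h \in \mcH$. Conditioning on this event, let $h^* \in \arg\min_{h \in \mcH} L_\cD(h)$ and chain three inequalities: $L_\cD(h_{\boldsymbol{S_\mathrm{val}}}) \leq L_{\boldsymbol{S_\mathrm{val}}}(h_{\boldsymbol{S_\mathrm{val}}}) + \eps$ by one side of uniform convergence; $L_{\boldsymbol{S_\mathrm{val}}}(h_{\boldsymbol{S_\mathrm{val}}}) \leq L_{\boldsymbol{S_\mathrm{val}}}(h^*)$ because $h_{\boldsymbol{S_\mathrm{val}}}$ minimizes the empirical loss over $\mcH$ (equivalently, it has maximum advantage on $\boldsymbol{S_\mathrm{val}}$); and $L_{\boldsymbol{S_\mathrm{val}}}(h^*) \leq L_\cD(h^*) + \eps$ by the other side. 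Stringing these together yields $L_\cD(h_{\boldsymbol{S_\mathrm{val}}}) \leq \min_{h \in \mcH} L_\cD(h) + 2\eps$ on an event of probability at least $1-\delta \geq 1-2\delta$, which is the claim.

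I do not expect a genuine obstacle here — the only subtlety is the dependence of $\mcH$ on $\boldsymbol{S_\mathrm{train}}$, which is dispatched cleanly by the train/validation split and the conditioning described above, exactly the reason \Cref{fig:weak-learner} reserves an independent validation set in the first place. The slightly lossy $1-2\delta$ in the statement (rather than $1-\delta$) is just slack that leaves room for a union bound when this corollary is later combined, in the proof of \Cref{lem:boosting-upper}, with the high-probability event (over $\boldsymbol{S_\mathrm{train}}$) that a weakly correlated hypothesis actually exists in $\mcH$.
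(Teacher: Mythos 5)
Your proof is correct and follows essentially the same argument as the paper: condition on the uniform-convergence event of \Cref{corollary:weak-learner-unif-convergence} and chain $L_\cD(h_{\boldsymbol{S_\mathrm{val}}}) \leq L_{\boldsymbol{S_\mathrm{val}}}(h_{\boldsymbol{S_\mathrm{val}}}) + \eps \leq L_{\boldsymbol{S_\mathrm{val}}}(h^*) + \eps \leq L_\cD(h^*) + 2\eps$. You are also right that a single application of the uniform-convergence event already yields probability $1-\delta$, so the stated $1-2\delta$ is slack (the paper does not explain why it wrote $2\delta$ rather than $\delta$), and your remark about conditioning on $\boldsymbol{S_\mathrm{train}}$ to make $\mcH$ a fixed class is a correct and worthwhile clarification the paper leaves implicit.
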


\begin{proof}
    For every hypothesis $h \in \mcH$, we have:
    \begin{equation*}
         L_\cD(h_{\boldsymbol{S_\mathrm{val}}}) \leq  L_{\boldsymbol{S_\mathrm{val}}}(h_{\boldsymbol{S_\mathrm{val}}}) + \eps \leq L_{\boldsymbol{S_\mathrm{val}}}(h) + \eps \leq L_\cD(h) + 2 \eps,
    \end{equation*}
    where the first and third inequalities come from \Cref{corollary:weak-learner-unif-convergence} and the second inequality holds by construction of $h_{\boldsymbol{S_\mathrm{val}}}$.
\end{proof}

We can now combine our results to prove \Cref{lem:boosting-upper}. We will show the following lemma:

\begin{lemma}[Precise statement of \Cref{lem:boosting-upper}] Let $h$ be the hypothesis returned by \Cref{fig:weak-learner}. For any constant $c$, with probability at least $1 - m^{-c}$ over the draw of the random sample $\bS$,
\begin{equation*}
     \Ex_{\bX \sim \cD}[F(\bX) \cdot h(\bX)] \geq \Omega \paren*{\advantage}.
\end{equation*}
\end{lemma}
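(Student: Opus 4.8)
The plan is to assemble \Cref{lem:weak-correlation}, \Cref{lem:g-concentrates}, and \Cref{lem:weakly-correlated-h-exists-if-f-low-bias} with a standard uniform-convergence argument. First I would dispose of the high-bias case separately: if $\abs*{\Ex_{\bX\sim\cD}[F(\bX)]}> 1/\sqrt{k}$, then one of the two constant hypotheses $\boldsymbol{-1},\boldsymbol{1}$, both of which lie in $\mcH$, already achieves advantage exceeding $1/\sqrt{k}=\Omega(\advantageline)$ against $F$, and the returned hypothesis has at least this advantage by construction. So from now on we may assume $F$ has low bias, $\abs*{\Ex_{\bX\sim\cD}[F(\bX)]}\le 1/\sqrt{k}$.

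Under the low-bias assumption, \Cref{lem:weak-correlation} and \Cref{lem:g-concentrates} together give that, except with probability $\exp(-\Omega(m/(k^2\kappa^8)))+\exp(-\Omega(m/k))$ over the draw of $\bS=\boldsymbol{S_\mathrm{train}}$, we simultaneously have $\Ex_{\bX\sim\cD}[F(\bX)G_{\bS}(\bX)]\ge\Omega(1/\kappa^3)$ and $\Prx_{\bX\sim\cD}[\abs*{G_{\bS}(\bX)}\le u]\ge 1-1/k^2$. On this event, \Cref{lem:weakly-correlated-h-exists-if-f-low-bias} produces $h^\ast\in\mcH$ with $\Ex_{\bX\sim\cD}[F(\bX)h^\ast(\bX)]\ge\Omega(1/(u\kappa^3))$, and since $u=\uintval$ this is $\Omega(\advantageline)$. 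Passing from advantage to $0/1$ loss via the identity $L_{\cD,F}(h)=\tfrac12-\tfrac12\,\Ex_{\bX\sim\cD}[F(\bX)h(\bX)]$, the existence of $h^\ast$ means $\min_{h\in\mcH}L_{\cD,F}(h)\le\tfrac12-\Omega(\advantageline)$.

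It remains to argue that the hypothesis actually output by \Cref{fig:weak-learner} — the empirical risk minimizer $h_{\boldsymbol{S_\mathrm{val}}}$ over $\mcH$ on the validation set, which maximizes validation advantage and hence minimizes validation $0/1$ loss — is also weakly correlated with $F$. Here I would condition on $\boldsymbol{S_\mathrm{train}}$: this fixes the finite class $\mcH$, whose size is $O(u)=\mathrm{poly}(k,\kappa)$, and $\boldsymbol{S_\mathrm{val}}$ is an independent sample of $m$ points. Applying \Cref{corollary:erm-learner-is-close} with accuracy parameter $\eps$ equal to a sufficiently small constant multiple of $\advantageline$ and confidence parameter $\delta=\tfrac13 m^{-c}$ yields $L_\cD(h_{\boldsymbol{S_\mathrm{val}}})\le\min_{h\in\mcH}L_\cD(h)+2\eps\le\tfrac12-\Omega(\advantageline)$ with probability at least $1-2\delta$ over $\boldsymbol{S_\mathrm{val}}$, provided $m\ge\Omega(\log(k\kappa/\delta)/\eps^2)$. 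This last requirement is met because $m=2^n$ with $n\ge\Omega_\kappa(\log k)$, so for the hidden constant in $\Omega_\kappa$ taken large enough (polynomial in $\kappa$), $m$ exceeds the needed fixed polynomial in $k,\kappa$ times $\log m$. Equivalently, $\Ex_{\bX\sim\cD}[F(\bX)h_{\boldsymbol{S_\mathrm{val}}}(\bX)]\ge\Omega(\advantage)$.

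Finally, union-bounding the three failure events — $\exp(-\Omega(m/(k^2\kappa^8)))$ and $\exp(-\Omega(m/k))$ from the training phase, and $2\delta=\tfrac23 m^{-c}$ from the validation phase — and again using that $m=2^n$ dominates any fixed polynomial in $k,\kappa$ for $n$ large enough in terms of $\kappa$, the total failure probability is at most $m^{-c}$, which is the claim. The only subtle point in this assembly is the use of generalization with a data-dependent hypothesis class; it is legitimate precisely because $\mcH$ is determined by $\boldsymbol{S_\mathrm{train}}$ while the ERM step sees the fresh independent sample $\boldsymbol{S_\mathrm{val}}$, so after conditioning on $\boldsymbol{S_\mathrm{train}}$ it reduces to \Cref{lemma:finite-class-unif-convergence} for a fixed finite class. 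All the substantive work was already done in \Cref{lem:weak-correlation,lem:g-concentrates,lem:weakly-correlated-h-exists-if-f-low-bias}, so I do not expect a genuine obstacle here beyond carefully tracking the parameters $\eps,\delta,m,u,\kappa$.
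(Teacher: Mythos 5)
Your proposal follows essentially the same route as the paper's proof: split on the bias of $F$, handle the high-bias case by the constant hypotheses in $\mcH$, in the low-bias case stack \Cref{lem:weak-correlation}, \Cref{lem:g-concentrates}, and \Cref{lem:weakly-correlated-h-exists-if-f-low-bias} to get a good $h^\ast\in\mcH$, and then invoke \Cref{corollary:erm-learner-is-close} with $\eps=O(\advantageline)$ and $\delta=m^{-c}$ to conclude for the returned ERM hypothesis. You make explicit a couple of points the paper leaves implicit — the union bound over the training- and validation-phase failure events, and the conditioning on $\boldsymbol{S_\mathrm{train}}$ to justify applying finite-class uniform convergence to the data-dependent class $\mcH$ — but the argument is the same.
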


\begin{proof}
    There are two possible cases. If $\abs{\Ex_{\bX \sim \mcD}[F(\bX)]} > \frac{1}{\sqrt{k}}$ then as shown in the proof overview, this implies that there exists a hypothesis $h^* \in \mcH$ that achieves correlation $\frac{1}{\sqrt{k}}$. If $\abs{\Ex_{\bX \sim \mcD}[F(\bX)]} \leq \frac{1}{\sqrt{k}}$ then we know that \Cref{lem:weak-correlation,lem:g-concentrates} hold with probability at least $1 - \exp \paren*{-\Omega(\frac{m}{k})}$ over the random sample $\bS$. In the case they both hold, we can apply \Cref{lem:weakly-correlated-h-exists-if-f-low-bias} to conclude that there exists a hypothesis $h^* \in \mcH$ such that $\Ex_{\bX \sim \cD}[F(\bX) \cdot h^*(\bX)] \geq \Omega \paren*{\advantage}$.

    Combining both cases, we get that, with probability at least $1 - \exp \paren*{-\Omega(\frac{m}{k})}$ over the random sample $\bS$, there exists a hypothesis $h^* \in \mcH$ such that $\Ex_{\bX \sim \cD}[F(\bX) \cdot h^*(\bX)] \geq \Omega \paren*{\advantage}$. Using the assumption that $n \geq \Omega(\log k)$ and $m =  O(2^n)$, the proof is then a straightforward application of \Cref{corollary:erm-learner-is-close}, setting $\eps = O \paren*{\advantage}$ and $\delta = m^{-c}$.
\end{proof}

\section*{Acknowledgments}

%


We thank the FOCS reviewers for their helpful feedback and suggestions. The authors are supported by NSF awards 1942123, 2211237, 2224246, a Sloan Research Fellowship, and a Google Research Scholar Award. Caleb is also supported by an NDSEG Fellowship. Guy is also supported by a Jane Street Graduate Research Fellowship.

\bibliographystyle{alpha}
\bibliography{references}

\newpage

\appendix

\section{The sample complexity overhead of distribution-independent boosting}
\label{sec:lower-bound-boosting-overhead}


\begin{claim}[Lower bound on the sample complexity of strong learning relative to the sample complexity of weak learning]
\label{claim:simple-gap-smooth}
    Let $\mcX$ be any domain of size $m$. Let $\mcC$ be the class of all functions over $\mcX$. Then for any $\gamma > 0$, the following facts are true:
\begin{enumerate}
        \item For any distribution $\cD$ over $\mcX$, there exists a weak learner that can learn $\mcC$ to accuracy $1/2 + \gamma$ using $O(\gamma m)$ samples with high probability.
        \item Learning $\mcC$ to accuracy 0.99 with respect to the uniform distribution requires $\Omega(m)$ samples.
    \end{enumerate}\end{claim}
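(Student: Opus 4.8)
For the first part, the plan is to let the weak learner draw $s=\Theta(\gamma m)$ i.i.d.\ examples from $\cD$, record the label of every point it sees, and output the hypothesis $h$ that returns the recorded label on a seen point and an independent uniform $\pm1$ value on every unseen point. Writing $w$ for the $\cD$-mass of the set of seen points, $h$ has advantage exactly $w$ — it is perfectly correct on seen points and uncorrelated with the target on the rest — so it suffices to show $w\ge 2\gamma$ with high probability. To lower bound $\Ex[w]$ I would split $\mcX$ into \emph{heavy} points with $\cD(x)\ge 1/s$ and \emph{light} points with $\cD(x)<1/s$. If the heavy points carry $\Omega(\gamma)$ total mass, then since each is missed with probability at most $e^{-1}$ we see a constant fraction of that mass in expectation. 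Otherwise the light points carry $\Omega(1)$ mass, so (there being at most $m$ of them) $\sum_{\text{light}}\cD(x)^2\ge\Omega(1/m)$ by Cauchy--Schwarz, and the expected observed light mass $\sum_{\text{light}}\cD(x)\bracket*{1-(1-\cD(x))^s}\ge\Omega\paren*{s\sum_{\text{light}}\cD(x)^2}=\Omega(s/m)=\Omega(\gamma)$. Choosing the hidden constant in $s$ large enough gives $\Ex[w]\ge 4\gamma$; a bounded-differences argument, applied separately to the $O(s)$ heavy coordinates (each seen with high probability) and to the light coordinates (each of which moves $w$ by at most $1/s$), then upgrades this to $w\ge 2\gamma$ with high probability.

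For the second part, the plan is a direct counting argument in the spirit of the proof of \Cref{lem:boosting-lower}. Fix a learner using $m'$ samples; by the easy direction of Yao's principle it suffices to fool a \emph{deterministic} learner against a target $\bF$ drawn uniformly from $\mcC$, the family of all $2^m$ Boolean functions on $\mcX$. The unlabeled part of the sample is independent of $\bF$, so after conditioning on it the learner sees only $m'$ label bits and hence outputs one of at most $2^{m'}$ hypotheses. Any fixed hypothesis $h$ satisfies $\dist(h,F)\le 0.01$ for at most $\sum_{j\le 0.01m}\binom{m}{j}\le 2^{H(0.01)\,m}$ of the $2^m$ targets $F$, where $H$ denotes binary entropy. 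Therefore the probability the learner outputs a hypothesis of accuracy $\ge 0.99$ is at most $2^{m'}\cdot 2^{H(0.01)\,m}\cdot 2^{-m}$, which is $o(1)$ unless $m'\ge(1-H(0.01))\,m=\Omega(m)$.

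The one genuinely delicate point is the high-probability (as opposed to in-expectation) guarantee in the first part for an arbitrary, possibly very non-uniform $\cD$: a single point of large mass would render a naive bounded-differences bound on $w$ vacuous, which is exactly why the argument separates heavy from light coordinates and controls them by different means. The interesting and boosting-relevant regime here is $\gamma$ bounded away from $1/2$; the second part is routine.
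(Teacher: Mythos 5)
Your proposal proves the right statement by the same overall strategy (a memorizing weak learner for Part~1 and an information‑theoretic lower bound for Part~2) but via genuinely different technical routes in both halves, and it is worth spelling out how they differ.

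\paragraph{Part 1.} The paper also uses the ``memorize and guess at random elsewhere'' learner, but its analysis is a single‑threshold argument: define $G=\{x:\cD(x)\ge \tfrac{1}{2m}\}$, observe $\Pr[\bx\in G]\ge\tfrac12$, apply Hoeffding to the number of samples that land in $G$ (a sum of $l$ i.i.d.\ Bernoullis), and convert that count to mass. You instead use a threshold $1/s$, split into a heavy/light case, and control the light contribution to the seen mass $w$ via Cauchy--Schwarz on $\sum\cD(x)^2$. Your in-expectation bound is correct and arguably more informative, but your concentration plan is where you should be careful. On the light side bounded differences does work (per‑sample change $\le 1/s$), though the exponent it gives is roughly $\exp(-\Omega(\gamma^2 s))=\exp(-\Omega(\gamma^3 m))$, noticeably weaker than the $\exp(-\Omega(\gamma m))$ the paper gets from Hoeffding on the count. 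On the heavy side ``each seen with high probability'' overstates it: a heavy point is seen only with probability $\ge 1-e^{-1}$, and a heavy point can have mass as large as a constant, so bounded differences on the heavy part of $w$ is vacuous without further peeling. You would need a third tier (points of mass, say, $\ge 2\gamma$, which are seen with probability $1-e^{-\Omega(\gamma s)}$ and can be handled individually) before applying McDiarmid to the remaining moderate‑mass heavy points. The paper's count‑then‑convert argument avoids this case surgery, at the cost of a different subtlety (converting a count of samples in $G$ to the mass of the set of distinct seen points) that it also does not belabor. A final small point: like the paper's learner, your hypothesis is randomized on unseen points; the paper closes this with a short derandomization-by-symmetry step, which you should either add or note explicitly.

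\paragraph{Part 2.} Here your argument is genuinely different from the paper's and more self‑contained. The paper simply cites the fundamental theorem of PAC learning with $\mathrm{VC}(\mcC)=m$. You instead re‑run the packing/counting argument from \Cref{lem:boosting-lower}: Yao's principle reduces to a deterministic learner against a uniformly random target, the unlabeled sample is independent of the target, so the learner is effectively choosing among $2^{m'}$ hypotheses, each of which is $0.01$‑close to at most $2^{H(0.01)m}$ of the $2^m$ targets, forcing $m'\ge(1-H(0.01))m-O(1)=\Omega(m)$. That is correct and has the virtue of matching the paper's own technique for its harder lower bound, instead of importing the VC theorem.
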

Note that this lower bound implies that any booster incurs a sample complexity overhead of $\Omega(1/\gamma)$. In particular, this applies to smooth boosters.
\begin{proof}
    
    The first fact is shown by a weak learner $\mcA$ that memorizes the labels for the $O(\gamma m)$ samples it sees then returns a random bit for the inputs it didn't memorize. Note that $\mcA$ is a randomized hypothesis, we will show how to derandomize it below.
    
    We note $\bS$ the random sample, and use $l$ to denote the number of samples in $\bS$. By construction, $l = O(\gamma m)$.

    Our weak learning algorithm always answers correctly for elements that are in the sample. Consequently, to prove that it achieves good accuracy, we need to show that, with high probability over the sampling procedure, $\Prx_{\bx \sim \mcD}[\bx \in \bS] \geq 2\gamma$.

    We define $G$, the set of ``good'' $x$'s as $G = \{x \in \mcX \mid \Prx_{\bx \sim \cD}[\bx=x] \geq \frac{1}{2m}\}$. Note that since the points $x \notin G$ all have weight less than $\frac{1}{2m}$ and there are at most $m$ of them then we have that $\Prx_{x \sim \cD}[x \in G] \geq \frac{1}{2}$.

    We start by showing that, in expectation over the random sample $\bS$, at least half of the elements in $\bS$ are from $G$.

    \begin{align*}
        \Ex_{\bS \sim \cD^l}\bracket*{\sum_{x \in \bS} \Ind\{x \in G\}} &= \sum_{i = 1}^l \Prx_{\bx \sim \cD} [\bx \in G] \\
        &\geq \frac{l}{2} \tag{Using $\Prx_{\bx \sim \cD}[\bx \in G] \geq \frac{1}{2}$}.
    \end{align*}

    By a Hoeffding bound (\Cref{fact:hoeffding}), we get that:
    \ifnum\focs=1
    \begin{align*}
        \Prx_{\bS}&\bracket*{\abs*{\sum_{x \in \bS} \Ind\{x \in G\} - \Ex_{\bS'}\bracket*{\sum_{x \in \bS'} \Ind\{x \in G\}}} \geq \frac{l}{4}} \\ &\leq 2 \exp \paren*{\frac{-l}{16}}
    \end{align*}
    where $\bS,\bS'\sim \cD^l$.
    \else
    \begin{align*}
        \Prx_{\bS \sim \cD^l}\bracket*{\abs*{\sum_{x \in \bS} \Ind\{x \in G\} - \Ex_{\bS' \sim \cD^l}\bracket*{\sum_{x \in \bS'} \Ind\{x \in G\}}} \geq \frac{l}{4}} \leq 2 \exp \paren*{\frac{-l}{16}}.
    \end{align*}
    \fi
    
    Thus, with high probability over the random sample $\bS$, at least a fourth of the elements in $\bS$ have probability at least $\frac{1}{2m}$.

    It remains to show that, conditioned on that event, $\Prx_{\bx \sim \cD}[\bx \in \bS] \geq 2\gamma$. This follows since
    \begin{align*}
        \Prx_{\bx\sim\mcD}\bracket*{\bx\in S \, \bigg | \, \sum_{x \in S} \Ind\{x \in G\} \geq \frac{l}{4}} &\geq \frac{l}{4} \cdot \frac{1}{2m} \\
        &\geq 2\gamma \tag{$l = O(\gamma m)$}.
    \end{align*}
    Thus, with probability at least $1 - \exp\paren*{-\Omega(l)}$, $\Prx_{\bx\sim\mcD}\bracket*{\bx\in S} \geq 2 \gamma.$
    
    Let $c \in \mcC$ be the target concept. We now show how to derandomize our hypothesis. Since $\mcA(x)$ returns a random bit on an input $x \notin S$, by symmetry, we have that:
    \ifnum\focs=1
    \begin{align*}
        \Pr&\bracket*{\Ex_{\bx \sim \cD}[\mcA(\bx)c(\bx) \mid \bx \notin \bS] \geq 0} \\ &=  \Pr\bracket*{\Ex_{\bx \sim \cD}[\mcA(\bx)c(\bx) \mid \bx \notin \bS] \leq 0},
    \end{align*}
    \else
    $$\Pr\bracket*{\Ex_{\bx \sim \cD}[\mcA(\bx)c(\bx) \mid \bx \notin \bS] \geq 0} =  \Pr\bracket*{\Ex_{\bx \sim \cD}[\mcA(\bx)c(\bx) \mid \bx \notin \bS] \leq 0},$$
    \fi
    where the randomness is taken over the coin flips from the random hypothesis on inputs not in $S$.
    This implies that with probability at least $1/2$, the hypothesis returned is such that $\Ex_{\bx \sim \cD}[\mcA(\bx)c(\bx) \mid \bx \notin \bS] \geq 0$.

    If we assume such a ``good'' hypothesis is chosen by $\mcA$, we can now conclude that $\mcA$ will have $2\gamma$ correlation with $c$:
    \ifnum\focs=1
    \begin{align*}
    \Ex_{\bx \sim \cD}&[\mcA(\bx)c(\bx)] \\&= \Ex_{\bx \sim \cD}[\mcA(\bx)c(\bx) \mid \bx \in \bS] \Prx_{\bx \sim \cD}[\bx \in \bS] \\&\quad + \Ex_{\bx \sim \cD}[\mcA(\bx)c(\bx) \mid \bx \notin \bS] \Prx_{\bx \sim \cD}[\bx \notin \bS] \\
    &\geq 1 \cdot \Prx_{\bx \sim \cD}[\bx \in \bS] + 0 \tag{$\mcA$ learns perfectly on samples in $S$}\\
    &\geq 2 \gamma.
    \end{align*}
    \else
    \begin{align*}
    \Ex_{\bx \sim \cD}[\mcA(\bx)c(\bx)] &= \Ex_{\bx \sim \cD}[\mcA(\bx)c(\bx) \mid \bx \in \bS] \Prx_{\bx \sim \cD}[\bx \in \bS] + \Ex_{\bx \sim \cD}[\mcA(\bx)c(\bx) \mid \bx \notin \bS] \Prx_{\bx \sim \cD}[\bx \notin \bS] \\
    &\geq 1 \cdot \Prx_{\bx \sim \cD}[\bx \in \bS] + 0 \tag{$\mcA$ learns perfectly on samples in $S$}\\
    &\geq 2 \gamma.
    \end{align*}
    \fi
    Thus, with high probability, $\mcA$ learns $\mcC$ to accuracy $1/2 + \gamma$.

    The second fact uses the fundamental theorem of PAC learning that states that learning a concept class $\mcC$ with VC dimension $d$ to accuracy 0.99 requires $\Omega\paren*{d}$ samples, see for example \cite{shalev14:understanding} theorem 6.8. Since $\mcC$ is defined as the class of all functions over the domain $\mcX$, it has VC dimension $m$. Thus, learning to accuracy 0.99 requires $\Omega(m)$ samples.
\end{proof}

    


\begin{claim}[Upper bound on the sample of complexity of strong learning relative to the sample complexity of weak learning]
\label{claim:ub-sample-cxty-strong}
    Let $\mcC$ be a concept class and let $\gamma>0$. 
    If the sample complexity of weak learning $\mathcal{C}$ to accuracy $1/2 + \gamma$ in the distribution-independent setting is $m$, then the sample complexity of strong learning $\mathcal{C}$ to accuracy 0.99 in the distribution-independent setting is $O(m/\gamma)$.
\end{claim}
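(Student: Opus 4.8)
The plan is to reduce the claim to a bound on the VC dimension of $\mcC$ and then apply the fundamental theorem of PAC learning. Specifically, I would show $\mathrm{VC}(\mcC)\le m/\gamma$; the upper-bound direction of the fundamental theorem (as used, e.g., in the proof of \Cref{claim:simple-gap-smooth}) then says that $O(\mathrm{VC}(\mcC))=O(m/\gamma)$ samples suffice to learn $\mcC$ to accuracy $0.99$ in the realizable (and possibly improper) distribution-independent setting — via a time-inefficient ERM learner, consistent with the remark following \Cref{fact:dist-free upper bound}.

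To bound $\mathrm{VC}(\mcC)$, suppose $A\subseteq\mcX$ is shattered by $\mcC$ with $|A|=d$, and assume toward a contradiction that $\gamma d>m$. I would run the promised $m$-sample weak learner on the distribution $\mathrm{Unif}(A)$ with a target concept $\mathbf c\in\mcC$ whose restriction $\mathbf c|_A$ is uniformly random in $\{\pm1\}^A$ (possible since $A$ is shattered, and only $\mathbf c|_A$ matters for this run). Let $\mathbf T\subseteq A$ be the (at most $m$) distinct points appearing in the sample. Since the sample points are drawn independently of $\mathbf c$, the tuple consisting of the sample points, $\mathbf c|_{\mathbf T}$, and the learner's coins determines its output $\mathbf h$, and conditioned on this tuple the labels $\{\mathbf c(a):a\in A\setminus\mathbf T\}$ are still i.i.d.~uniform; hence the number of points of $A\setminus\mathbf T$ on which $\mathbf h$ agrees with $\mathbf c$ is $\mathrm{Bin}(|A\setminus\mathbf T|,1/2)$, so the total agreement on $A$ is at most $m+\mathrm{Bin}(|A\setminus\mathbf T|,1/2)$. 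The weak learning guarantee says that with probability at least $1-\delta>1/2$ this total agreement is at least $(\tfrac12+\gamma)d$, so with probability $>1/2$ we have $\mathrm{Bin}(|A\setminus\mathbf T|,1/2)\ge(\tfrac12+\gamma)d-m$. But $|A\setminus\mathbf T|\le d$ and $\gamma d>m$ give $(\tfrac12+\gamma)d-m>\tfrac d2\ge\tfrac{|A\setminus\mathbf T|}{2}$, and for any real $x>N/2$ one has $\Pr[\mathrm{Bin}(N,1/2)\ge x]\le\Pr[\mathrm{Bin}(N,1/2)>N/2]\le\tfrac12$ by symmetry — a contradiction. Hence $\gamma d\le m$, i.e.~$\mathrm{VC}(\mcC)\le m/\gamma$.

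The step I expect to require the most care is the conditional-independence bookkeeping: one has to verify that revealing the sample points, the target's labels on those points, and the learner's internal randomness leaves the target's labels on the unsampled part of $A$ uniform and independent — which is exactly where shatteredness of $A$ and the independence of the sample points from the target are used. Two remarks: this is the only place the argument uses that the weak learner is \emph{distribution-independent} (we feed it $\mathrm{Unif}(A)$, which need not resemble any distribution of interest), and no confidence boosting is needed, since the contradiction already goes through whenever the weak learner succeeds with \emph{any} constant probability exceeding $1/2$, so the ``w.h.p.'' hypothesis is more than sufficient.
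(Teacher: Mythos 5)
Your proof is correct and takes essentially the same approach as the paper's: both reduce the claim to showing $\mathrm{VC}(\mcC)\le m/\gamma$ via the uniform distribution on a shattered set, and then invoke the fundamental theorem of PAC learning for the $O(\mathrm{VC})$ upper bound on strong learning. The paper derives the VC bound by choosing, for each sample $S$, a worst-case consistent concept so that $\Ex_{\bx\sim\mcD}[\bh^S(\bx)c(\bx)]\le m/d$, whereas you average over a uniformly random concept and use a Binomial anti-concentration/symmetry bound — two packagings of the same information-theoretic argument (with your version being somewhat more careful about the distinction between the w.h.p. guarantee and an expectation bound).
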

\begin{proof}
    We show that $m\ge \gamma d$ where $d$ is the VC dimension of $\mcC$. The claim then follows by the fact that the VC dimension characterizes the sample complexity of strong learning to constant accuracy. Let $H=\{x_1,\ldots,x_d\}$ be a shattering set of $d$ points and let $\mcD$ be the uniform distribution over $H$. For a set of $m$ samples $S\sse H$, we write $\bh^S$ to denote the hypothesis output by the weak learner after seeing the samples $S$ (the hypothesis is randomized to incorporate the randomness of the learning algorithm). First, we observe that there is a concept $c\in\mcC$ such that for all $x\in H\setminus S$, $\Ex[\bh^S(x)c(x)]\le 0$ and $c$ is consistent with the labels of the points in $S$. This follows from the fact $H$ is a shattering set, so for every labeling of the points in $H\setminus S$, there is a concept $c\in\mcC$ that witnesses the labeling, and therefore, it is possible to choose $c$ so that $\Ex[\bh^S(x)c(x)]\le 0$. In fact, this shows that the best choice of $\bh^S(x)$ is to output a random bit. It follows that for all samples $S$ of size $m$, there is a concept $c\in\mcC$ such that:
    \ifnum\focs=1
    \begin{align*}
        \Ex_{\bx\sim\mcD}&[\bh^S(\bx)c(\bx)]\\
        &= \Prx_{\bx\sim\mcD}[\bx\in S]\Ex_{\bx\sim\mcD}[\bh^S(\bx)c(\bx)\mid \bx\in S]\\
        &\quad +\Prx_{\bx\sim\mcD}[\bx\not\in S]\Ex_{\bx\sim\mcD}[\bh^S(\bx)c(\bx)\mid \bx\not\in S]\\
        &\le \Pr[\bx\in S]=\frac{m}{d}.
    \end{align*}
    where the last inequality uses the fact that $\Ex_{\bx\sim\mcD}[\bh^S(\bx)c(\bx)\mid \bx\not\in S]\le 0$ and $\Ex_{\bx\sim\mcD}[\bh^S(\bx)c(\bx)\mid \bx\in S]\le 1$.
    \else
    \begin{align*}
        \Ex_{\bx\sim\mcD}[\bh^S(\bx)c(\bx)]&= \Prx_{\bx\sim\mcD}[\bx\in S]\Ex_{\bx\sim\mcD}[\bh^S(\bx)c(\bx)\mid \bx\in S]+\Prx_{\bx\sim\mcD}[\bx\not\in S]\Ex_{\bx\sim\mcD}[\bh^S(\bx)c(\bx)\mid \bx\not\in S]\\
        &\le \Pr[\bx\in S]=\frac{m}{d}.\tag{$\Ex_{\bx\sim\mcD}[\bh^S(\bx)c(\bx)\mid \bx\not\in S]\le 0$ and $\Ex_{\bx\sim\mcD}[\bh^S(\bx)c(\bx)\mid \bx\in S]\le 1$}
    \end{align*}
    \fi
    Finally, since $\Ex_{\bx\sim\mcD}[\bh^S(\bx)c(\bx)] = 2 \Prx_{\bx\sim\mcD}[\bh^S(\bx) = c(\bx)] -1$, we can conclude that if the weak learning algorithm achieves accuracy $1/2 + \gamma$ then $\gamma\le m/d$ as desired. 
\end{proof}

\end{document}